\def\tsc#1{\csdef{#1}{\textsc{\lowercase{#1}}\xspace}}
\newcommand{\R}{\mathbb{R}}
\newcommand{\II}{{\mathcal I}}
\DeclareMathOperator{\adj}{adj}
\newtheorem{theorem}{Theorem}[section]
\newtheorem{lemma}[theorem]{Lemma}
\newtheorem{corollary}[theorem]{Corollary}
\newtheorem{example}{Example}[section]
\newdefinition{definition}{Definition}[section]
\newdefinition{remark}[definition]{Remark}
\newproof{proof}{Proof}
\newproof{pot}{Proof of Theorem \ref{thm}}
\newcommand{\END}{\hfill\mbox{\raggedright \mbox{\raisebox{2pt}{\scalebox{0.6}{$\Diamond$}}}}\medskip} 
\newcommand{\etal}{\textrm{et al.}}
\newcommand{\ignore}[1]{}
\newcommand{\pathto}{\rightsquigarrow}
\newcommand{\flatrightarrow}{\raisebox{-.2ex}{\mbox{\;\rotatebox{90}{\scalebox{1}[1.2]{$\bot$}}}\;\;}}
\newcommand{\biarrow}{\;\smash{{}^{\displaystyle \rightarrow}_{\displaystyle\leftarrow}}\;}
\begin{document}
\let\WriteBookmarks\relax
\def\floatpagepagefraction{1}
\def\textpagefraction{.001}
\let\printorcid\relax

\shorttitle{Homeostasis in Input-Output Networks}    

\shortauthors{Antoneli et al.}  

\title[mode = title]{Homeostasis in Input-Output Networks: \texorpdfstring{\\} {}
Structure, Classification  and Applications}  



\author[1]{Fernando Antoneli}

\cormark[1]


\ead{fernando.antoneli@unifesp.br}



\affiliation[1]{organization={Centro de Bioinform\'atica M\'edica, Universidade Federal de S\~ao Paulo},
            addressline={Edif\'{\i}cio de Pesquisas 2}, 
            city={S\~ao Paulo},
            postcode={04039-032}, 
            state={SP},
            country={Brazil}}


\cortext[cor1]{Corresponding author}


\author[2]{Martin Golubitsky}


\ead{golubitsky.4@osu.edu}



\affiliation[2]{organization={Department of Mathematics, The Ohio State University},
            addressline={231 W 18th Ave}, 
            city={Columbus},
            postcode={43210}, 
            state={OH},
            country={USA}}


\author[2]{Jiaxin Jin}


\ead{jin.1307@osu.edu}




\author[3]{Ian Stewart}


\ead{ins@maths.warwick.ac.uk}



\affiliation[3]{organization={Mathematics Institute, University of Warwick},
            addressline={Zeeman Building}, 
            city={Coventry},
            citysep={}, 
            postcode={CV4 7AL}, 
            country={UK}}



\begin{abstract}
Homeostasis is concerned with regulatory mechanisms, present in biological systems, where some specific variable is kept close to a set value as some external disturbance affects the system. 
Many biological systems, from gene networks to signaling pathways to whole tissue/organism physiology, exhibit homeostatic mechanisms. 
In all these cases there are homeostatic regions where the variable is relatively to insensitive external stimulus, flanked by regions where it is sensitive. 
Mathematically, the notion of homeostasis can be formalized in terms of an input-output function that maps the parameter representing the external disturbance to the output variable that must be kept within a fairly narrow range. 
This observation inspired the introduction of the notion of infinitesimal homeostasis, namely, the derivative of the input-output function is zero at an isolated point. 
This point of view allows for the application of methods from singularity theory to characterize infinitesimal homeostasis points (i.e. critical points of the input-output function). 
In this paper we review the infinitesimal approach to the study of homeostasis in input-output networks. 
An input-output network is a network with two distinguished nodes ‘input’ and ‘output’, and the dynamics of the network determines the corresponding input-output function of the system. 
This class of dynamical systems provides an appropriate framework to study homeostasis and several important biological systems can be formulated in this context. 
Moreover, this approach, coupled to graph-theoretic ideas from combinatorial matrix theory, provides a systematic way for classifying different types of homeostasis (homeostatic mechanisms) in input-output networks, in terms of the network topology. 
In turn, this leads to new mathematical concepts, such as, homeostasis subnetworks, homeostasis patterns, homeostasis mode interaction. 
We illustrate the usefulness of this theory with several biological examples: biochemical networks, chemical reaction networks (CRN), gene regulatory networks (GRN), Intracellular metal ion regulation and so on.
\end{abstract}



\begin{keywords}
Infinitesimal Homeostasis \sep Input-Output Networks \sep Perfect Adaptation
\end{keywords}

\maketitle


\section{Introduction}
\label{sec:INTRO}

The idea of homeostasis has its roots in the work of the French physiologist Claude Bernard \cite{bernard1865} who observed a kind of regulation in the `milieu int\'erieur' (internal environment) of human organs such as the liver and pancreas.
The term 'homeostasis', coined by the American physiologist Walter Cannon in 1926 \cite{cannon1929,cannon1939}, derives from the Greek language and refers to a capacity of maintaining \emph{similar stasis}.
In this context, `homeostasis' refers to certain regulatory mechanisms by which a feature is maintained at a steady state despite disturbances caused by changes in the environment.
At the scale of a whole organism, homeostasis manifests itself in many forms: body
temperature, blood sugar level, concentration of ions in body fluids with changes in the external environment. 

More recently, the study of homeostasis in biological systems through mathematical models has gained prominence.
For example, the extensive work of Nijhout, Reed, Best and collaborators~\cite{nrbu2004,best2009,nr2014,nbr2015,nbr2018} consider biochemical networks associated with metabolic signaling pathways.
Further examples include regulation of cell number and size \cite{lloyd2013}, control of sleep~\cite{wyatt1999}, and expression level regulation in housekeeping genes~\cite{antoneli2018}.  

In order to mathematically model `homeostasis' is necessary to have a framework where a precise definition can be given.
The simplest mathematical setup, very often used to model biological phenomena, is the theory of ordinary differential equations.
In this setting `homeostasis' can be interpreted in two mathematically distinct ways.  
One boils down to the existence of a `globally stable equilibrium'. 
Here, changes in the environment are considered to be perturbations of \emph{initial conditions} \cite{thomas1990}.
A stronger (and, in our view, more appropriate) usage works with a parametrized family of differential equations, with a corresponding family of stable equilibria.  
Now `homeostasis' means that some quantity defined through this equilibrium changes by a relatively small amount when the {\em parameter} varies by a much larger amount.
Homeostasis does not imply that the whole system remains invariant with change in external variables. 
In fact, changes in external variables can cause changes in certain internal variables, while other internal variables remain almost unchanged.
 
A precise definition of homeostasis can be give as follows.
Consider system of ODEs depending on an \emph{input parameter} $\II$, which varies over a range of external stimuli.
Suppose there is a family of equilibrium points $X(\II)$ and an observable $\phi$, such that the \emph{input-output function} $z(\II)=\phi(X(\II))$ is well-defined on the range of $\II$.
In this situation, we say that the system exhibits \emph{homeostasis} if, under variation of the input parameter $\II$, the input-output function $z(\II)$ remains approximately constant over the interval of external stimuli (see section \ref{SEC:STRUC}).  
Golubitsky and Stewart~\cite{gs2017} observe that homeostasis on some neighborhood of a specific value $\II_0$ follows from {\em infinitesimal homeostasis}, where $z'(\II_0) = 0$ and $'$ indicates differentiation with respect to $\II$.  This observation is essentially the well-known idea that the value of a function changes most slowly near a stationary (or critical) point.

Infinitesimal homeostasis is a \emph{sufficient} condition for homeostasis over some interval of parameters, but it is not a necessary condition.  A function can vary slowly without having a stationary point.
In applications, the quantity that experiences homeostasis, represented by the observable $\phi$ can be a function of several internal variables, such as a sum of concentrations, or the period of an oscillation (see subsection \ref{SS:PHCR}).  

There are other variations on the formulation of homeostasis.
Control-theoretic models of homeostasis often require \emph{perfect homeostasis}, also known as \emph{perfect adaptation}, in which the input-output function is exactly constant over the parameter range \cite{khammash2016,khammash2021a,khammash2021b}.
The requirement of perfect homeostasis is very strong, demanding that the model equations have a somewhat restricted functional form \cite{araujo2018}.

Another important idea of \cite{gs2017} is to take advantage of fact that infinitesimal homeostasis is suitable for analysis using methods from singularity theory (see subsection \ref{SSEC:SIOF}).
A motivational example for the use of singularity theory to analyze homeostasis is the regulation of the output `body temperature' in an opossum, when the input `environmental temperature' varies \cite{morrison1946}.
In \cite[Figs. 1--2]{gsahy2020} the graphs of body temperature against environmental temperature $\II$ are approximately linear, with nonzero slope, when $\II$ is either small or large, while in between is a broad flat region, where homeostasis occurs.
This general shape is called a `chair' by Nijhout and Reed~\cite{nr2014} (see also \cite{nbr2014, nrbu2004}) and suggest that there is a chair in the body temperature data of opossums~\cite{morrison1946} given by such a piecewise linear function \cite[Fig. 1]{nbr2014}.
Golubitsky and Stewart \cite{gs2017} take a singularity-theoretic point of view and suggest that chairs are better described locally by a homogeneous cubic function (that is, like $z(\II)\approx \II^3$) rather than by the previous
piecewise linear description.
Based on this suggestion, \cite{gs2017} goes on to show that the singularities of input-output functions are described by \emph{Elementary Catastrophe Theory} \cite{thom1969,thom1975,zeeman1977}.

The last contribution of \cite{gs2017} is that homeostasis can be thought of as a network concept.
This is motivated by the abundance of networks in biology, specially connected to ODE modeling, see e.g. \cite{nbr2014,rbgsn2017,ferrell2016}.
In fact, almost all ODE models in biology come together with a `wiring diagram', or a network, describing the interactions
among the elements in the model.
The recently published book \cite{gs2023} gives an exposition of a formal framework for studying networks of coupled ODEs, that have been developed the authors and collaborators in the past decade. 
More precisely, a network of coupled ODEs is determined by directed graph whose nodes and edges are classified into types. 
Nodes, or cells, represent the variables of a component ODE. 
Edges, or arrows, between nodes represent couplings
from the tail node to the head node. 
Nodes of the same type have the same phase spaces (up to a canonical identification); edges of the same type represent identical couplings.

In the setting of network coupled ODEs there is a preferable class of observables, namely the output of the node variables (coordinate functions).
Network systems are distinguished from large systems by the ability to keep track of the output from each node individually. 
Now homeostasis can be naturally defined as the fact that the output of a node variable (the `output node') is held approximately constant as other variables (other nodes) vary (perhaps wildly) under variation of an input parameter that affects another node (the `input node').
Placing homeostasis in the general context of network dynamics leads naturally to the methods reviewed here.

A special kind of network of coupled ODE, with two distinguished nodes (one input node and one output node)
is called an \emph{input-output network}.
It was introduced in \cite{wang2020} in their study of homeostasis on $3$-node networks.
Wang \etal~\cite{wang2021} extended the notion of \emph{input-output network} to arbitrary large networks and developed a combinatorial theory for the classification of `homeostasis types' in such networks.
A \emph{homeostasis type} is essentially a `combinatorial mechanism' that causes homeostasis in the full network and is represented by specific subnetworks, called \emph{homeostasis subnetworks}.
The homeostasis subnetworks can be subdivided into to classes called \emph{structural} and \emph{appendage}.

The motivation for the term \emph{structural homeostasis} comes from \cite{rbgsn2017}, where the authors identify the feedforward loop as one of the homeostastic motifs in $3$-node biochemical networks, that is, a generalized `feedforward loop' 
The intuition behind the term \emph{appendage homeostasis} is that homeostasis is generated by a cycle of regulatory nodes, that is, a generalized `feedback loop'.
The structural and appendage classes are abstract generalizations of the usual `feedforward' and `feedback' mechanisms. 
A striking outcome the approach of \cite{wang2021} is that they do not specify any homeostasis generating mechanisms at the outset and they find \emph{a posteriori} that there are essentially only the two types of homeostasis generating mechanisms: \emph{generalized feedback} and \emph{generalized feedforward}.

In Duncan \etal~\cite{duncan2024} the combinatorial formalism of \cite{wang2021} id further refined to allow one to determine, for each homeostasis subnetwork, which nodes are (and are not) going to be simultaneously homeostatic (besides the designated output node) when that particular homeostasis subnetwrok is the `homeostasis trigger', i.e. the one that causes homeostasis in the network.
These collections of simultaneous homeostatic nodes are called \emph{homeostasis patterns}.
The general picture that emerges from all these results is the following: the homeostasis subnetworks and the homeostasis patterns of network correspond uniquely to each other.
\pagebreak

\section{Structure}
\label{SEC:STRUC}

In this section we give the basic definition of homeostasis in a parametrized system of ordinary differential equations.
We start with a very general definition that will be specialized to more `concrete' situations throughout the paper.

\subsection{A Dynamical Formalism for Homeostasis}

Golubitsky and Stewart \cite{gs2017,gs2018} proposed a mathematical framework for the study of homeostasis based on dynamical systems theory (see \cite{gsahy2020} and the book \cite{gs2023}).

In this framework one considers a system of differential equations 
\begin{equation} \label{general_dynamics}
  \dot{X} = F(X, \II)
\end{equation}
with state variable $X=(x_1,\ldots,x_n) \in \R^n$ and input parameters $\II=(\II_1,\ldots,\II_N) \in \R^N$ representing the external input to the system.

Suppose that $(X^*,\II^*)$ is a linearly stable equilibrium of \eqref{general_dynamics}. 
By the implicit function theorem, there is a function $\tilde{X}(\II)$ defined in a neighborhood of $\II^*$ such that $\tilde{X}(\II^*) = X^*$ and $F(\tilde{X}(\II), \II) \equiv 0$. 

A smooth function $\phi:\mathbb{R}^{n}\to\mathbb{R}$ is called an \emph{observable}.
Define the \emph{input-output function} $z:\mathbb{R}^{N}\to\mathbb{R}$ associated to $\phi$ and $\tilde{X}$ as $z(\II)=\phi(\tilde{X}(\II))$.
The input-output function allows one to formulate several definitions that capture the notion of homeostasis (see~\cite{ma2009,ang2013,tang2016,gs2017,gs2018}).

\begin{definition} \label{D:inf_homeo}
Let $z(\II)$ be an input-output function.
We say that $z(\II)$ exhibits
\begin{enumerate}[(a)]
\item \emph{Perfect Homeostasis} on an open set $\Omega\subseteq\operatorname{dom}(z)$ if
\begin{equation} \label{definition_perfect_adaptation}
  \nabla z(\II)  = 0 
  \qquad\text{for all} \; \II \in\Omega
\end{equation}
That is, $z$ is constant on $\Omega$.
\item \emph{Near-perfect Homeostasis} relative to a \emph{set point} $\II_0 \in \Omega \subseteq \operatorname{dom}(z)$ if, for fixed $\delta$,
\begin{equation} \label{definition_near_perfect_adaptation}
  \| z(\II) - z(\II_0) \| \leqslant \delta
  \qquad\text{for all} \; \II\in\Omega
\end{equation}
That is, $z$ stays within the range $z(\II_0)\pm\delta$ over $\Omega$.
\item \emph{Infinitesimal Homeostasis} at the point $\II_0 \in \operatorname{dom}(z)$ if
\begin{equation} \label{definition_homeostasis}
  \nabla z(\II_0) = 0
\end{equation}
That is, $\II_0$ is a \emph{critical point} of $z$. 
\END
\end{enumerate}
\end{definition}

It is clear that perfect homeostasis implies near-perfect homeostasis, but the converse does not hold.  
Inspired by Nijhout, Reed, Best \etal~\cite{nrbu2004,best2009,nbr2014}, Golubitsky and Stewart \cite{gs2017,gs2018} introduced the notion of infinitesimal homeostasis which is intermediate between perfect and near-perfect homeostasis.
It is obvious that perfect homeostasis implies infinitesimal homeostasis. 
On the other hand, it follows from Taylor's theorem that infinitesimal homeostasis implies near-perfect homeostasis in a neighborhood of $\II_0$ (see \cite{gs2023} for details).
It is easy to see that the converse to both implications is not generally valid (see Reed \etal~\cite{rbgsn2017}).

The geometric interpretation of infinitesimal homeostasis is that, if $\II_0$ is a critical point of the input-output function, then $z(\II)$ differs from $z(\II_0)$ in a manner that depends quadratically (or to higher order) on $\|\II-\II_0\|$. This makes the graph of $z(\II)$ flatter than any growth rate with a nonzero linear term.

Definition \ref{D:inf_homeo} is very general in the sense that it allows for an arbitrary number of input parameters and and arbitrary (smooth) function of state variables as the output.
In the engineering literature, a system of inputs and outputs can be described as one of four types: SISO (single input, single output), SIMO (single input, multiple output), MISO (multiple input, single output), or MIMO (multiple input, multiple output).
By this terminology, Definition \ref{D:inf_homeo} describes MIMO systems.
However, as in bifurcation theory, the most important situation is when a single scalar input parameter is considered at a time.
Moreover, as we will define latter there is a natural class of dynamical systems, called {\em network dynamical systems} or {\em coupled cell systems}, that comes with a distinguished set of observables \cite{gs2023}. 
By adjusting the notion of homeostasis with a single input parameter to work in the class of network dynamical systems we obtain a very general description of SISO systems, called {\em input-output networks}.
Remarkably, it is in this class of systems that we can obtain the most complete theory of homeostasis (see Subsection \ref{SS:HION}).

Nevertheless, we can still say something in the most general setting.
We can use implicit differentiation and the cofactor formula for the inverse of a matrix to write a formula for 
$\nabla z$.
Let us denote the Jacobian matrix of $F$ at the equilibrium $(X(\mathcal{I}),\mathcal{I})$ (we will omit the \textasciitilde\, over $X$ from now on) by
\[
J = DF_{(X(\II),\II)}
\]
The partial derivatives of $F=(f_1,\ldots,f_n)$ with respect to $x_i$ are denoted by
$f_{j,x_i}$ and the partial derivatives with respect to $\II_M$ are denoted by $f_{j,\II_M}$. 
Let $F_{\II_m}$ denote the vector of partial derivatives of $F$ with respect to $\II_M$, that is, $F_{\II_M}=(f_{1,\II_M},\ldots,f_{n,\II_M})$.

\begin{lemma} \label{t:z_nabla}
Let $z(\mathcal{I})$ be an input-output function associated to a system of differential equations \eqref{general_dynamics}.
The gradient of $\nabla z(\II)$ with respect to the multiple input $\II$ is given by
\begin{equation} \label{e:z_prime}
\nabla z =  
\frac{-1}{\det J} 
\bigg(\langle \nabla \phi\, , \adj(J) F_{\II_M} \rangle \bigg)_{M=1}^N
\end{equation}
where $\adj(J)$ is the \emph{adjugate matrix} or \emph{classical adjoint} of $J$ (defined as the transpose of the cofactor matrix of $J$).
In particular, $\II_0$ is an infinitesimal homeostasis point of $z$ if and only if
\begin{equation} \label{e:cond_homeo}
 \langle \nabla \phi \, , \adj(J) F_{\II_M}\rangle \big|_{(X(\II_0),\II_0)} = 0 \;,\quad\forall\; M=1,\ldots,N
\end{equation}
\end{lemma}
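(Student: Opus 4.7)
The plan is to derive the formula by straightforward implicit differentiation combined with the chain rule, and then rewrite the inverse of the Jacobian via the adjugate formula.

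First, I would start from the defining identity $F(\tilde X(\II),\II)\equiv 0$, which holds on a neighborhood of $\II^*$ by the implicit function theorem applied at the linearly stable equilibrium $(X^*,\II^*)$. Differentiating this identity componentwise with respect to each input coordinate $\II_M$ yields
\[
J\,\frac{\partial \tilde X}{\partial \II_M} + F_{\II_M} \;=\; 0,
\]
where $J = DF_{(X(\II),\II)}$ is the Jacobian with respect to the state variables. Since $(X^*,\II^*)$ is linearly stable, every eigenvalue of $J$ at that point has negative real part, so in particular $\det J\neq 0$; by continuity this persists on a neighborhood of $\II^*$, so $J$ is invertible along $\tilde X(\II)$ and we may solve $\partial \tilde X/\partial \II_M = -J^{-1}F_{\II_M}$.

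Next, I would apply the chain rule to $z(\II) = \phi(\tilde X(\II))$ to obtain
\[
\frac{\partial z}{\partial \II_M} \;=\; \nabla\phi\cdot\frac{\partial \tilde X}{\partial \II_M} \;=\; -\langle \nabla\phi\, ,\, J^{-1}F_{\II_M}\rangle.
\]
Substituting the classical adjugate identity $J^{-1} = \frac{1}{\det J}\adj(J)$ converts this into
\[
\frac{\partial z}{\partial \II_M} \;=\; \frac{-1}{\det J}\,\langle \nabla\phi\, ,\, \adj(J)F_{\II_M}\rangle,
\]
and assembling these $N$ scalar identities into a single vector equation produces \eqref{e:z_prime}.

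Finally, the characterization \eqref{e:cond_homeo} follows immediately: $\nabla z(\II_0)=0$ means every component $\partial z/\partial \II_M$ vanishes at $\II_0$, and since $\det J$ is nonzero in a neighborhood of $(X(\II_0),\II_0)$ (again by linear stability of the equilibrium), this is equivalent to the vanishing of the numerator $\langle\nabla\phi,\adj(J)F_{\II_M}\rangle$ for each $M=1,\dots,N$. I do not anticipate a genuine obstacle here; the only delicate point worth stating explicitly is the nonvanishing of $\det J$, which is what justifies dividing by it and lets us pass cleanly between the $J^{-1}$ formulation and the polynomial adjugate formulation preferred for the subsequent combinatorial analysis.
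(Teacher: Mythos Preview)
Your proposal is correct and follows essentially the same approach as the paper's proof: implicit differentiation of $F(\tilde X(\II),\II)\equiv 0$ to obtain $J\,\partial\tilde X/\partial\II_M = -F_{\II_M}$, the chain rule for $z=\phi\circ\tilde X$, and the adjugate formula for $J^{-1}$, with invertibility of $J$ justified by linear stability. The only cosmetic difference is the order in which you apply the chain rule and the implicit differentiation; the paper applies the chain rule first and then solves for $X_{\II_M}$, whereas you solve for $\partial\tilde X/\partial\II_M$ first and then apply the chain rule.
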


\begin{proof}
The gradient of $z$ is
\begin{equation} \label{e:grad}
\nabla z = \bigg(\frac{\partial z}{\partial \II_1},\ldots,
\frac{\partial z}{\partial \II_N}\bigg)
\end{equation}
The partial derivative of $z(\mathcal{I})=\phi(X(\mathcal{I}))$ with respect to $\II_M$ (for each $M=1,\ldots,N$) is
\begin{equation} \label{e:chain_rule}
\frac{\partial z}{\partial \II_M} = \frac{\partial}{\partial \II_j} \phi(X(\II)) = \langle \nabla \phi\, ,  X_{\II_M} \rangle
\end{equation}
where $X_{\II_M}$ is the partial derivative of $X$ with respect to $\II_M$.
Implicit differentiation of the equation $F(X(\mathcal{I}),\mathcal{I})=0$, with respect to $\II_j$, yields the linear system
\begin{equation} \label{e:imp_diff}
J \, X_{\II_M} = - F_{\II_M}
\end{equation}
From \eqref{e:imp_diff} and the fact that $X(\II)$ is assumed to be a linearly stable equilibrium, for all $\II\in \Omega$, and hence $\det(J)\neq 0$ over $\Omega$, it follows that
\begin{equation} \label{e:adjoint}
X_{\II_M} = - \frac{1}{\det J} \adj(J) F_{\II_M}
\end{equation}
Therefore, substituting \eqref{e:adjoint} into \eqref{e:chain_rule} and then into \eqref{e:grad} we obtain \eqref{e:z_prime}.
\qed
\end{proof}

Lemma \ref{t:z_nabla} allows us to draw some general remarks about the function $\nabla z$.
The first observation is that the term $\adj(J) F_{\II_M}$ is a multivariate polynomial in the partial derivatives $f_{j,x_i}$ and $f_{j,\II_M}$, linear in $f_{j,\II_M}$.
Hence, the defining equations for homeostasis \eqref{e:cond_homeo} are polynomial functions in the partial derivatives $f_{j,x_i}$,  $f_{j,\II_M}$ and $\phi_{x_i}$, linear in both $f_{j,\II_M}$ and $\phi_{x_i}$.
This will become manifest in the several examples discussed in the paper.

The second observation is that there are two `trivial' ways in which homeostasis can occur: (i) by the existence of a critical point of $\phi$, that is, $\nabla\phi=0$ at $(X(\II_0),\II_0)$, and (ii) the simultaneous vanishing of all $F_{\II_M}$ at $(X(\II_0),\II_0)$.
The first case depends on the choice of the observable $\phi$ and it indeed occurs in some applications, for instance \cite{nrbu2004} (see Example \ref{ex:nrbu2004} below).
The second case is a highly non-generic situation which essentially says that the system is homeostatic ``at the input source''.
Henceforth, unless stated otherwise, we always assume the following {\em genericity condition}: the derivative of the vector field $F$ with respect to the input parameters generically satisfies
\begin{equation} \label{eq:genericity_condition}
F_{\II_M}\neq 0 
\; , \quad \forall\; M=1,\ldots,N
\end{equation}

The third observation is that the formula \eqref{e:z_prime} is valid as long as the equilibrium $X(\II)$ is linearly stable, since $\det(J)\neq 0$ at $(X(\II),\II)$.
Moreover, $\det(J)=0$ at some $(X(\II_0),\II_0))$ if and only if $\II_0$ is a steady-state bifurcation point.
Therefore, the input-output function is non-differentiable at the steady-state bifurcation points of $X(\II)$. 
Although it may be continuously extended beyond those points and become differentiable again.

The final observation is concerned with the fact that, very often, model equations depend on several parameters (besides the input parameters).
Because of this dependence the equilibrium family $\tilde{X}(\II)$ and input-output function may also depend on some of these parameters (generically, they depend on all of them), but they are suppressed from $z(\II)$ most the time.

More precisely, if we write the vector field in \eqref{general_dynamics} as $F(X,\II,\alpha)$, where $\alpha=(\alpha_1,\ldots,\alpha_\ell)$ is the vector of all scalar parameters that appear in the definition of $F$ then
the implicit function theorem applied to a point $(X^*,\II^*,\alpha^*)$ gives the function $\tilde{X}(\II,\alpha)$ defined in a neighborhood of $(\II^*,\alpha^*)$ such that $\tilde{X}(\II^*,\alpha^*) = X^*$ and $F(\tilde{X}(\II,\alpha),\II,\alpha) \equiv 0$.
Therefore, the input-output function $z$ can be written as $z(\II,\alpha)$ and infinitesimal homeostasis may depend on a fixed $\alpha_0$.
That is, infinitesimal homeostasis occurs at $\II_0$ only when $\alpha=\alpha_0$.

It is here that notion of infinitesimal homeostasis places the study of homeostasis in the context of singularity theory.
For instance, the dependence on the parameters $\alpha$ for the occurrence of infinitesimal homeostasis is related to higher degeneracy conditions, in addition to \eqref{definition_homeostasis}, which leads to distinct `forms' of infinitesimal homeostasis.

For example, let us consider the single parameter case $\II\in\mathbb{R}$.
We say that $z$ exhibits \emph{simple homeostasis} if
\begin{equation} \label{e:simple_homeostasis}
 z'(\II_0) = 0
 \quad\text{and}\quad 
 z''(\II_0) \neq 0
\end{equation}
In this case, $\II_0$ is called a Morse singularity (or a nondegenerate critical point) and it is persistent under any small perturbation of $\alpha$ (see subsection \ref{SSEC:SIOF}).
In fact, a Morse singularity is persistent under any small perturbation of $z$ in the space of smooth functions (with the appropriate topology) \cite{gg1973}. 
We say that $z$ exhibits \emph{chair homeostasis} if
\begin{equation} \label{e:chair_homeostasis}
 z'(\II_0) = 0
 \quad\text{and}\quad 
 z''(\II_0) = 0
 \quad\text{and}\quad 
 z'''(\II_0) \neq 0
\end{equation}
In this case, there is at least one parameter $\alpha_j$ (or a combination of several components of $\alpha$) that should be fixed together with $\II_0$ to have infinitesimal homeostasis. 
However, as we will see below, singularity theory implies that small perturbations of $z$ (that is, variation of the suppressed parameters) change the `almost flat region' only slightly.
Following Nijhout~\etal~\cite{nbr2014} we define a \emph{plateau} as a region of $\Omega$ over which $z$ is approximately constant.
In other words, the infinitesimal homeostasis points are not persistent, but the family of functions $z(\II,\alpha)$ have a plateau for all $(\II,\alpha)$ in a neighborhood of $(\II_0,\alpha_0)$ (see Definition \ref{d:universal_unfolding} in subsection \ref{SSEC:SIOF}). 

The observation that an unstable structure `determines' (unfolds) all the neighboring stable behaviors goes back to Ren\'{e} Thom's notion of a singularity as an \emph{organizing center} \cite{thom1975}.
Even though the singularity itself is unstable under perturbation, it `organizes' all its small perturbations into the universal unfolding. 
Unlike the singularity, the universal unfolding is structurally stable and thus contains all the possible persistent behaviors, as well as the number of parameters to (qualitatively) model all those behaviors.

This is important in applications because it may be very difficult to calculate exactly the values of $(\II_0,\alpha_0)$ where infinitesimal homeostasis occurs and then one must resort to numerical methods.
It is possible to compute parts of the graph of input-output functions and find the plateau using numerical methods for continuation of equilibrium points, such as \textsc{Auto} from \textsc{XPPAut} \cite{bard2002} (for example, see Figure \ref{F:KF} in subsection \ref{SS:TNION} and Figure \ref{e_coli_plot2} in subsection \ref{SS:BAC_CHEMO}).

Finally, we should mention that there is another notion of `robustness' often referred in the control theoretic literature, see  \cite{ang2013,tang2016,araujo2018,aoki2019} for example.
This notion is closely related to other concepts such as \emph{structural identifiability} \cite{meshkat2014,meshkat2015,meshkat2019,meshkat2022,meshkat2023} and \emph{dynamical compensation} \cite{karin2016,sontag2017,villaverde2017}.

\subsubsection{Examples of Homeostasis}

Here we present some examples of input-output functions associated to biological models.

\begin{example}[Folate Cycle] \label{ex:nrbu2004} \normalfont
Nijhout \etal~\cite{nrbu2004} developed a model for the folate cycle based on standard biochemical kinetics and used the model to provide new insights into several different
mechanisms of folate homeostasis.
Their model is described by 6 ODEs for the concentrations of the following substrates \cite[Eqs. 4--9]{nrbu2004}:  
(1) tetra\-hydro\-folate (THF), (2) 5-methyl\-tetra\-hydro\-folate (5mTHF), (3) di\-hydro\-folate (DHF), (4) 5,10-methyl\-ene\-tetra\-hydro\-folate (5,10-CH2-THF), (5) 5,10-methenyl\-tetra\-hydro-folate (5,10-CH=THF) and (6) 10-formyl\-tetra\-hydro\-folate (10f-THF).
The coupling structure of the model is represented by a diagram containing the 6 substrates and additional 12 enzymes \cite[Fig. 1]{nrbu2004}.
In the model, folate enters and leaves the cell as 5mTHF as indicated by the presence of the input parameter $\mathcal{I}=\mathcal{F}_{\mathrm{in}}-\mathcal{F}_{\mathrm{out}}$ in the corresponding differential equation
Here, $\mathcal{F}_{\mathrm{in}}$ and $\mathcal{F}_{\mathrm{out}}$ are the rates at which 5mTHF enters and leaves the cell, respectively.
The authors consider observables of the form $V_{\mathrm{ENZYME}}=\phi(S,F)$ representing the velocity of a reaction associated to $\text{ENZYME}$, called `fluxes' \cite[Tab. III]{nrbu2004}, in terms of substrates.
The smooth functions $\phi$ are given by certain expressions involving Michaelis-Menten functions \cite[Eqs. 1--3]{nrbu2004}.
The plots of the fluxes as functions of the input parameter 
were numerically computed and homeostasis is shown by the fact that velocities do not decline substantially until total
folate is close to $0$ \cite[Figs. 4--6]{nrbu2004}.
It can be checked that the functions $\phi$ have a critical point completely determined by the kinetic parameters entering its definition.
\END
\end{example}

\begin{example}[Kidney Flow] \label{ex:sgouralis2013} \normalfont
Sgouralis and Layton \cite{sgouralis2013} show by using a mathematical model how well the effect of the myogenic mechanism affects the afferent arteriole flow rate (output) in response to pressure variation (input).
The myogenic response in the smooth muscle enables nephrons to regulate the ﬂow in afferent arterioles for a wide range of pressure values
The solid curves in \cite[Fig. 4]{sgouralis2013} (see also \cite[Fig. 6]{nbr2014}) shows the percentage change at steady state in their model as the pressure in the afferent arteriole is varied from $50 \,mm\mathrm{Hg}$ to $230 \,mm\mathrm{Hg}$. The ﬂow rate is remarkably stable between $90 \,mm\mathrm{Hg}$ and $190 \,mm\mathrm{Hg}$. 
Below $90 \,mm\mathrm{Hg}$ the ﬂow drops quickly and above $190 \,mm\mathrm{Hg}$ the ﬂow begins to increase substantially giving the same chair-shaped curve that we have seen in the Introduction. 
The dashed curves shows what the ﬂow rate would be if the myogenic response were turned off and the walls of the arteriole responded passively.
\END
\end{example}

\begin{example}[Glycolysis Pathway] \label{ex:mulukutla14} \normalfont
Mulukutla \etal~\cite{mulukutla14} use a mathematical model based on reported mechanisms for the allosteric regulations of the enzymes in the  flux of glycolysis.
They show that glycolysis exhibits multiple steady state behavior segregating glucose metabolism into high flux and low flux states. 
Here, the input parameter is the glucose consumption and the output variable is the specific lactate production.
In \cite[Fig. 3]{mulukutla14} the authors show a phenomenon of bistability of homeostasis in cultured HeLa cells (see also \cite[Fig. 4]{duncan2019}).
The data suggests the existence of two homeostasis points on the lower branch: one which is apparent in the figure, and another which we would expect to see if it were extended further.
The similar glucose consumption rates of both types of cells in very high and very low glucose environments indicate two switches on the border of the plateaus. 
The homeostasis points and the bistable behavior are suggestive of the behavior depicted \cite[Fig. 2]{mulukutla14} obtained by simulation of the model equations.
Duncan and Golubitsky \cite[Fig. 3]{duncan2019} propose a mechanism combining homeostasis and bifurcation that qualitatively reproduces the bistability observed in \cite{mulukutla14}.
\END
\end{example}

\begin{example}[Extracellular Dopamine] \label{ex:best2009} \normalfont
Best \etal~\cite{best2009,nbr2014} propose a biological network to study homeostasis of extracellular dopamine (eDA) in response to variation in the activities of the enzyme tyrosine hydroxylase (TH) and the dopamine transporters (DAT). 
Hence, in this model we have two input parameters (eDA,TH) and one output variable DAT.
The authors derive a differential equation model for the biochemical network \cite[Fig. 1]{best2009}.
In \cite{nbr2014} the authors fix reasonable values for all parameters in the model with the exception of the concentrations of TH and DAT. 
In \cite[Fig. 8]{nbr2014} the authors show the equilibrium value of eDA as a function of TH and DAT in their model. 
The white dots indicate the predicted eDA values for the observationally determined values of TH and DAT in the wild-type genotype (large white disk) and the polymorphisms observed in human populations (small white disks).
These polymorphisms raise or lower the activity of TH and lower the activity of the DAT as indicated in \cite[Tab. 1]{nbr2014}.
This result is scientifically important because almost all of these genetic polymorphisms (white disks) lie on the plateau and that indicates homeostasis of eDA. 
Presumably, this is because polymorphisms that would result in large changes in eDA are deleterious and would, therefore, be removed from the population. Thus, although the polymorphisms have large effects on the activities of TH and
DAT, they have only a very small effect on the phenotypic variable eDA, and thus can be considered to be cryptic genetic variation.
Note that the plateau contains a line from left to right at about eDA $= 0.9$. 
In this respect, the surface graph in \cite[Fig. 8]{nbr2014} appears to resemble that of a non-singular perturbed hyperbolic umbilic (see Table \ref{T:classification}) and Figure \ref{F:plateau}. 
See also the level contours of the hyperbolic umbilic in Figure \ref{F:plateau}. 
This figure shows that the hyperbolic umbilic is the only low-codimension singularity that contains a single line in its zero set, see \cite{gs2018} for more details.
\END
\end{example}

\subsection{Singularity Theory of Input-Output Functions: Elementary Catastrophe Theory}
\label{SSEC:SIOF}

As discussed in the Introduction, Nijhout \etal~\cite{nbr2014} observe that homeostasis appears in many applications through the notion of a {\em chair}.  
Golubitsky and Stewart~\cite{gs2018}
observed that a chair can be thought of as a singularity of a scalar input-output function, one where $z(\II)$ `looks like' a homogeneous cubic, e.g. $z(\II) \approx \II^3$.  
More precisely, the mathematics of singularity theory~\cite{poston-stewart1978,golubitsky1978} replaces `looks like' by `up to a change of coordinates.'

\begin{definition}
\label{D:singularity}
A smooth function $z:\R^N\to\R$ is singular at a point $\II_0\in\R^N$ if 
\begin{equation} \label{e:singularity}
\nabla z(\II_0) = 0
\end{equation}
A point $\II_0$ satisfying \eqref{e:singularity} is called a {\em singularity}, or {\em critical point}, of $z$.
A point $\II_0$ is a {\em nondegenerate singularity} of $z$ if, in addition to \eqref{e:singularity}, its Hessian matrix is invertible.
Otherwise, $\II_0$ is a called a {\em degenerate singularity}.
\END
\end{definition}

\begin{definition} \label{D:right_equivalence}
Two smooth functions $z,w:\R^N\to\R$ are {\em right equivalent} on a neighborhood of $\II_0\in\R^N$ if 
\[
w(\II) = z(\Lambda(\II)) + K
\]
where $\Lambda:\R^N\to\R^N$ is an \emph{invertible change of coordinates}, or simply a \emph{diffeomorphism}, on a neighborhood of $\II_0$ and $K\in\R^N$ is a constant.
\END
\end{definition}

The simplest classification theorem of singularity theory states that $z:\R^N\to\R$ is right equivalent to $w(\II_1,\ldots,\II_N) = \sum_{i=1}^N\pm\II_i^2$ on a neighborhood of a singularity if and only if the singularity is nondegenerate.
This is the content of the classical Morse Lemma.
Geometrically, this says that the graph of $z$ `looks like' a quadratic function near a nondegenerate singularity.
Now, the the main goal of singularity theory is the study of degenerate singularities.

The transformations of the input-output map $z(\II)$ given in Definition~\ref{D:right_equivalence} are just the standard change of coordinates in elementary catastrophe theory~\cite{golubitsky1978, poston-stewart1978, zeeman1977}.
In \cite{gs2017,gs2018} it is shown that a special class of transformations on the vector field induces the class of right equivalences on the input-output functions.
Although their proof is formulated for a special class of observables (the projections onto the coordinate variables) the use of right equivalence is completely justified.
The proof that the action of the diffeomorphism $\Lambda$ on the vector field induces the correct action on the input-output function works in the general setting and the addition of the constant $K$ is justified because we are looking for plateaus on which $z$ is approximately constant but not specifically what that constant is.

Now we can therefore use standard results from elementary catastrophe theory to find normal forms and universal unfoldings of $z$, as we now explain.
Informally, the {\em codimension} of a singularity is the number of conditions on derivatives that determine it. 
This is also the minimum number of extra variables required to specify all small perturbations of the singularity, up to changes of coordinates. 
These perturbations can be organized into a family of maps called the {\em universal unfolding}, which has that number of extra variables.

\begin{definition} \label{d:universal_unfolding}
A smooth function $W(\II, a)$, $a\in\R^\ell$, is an {\em unfolding} of $w(\II)$ if $W(\II,0)=w(\II)$.
Here, $a\in\R^\ell$ is called {\em unfolding parameter}.
An unfolding $W$ of $w$ is called an {\em universal unfolding} if every unfolding $V(\II,b)$, $b\in\R^m$, of $w$ {\em factors through} $W$.  
That is, 
\begin{equation}  \label{e:factors_through}
V(\II,b) = W(\Lambda(\II,b), A(b)) + K(b)
\end{equation}
where $A,K:\R^m\to\R^\ell$ and $\Lambda:\R^N\times\R^m\to\R^N$.
\END
\end{definition}

It follows that {\em every} small perturbation $V(\cdot, b)$ is equivalent to a perturbation $W(\cdot,A(b))$ of $w$ in the $W$ family.  

\begin{example}[Single Input Parameter] \normalfont
Let $z:\R\to\R$ be singular at the origin.
Because $z(\II)$ is $1$-dimensional, we consider singularity types near the origin of a $1$-variable function $w(\II)$.  
Such singularities are determined by the first non-vanishing $\II$-derivative $w^{(k)}(0)$ (unless all derivatives vanish, which is an `infinite codimension' phenomenon that we do not discuss further).
If such $k$ exists, the normal form is $\pm\II^k$. 
When $k \geq 3$ the universal unfolding for catastrophe theory equivalence is
\[
\pm \II^k + a_{k-2}\II^{k-2} + a_{k-3}\II^{k-3} + \cdots + a_1\II
\]
for parameters $a=(a_j)$ and when $k = 2$ (a nondegenerate singularity) the universal unfolding is $\pm\II^2$.
The codimension $\ell$ in this setting is therefore $\ell=k-2$.
To summarize: the normal form of the input-output function near a nondegenerate singularity (codimension $0$) is
\begin{equation} \label{e:simple_homeo}
z(\II) = \pm\II^2
\end{equation}
and no unfolding parameter is required.
Here, $\II_0=0$ is a simple homeostasis point.
Similarly,
\begin{equation} \label{e:chair_homeo}
z(\II) = \pm\II^3
\end{equation}
is the normal form of the input-output function near a least degenerate singular point (codimension $1$), and 
\begin{equation} \label{e:chair_unfolding}
z(\II;a) = \pm\II^3 + a \II
\end{equation}
is a universal unfolding. 
The geometry of this universal unfolding is shown in Figure \ref{F:unfolding_chair_sing}.
Here, the singularity occurs when $(\II_0,a_0)=(0,0)$ and it is a chair homeostasis point, Figure \ref{F:unfolding_chair_sing}(b).
When $a\neq 0$ there is no singularity at $\II_0=0$ but the plateau persists all $a$ near $a_0=0$.
The difference is that when $a<0$ the plateau has two nondegenerate singular points, Figure \ref{F:unfolding_chair_sing}(a) and when $a>0$ the plateau has no singular points, Figure \ref{F:unfolding_chair_sing}(c).
See \cite{castrigiano2004} for details.
\END
\end{example}

\begin{figure*}[!bt]
\begin{subfigure}[b]{0.3\textwidth}
\centering
\includegraphics[width=\textwidth, trim=2mm 2mm 0.0mm 0.0mm, clip=true]{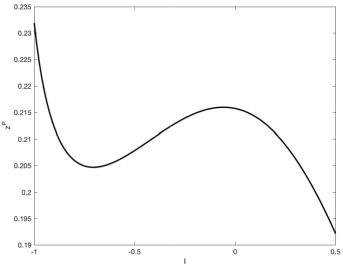}
\caption{$a<0$}
\label{F:a_samller_0}
\end{subfigure} 
\centering
\begin{subfigure}[b]{0.3\textwidth}
\centering
\includegraphics[width=\textwidth, trim=2mm 2mm 0.0mm 0.0mm, clip=true]{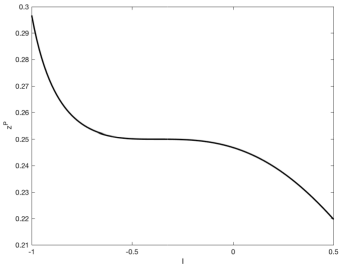}
\caption{$a=0$}
\label{F:a_equals_0}
\end{subfigure}
\centering
\begin{subfigure}[b]{0.3\textwidth}
\centering
\includegraphics[width=\textwidth, trim=2mm 2mm 0.0mm 0.0mm, clip=true]{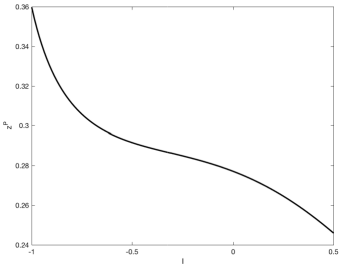}
\caption{$a>0$}
\label{F:a_bigger_0}
\end{subfigure}
\caption{\label{F:unfolding_chair_sing}
Input-output function exhibiting chair homeostasis at some point $\II_0$. 
The corresponding to the universal unfolding is $z(\II;a) = -\II^3 + a \II$, up to right equivalence.
(a) the unfolding parameter $a<0$ and the degenerate singularity unfolds into two nondegenerate singularities; (b) the unfolding parameter $a=0$ and the singularity is degenerate; (c) the unfolding parameter $a>0$ and the degenerate singularity unfolds into no singular points.}
\end{figure*}

\begin{example}[Multiple Input Parameters] \normalfont
We will consider the $N = 2$ case. 
Table \ref{T:classification} summarizes the classification when $N = 2$, so $\II= (\II_1, \II_2) \in \R^2$. 
Here the list is restricted to codimension $\leq 3$.
The associated geometry, especially for universal unfoldings, is described in~\cite{brocker1975, gibson1979, poston-stewart1978} up to codimension 4.
Singularities of much higher codimension have also been classified, but the complexities increase considerably. 
For example, Arnold~\cite{arnold1976} provides
an extensive classification up to codimension 10 (for the complex analog).
Note that in the $N = 2$ case the normal forms for $N = 1$ appear again, but now there is an extra quadratic term $\pm \II_2^2$. 
This term is a consequence of the {\emph splitting lemma} in singularity theory, arising here when the Hessian of $z$ has rank 1 rather than rank 0 (corank 1 rather than corank \cite{brocker1975, poston-stewart1978,zeeman1977}.
The presence of the $\pm \II_2^2$ term affects the range over which $z(\II)$ changes when $\II_2$ varies, but not when $\II_1$ varies.
\END
\end{example}

The standard geometric features considered in catastrophe theory focus on the gradient of the function $z(\II)$ in normal form.
In contrast, what matters here is the function itself. 
Specifically, we are interested in the region in the $\II$-space where the function $z$ is approximately constant, i.e. the plateau of $z$.

More specifically, for each normal form $z(\II)$ we choose a small $\delta>0$ 
and form the set
\begin{equation} \label{e:plateau}
P_\delta = \{\II\in\Omega: \|z(\II)\| \leq \delta\}.
\end{equation}
This is the $\delta$-\emph{plateau} region on which $z(\II)$ is approximately constant, where $\delta$ specifies how good the approximation is.  
Even though the definition of infinitesimal homeostasis is qualitative it is possible to extract quantitative bounds on the size of the plateau using Taylor's theorem with remainder.
See \cite[Chap. 5]{gs2023} for details.

As we mentioned before, if $z$ is perturbed slightly (by variation of the suppressed parameters), the plateau $P_\delta$ varies continuously.
Therefore, one can compute the approximate plateau by focusing on the singularity, rather than on its universal unfolding. 
In other words, for sufficiently small perturbations plateaus of singularities depend mainly on the singularity itself and not on its universal unfolding.

This observation is important because the universal unfolding has many zeros of the gradient of $z(\II)$, hence `homeostasis points' near which the 
value of $z(\II)$ varies more slowly than linear.  
However, this structure seems less important when considering the relationship of infinitesimal 
homeostasis with homeostasis.  
See the discussion of the unfolding of the chair summarized in \cite[Figure~3]{gs2015}.

The `qualitative' geometry of the plateau -- that is, its differential topology and associated invariants -- is characteristic of the singularity. 
This offers one way to infer the probable type of singularity from numerical data; it also provides information about the region in which
the system concerned is behaving homeostatically.  
We do not develop a formal list of invariants here, but we indicate a few possibilities.

The main features of the plateaus associated with the six normal forms of Table \ref{T:classification}
are illustrated in Figure \ref{F:plateau}. 
Figure plots, for each normal form, a sequence of contours from $-\delta$ to $\delta$; the union  is a picture of the plateaus.
By unfolding theory, these features are preserved by small perturbations of the model, and by the choice of $\delta$ in~\eqref{e:plateau} provided it is sufficiently small.
Graphical plots of such perturbations (not shown) confirm this assertion.

\begin{table*}[!hbt]
\caption{Classification of singularities of input-output maps $\R^2 \rightarrow \R$ of codimension $\leq 3$.}
\label{T:classification}
\begin{tabular}{llcl}
\toprule
name & normal form & codim & universal unfolding \\
\midrule
Morse (simple${}^\pm$) & $\pm \II_1^2 \pm \II_2^2$ & 0 & $\pm \II_1^2 \pm \II_2^2$ \\
fold (chair) & $\II_1^3 \pm \II_2^2$ & 1 & $\II_1^3 + a\II_1 \pm \II_2^2$\\
cusp${}^\pm$ & $\pm \II_1^4 \pm \II_2^2$ & 2 & $\pm \II_1^4 + a\II_1^2 + b\II_1 \pm \II_2^2$ \\
swallowtail  & $\II_1^5\pm \II_2^2$ & 3 & $\II_1^5 + a\II_1^3 + b\II_1^2 +c\II_1  \pm \II_2^2$ \\
hyperbolic umbilic & $\II_1^3+\II_2^3$ & 3 & $\II_1^3+\II_2^3 +a\II_1\II_2 + b\II_1 +c\II_2$\\
elliptic umbilic & $\II_1^3-3\II_1\II_2^2$ & 3 & $\II_1^3-3\II_1\II_2^2 + a(\II_1^2+\II_2^2) + b\II_1 +c\II_2$\\
\bottomrule
\end{tabular}
\end{table*}


\begin{figure*}[h!]  
\centerline{%
\includegraphics[width=1.75in]{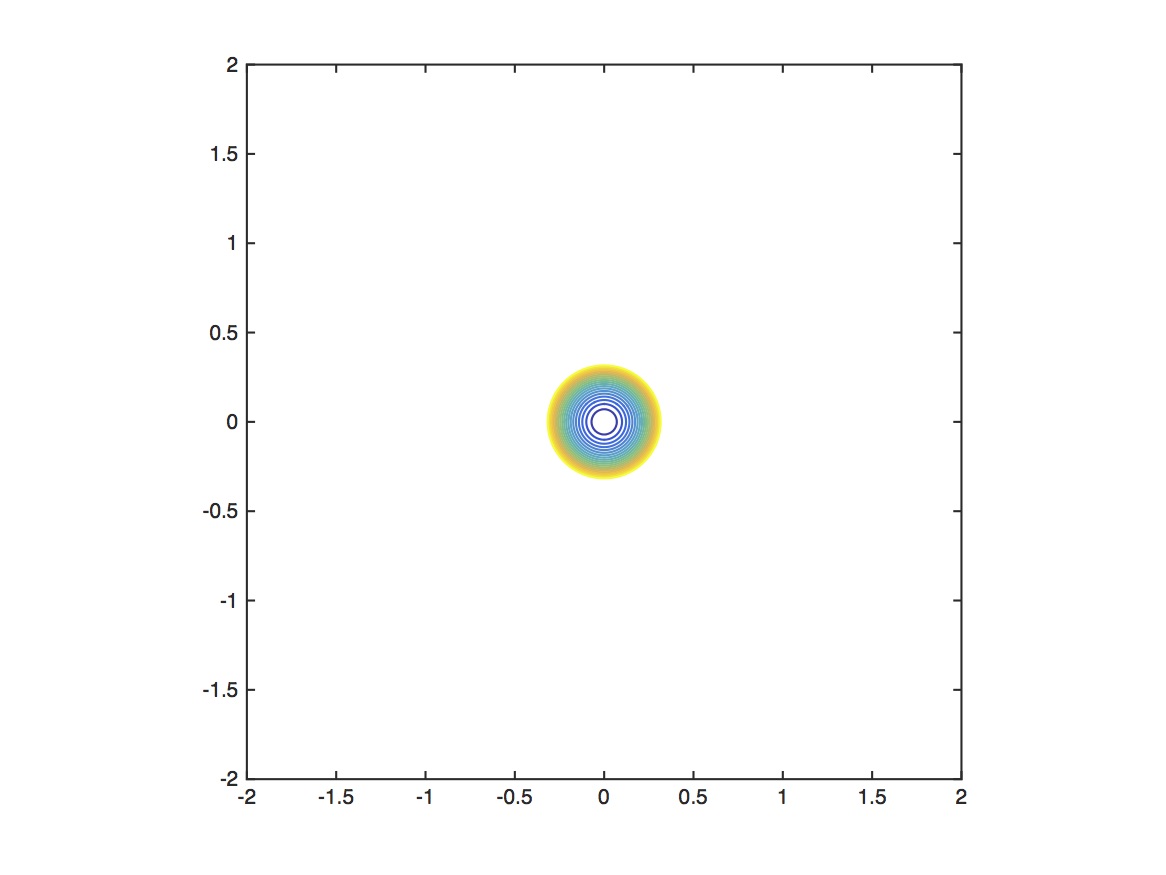} 
\includegraphics[width=1.75in]{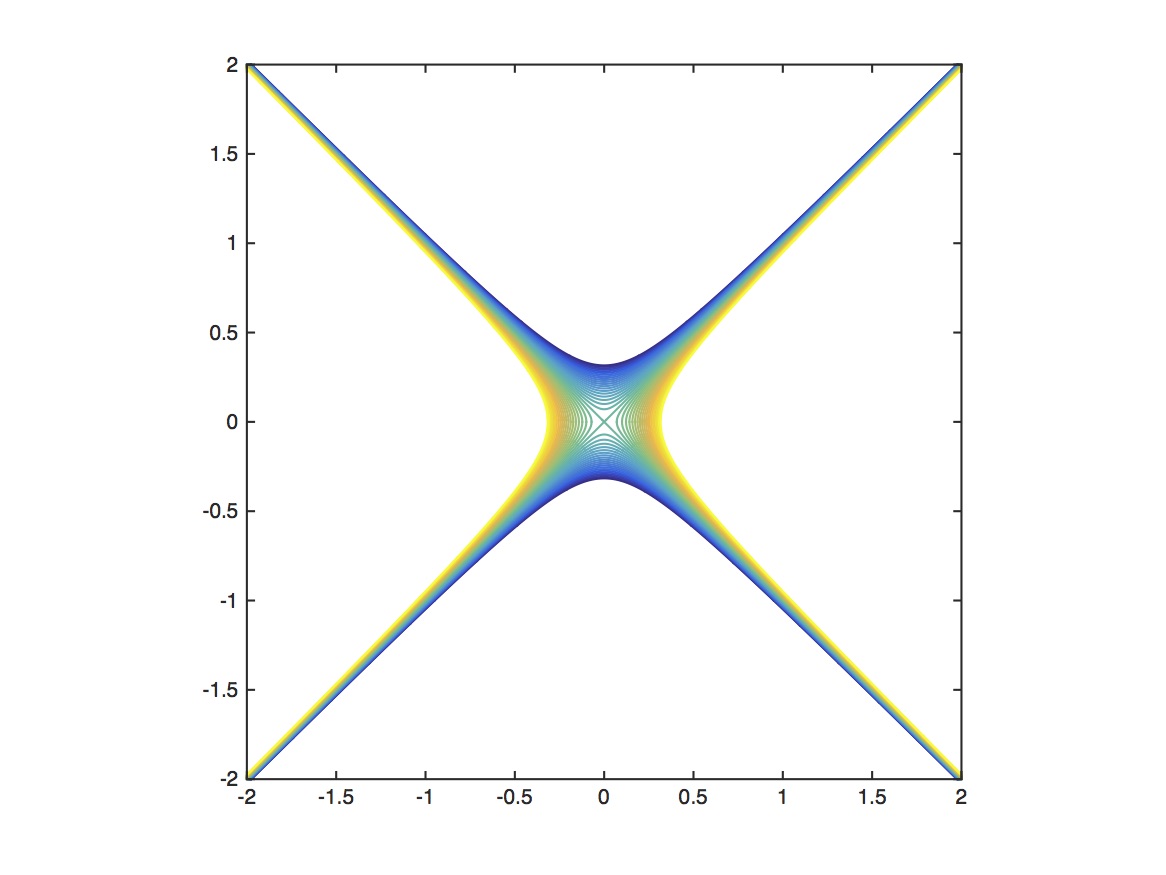} 
\includegraphics[width=1.75in]{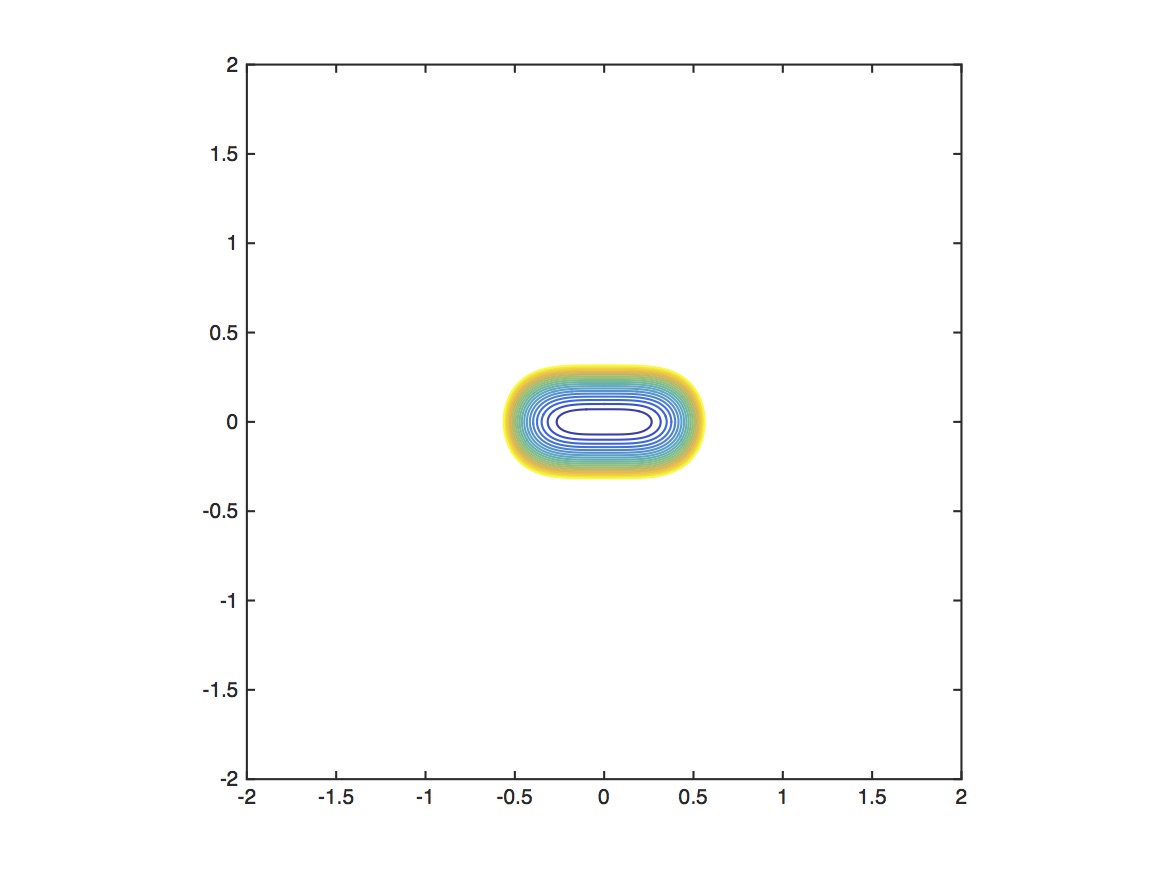} 
\includegraphics[width=1.75in]{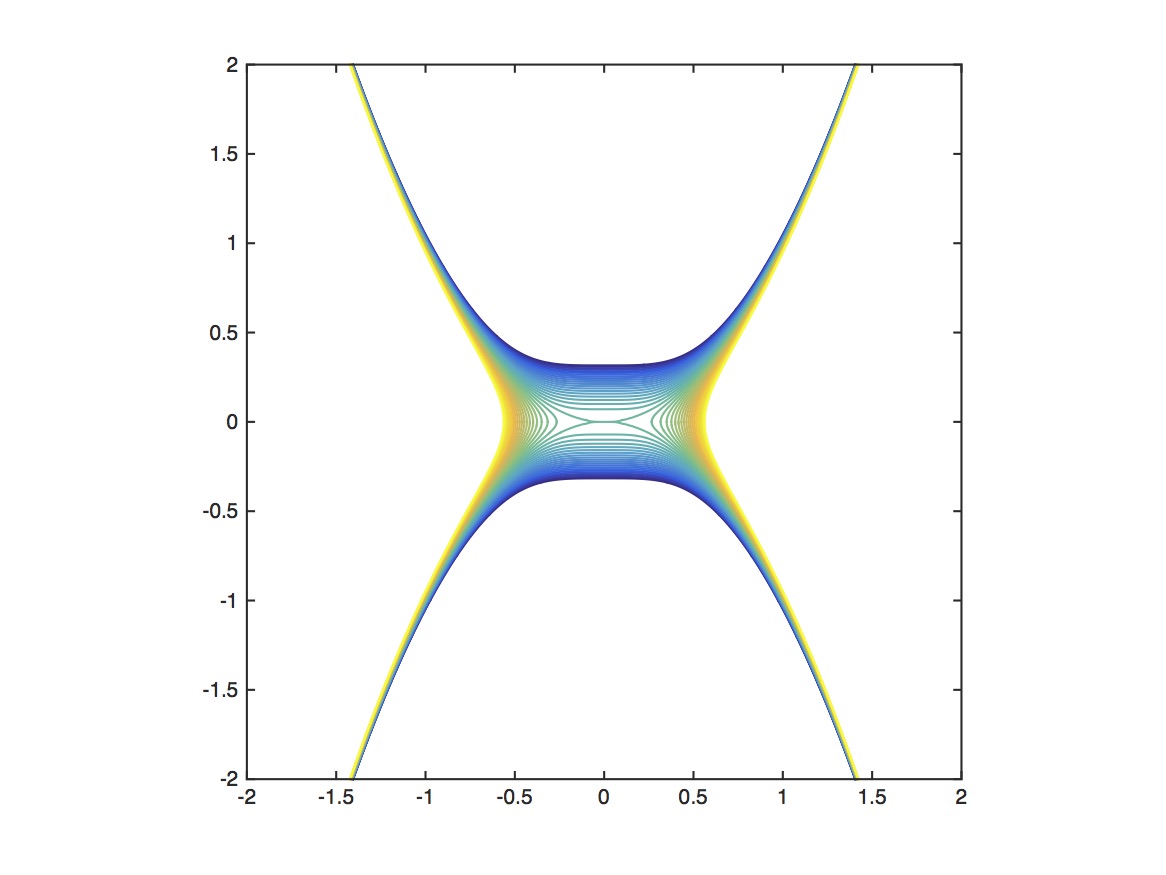}}
\centerline{simple$^+$ \hspace{1.3in} simple$^-$ \hspace{1.3in} cusp$^+$ \hspace{1.5in} cusp$^-$}
\vspace{0.1in}
\centerline{%
\includegraphics[width=1.75in]{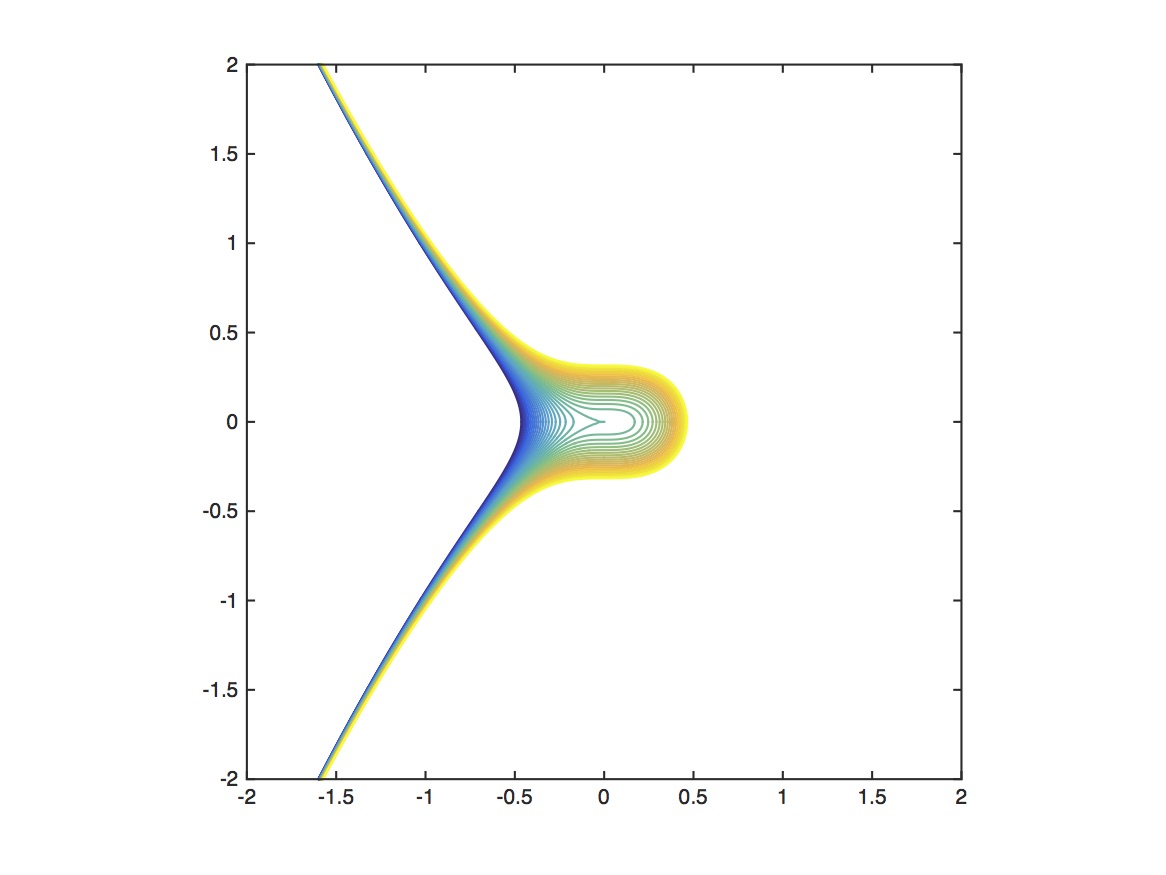} 
\includegraphics[width=1.75in]{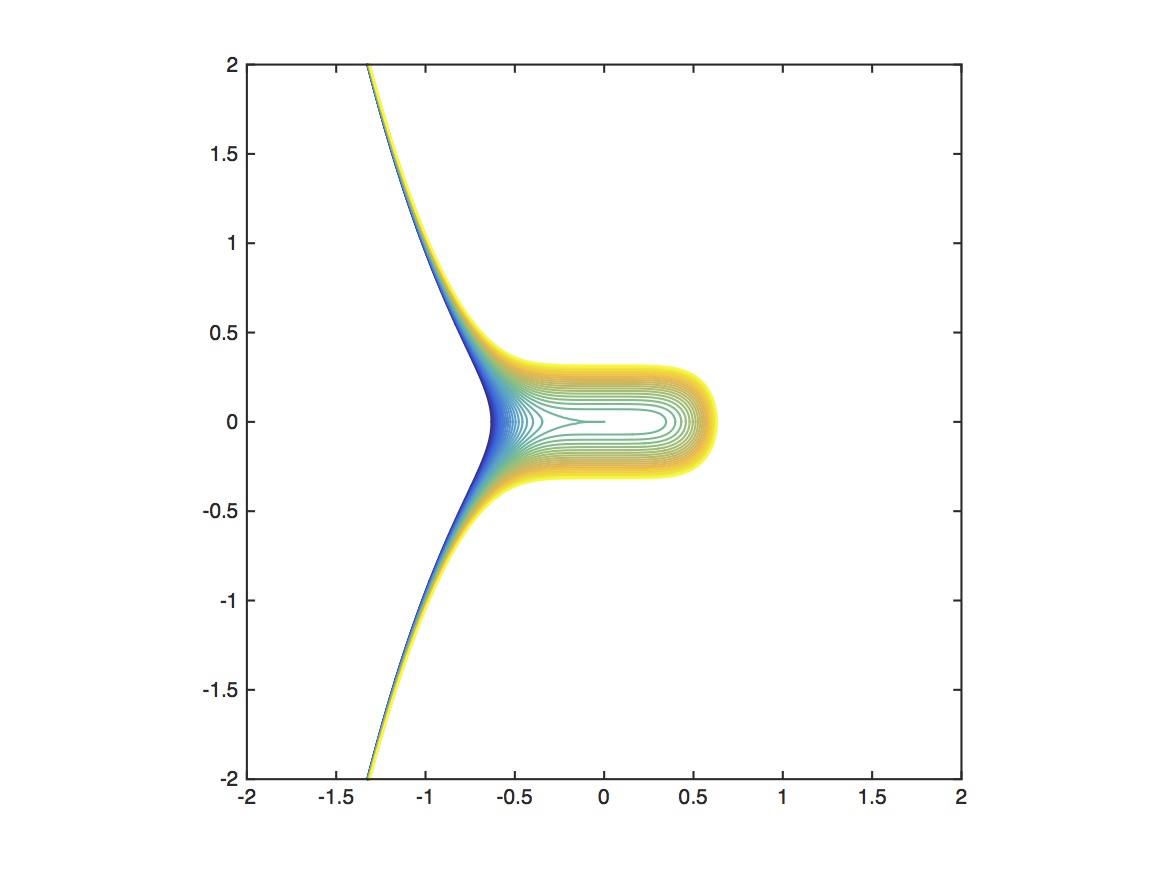} 
\includegraphics[width=1.75in]{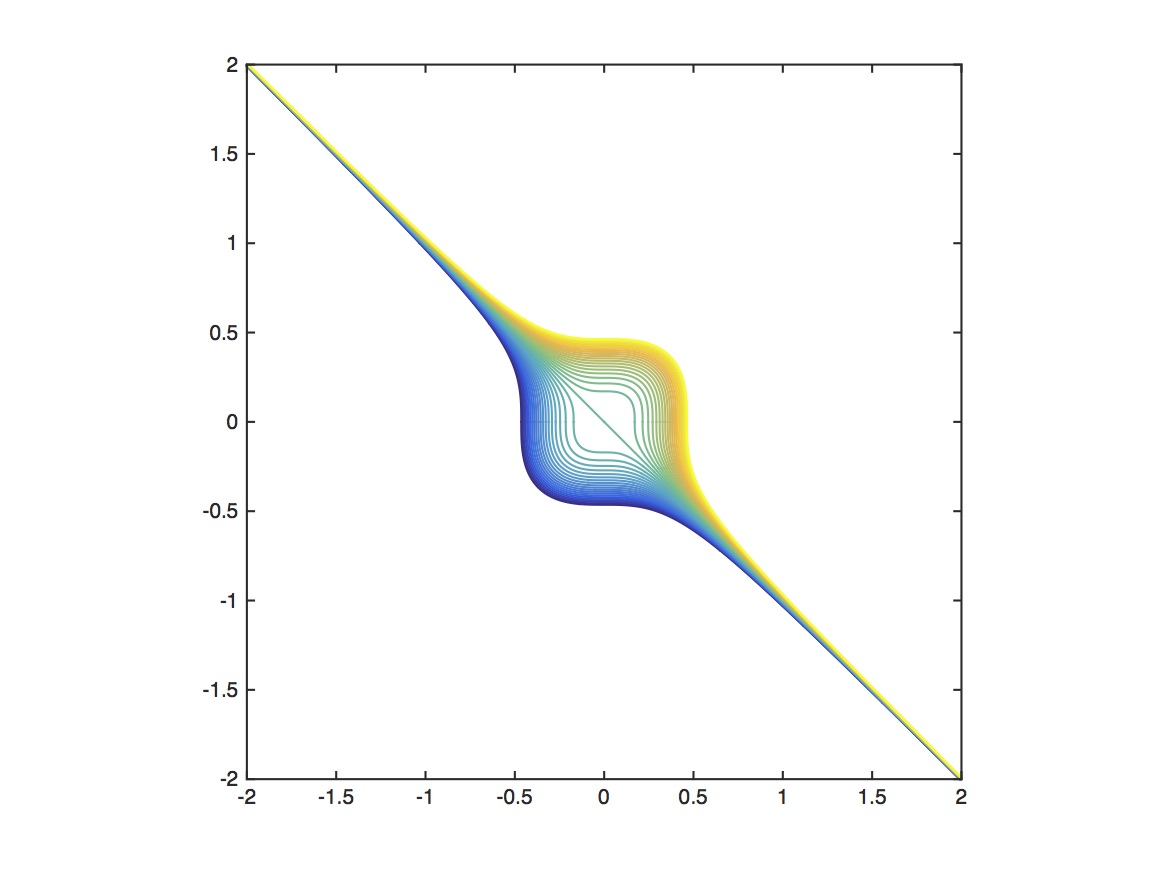} 
\includegraphics[width=1.75in]{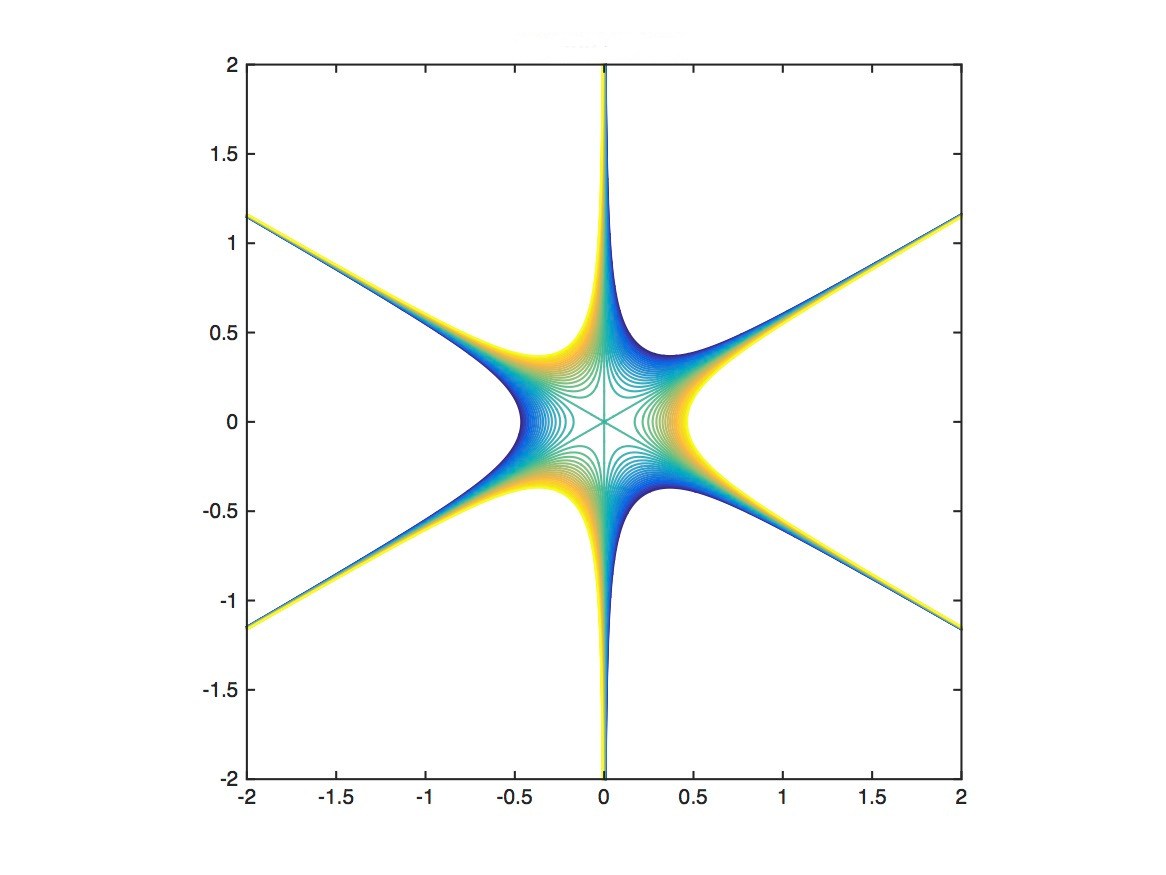}}
\centerline{chair \hspace{1.3in} swallowtail \hspace{1.2in} hyperbolic \hspace{1.2in} elliptic}
\caption{Plateaus shown by contour plots for each singularity in Table~\ref{T:classification}. Reproduced from \cite[Fig. 4]{gs2018}.  
Each plot has 200 equally spaced contour levels for $\delta$ from $-0.2$ to $0.2$.}  
\label{F:plateau}
\end{figure*}

The notion of infinitesimal homeostasis point can be generalized in several directions.
For instance, Andrade \etal~\cite{andrade2022} consider the situation where the infinitesimal homeostasis point $\II_0$ is at the boundary of the domain of the input-output function, which may be finite or infinity.
They introduce the notion of \emph{asymptotic infinitesimal homeostasis} for input-output functions $z$ defined on a domain of the form $D=[\II^{-},+\infty)$, for example.
In this case we say that $z$ exhibits \emph{asymptotic infinitesimal homeostasis} at $+\infty$ if
\[
 \lim_{\mathcal{I} \to \infty} z'(\mathcal{I}) = 0.
\]
In \cite{andrade2022} the authors use some simple criteria for the existence of asymptotic infinitesimal homeostasis to study a model for copper homeostasis (see subsection \ref{SS:METAL}).

\begin{theorem}
Let $z:D \to \mathbb{R}$, with $D=(\mathcal{I}^{-}, +\infty)$ be a smooth function.
\begin{enumerate}[(1)]
\item If $z$ exhibits near-perfect homeostasis on $D$ (i.e., it is upper and lower bounded on $D$) and is monotonic then $z$ exhibits asymptotic infinitesimal homeostasis.
\item {\rm (Hadamard's Lemma~\cite{hadamard1914})} If the first $n$ derivatives $z',\ldots,z^{(n)}$ of $z$ are bounded, then $\displaystyle\lim_{\II\to+\infty}z(\II)=c$ with $c\in \mathbb{R}$ implies that
\[
 \lim_{\II\to+\infty} z^{(j)}(\II)=0 
\]
for all $1\leq j \leq n$.
\end{enumerate}
\end{theorem}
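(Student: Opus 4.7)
The two parts call for distinct arguments, so I would treat them separately.

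For part (1), the plan is a contradiction argument exploiting monotonicity. Because $z$ is monotone and bounded, the limit $L=\lim_{\II\to+\infty}z(\II)$ exists in $\mathbb{R}$. Replacing $z$ by $-z$ if necessary, I assume $z$ is nondecreasing, so $z'\geq 0$ throughout $D$. Suppose, for contradiction, that $z'(\II)$ does not tend to $0$. Then there exist $\epsilon>0$ and a sequence $\II_k\to+\infty$ with $z'(\II_k)\geq\epsilon$. The key step is to promote this pointwise inequality to one that holds on a window of fixed positive length: smoothness (specifically a local bound on $z''$ near infinity) yields $\eta>0$ independent of $k$ such that $z'(t)\geq\epsilon/2$ for all $t\in[\II_k-\eta,\II_k+\eta]$. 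Passing to a subsequence whose windows are pairwise disjoint and integrating $z'$ over them gives $L-z(\II^{-})\geq\sum_k\epsilon\eta=+\infty$, contradicting boundedness. The main delicate point is the promotion step, which is why the result is really a Barbalat-type statement implicitly requiring uniform continuity of $z'$ on a neighborhood of infinity.

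For part (2), my plan is to reduce everything to the classical Landau--Kolmogorov--Hadamard interpolation inequality
\[
\sup_{[T,+\infty)}|f^{(j)}|\;\leq\;C_{n,j}\,\Bigl(\sup_{[T,+\infty)}|f|\Bigr)^{1-j/n}\Bigl(\sup_{[T,+\infty)}|f^{(n)}|\Bigr)^{j/n},
\]
valid for any $C^{n}$ function $f$ on a half-line and for $1\leq j\leq n-1$, with a universal constant $C_{n,j}$. Applying this to $f=z-c$: the factor involving $\sup|z^{(n)}|$ is uniformly bounded by hypothesis, while $\sup_{[T,+\infty)}|z-c|\to 0$ as $T\to+\infty$ since $z\to c$. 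Letting $T\to+\infty$ forces $\sup_{[T,+\infty)}|z^{(j)}|\to 0$, which is exactly $\lim_{\II\to+\infty}z^{(j)}(\II)=0$ for $1\leq j\leq n-1$. The endpoint $j=n$ is then recovered either by reapplying the argument to the new function $z^{(n-1)}$ (which tends to $0$ while $z^{(n)}$ is still bounded) or by a direct limiting version of the inequality.

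The hardest step is the interpolation inequality itself, but it is classical and admits a short Taylor-based proof: expand $f(T+h)$ to order $n-1$ with Lagrange remainder, solve a small Vandermonde-type linear system at several choices of $h$ to isolate $f^{(j)}(T)$, and bound each remainder by $h^{n-j}\sup|f^{(n)}|$; optimizing in $h$ produces the interpolation bound. Once this inequality is in hand, both halves of the theorem follow from routine ``shift-the-window-to-infinity'' arguments, and the main conceptual obstacle is really recognizing that smoothness alone in part (1) must be supplemented by the boundedness of higher derivatives that underlies the Barbalat reading of the statement.
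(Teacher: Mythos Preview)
The paper does not actually prove this theorem; it states the result (attributing part~(2) to Hadamard) and moves on. So there is no paper proof to compare against, and your proposal must be judged on its own.

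Your Part~(2) strategy via the Landau--Kolmogorov inequality is the standard route and works cleanly for $1\le j\le n-1$. The trouble is the endpoint $j=n$. Your suggestion to ``reapply the argument to $z^{(n-1)}$'' does not close the gap: once you know $z^{(n-1)}\to 0$ and $z^{(n)}$ is bounded, you are back to the $n=1$ situation, and that case is simply false as stated. Take $z(\II)=\sin(\II^{2})/\II$ on $(1,\infty)$: then $z\to 0$ and $z'(\II)=2\cos(\II^{2})-\sin(\II^{2})/\II^{2}$ is bounded, yet $z'$ does not tend to $0$. So the claim cannot be proved for $j=n$ without a further hypothesis; the interpolation argument genuinely delivers only $1\le j\le n-1$ (equivalently, one needs $z^{(n+1)}$ bounded to force $z^{(n)}\to 0$).

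Your Part~(1) argument has the analogous defect, which you yourself flag: the ``promotion step'' from $z'(\II_k)\ge\epsilon$ to $z'\ge\epsilon/2$ on a window of fixed length requires a \emph{uniform} bound on $z''$ near infinity, and smoothness alone gives only local bounds. Without that extra hypothesis the conclusion fails: let $z'$ be a smooth nonnegative function consisting of bumps of height $1$ and width $2^{-k}$ centered at the integers $k$; then $z$ is smooth, nondecreasing, bounded (since $\int z'<\infty$), but $z'(k)=1$ for all $k$. So Part~(1) is really Barbalat's lemma and needs uniform continuity of $z'$ (e.g.\ boundedness of $z''$) as an additional hypothesis, exactly as you suspect in your closing remark. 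Your instincts about where the difficulties lie are correct; what is missing is the acknowledgment that these are gaps in the \emph{statement}, not merely in your argument.
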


It is interesting to mention that asymptotic infinitesimal homeostasis is related to asymptotic uniformity \cite{gabriel2023} and singularities at infinity \cite{tibuar1999}.

\subsection{Homeostasis in Input-Output Networks}
\label{SS:HION}

In this subsection we introduce input-output networks.
This concept is crucial for the development of a rich combinatorial theory.

\begin{definition} \label{defn:io_network}
An \emph{input-output network} is a directed graph $\mathcal{G}$ with a distinguished input node $\iota$ and a distinguished output node $o$. 
The network $\mathcal{G}$ is a \emph{core network} if every node in $\mathcal{G}$ is downstream from the input node $\iota$ and upstream from the output $o$. \END
\end{definition}

An {\em admissible system} of differential equations associated with $\mathcal{G}$ has the form
\begin{align}\label{eq:system}
\begin{split}
    \dot{x}_\iota &= f_\iota(x_\iota,x_\rho,x_o,\II) \\
    \dot{x}_\rho &= f_\rho(x_\iota,x_\rho,x_o) \\
    \dot{x}_o &= f_o(x_\iota,x_\rho,x_o)
\end{split}
\end{align}
where $\II\in\R$ is the input parameter, $X = (x_\iota,x_\rho,x_o) \in \R\times \R^n \times \R$ is the vector of state variables associated to nodes in $\mathcal{G}$, and $F(X,\II) = (f_\iota(X,\II),f_\rho(X),f_o(X))$ is a smooth family of mappings on the state space $\R\times \R^n \times \R$. 
Note that $\II$ appears only in the equation of system \eqref{eq:system} corresponding to the input node.

We can write \eqref{eq:system} as 
\[
 \dot{X} = F(X,\II).
\]
We denote the partial derivative of the function associated to node $j$ with respect to the state variable associated to node $k$, also called \emph{linearized couplings}, by 
\[
f_{j,x_k} = \frac{\partial}{\partial x_k} f_j
\]
We assume $f_{j,k} \equiv 0$ precisely when no arrow connects node $k$ to node $j$. 
That is, $f_j$ is independent of $x_k$ when there is no arrow $k\to j$.  
This is a modeling assumption made in $\mathcal{G}$.
Finally, we assume the \emph{genericity condition} \eqref{eq:genericity_condition}, which in this case reads
\[
 f_{\iota,\II} \neq 0
\]
generically. 

Suppose $\dot{X} = F(X,\II_0)$ has a hyperbolic equilibrium at $X^*$ at $\II^*$.
As before, we can define an input-output function $\II \mapsto x_o(\II)$ defined on a neighborhood of $\II^*$. 

\begin{remark}[SISO Systems] \label{rmk:SISO}
Each node in an input-output network $\mathcal{G}$ corresponds to a one-dimensional state variable of the admissible system.
In particular, the output node corresponds to a scalar quantity and the input parameter is a scalar quantity. 
The class of systems described by input-output networks can be seen as the class of {\em single input, single output} (SISO) systems.
As we will see later, it is possible to extend this definition to include input-output networks with multiple input nodes, but a single input parameter and single output, and multiple inputs, single output (MISO). \END
\end{remark}

\begin{remark}[Input Node $\neq$ Output Node] \label{rmk:input=output}
In Definition \ref{defn:io_network} we explicitly exclude the possibility that the output node is one of the input nodes.
For now, this assumption is taken into account purely for the sake of convenience. 
In fact, all the results should be valid when the input $=$ output, but then all the theorems and proofs should be properly adapted to take this particular case into account. 
This possibility will be considered in greater detail in Subsection \ref{SS:METAL}
\END
\end{remark}

\begin{remark}[Multiple Outputs]
\label{rmk:multiple_outputs}
It is important to clarify why the input-output function $z$ should be always real-valued ($1$-dimensional).
This seems to be the correct way to formulate the notion of simultaneously tracking `multiple outputs'.
Even in the control theoretic context of single input, multiple output (SIMO) and multiple input, multiple output (MIMO) the observable $\phi$ is real-valued (typically it is a linear functional).
Actually, there is a singularity theoretic reason for this requirement, as well.

Let us try to define a two-output input-output function by considering a network with one input node $x_\iota$ and two output nodes $x_{o_1}$ and $x_{o_2}$.
Then the input-output function would be
a mapping $\mathbb{R}\to\mathbb{R}\times\mathbb{R}$ given by $\II\to\big(x_{o_1}(\II),x_{o_2}(\II)\big)$.
This, in turn, would force us to consider more general coordinate changes on the target space $\mathbb{R}\times\mathbb{R}$ (recall that in the real-valued situation the action of the right equivalence on the target is by a translation by a constant).
Even in this simple two-output case one is led to consider general \emph{network preserving coordinate changes} \cite{gs2017b,antoneli2022}.
Typically, the most natural coordinate change on $\mathbb{R}\times\mathbb{R}$ that preserves the network structure is by a diagonal diffeomorphism $\varphi:\mathbb{R}\times\mathbb{R}\to\mathbb{R}\times\mathbb{R}$, given by $\varphi(x_1,x_2)=(\varphi_1(x_1),\varphi_2(x_2))$, where $\varphi_j:\mathbb{R}\to\mathbb{R}$ are diffeomorphisms.
The set of smooth mappings of the form $\mathbb{R}\to\mathbb{R}\times\mathbb{R}$, with the equivalence relation generated by a pair of diffeomorphisms $(\psi,\varphi)$, where $\psi$ acts on the source $\mathbb{R}$ and $\varphi$
is a diagonal diffeomorphism acting on the target 
$\mathbb{R}\times\mathbb{R}$ is called the set of \emph{divergent diagrams} \cite{mancini2002}.
The classification of singularities of divergent diagrams have been obtained and the conclusion is that all `degenerate' singularities have \emph{infinite codimension} 
\cite{dufour1977,dufour1979b,dufour1979a,mancini2002}.
Therefore, we would not be able to define an analogue of chair homeostasis (nor any other higher order singularity) in this setting.
\END
\end{remark}

Wang~\etal~\cite{wang2021} obtain a formula for the derivative of the input output function $x_o(\II)$, with respect to $\II$, that is a particular case of formula \eqref{e:z_prime} for input-output networks.
Here, the observable is the projection onto the output coordinate.
However, in this case the formula for 
$x_o'(\II)$ is much simpler than the general case and provides a straightforward criterion for the occurrence of homeostasis in an input-output network.

The \emph{homeostasis matrix} of \eqref{eq:system} is the $(n+1)\times (n+1)$ matrix $H$ is obtained from the $(n+2)\times(n+2)$ Jacobian matrix $J$ of \eqref{eq:system} by deleting its first row and last column. 
Indeed
\begin{align*}
J = \begin{pmatrix}
f_{\iota,\iota} & f_{\iota,\rho} & f_{\iota, o} \\
f_{\rho,\iota} & f_{\rho,\rho} & f_{\rho,o} \\
f_{o,\iota} & f_{o,\rho} & f_{o,o}
\end{pmatrix}
\Longrightarrow
H = \begin{pmatrix}
f_{\rho,\iota} & f_{\rho,\rho} \\
f_{o,\iota} & f_{o,\rho}
\end{pmatrix}
\end{align*}
where $J$ and $H$ are both functions of $(X(\II),\II)$ as in \eqref{eq:system}.  More precisely: 

\begin{theorem}[{\cite[lemma 1.5]{wang2021}}] \label{lem:irreducible}
The derivative of the input-output function $x_o(\II)$, with respect to $\II$ is given by
\begin{equation}
 x_o'(\II) = \pm f_{\iota,\II} \frac{\det(H)}{\det(J)} 
\end{equation}
Therefore, \eqref{eq:system} undergoes infinitesimal homeostasis at $\II_0$ if and only if $\det(H) = 0$, evaluated at $(X_0,\II_0)$. 
\end{theorem}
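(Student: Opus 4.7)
The plan is to recognize this theorem as a direct specialization of Lemma \ref{t:z_nabla} to the network setting, where both the observable and the parameter dependence have very rigid forms. First I would identify the observable: since we follow the output node, $\phi(X)=x_o$ is a coordinate projection, so $\nabla\phi=e_o$, the standard basis vector corresponding to the output coordinate. Second, because $\II$ appears only in the equation $\dot x_\iota=f_\iota(X,\II)$ for the input node, the vector $F_\II$ has just one nonzero entry: $F_\II = f_{\iota,\II}\,e_\iota$, where $e_\iota$ is the unit vector in the input coordinate.

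Plugging these into formula \eqref{e:z_prime} from Lemma \ref{t:z_nabla}, with $N=1$, gives
\[
x_o'(\II) \;=\; \frac{-1}{\det J}\, \bigl\langle e_o,\, \adj(J)\, e_\iota\bigr\rangle\, f_{\iota,\II}
\;=\; \frac{-f_{\iota,\II}}{\det J}\, \bigl[\adj(J)\bigr]_{o,\iota}.
\]
The content of the proof is now to identify the single entry $[\adj(J)]_{o,\iota}$. By definition, the adjugate is the transpose of the cofactor matrix, so this entry equals the $(\iota,o)$ cofactor of $J$, namely $(-1)^{\iota+o}\det(M_{\iota,o})$, where $M_{\iota,o}$ is the minor obtained from $J$ by deleting its $\iota$-th row and $o$-th column. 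With the node ordering in \eqref{eq:system} — input node first, output node last — this minor is exactly the homeostasis matrix $H$ defined just above the theorem. Hence
\[
x_o'(\II) \;=\; \pm\, f_{\iota,\II}\,\frac{\det(H)}{\det(J)},
\]
with the overall sign fixed by $(-1)^{\iota+o}$.

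For the biconditional, I would invoke the two standing hypotheses: $X(\II)$ is a linearly stable equilibrium, so $\det(J)\neq 0$ on the neighborhood where the input-output function is defined; and the genericity condition \eqref{eq:genericity_condition} guarantees $f_{\iota,\II}\neq 0$. Therefore the vanishing of $x_o'(\II_0)$ is equivalent to the vanishing of $\det(H)$ at $(X(\II_0),\II_0)$, which is exactly the statement of infinitesimal homeostasis.

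The main obstacle — and it is a bookkeeping obstacle rather than a conceptual one — is being careful with the indexing: one must confirm that ``delete first row and last column'' corresponds to the input/output cofactor in the adjugate and that the resulting submatrix is literally the $H$ displayed in the theorem. Everything else is a direct application of Cramer's rule, already packaged inside Lemma \ref{t:z_nabla}, together with the sparsity of $\nabla\phi$ and $F_\II$ that is intrinsic to the input-output network structure.
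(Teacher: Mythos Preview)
Your proposal is correct and matches the paper's own presentation: the paper does not give a separate proof of this theorem but explicitly remarks, just before stating it, that the formula is the particular case of \eqref{e:z_prime} obtained by taking the observable to be the projection onto the output coordinate, which is precisely the specialization you carry out. Your identification of $[\adj(J)]_{o,\iota}$ with $\pm\det(H)$ via the first-row/last-column minor, together with the appeal to $\det(J)\neq 0$ and the genericity condition $f_{\iota,\II}\neq 0$, completes the argument exactly as intended.
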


Homeostasis in a given network $\mathcal{G}$ can be determined by analyzing a simpler network that is obtained by eliminating certain nodes and arrows from $\mathcal{G}$.  We call the network formed by the remaining nodes and arrows the {\em core subnetwork}.

\begin{definition} \label{D:updown} \normalfont
A node $\tau$ in a network $\mathcal{G}$ is \emph{downstream} from a node $\rho$ in $\mathcal{G}$ if there exists a path in $\mathcal{G}$ from $\rho$ to $\tau$.  Node $\rho$ is \emph{upstream} from node $\tau$ if $\tau$ is downstream from $\rho$. 
\end{definition}

These relationships are important when trying to classify infinitesimal homeostasis.  For example, if the output node $o$ is not downstream from the input node $\iota$, then the input-output function $x_o(\II)$ is identically constant in $\II$.  Although technically this is a form of infinitesimal homeostasis, it is an uninteresting form.

\begin{definition} \label{D:core} \normalfont
Let $\mathcal{G}$ be an input-output network.
\begin{enumerate}[(a)]
\item The input-output network is a {\em core network} if every node is both upstream from the output node and downstream from the input node.  
\item Every input-output network $\mathcal{G}$ has a core subnetwork $\mathcal{G}_c$ whose nodes are the nodes in $\mathcal{G}$ that are both upstream from the output node and downstream from the input node and whose arrows are the arrows in $\mathcal{G}$ whose head and tail nodes are both nodes in $\mathcal{G}_c$. 
\end{enumerate}
\end{definition}

\begin{theorem}[{\cite[Thm. 2.4]{wang2021}}] \label{T:coreA}
Let $\mathcal{G}$ be an input-output network and let $\mathcal{G}_c$ be the associated core subnetwork. The input-output function associated with $\mathcal{G}_c$ has a point of infinitesimal homeostasis at $\II_0$ if and only if the input-output function associated with $\mathcal{G}$ has a point of infinitesimal homeostasis at $\II_0$.
\end{theorem}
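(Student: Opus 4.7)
The plan is to expose a common block structure in both the Jacobian $J$ and the homeostasis matrix $H$ that lets their determinants factor into the core data $\det(J_c)$, $\det(H_c)$ times two ``extra'' diagonal blocks whose nonvanishing is forced by hyperbolicity of the equilibrium. Combined with Theorem~\ref{lem:irreducible}, this immediately yields the biconditional.

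First I would split the internal nodes of $\mathcal{G}$ into three disjoint classes: the core internal nodes $\rho_c$ (both downstream from $\iota$ and upstream from $o$), the set $\nu$ of nodes \emph{not} downstream from $\iota$, and the set $\mu$ of nodes that \emph{are} downstream from $\iota$ but \emph{not} upstream from $o$. From Definition~\ref{D:updown} and the modeling convention $f_{j,k}\equiv 0$ when there is no arrow $k\to j$, two structural zero patterns drop out: no node in $\nu$ receives an arrow from outside $\nu$, hence $f_{j,k}=0$ whenever $j\in\nu$ and $k\notin\nu$; dually, no node in $\mu$ emits an arrow to a node outside $\mu$, hence $f_{j,k}=0$ whenever $k\in\mu$ and $j\notin\mu$.

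Next I would reorder the coordinates as $(\iota,\rho_c,o,\nu,\mu)$ so that these zero patterns put $J$ in the block form
\[
J \;=\;
\begin{pmatrix}
J_c & * & 0 \\
0 & f_{\nu,\nu} & 0 \\
* & * & f_{\mu,\mu}
\end{pmatrix},
\]
where the top-left block is precisely the Jacobian $J_c$ of the admissible system for $\mathcal{G}_c$ obtained by freezing $x_\nu$ at its necessarily $\II$-independent equilibrium value. Expanding along the rightmost block column (which only sees $f_{\mu,\mu}$) and then along the middle block column (which only sees $f_{\nu,\nu}$) gives
\[
\det(J) \;=\; \det(J_c)\,\det(f_{\nu,\nu})\,\det(f_{\mu,\mu}).
\]
Performing the analogous row/column reordering on $H$, whose rows are indexed by $(\rho_c,o,\nu,\mu)$ and columns by $(\iota,\rho_c,\nu,\mu)$, produces an identically-shaped block matrix with top-left block equal to the core homeostasis matrix $H_c$, yielding
\[
\det(H) \;=\; \det(H_c)\,\det(f_{\nu,\nu})\,\det(f_{\mu,\mu}).
\]

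To finish, linear stability of the equilibrium of $\mathcal{G}$ forces $\det(J)\neq 0$, and the first factorization then forces all three of $\det(f_{\nu,\nu})$, $\det(f_{\mu,\mu})$, $\det(J_c)$ to be nonzero, so that $\mathcal{G}_c$ inherits a hyperbolic equilibrium at $\II_0$. The second factorization reduces $\det(H)=0$ to $\det(H_c)=0$, and Theorem~\ref{lem:irreducible} applied separately to $\mathcal{G}$ and to $\mathcal{G}_c$ delivers the claimed equivalence. I expect the main obstacle to be administrative rather than conceptual: one has to verify that freezing $x_\nu$ at equilibrium genuinely produces an admissible system for $\mathcal{G}_c$ whose linearization at $\II_0$ is the top-left block above, and to check that the permutations used on the rows and columns of $J$ (and separately of $H$) are the same block transposition, so that the sign contribution to each determinant is $+1$. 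Once this bookkeeping is settled, the remainder is pure block-triangular determinant algebra.
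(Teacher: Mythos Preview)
Your argument is correct, and the paper itself does not supply a proof of this statement—it is quoted verbatim from \cite[Thm.~2.4]{wang2021}—so there is nothing in-paper to compare against. Your tripartition of the non-core nodes into those not downstream from $\iota$ (your $\nu$) and those not upstream from $o$ (your $\mu$), together with the resulting block-triangular factorizations $\det(J)=\pm\det(J_c)\det(f_{\nu,\nu})\det(f_{\mu,\mu})$ and $\det(H)=\pm\det(H_c)\det(f_{\nu,\nu})\det(f_{\mu,\mu})$, is exactly the standard argument and matches the proof in the cited source; the sign bookkeeping you flag is harmless, since for $J$ the row and column permutations coincide, and for $H$ any residual sign cancels (or is absorbed by the $\pm$ already present in Theorem~\ref{lem:irreducible}) when forming the quotient $\det(H)/\det(J)$.
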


It follows from Corollary~\ref{T:coreA} that classifying infinitesimal homeostasis for networks $\mathcal{G}$ is equivalent to classifying infinitesimal homeostasis for the core subnetwork $\mathcal{G}_c$.

\begin{definition} \label{D:backward} \normalfont
Let $\mathcal{G}_1$ and $\mathcal{G}_2$ be two input-output networks with the same number of nodes.
\begin{enumerate}[(a)]
\item We say that $\mathcal{G}_1$ and $\mathcal{G}_2$ are \emph{core equivalent} if the determinants of their homeostasis matrices are identical as polynomials in the variables $f_{j,x_k}$.
\item A {\em backward arrow} is an arrow whose head is the input node $\iota$ or whose tail is the output node $o$.
\end{enumerate}
\end{definition}

\begin{corollary} \label{P:core_equivalent}
If two core networks differ from each other by the presence or absence of backward arrows, then the core networks are core equivalent.
\end{corollary}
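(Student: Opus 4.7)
The plan is to unpack what the homeostasis matrix $H$ actually \emph{sees} of the network data, and observe that backward arrows contribute only to entries of the Jacobian $J$ that get discarded when $H$ is formed.

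First, I would recall the dictionary between arrows and Jacobian entries: an arrow $k\to j$ in $\mathcal{G}$ contributes (generically) a nonzero linearized coupling $f_{j,x_k}$, which sits in row $j$, column $k$ of $J$. Using this, I would classify the two flavors of backward arrow in Definition \ref{D:backward}(b). An arrow whose head is the input node $\iota$ is of the form $k\to \iota$, and its contribution $f_{\iota,x_k}$ lies entirely in the first row of $J$. An arrow whose tail is the output node $o$ is of the form $o\to k$, and its contribution $f_{k,x_o}$ lies entirely in the last column of $J$.

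Next, I would invoke the construction of $H$: by definition $H$ is obtained from $J$ by striking out the first row and the last column. Combining this with the previous paragraph, every entry of $J$ that can encode a backward arrow has been deleted in passing from $J$ to $H$. Conversely, the entries that survive in $H$, namely the $f_{\rho,x_\iota},\ f_{\rho,x_\rho},\ f_{o,x_\iota},\ f_{o,x_\rho}$ blocks, correspond only to arrows that are not backward (their heads are regulatory or output nodes, and their tails are the input node or regulatory nodes). So if $\mathcal{G}_1$ and $\mathcal{G}_2$ differ only in their backward arrows, the two homeostasis matrices $H_1$ and $H_2$ are \emph{literally the same matrix} of symbolic entries $f_{j,x_k}$, and in particular $\det(H_1)$ and $\det(H_2)$ are identical polynomials in the $f_{j,x_k}$. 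By Definition \ref{D:backward}(a), this is precisely core equivalence.

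I do not expect a serious obstacle here; the whole content of the corollary is the bookkeeping observation that backward arrows live in exactly the row and column that $H$ forgets. The one point worth handling carefully is the interpretation of "same number of nodes": since adding or removing backward arrows changes only arrows and not nodes, and since the hypothesis assumes both networks remain core networks, the homeostasis matrices have the same size and can be compared entry by entry as polynomials in a common set of symbols $\{f_{j,x_k}\}$. No use of the genericity condition $f_{\iota,\II}\neq 0$ or of Theorem \ref{lem:irreducible} is needed for the corollary itself, although the latter explains why this purely matrix-level statement has the homeostasis-theoretic consequence that backward arrows cannot create or destroy infinitesimal homeostasis points.
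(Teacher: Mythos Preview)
Your proposal is correct and follows essentially the same approach as the paper's proof: the paper observes in one line that backward arrows give rise to linearized couplings of the form $f_{\iota,x_k}$ and $f_{k,x_o}$, which do not appear in $H$. You have simply expanded this bookkeeping in more detail, identifying these couplings with the first row and last column of $J$ that are deleted in forming $H$.
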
 
\begin{proof}
The linearized couplings associated to backward arrows are of the form $f_{\iota,x_k}$ and $f_{k,x_o}$, which do not appear in the homeostasis matrix $H$. \qed
\end{proof}

Therefore, backward arrows can be ignored when computing infinitesimal homeostasis with the homeostasis matrix $H$.  However, backward arrows cannot be totally ignored, since they are involved in the determination of both the equilibria of \eqref{eq:system} and their stability.

Corollary~\ref{P:core_equivalent} can be generalized to a theorem giving necessary and sufficient graph theoretic conditions for core equivalence \cite[Theorem 3.3]{wang2021}.

Before proceeding to the classification results, we provide context for our results by looking at some of the biochemical models discussed by Reed in \cite{rbgsn2017}.  
In doing so we show that input-output networks form a natural category in which homeostasis may be explored. 

\subsection{$3$-node Input-Output Networks: Biochemical Networks}
\label{SS:TNION}

There are many examples of biochemical networks in the literature.
In particular 
examples, modelers decide which substrates are important and how the 
various substrates interact.  
The network resulting 
from the detailed modeling of the production of extracellular dopamine (eDA) by 
Best \etal~\cite{best2009} and Nijhout \etal~\cite{nbr2014}.   These authors derive a 
differential equation model for this biochemical network and use the results to 
study homeostasis of eDA with respect to variation of the enzyme tyrosine 
hydroxylase (TH) and the dopamine transporters (DAT).

In another direction, relatively small biochemical network models are often derived to help analyze a particular biochemical phenomenon. We present four examples; three are discussed in 
Reed \etal~\cite{rbgsn2017} and one in Ma \etal~\cite{ma2009}. 
These examples belong to a class that we call biochemical {\em input-output networks} and will help  to interpret the mathematical results.

\begin{example}[Feedforward Excitation] \normalfont
The input-output network corresponding to feedforward excitation is in Figure \ref{fig:feedforward}.   
This motif occurs in a biochemical network when a substrate activates the enzyme that removes 
a product. The standard biochemical network diagram for this process is
shown in Figure~\ref{fig:feedforward}(a). 
Here $\mathbf{X}$, $\mathbf{Y}$, $\mathbf{Z}$ are the names of chemical substrates  and their 
concentrations are denoted by lower case $x$, $y$, 
$z$. Each straight arrow represents a flux coming 
into or going away from a substrate. The differential equations for each substrate simply 
state that the rate of change of the concentration is the sum of the arrows going towards 
the substrate minus the arrows going away (conservation of mass). The curved line indicates 
that substrate is activating an enzyme.  

\begin{figure}[!htp] 
\centering
\begin{subfigure}{0.48\textwidth}
\centering
\includegraphics[width=2.2in]{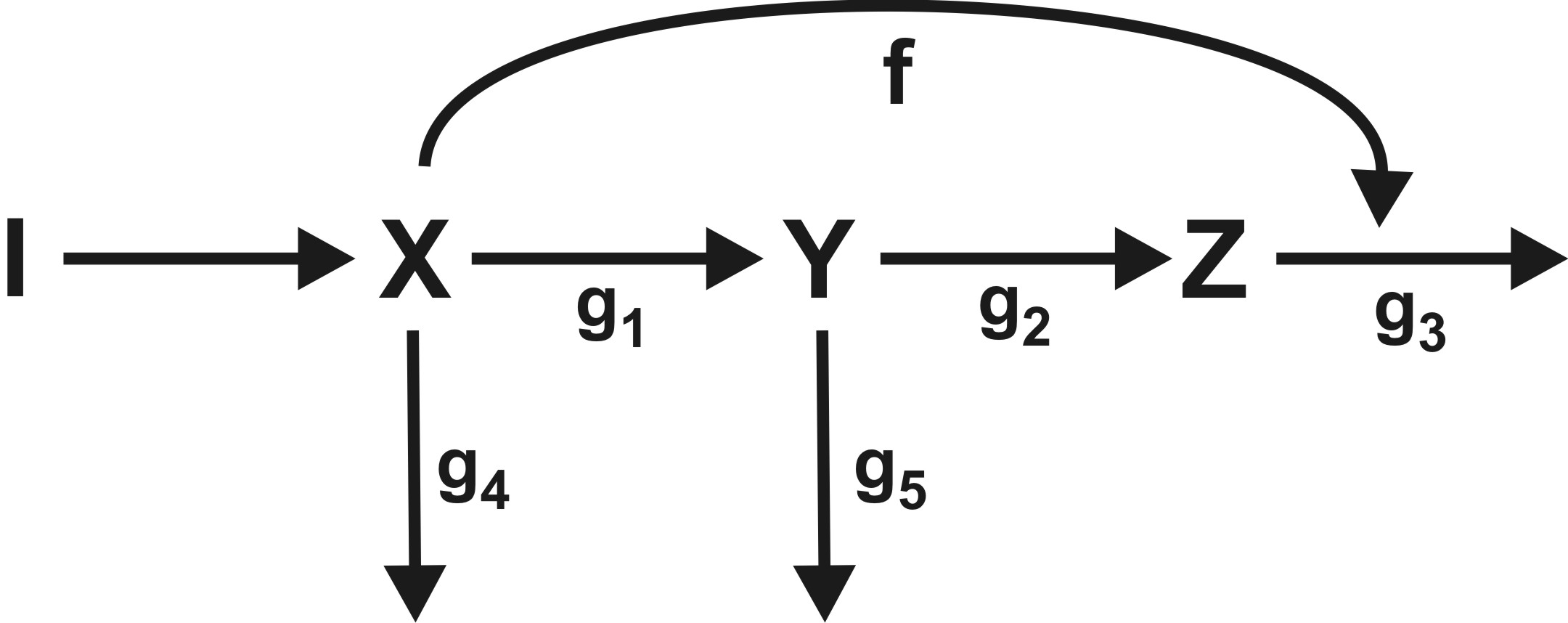}
\caption{FE motif from Reed \etal~\cite{rbgsn2017}}
\label{fig:feedforward_motif}
\vspace{2mm}
\end{subfigure}
\begin{subfigure}{0.48\textwidth}
\centering
\includegraphics[width=2.2in]{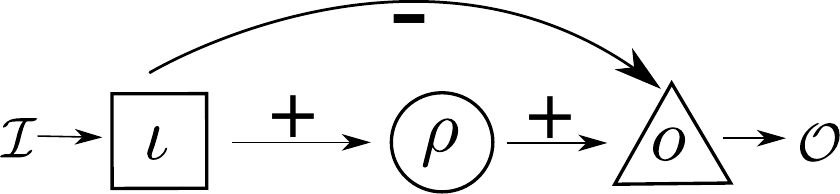}
\caption{Input-output network derived from (a)}  
\label{fig:feedforward_IO}
\end{subfigure}
\caption{Feedforward excitation (FE).}
\label{fig:feedforward}
\end{figure}

Both diagrams in Figure~\ref{fig:feedforward} represent the same information, but in different ways. 
Figure~\ref{fig:feedforward}(b) uses nodes to represent variables, and arrows to represent couplings. 
In other areas, conventions can differ, so it is necessary to translate between the two representations. 
The simplest method is to write down the model ODEs.

The equations corresponding to the biochemical network in Figure \ref{fig:feedforward} (a) are as follows
\begin{equation} \label{e:excitation}
\begin{split}
\dot{x} & = \II -g_1(x) - g_4(x) \\
\dot{y} & = g_1(x) - g_2(y) - g_5(y)\\
\dot{z} & = g_2(y) - f(x)g_3(z)
\end{split}
\end{equation}

In this motif, one path consists of two excitatory couplings: $g_1(x) > 0$ from $\bf X$ to $\bf Y$ and $g_2(y) >0$ from $\bf Y$ to $\bf Z$. 
The other path is an excitatory coupling $f(x) > 0$ from $\bf X$ to the synthesis or degradation $g_3(z)$ of $\bf Z$ and hence is an inhibitory path from $\bf X$ to $\bf Z$ having a negative sign.

It is shown in \cite{rbgsn2017} {(and reproduced using the abstract theory in \cite{wang2020})} that the model equations \eqref{e:excitation} for feedforward excitation leads to infinitesimal homeostasis at $X_0$ if 
\begin{equation} \label{e:cond_homeostase1}
f_x(x_0) = \frac{g_1'(x_0) g_2'(y_0)}{g_3(z_0)(g_2'(y_0) + g_5'(y_0))}
\end{equation}
where $X_0 = (x_0,y_0,z_0)$ is a stable equilibrium.  

Figure~\ref{fig:feedforward}(b) redraws the diagram in Figure~\ref{fig:feedforward}(a) using the abstract definition of input-output network, together with some extra features (the signs along the arrows) that are crucial to this particular application.
We consider $x$ to be a distinguished {\em input variable}, with $z$ as a distinguished {\em output variable}, while $y$ is an intermediate {\em regulatory variable}. 
Accordingly we change notation and write $x_\iota = x$, $x_\rho = y$ and $x_o = z$.

The general form of the equations associated to the diagram of Figure~\ref{fig:feedforward}(b) in the state variables $x_\iota, x_\rho, x_o$ is
\begin{equation} \label{e:feedforward}
\begin{split}
\dot{x}_\iota & = f_\iota(x_\iota,\II) \\
\dot{x}_\rho & = f_\rho(x_\iota,x_\rho) \\
\dot{x}_o & = f_o(x_\iota, x_\rho, x_o)
\end{split}
\end{equation}
In Figure~\ref{fig:feedforward}(b)
these variables are associated with three nodes $\iota, \rho, o$. 
Each node has its own symbol: a square for $\iota$, circle for $\rho$, and triangle for $o$.
Here these symbols are convenient ways to show which type of variable (input, regulatory, output) the node corresponds to. 
Arrows indicate that the variables corresponding to the tail node occur in the component of the ODE corresponding to the head node. 
For example, the component for $\dot{x}_o$ is a function of $x_\iota$, $x_\rho$, and $x_o$. 
We therefore draw an arrow from $\iota$ to $o$ and an arrow from $\rho$ to $o$. We do not draw an arrow from $o$ to itself, however: by convention, every node variable can appear in the component for that node. In a sense, the node symbol (circle) represents this `internal' arrow.

The Jacobian of \eqref{e:feedforward} is 
\[
J = \left( \begin{array}{cc;{2pt/2pt}c}
f_{\iota, x_\iota} & 0 & 0 \\ \hdashline[2pt/2pt]
f_{\rho, x_\iota} & f_{\rho,x_\rho} & 0 \\
f_{o, x_\iota} & f_{o, x_\rho} & f_{o, x_o}
\end{array} \right)
\]
The bottom-left block of $J$ is the homeostasis matrix $H$ of of \eqref{e:feedforward}.
Since $J$ is lower triangular, it follows that linear stability occurs when the linearized self-couplings are all negative.

The mathematics described here shows that infinitesimal homeostasis occurs in the system in the second column of \eqref{e:excitation} if and only if
\begin{equation} \label{e:cond_homeostase2}
\det(H) = f_{\rho, x_\iota} f_{o,x_\rho} - f_{\rho,x_\rho} f_{o,x_\iota} = 0
\end{equation}
at the stable equilibrium $X_0$.
It is easy to see that \eqref{e:cond_homeostase1}
is a particular case of \eqref{e:cond_homeostase2}.

Figure~\ref{fig:feedforward}(b) incorporates some additional information.
The arrow from $\II$ to node $\iota$ indicates that $\II$ occurs in the equation for $\dot{x}_\iota$ as the input parameter.
Similarly the arrow from node $o$ to the symbol $\mathcal{O}$ indicates that node $o$ is the output node.
Finally, the $\pm$ signs indicate which arrows are excitatory or inhibitory. 
This extra information is special to biochemical networks and does not appear as such in the general theory. 

In Figure \ref{F:KF}(a) we show the graph of the input-output function $x_o$, as a function of the input parameter $\II\in[60,140]$, for the model equations \eqref{e:excitation}.
Here, $g_1(x)=g_2(x)=g_3(x)=g_4(x)=x$, $g_5(x)=4x$ and 
\[
 f(x) = 1 + \frac{1}{1+\exp\left(\frac{50-x}{a}\right)}
\]
where $a=8.33$ and the infinitesimal homeostasis point occurs at $\II_0=100$.
In this case, it can be shown that the infinitesimal homeostasis occurring at $\II_0=100$  and $a\approx 8.33$ is in fact a chair singularity and $\sigma$ is the unfolding parameter, see \cite{rbgsn2017} for details.
\END
\end{example}

\begin{example}[Product inhibition] \normalfont
Also called \emph{feedback inhibition}, this is probably one of the simplest and best known homeostatic mechanisms in biochemistry. 
In its simplest form, feedback inhibition means that the product of a biochemical chain inhibits one or more of the enzymes involved in its
own synthesis. 
Thus if the concentration of the end product goes up, synthesis is slowed, and if the concentration goes down, the inhibition is partially withdrawn and the synthesis goes faster.
More specifically, suppose that substrate $\bf X$ influences $\bf Y$, which influences $\bf Z$,
and $\bf Z$ inhibits the flux $g_1$ from $\bf X$ to $\bf Y$.
The biochemical network for this process is shown in Figure~\ref{fig:product}(a). 

\begin{figure}[!htp] 
\centering
\begin{subfigure}{0.48\textwidth}
\centering
\includegraphics[width=2.2in]{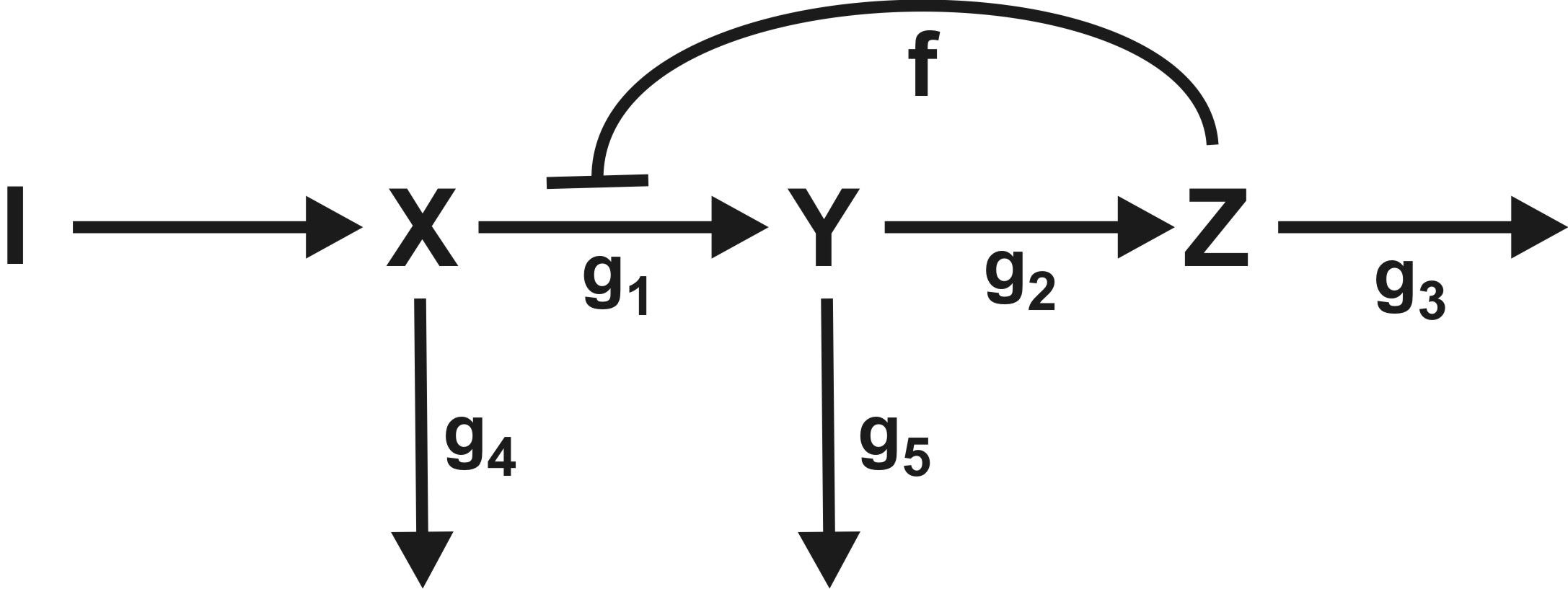}
\caption{PI motif from Reed \etal~\cite{rbgsn2017}}
\label{fig:product_motif}
\vspace{2mm}
\end{subfigure}
\begin{subfigure}{0.48\textwidth}
\centering
\includegraphics[width=2.2in]{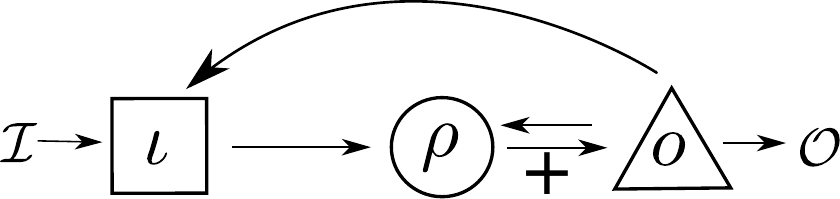}
\caption{Input-output network derived from (a)}  
\label{fig:product_IO}
\end{subfigure}
\caption{Product inhibition (PI).}
\label{fig:product}
\end{figure}

This time the model equations for Figure~\ref{fig:product}(a) are
\begin{equation} \label{e:product}
\begin{split}
\dot{x} & = \II - g_4(x) - f(z) g_1(x) \\
\dot{y} & = f(z)g_1(x) - g_2(y) - g_5(y)\\
\dot{z} & = g_2(y) - g_3(z)
\end{split}
\end{equation}
and the input-output equations associated to \eqref{e:product} can be read directly from Figure~\ref{fig:product}(b)
\begin{equation} \label{e:prod_inhibition}
\begin{split}
\dot{x}_\iota & = f_\iota(x_\iota,x_o, \II) \\
\dot{x}_\rho & = f_\rho(x_\iota,x_\rho,x_o) \\
\dot{x}_o & = f_o(x_\rho, x_o)
\end{split}
\end{equation}

Reed \etal~\cite{rbgsn2017} explain that, due to biochemical reasons, the model equation \ref{e:product} for product inhibition must satisfy some additional constraints 
\begin{equation} \label{e:prod_inhib_nd}
f > 0 \quad g_1' < 0 \quad g_2' <  0
\end{equation}

The Jacobian of \eqref{e:prod_inhibition} is 
\[
J = \left( \begin{array}{cc;{2pt/2pt}c}
f_{\iota, x_\iota} & 0 & f_{\iota, x_o} \\ \hdashline[2pt/2pt]
f_{\rho, x_\iota} & f_{\rho,x_\rho} & f_{\rho, x_o} \\
0 & f_{o, x_\rho} & f_{o, x_o}
\end{array} \right)
\]
The bottom-left block of $J$ is the homeostasis matrix $H$ of of \eqref{e:prod_inhibition}.

Our general mathematical results show that the model system \eqref{e:product} exhibits infinitesimal homeostasis at a stable equilibrium $X_0$ if and only if 
\begin{equation} \label{e:det_H_pi}
 \det(H) = f_{\rho,x_\iota} f_{o,x_\rho} = 0
\end{equation}
That is, either 
\begin{equation} \label{e:homeo_prod_inhib}
f_{\rho,x_\iota} = f(z_0)g_1'(x_0) = 0 
\quad\text{or}\quad 
f_{o,x_\rho} = g_2'(y_0) = 0
\end{equation}
It follows from \eqref{e:prod_inhib_nd} and \eqref{e:homeo_prod_inhib} that the model equation
\eqref{e:product} cannot exhibit infinitesimal homeostasis.
Nevertheless, Reed et al.~\cite{rbgsn2017} show that this biochemical network equations do exhibit {\em near-prefect homeostasis}; that is, the output $z$ is {\em almost} constant for a broad range of input values $\II$.
In the general admissible equations \ref{e:prod_inhibition} infinitesimal homeostasis 
can occur generically.
However, due to the special form of the model equation \eqref{e:product} and the additional constraints \eqref{e:prod_inhib_nd}, infinitesimal homeostasis is forced to occur at the `boundary of the universal unfolding', where $g_1'=0$ or $g_2'=0$.

In Figure \ref{F:KF}(b) we show the graph of the input-output function $x_o$, as a function of the input parameter $\II\in[0,200]$, for the model equations \eqref{e:product}.
Here, $g_1(x)=g_2(x)=g_3(x)=g_4(x)=g_5(x)=x$ and 
\[
 f(x) = 1 + \frac{b}{1+\exp\left(\frac{x-50}{a}\right)}
\]
where $a=0.5$ and $b=200$.
As explained above infinitesimal homeostasis cannot occur in this model equation.
The almost flat region occurs from $\II=50$ to $\II=110$, see \cite{rbgsn2017} for details.
\END
\end{example}

\begin{example}[Substrate Inhibition] \normalfont
The biochemical network model for substrate inhibition is given in Figure~\ref{fig:substrate}(a), and the associated model system is given in the first column of \eqref{e:substrate_inhibition}. This biochemical network and the model system are discussed in Reed \etal~\cite{rbgsn2017}.

\begin{figure}[!htp] 
\centering
\begin{subfigure}{0.48\textwidth}
\centering
\includegraphics[width=2.2in]{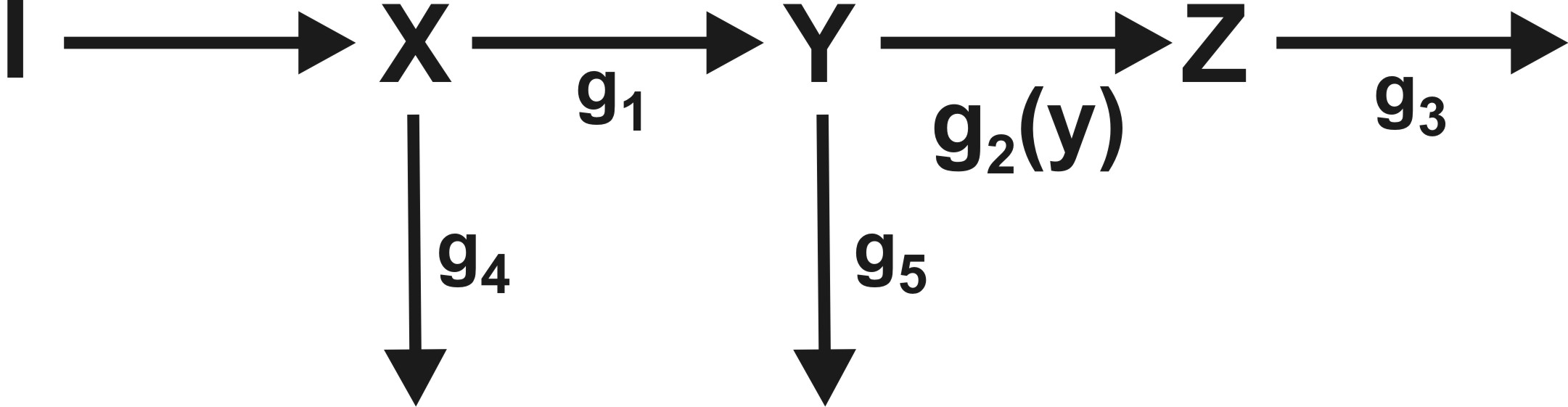}
\caption{SI motif from \cite{rbgsn2017}}
\label{fig:substrate_motif}
\vspace{3mm}
\end{subfigure}
\begin{subfigure}{0.48\textwidth}
\centering
\includegraphics[width=2.2in]{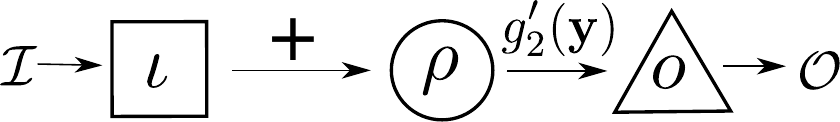}
\caption{Input-output network derived from (a)}  
\label{fig:substrate_IO}
\end{subfigure}
\caption{Substrate inhibition (SI).}
\label{fig:substrate}
\end{figure}

The equations associated to the diagram for substrate inhibition from Figure \ref{fig:substrate}(a) are
\begin{equation} \label{e:substrate_inhibition}
\begin{split}
\dot{x} & = \II - g_1(x) - g_4(x) \\
\dot{y} & = g_1(x) - g_5(y) - g_2(y)\\
\dot{z} & = g_2(y) - g_3(z) 
\end{split}
\end{equation}
Reed \etal~\cite{rbgsn2017} provides a biochemical justification for taking $g_1' > 0$ for $x > 0$, whereas the coupling (or kinetics term) $g_2'$ can change sign.

That model system of ODEs can be easily translated to the input-output system
\begin{equation} \label{e:subs_inhibition}
\begin{split}
\dot{x}_\iota & = f_\iota(x_\iota,\II) \\
\dot{x}_\rho & = f_\rho(x_\iota,x_\rho) \\
\dot{x}_o & = f_o(x_\rho, x_o)
\end{split}
\end{equation}

The Jacobian of \eqref{e:subs_inhibition} is 
\[
J = \left( \begin{array}{cc;{2pt/2pt}c}
f_{\iota, x_\iota} & 0 & 0 \\ \hdashline[2pt/2pt]
f_{\rho, x_\iota} & f_{\rho,x_\rho} & 0 \\
0 & f_{o, x_\rho} & f_{o, x_o}
\end{array} \right)
\]
The bottom-left block of $J$ is the homeostasis matrix $H$ of of \eqref{e:subs_inhibition}.
Since $J$ is lower triangular, it follows that linear stability occurs when the linearized self-couplings are all negative.

The mathematics described here shows that the condition for infinitesimal homeostasis is
\begin{equation} \label{e:det_H_si}
 \det(H) = f_{\rho,x_\iota} f_{o,x_\rho} = 0
\end{equation}
Note that this is the same condition as in the case of product inhibition \eqref{e:det_H_pi}.
Therefore, the network for product inhibition and the network for substrate inhibition are \emph{core equivalent}.
In fact, the network for substrate inhibition form Figure \ref{fig:substrate} can be obtained from the network from for product inhibition form Figure \ref{fig:product}  by removing two backward arrows: $o \to \iota$ and $o \to \rho$.
These two backward arrows correspond to two off-diagonal entries in the last column of the Jacobian matrix that are absent in the substrate inhibition.
As explained before, they do not affect the homeostasis matrix but can influence the linear stability of the system.
In fact, as noted above, the absence of these two arrows in the substrate inhibition network renders the Jacobian Matrix lower triangular and so the linear stability depends only on the linearized self-couplings of the three nodes.

On the assumption that $g_1'= f_{\rho,x_\iota} >0$, infinitesimal homeostasis is possible only if the coupling $\rho\to o$ is neutral, that is, if $f_{o,x_\rho} = g_2' = 0$ at the equilibrium point. 
This conclusion agrees with the observation in \cite{rbgsn2017} that $\bf Z$ can exhibit infinitesimal homeostasis in the substrate inhibition motif if the infinitesimal homeostasis is built into the kinetics for the function $g_2$ which controls the coupling between $\bf Y$ and $\bf Z$.  

Reed \etal~\cite{rbgsn2017} note that neutral coupling can arise from substrate inhibition of enzymes, enzymes that are inhibited by their own substrates.  
See the discussion in \cite{rln2010}.  
This inhibition leads to reaction velocity curves that rise to a maximum (the coupling is excitatory) and then descend (the coupling is inhibitory) as the substrate concentration increases.  Infinitesimal homeostasis with neutral couplings arising from substrate inhibition often has important biological functions and has been estimated to occur in about $20\%$ of enzymes \cite{rln2010}.
In the 1930s, Haldane \cite{haldane1930} introduced the concept of substrate inhibition in which the
substrate of the reaction itself inhibits the enzyme that catalyzes the reaction.
Golubitsky and Wang \cite{wang2020} have called homeostasis similar to the one occurring in substrate inhibition \emph{Haldane homeostasis}, since it arise from neutral coupling, that is, the linearized coupling between to nodes changes sign as the input parameter is varied.

In Figure \ref{F:KF}(c) we show the graph of the input-output function $x_o$, as a function of the input parameter $\II\in[0,350]$, for the model equations \eqref{e:substrate_inhibition}.
Here, $g_1(x)=g_2(x)=g_3(x)=g_4(x)=g_5(x)=x$ and 
\[
 f(x) = 1 + \frac{1}{1+\exp\left(\frac{50-x}{a}\right)}
\]
where $a=8$.
Here, unlike in the case of product inhibition, infinitesimal homeostasis can occur by a Haldane type mechanism.
After an initial growth from $1$ to $2$ over the $\II$-range $[0,100]$ the curve becomes flat and the system exhibits perfect homeostasis for $\II > 100$.
\END
\end{example}

\begin{example}[Negative Feedback Loop] \label{ex:NF_loop} \normalfont
Here, each enzyme $\mathbf{X}$, $\mathbf{Y}$, $\mathbf{Z}$ in the feedback loop motif, see Figure \ref{fig:degradation-homeostasis}(a) can have active and inactive forms. 
In the kinetic equations \eqref{e:NFL} the coupling from $\bf X$ to $\bf Z$ is non-neutral according to Ma \etal~\cite{ma2009}. 
Hence, in this model only null-degradation homeostasis is possible.
In the kinetic equations in the model the $\dot{y}$
equation does not depend on $y$ and homeostasis can only be perfect homeostasis. 
However, this model is a simplification
based on saturation in $y$ \cite{ma2009}. 
In the original system $\dot{y}$ does depend on $y$
and we expect standard null-degradation homeostasis to be possible in that system.
The corresponding input-output network to the negative feedback loop motif in Figure \ref{fig:degradation-homeostasis}(a) is shown in Figure \ref{fig:degradation-homeostasis}(b).

\begin{figure}[!htp] 
\centering
\begin{subfigure}{0.35\textwidth}
\centering
\includegraphics[width=1.3in]{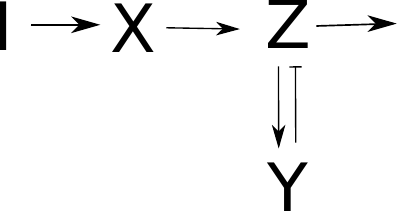}
\caption{NF motif adapted from Ma \etal~\cite{ma2009}}
\label{fig:degradation-homeostasis-motif}
\vspace{3mm}
\end{subfigure}
\begin{subfigure}{0.35\textwidth}
\centering
\includegraphics[width=1.8in]{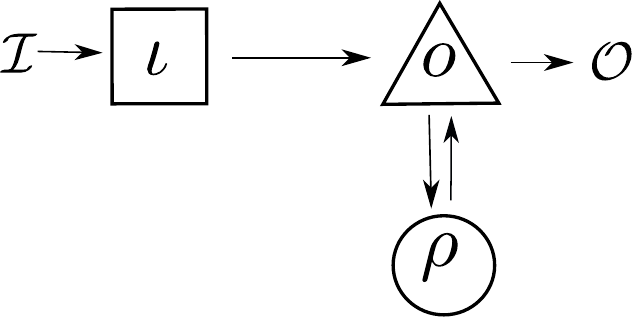}
\caption{Input-output network derived from (a)} 
\label{fig:degradation-homeostasis_IO}
\end{subfigure}
\caption{Negative feedback loop (NF)  Unlike the arrows in Figures \ref{fig:feedforward}, \ref{fig:substrate} and \ref{fig:product} that represent mass transfer between substrates, in this negative feedback loop motif positive or negative arrows between enzymes indicate the activation or inactivation of an enzyme by a different enzyme.}
\label{fig:degradation-homeostasis}
\end{figure}

The equations associated to the negative feedback loop diagram of Figure \ref{fig:degradation-homeostasis}(a) are
\begin{equation} \label{e:NFL}
\begin{split}
\dot{x} & = \II k_{\II x} \frac{1-x}{(1-x) + K_{\II x}} - F_xk'_{F_x}\frac{x}{x + K'_{F_x}}\\
\dot{y} & = z k_{zy} - F_yk'_{F_y}\\
\dot{z} & = x k_{xz}\frac{1-z}{(1-z)+K_{xz}} - yk'_{yz}\frac{z}{z+K'_{yz}}
\end{split}
\end{equation}
where $k_{\II x}$, $K_{\II x}$, $F_x$, $k'_{F_x}$, $K'_{F_x}$, $k_{zy}$, $F_y$, $k'_{F_y}$, $k_{xz}$, $K_{xz}$, $k'_{yz}$, $K'_{yz}$ are 12 constants.
That model system of ODEs is easily translated to the input-output system
\begin{equation} \label{e:NFL_IO}
\begin{split}
\dot{x}_\iota & = f_\iota(x_\iota, \II) \\
\dot{x}_\rho & = f_\rho( x_\rho, x_o)  \\
\dot{x}_o & = f_o(x_\iota,x_\rho, x_o) 
\end{split}
\end{equation}

The Jacobian of \eqref{e:NFL_IO} is 
\begin{equation} \label{e:J_NFL_IO}
J = \left( \begin{array}{cc;{2pt/2pt}c}
f_{\iota, x_\iota} & 0 & 0 \\ \hdashline[2pt/2pt]
0 & f_{\rho,x_\rho} & f_{\rho, x_o} \\
f_{o, x_\iota} & f_{o, x_\rho} & f_{o, x_o}
\end{array} \right)
\end{equation}
The bottom-left block of $J$ is the homeostasis matrix $H$ of of \eqref{e:NFL_IO}.

The mathematics described here shows that the model system \eqref{e:NFL_IO} exhibits infinitesimal homeostasis if and only 
\begin{equation}
\det(H) = f_{o,x_\iota} f_{\rho,x_\rho} = 0
\end{equation}
That is, either 
\begin{equation} \label{e:homeo_null_deg}
f_{\rho,x_\iota} = 0 
\quad\text{or}\quad 
f_{\rho,x_\rho} = 0
\end{equation}
The first case is the Haldane homeostasis type discussed before.
The second case is called \emph{null-degradation}, since it arises when the degradation constant (i.e., the linearized self-coupling) of the
regulatory node changes sign when the input parameter is varied.

Stability of the equilibrium in this motif implies negative feedback between $\rho$ and $o$.  
From the Jacobian matrix \ref{e:J_NFL_IO} of \eqref{e:NFL_IO} it follows that, at null-degradation homeostasis ($f_{\rho, x_\rho} = 0$), linear stability implies that 
\begin{equation}\label{e:i-o-stablecond-2}
f_{\iota,x_\iota} < 0 \;,
\qquad  f_{o,x_o} < 0 \;, 
\qquad  f_{\rho,x_o} f_{o, x_\rho} < 0
\end{equation}
Conditions \eqref{e:i-o-stablecond-2} imply that both the input node and the output node need to degrade and the couplings $\rho\to o$ and $o\to \rho$ must have opposite signs. 
This observation agrees with \cite{ma2009} that homeostasis is possible in the network motif Figure \ref{fig:degradation-homeostasis}(a) if there is a negative loop between $\bf Y$ and $\bf Z$ and when the linearized internal dynamics of $\bf Y$ is zero. 
Therefore, the negative feedback is `forced' by the condition for null-degradation and stability.

In Ferrell \cite{ferrell2016} the author
reviews some motifs that are capable of displaying perfect or near-perfect homeostasis and presents a simplification of biochemical example of Ma \etal~\cite{ma2009}, that exhibits null-degradation homeostasis.
In Figure \ref{F:KF}(d) we show the graph of the input-output function $x_o$, as a function of the input parameter $\II\in[0,100]$, for the model equations of \cite[Fig 2]{ferrell2016}.
After an almost instantaneous jump from $0$ to $1$ the curve becomes flat and the system exhibits, in fact perfect homeostasis for $\II > 2$.
\END
\end{example}

\begin{figure*}[!htb]
\begin{subfigure}[b]{0.45\textwidth}
\centering
\includegraphics[width=\textwidth,trim=0.25cm 1cm 1.5cm 2.5cm, clip=true]{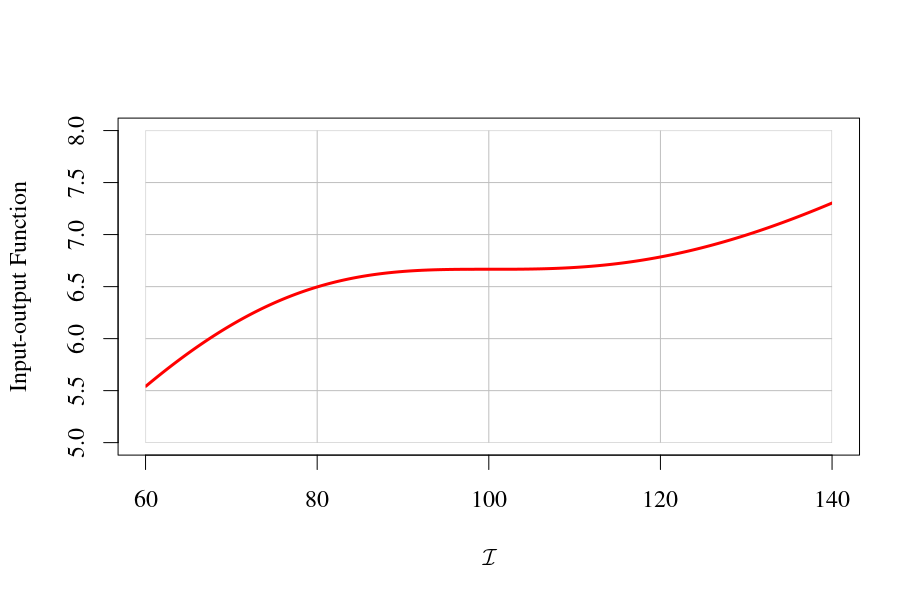}
\caption{FE motif from Reed \etal~\cite{rbgsn2017}}
\label{F:FE}
\end{subfigure} 
\centering
\begin{subfigure}[b]{0.45\textwidth}
\centering
\includegraphics[width=\textwidth,trim=0.25cm 1cm 1.5cm 2.5cm, clip=true]{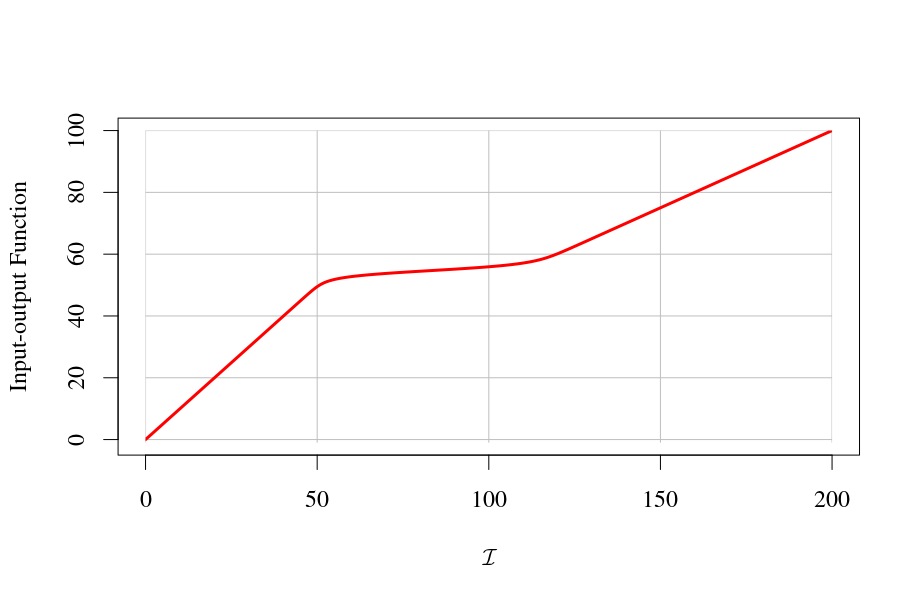}
\caption{PI motif from Reed \etal~\cite{rbgsn2017}}
\label{F:PI}
\end{subfigure}
\centering
\begin{subfigure}[b]{0.45\textwidth}
\centering
\includegraphics[width=\textwidth,trim=0.25cm 1cm 1.5cm 2.5cm, clip=true]{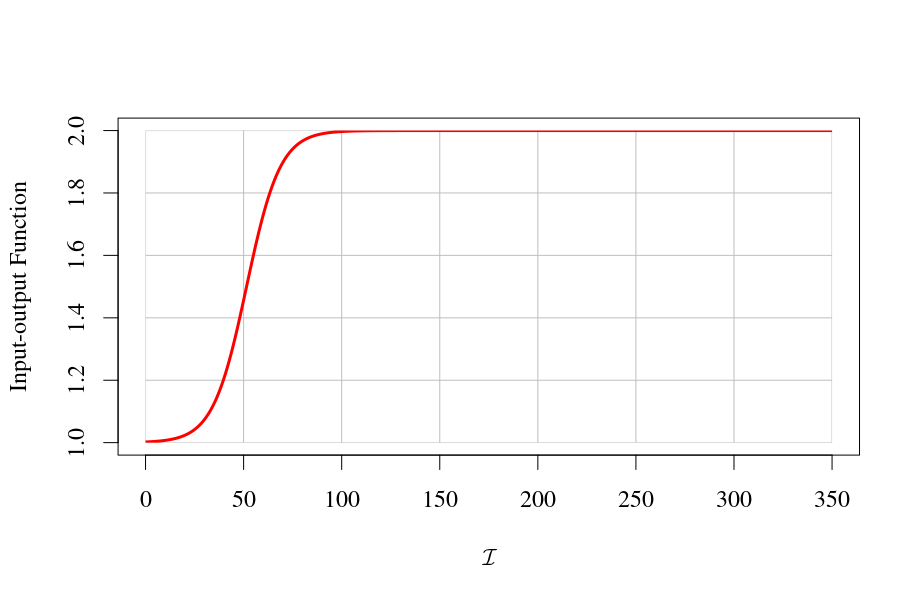}
\caption{SI motif from Reed \etal~\cite{rbgsn2017}}
\label{F:SI}
\end{subfigure}
\centering
\begin{subfigure}[b]{0.45\textwidth}
\centering
\includegraphics[width=\textwidth,trim=0.25cm 1cm 1.5cm 2.5cm, clip=true]{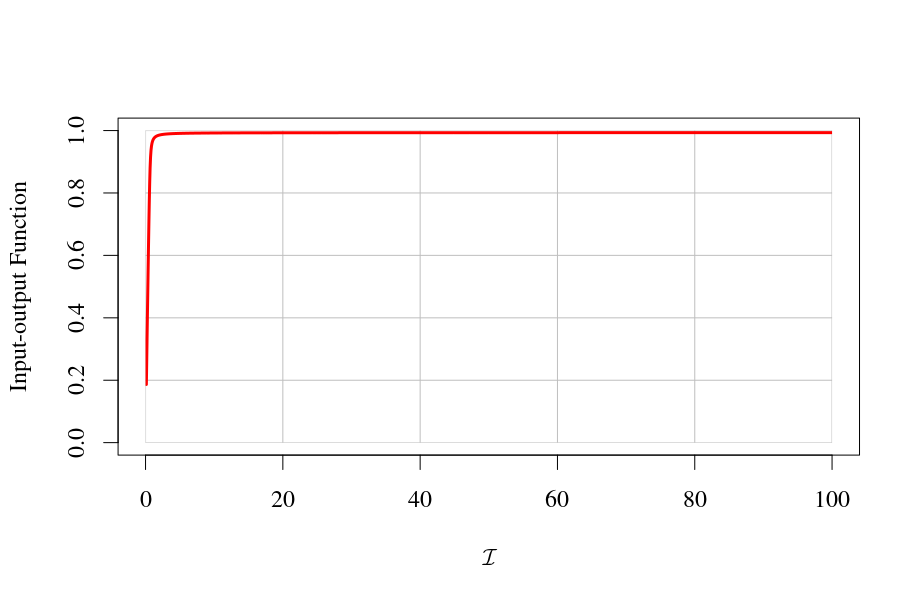}
\caption{NF motif from Ferrell~\cite{ferrell2016}}
\label{F:NF}
\end{subfigure}
\caption{\label{F:KF}
Input-output functions of the four $3$-node biochemical networks.
Input-output functions were computed by numerical continuation of an equilibrium using \textsc{Auto} from \textsc{XPPAut} \cite{bard2002}.}
\end{figure*}

As we have seen in these four examples of $3$-node biochemical networks there are plenty of possibilities of homeostatic behaviors already in very small networks.
Golubitsky and Wang \cite{wang2020} started a systematic investigation of the $3$-node input-output networks in an attempt to organize the understanding of the biochemical examples.
Their success was the motivation for Wang \etal~\cite{wang2021} to undertake the general case of an $N+2$-node input-output network and eventually led to a general theory.
In hindsight, their theorem for classification of the $3$-node input-output networks can be simply stated as

\begin{theorem}[{\cite{wang2020,wang2021}}]
\label{thm:class_3_node_core_i_o}
There are $3$ core equivalence clas\-ses of $3$-node input-output core networks with input node $\neq$ output node.
Representatives are given by: (i) the Feedforward Excitation motif, (ii) the Substrate Inhibition motif and (iii) the Negative Feedback motif.
Any other $3$-node input-output network with two distinguished nodes can be obtained from these $3$ representatives by adding or removing backward arrows.
\end{theorem}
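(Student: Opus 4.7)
The plan is to apply Corollary \ref{P:core_equivalent} to remove backward arrows from the picture, compute $\det(H)$ as a polynomial in the linearized couplings, and then observe that only three nontrivial polynomial shapes are consistent with the coreness requirement---matching one each to feedforward excitation, substrate inhibition, and negative feedback.

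First, I would invoke Corollary \ref{P:core_equivalent}: adding or removing a backward arrow leaves $\det(H)$ unchanged and therefore does not change the core equivalence class. In a $3$-node network with $\iota \neq o$, the backward arrows are exactly $\rho \to \iota$, $o \to \iota$, and $o \to \rho$, so only the three forward arrows $\iota \to \rho$, $\iota \to o$, $\rho \to o$ (together with the implicit internal self-arrows) can affect the equivalence class. This cuts the search down to $2^{3} = 8$ configurations of forward arrows.

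Second, using Theorem \ref{lem:irreducible},
\[
\det(H) \;=\; f_{\rho,\iota}\, f_{o,\rho} \;-\; f_{\rho,\rho}\, f_{o,\iota},
\]
where $f_{\rho,\rho}$ is always present while $f_{\rho,\iota}$, $f_{o,\rho}$, $f_{o,\iota}$ vanish identically iff the corresponding forward arrow is absent. Hence $\det(H)$ has at most two monomials, and these are independent in the ring of linearized couplings. Its shape is controlled by two separate yes/no conditions: the first monomial is nonzero iff both $\iota \to \rho$ and $\rho \to o$ are present, and the second is nonzero iff $\iota \to o$ is present. This gives at most four distinct equivalence classes.

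Third, I would eliminate the class in which both monomials are absent. In that case there is neither a direct arrow $\iota \to o$ nor the path $\iota \to \rho \to o$, and since backward arrows can never create a directed path from $\iota$ to $o$, the output fails to be downstream from $\iota$, contradicting coreness. The three remaining classes then correspond to the three motifs: the class with both monomials present is realized by feedforward excitation (Figure \ref{fig:feedforward}), which has all three forward arrows; the class with only $f_{\rho,\iota} f_{o,\rho}$ is realized by substrate inhibition (Figure \ref{fig:substrate}), which has forward arrows $\iota \to \rho$ and $\rho \to o$ only; and the class with only $-f_{\rho,\rho} f_{o,\iota}$ is realized by negative feedback (Example \ref{ex:NF_loop}), whose forward arrows are $\iota \to o$ and $\rho \to o$, with coreness of $\rho$ secured through the backward arrow $o \to \rho$ via the two-step path $\iota \to o \to \rho$. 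The last clause of the theorem then follows because all further variants in each class arise from freely adjusting arrows that do not appear in $\det(H)$.

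The main obstacle, as the analysis of the negative feedback class illustrates, is the coreness bookkeeping: one must check that the backward arrows we freely manipulate by Corollary \ref{P:core_equivalent} can always be chosen so that the regulatory node $\rho$ is simultaneously downstream from $\iota$ and upstream from $o$, without disturbing $\det(H)$. Once this verification is carried out for each of the three nontrivial polynomial shapes, the rest of the argument is a mechanical enumeration and polynomial comparison.
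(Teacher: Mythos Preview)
Your argument for the main claim---exactly three core equivalence classes with the named representatives---is correct and in fact more explicit than the paper, which only computes $\det(H)$ for its four worked examples and then states the theorem citing \cite{wang2020,wang2021} without further argument. Your enumeration of the $2^3$ forward-arrow configurations and the reduction to three nontrivial polynomial shapes via the coreness constraint is exactly the right computation.

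The gap is in the last clause. You assert that ``all further variants in each class arise from freely adjusting arrows that do not appear in $\det(H)$,'' identifying these with the backward arrows. This fails in the NF class: the forward arrows $\iota\to\rho$ and $\rho\to o$ appear only in the monomial $f_{\rho,\iota}f_{o,\rho}$, which vanishes identically throughout that class, so toggling either of them does not change $\det(H)$ as long as at most one is present. Concretely, the core network with arrow set $\{\iota\to o,\ \iota\to\rho,\ \rho\to\iota\}$ has $\det(H)=-f_{\rho,\rho}f_{o,\iota}$ and hence lies in the NF class, but it cannot be reached from the NF motif $\{\iota\to o,\ \rho\to o,\ o\to\rho\}$ by backward-arrow moves alone: one must add the \emph{forward} arrow $\iota\to\rho$ and remove the \emph{forward} arrow $\rho\to o$. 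So Corollary~\ref{P:core_equivalent} by itself does not yield the last sentence of the theorem. The obstacle you flag as ``coreness bookkeeping'' is thus deeper than stated: within the NF class there are three distinct forward-arrow configurations, and backward-arrow manipulation cannot move between them. The paper itself remarks that the full necessary-and-sufficient graph-theoretic characterization of core equivalence requires \cite[Theorem~3.3]{wang2021}, which goes beyond Corollary~\ref{P:core_equivalent}; the last sentence here should be read as an informal summary rather than a literal claim you can establish by your method.
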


As we shall see later, if we allow input node $=$ output node then there are two more core equivalence classes, thus giving a complete classification of all $3$-node input-output networks (see subsection \ref{SS:METAL}).

As one goes to four-node input-output networks the situation is much more complicated.
Huang and Golubitsky \cite{huang2022} employs the theory developed in Wang \etal~\cite{wang2021} to show that there are $20$ core equivalence classes of four-node input-output core networks with input node $\neq$ output node.

\subsection{Homeostasis Patterns in Input-Output Networks}

Now we consider the notion of a homeostasis pattern on a given input-output network $\mathcal{G}$.  
A {\em homeostasis pattern} is the set of nodes $j$ in $\mathcal{G}$ (including the output node $o$) such that the node coordinate $x_j$, as a function of $\II$, satisfies $x'_j(\II_0) = 0$.  
In other words, a homeostasis pattern is a set of nodes $S$ of $\mathcal{G}$, that includes the output node $o$, and all nodes in $S$ are simultaneously (infinitesimally) homeostatic at a given parameter value $\II_0$.  
 
However, before going into the details of this theory, we use such calculations to give some indication of why the input-output network in Figure \ref{F:example8} has exactly the $4$ homeostasis patterns exhibited in Figure \ref{F:admissible}. 

Consider for example the $6$ node input-output network $\mathcal{G}$ shown in Figure \ref{F:example8}.

\begin{figure}[!htb]
\centering
\includegraphics[width = .3\textwidth]{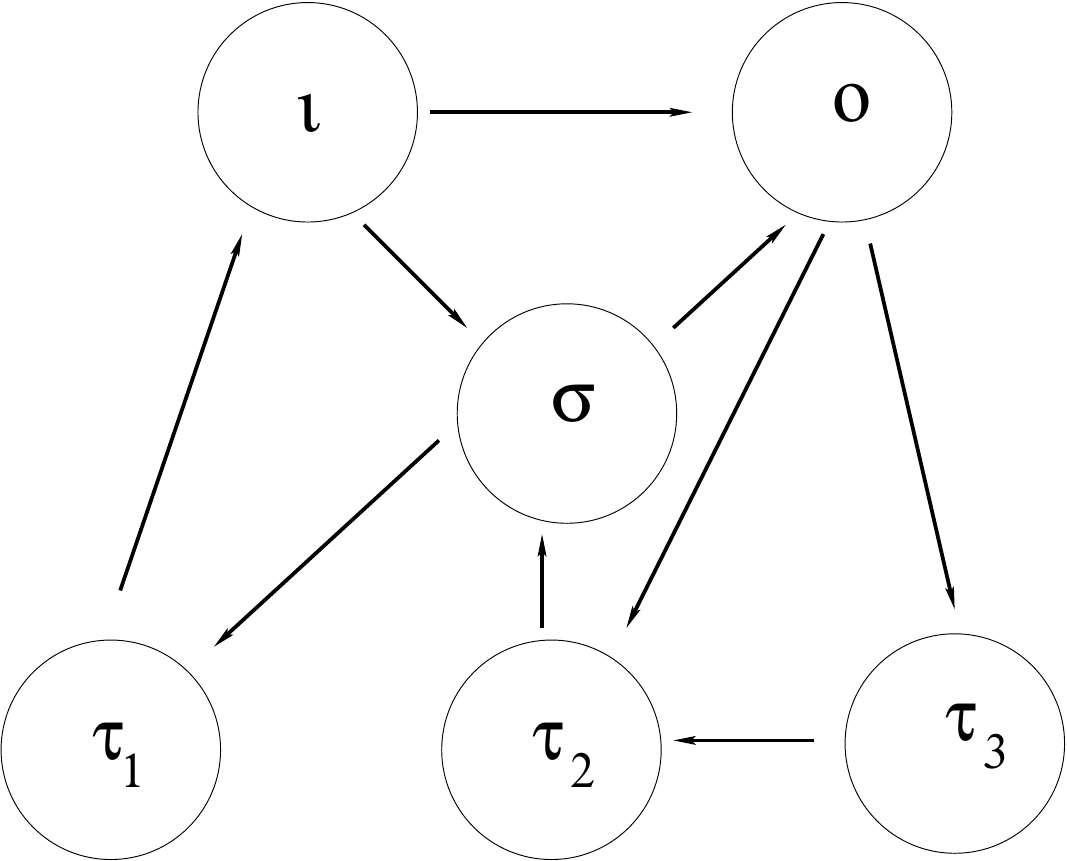}
\caption{A $6$-node input-output network. \label{F:example8}}
\end{figure}

Although there are exactly $31$ subsets of nodes of $\mathcal{G}$ including the output node $o$, only $4$ subsets define homeostasis patterns: $\{o\}$, $\{o, \tau_3\}$, $\{o, \tau_2, \tau_3\}$ and
$\{o, \tau_2, \tau_3, \sigma, \iota\}$.  These homeostasis patterns can be graphically represented by coloring the nodes of $\mathcal{G}$ that are homeostatic (see Figure \ref{F:admissible}).

\begin{figure*}[!htb]
\begin{center}
\begin{subfigure}[c]{0.2\textwidth}
\centering
\includegraphics[width=0.85\textwidth]{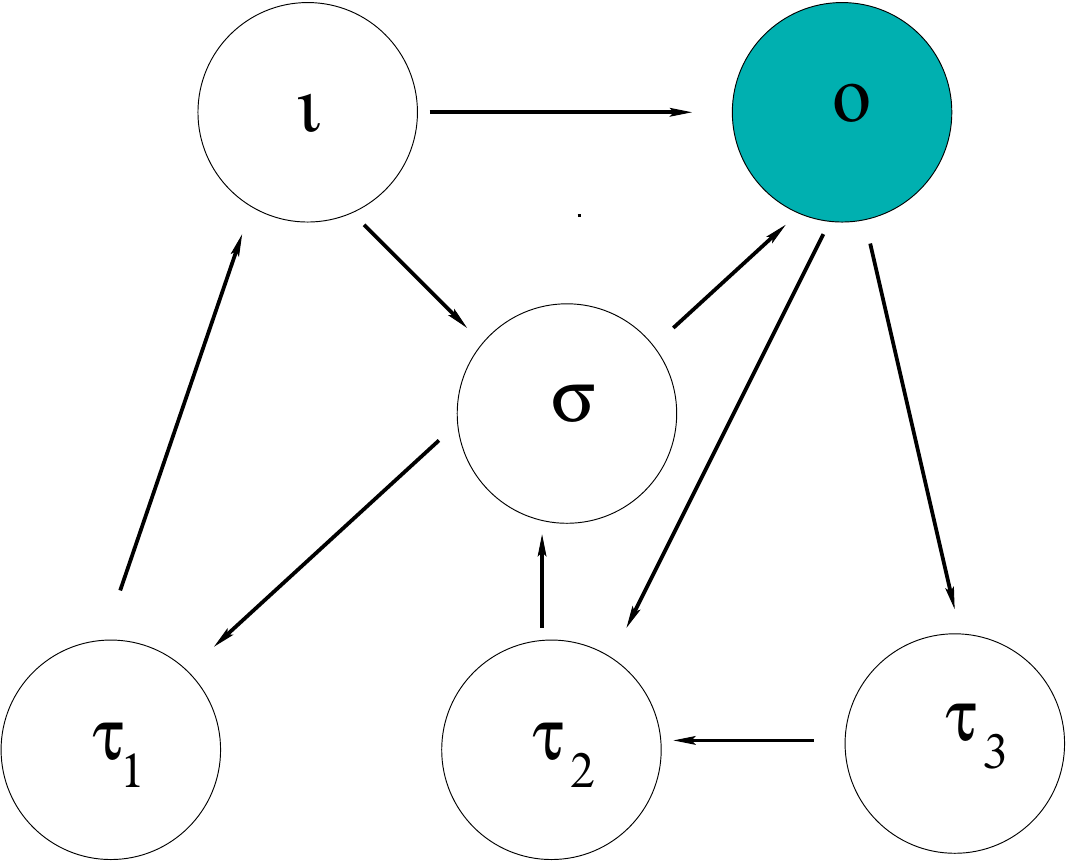}
\caption{$f_{\tau_3,\tau_3}$}
\end{subfigure}
\begin{subfigure}[c]{0.2\textwidth}
\centering
\includegraphics[width=0.85\textwidth]{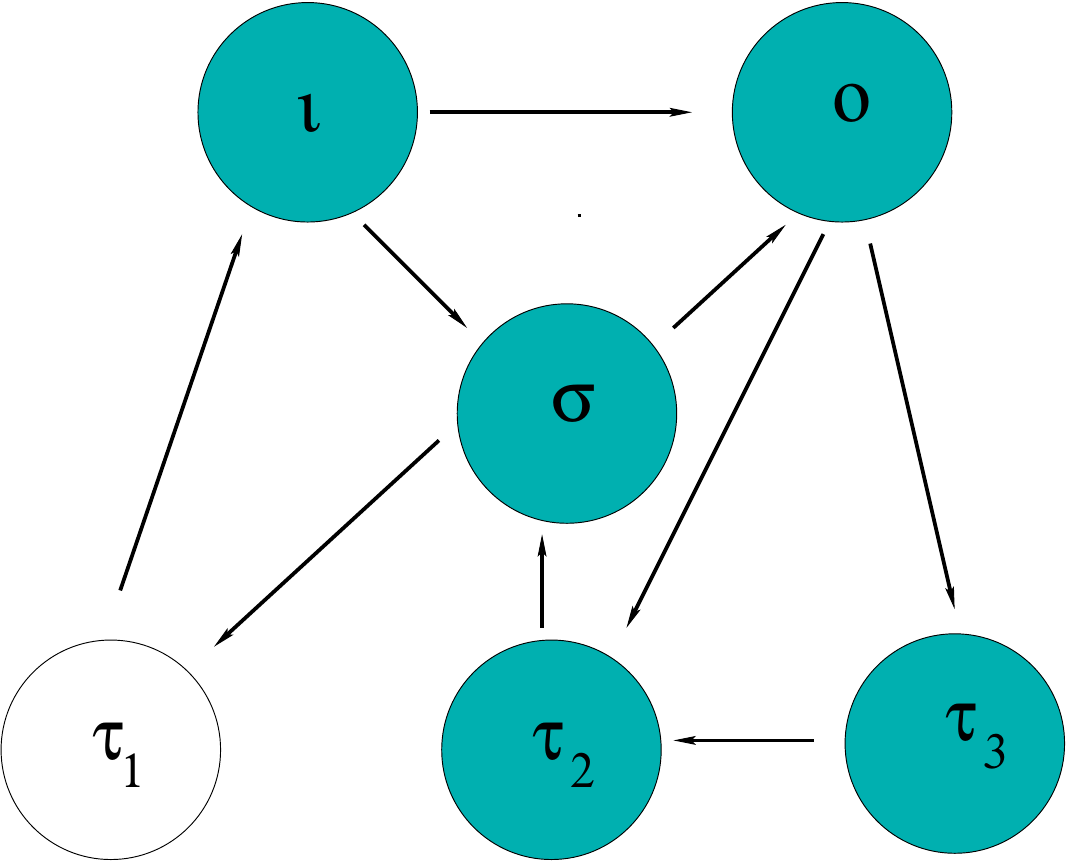}
\caption{$f_{\tau_1,\tau_1}$}
\end{subfigure}
\begin{subfigure}[c]{0.2\textwidth}
\centering
\includegraphics[width=0.85\textwidth]{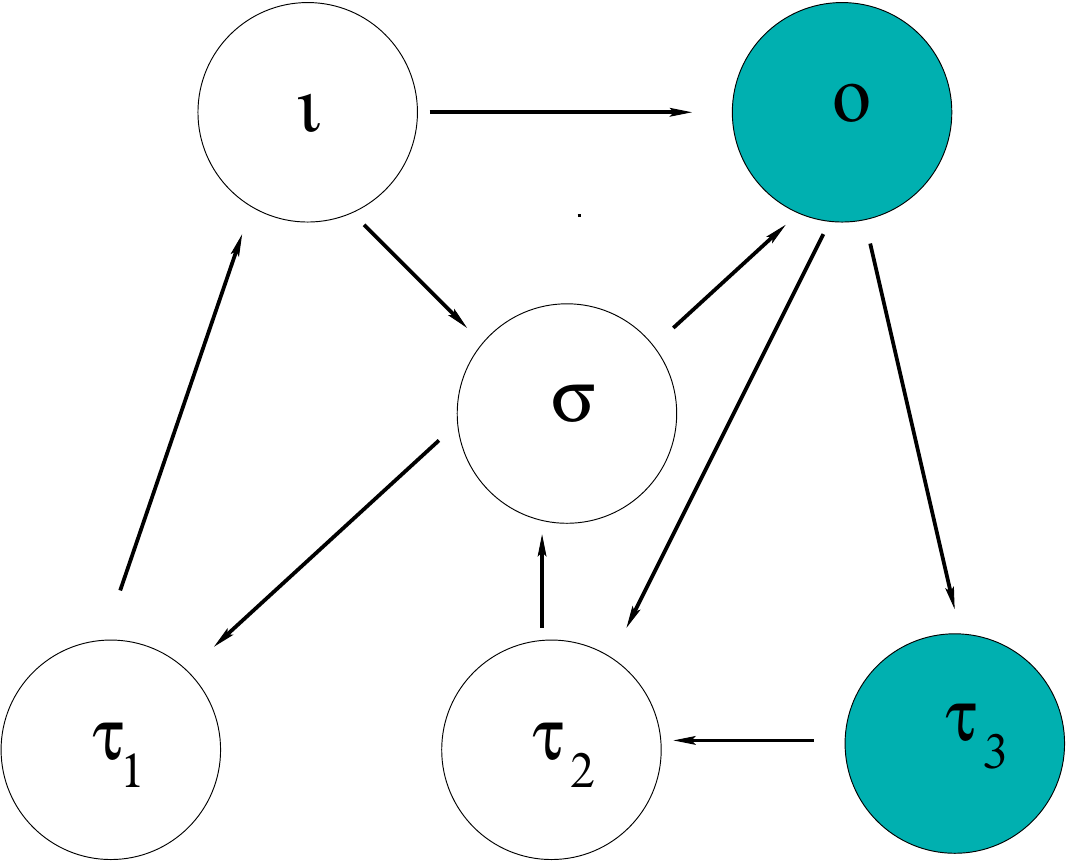}
\caption{$f_{\tau_2,\tau_2}$}
\end{subfigure}
\begin{subfigure}[c]{0.25\textwidth}
\centering
\includegraphics[width=0.67\textwidth]{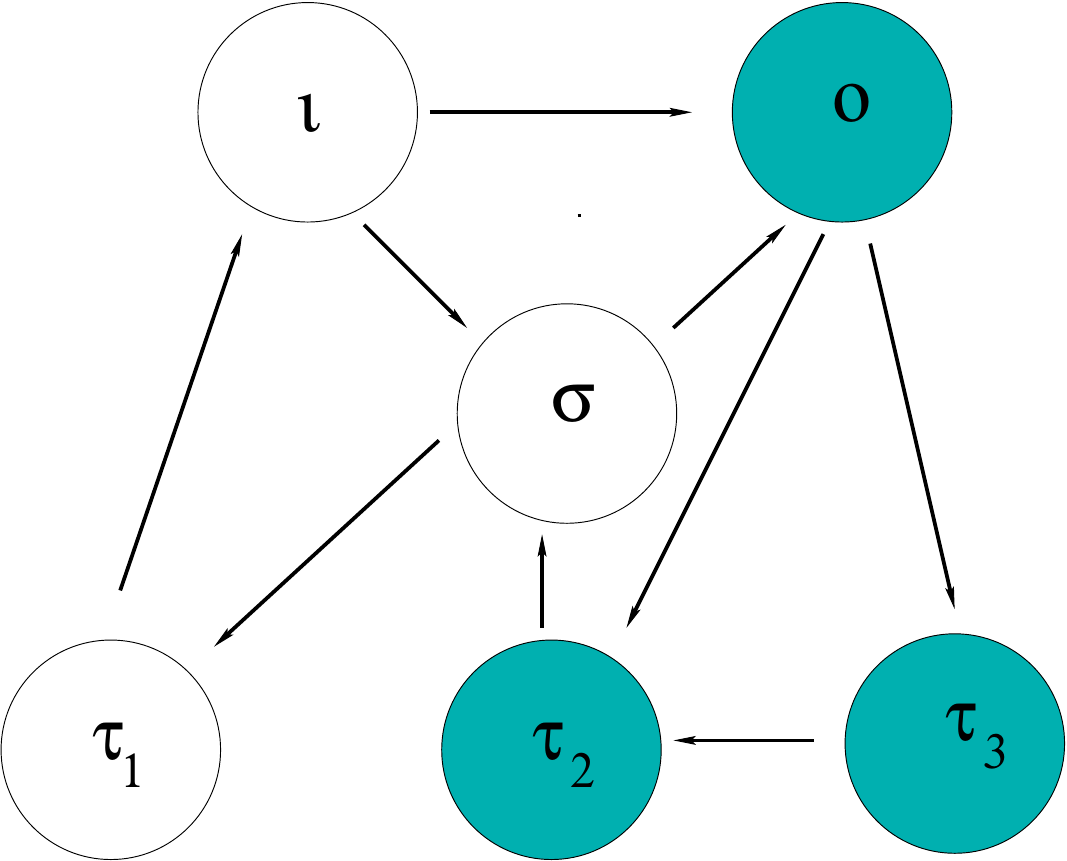}
\caption{$(f_{\sigma,\iota} f_{o,\sigma} - f_{\sigma,\sigma} f_{o,\iota})$}
\end{subfigure}
\caption{The four infinitesimal homeostasis patterns of \eqref{e:admissible}.  Cyan nodes are homeostatic. \label{F:admissible}}
\end{center}
\end{figure*}

The admissible system of parametrized equations for the network in Figure~\ref{F:example8}, in coordinates  $X = (\iota, \sigma, \tau_1,\tau_2, \tau_3, o)$, is:
\begin{equation} \label{e:admissible}
\begin{array}{ccl}
\dot{\iota} & = & f_\iota(\iota, \tau_1,\II) \\
\dot{\sigma} & = & f_\sigma(\iota, \sigma, \tau_2) \\
\dot{\tau}_1 & = & f_{\tau_1}(\sigma,\tau_1) \\
\dot{\tau}_2 & = & f_{\tau_2}(\tau_2,\tau_3,o) \\
\dot{\tau}_3 & = & f_{\tau_3}(\tau_3,o) \\
\dot{o} & = & f_o(\iota, \sigma,o)
\end{array}
\end{equation}

The homeostasis matrix $H$ is obtained from the Jacobian matrix $J$ of \eqref{e:admissible} and is given by
\[
H = \begin{pmatrix}
f_{\sigma,\iota} & f_{\sigma,\sigma} & 0 & f_{\sigma,\tau_2} & 0\\  
 0 & f_{\tau_1, \sigma} &  f_{\tau_1, \tau_1} & 0 & 0 & \\ 
 0 & 0 & 0 & f_{\tau_2,\tau_2} & f_{\tau_2,\tau_3} \\  
0 & 0 & 0 & 0& f_{\tau_3,\tau_3} \\
f_{o,\iota} & f_{o,\sigma} & 0 & 0 & 0 
\end{pmatrix}
\]
Using row and column expansion it is straightforward to calculate
\begin{equation}  \label{H:example8}
\det(H) = f_{\tau_3,\tau_3} \, f_{\tau_1,\tau_1} \, f_{\tau_2,\tau_2} \,
(f_{\sigma,\iota} f_{o,\sigma} - f_{\sigma,\sigma} f_{o,\iota})  
\end{equation}
Theorem \ref{lem:irreducible}says that the input-output function $x_o(\II)$ undergoes infinitesimal homeostasis at $\II_0$ if and only if $\det(H) = 0$, evaluated at $(X(\II_0),\II_0)$.  
Here $X(\II_0)$ is the equilibrium used to construct the input-output function.  
The expression $\det(H)$ is a multivariate polynomial in the partial derivatives $f_{j,\ell}$ of the components of the admissible vector field.  
As a polynomial, $\det(H)$ is reducible with $4$ irreducible factors, so $\det(H)=0$ if and only if one of its irreducible factors vanishes.

According to~\cite{wang2021} these irreducible factors determine the homeostasis types (see section  \ref{SEC:CLASS}).
Hence, \eqref{e:admissible} has $4$ homeostasis types.  One of the main results of this paper says that each homeostasis type determines a unique homeostasis pattern.
Moreover, our theory gives a purely combinatorial procedure to find the set of nodes that belong to each homeostasis pattern.  
When applied to Figure \ref{F:example8} it yields the four patterns in Figure \ref{F:admissible}.

In a simple example such as Figure \ref{F:example8} it is also possible to find the sets of nodes in each homeostasis pattern by a `bare hands' calculation based on the admissible ODEs.
Here this calculation serves as a check on the results.  Direct calculations with the ODEs can, of course, be used instead of the combinatorial approach in sufficiently simple cases.

Finally, we determine the four homeostasis patterns by direct calculation.  We do this by  assuming that there is a one-parameter 
family of stable equilibria $X(\II)$ where $X(\II_0) = X_0$  using implicit differentiation with respect to $\II$
(indicated by $'$), and expanding \eqref{e:admissible} to first order at  $\II_0$.  The linearized system of equations is
\begin{equation}  \label{e:1st_order}
\begin{array}{rcl}
0 & = & f_{\iota,\iota}\iota' + f_{\iota,\tau_1}\tau_1' + f_{\iota,\II} \\
0 & = & f_{\sigma,\iota}\iota' + f_{\sigma,\sigma}\sigma' + f_{\sigma,\tau_2}\tau_2' \\
0 & = & f_{\tau_1,\sigma}\sigma' + f_{\tau_1,\tau_1}\tau_1' \\
0 & = & f_{\tau_2,\tau_2}\tau_2' + f_{\tau_2,\tau_3}\tau_3' \\
0 & = & f_{\tau_3,\tau_3}\tau_3'\\
0 & = & f_{o,\iota}\iota' + f_{o,\sigma}\sigma'
\end{array}
\end{equation}
Next we compute the homeostasis patterns corresponding to the $4$ homeostasis types of \eqref{e:admissible}.

\medskip

\noindent
\textbf{(a) Homesotasis Type: $f_{\tau_3,\tau_3}$; 
Homeostatic nodes: $\{o\}$}  \\
Equation~\eqref{e:1st_order} becomes
\begin{equation}  \label{e:1st_orderA}
\begin{array}{rcl}
0 & = & f_{\iota,\iota}\iota' + f_{\iota,\tau_1}\tau_1' + f_{\iota,\II} \\
0 & = & f_{\sigma,\iota}\iota' + f_{\sigma,\sigma}\sigma' + f_{\sigma,\tau_2}\tau_2' \\
0 & = & f_{\tau_1,\sigma}\sigma' + f_{\tau_1,\tau_1}\tau_1' \\
0 & = & f_{\tau_2,\tau_2}\tau_2' + f_{\tau_2,\tau_3}\tau_3' \\
0 & = & 0 \\
0 & = & f_{o,\iota}\iota' + f_{o,\sigma}\sigma'
\end{array}
\end{equation}
Since $f_{\tau_2,\tau_2}$ and $f_{\tau_2,\tau_3}$ are generically nonzero at homeostasis, the 
fourth equation implies that generically $\tau_2'$ and $\tau_3'$ are nonzero.
The second and sixth equations can be rewritten as
\[
\begin{pmatrix}
f_{\sigma,\iota} & f_{\sigma,\sigma} \\  f_{o,\iota} & f_{o,\sigma}
\end{pmatrix}
\begin{pmatrix}
\iota'  \\ \sigma'
\end{pmatrix}
= -\begin{pmatrix}
f_{\sigma,\tau_2}\tau_2' \\ 0
\end{pmatrix}
\]
Generically the right hand side of this matrix equation at $\II_0$ is nonzero; hence generically $\iota'$ and $\sigma'$ are also nonzero.  The third equation implies that generically $\tau_1'$ is nonzero.  Therefore, in this case, the only homeostatic node is $o$.

\medskip

\noindent
\textbf{(b) Homeostasis Type: $f_{\tau_1,\tau_1}$; 
Homeostatic nodes: \\ $\{ \iota, \tau_2, \tau_3, \sigma, o\}$} \\
In this case \eqref{e:1st_order} becomes
\begin{equation}  \label{e:1st_orderB}
\begin{array}{rcl}
0 & = & f_{\iota,\iota}\iota' + f_{\iota,\tau_1}\tau_1' + f_{\iota,\II} \\
0 & = & f_{\sigma,\iota}\iota' + f_{\sigma,\sigma}\sigma' + f_{\sigma,\tau_2}\tau_2' \\
0 & = & f_{\tau_1,\sigma}\sigma'  \\
0 & = & f_{\tau_2,\tau_2}\tau_2' + f_{\tau_2,\tau_3}\tau_3' \\
0 & = & f_{\tau_3,\tau_3}\tau_3'\\
0 & = & f_{o,\iota}\iota' + f_{o,\sigma}\sigma'
\end{array}
\end{equation}
The fifth equation implies that generically $\tau_3' = 0$.  The fourth equation implies that generically $\tau_2'=0$. The third equation implies that generically $\sigma' = 0$ and the sixth equation implies that generically $\iota'$ is zero.
 It follows that the infinitesimal homeostasis pattern is  $\iota' =\tau_2' = \tau_3' = \sigma'  = o' = 0$.

\medskip

\noindent
\textbf{(c) Homeostasis Type: $f_{\tau_2,\tau_2}$; 
Homeostatic nodes: $\{\tau_3,o\}$}  \\
Equation~\eqref{e:1st_order} becomes
\begin{equation}  \label{e:1st_orderC}
\begin{array}{rcl}
0 & = & f_{\iota,\iota}\iota' + f_{\iota,\tau_1}\tau_1' + f_{\iota,\II} \\
0 & = & f_{\sigma,\iota}\iota' + f_{\sigma,\sigma}\sigma' + f_{\sigma,\tau_2}\tau_2' \\
0 & = & f_{\tau_1,\sigma}\sigma' + f_{\tau_1,\tau_1}\tau_1' \\
0 & = &  f_{\tau_2,\tau_3}\tau_3' \\
0 & = & f_{\tau_3,\tau_3}\tau_3'\\
0 & = & f_{o,\iota}\iota' + f_{o,\sigma}\sigma'
\end{array}
\end{equation}
The fourth or fifth equation implies that $\tau_3' = 0$. The first and sixth equations imply that $\iota'$, $\sigma'$, and $\tau_2'$  are nonzero. The third equation implies that generically $\tau_1'$ is nonzero.  Hence the 
infinitesimal homeostasis pattern is $\{\tau_3,o\}$.

\medskip

\noindent
\textbf{(d) Homeostasis Type: $f_{\sigma,\iota} f_{o,\sigma} - f_{\sigma,\sigma} f_{o,\iota} $; Homeostatic nodes $\{\tau_2, \tau_3, o\}$} \\
To repeat, Equation \eqref{e:1st_order} is
\begin{equation}  \label{e:1st_orderD}
\begin{array}{rcl}
0 & = & f_{\iota,\iota}\iota' + f_{\iota,\tau_1}\tau_1' + f_{\iota,\II} \\
0 & = & f_{\sigma,\iota}\iota' + f_{\sigma,\sigma}\sigma' + f_{\sigma,\tau_2}\tau_2' \\
0 & = & f_{\tau_1,\sigma}\sigma' + f_{\tau_1,\tau_1}\tau_1' \\
0 & = & f_{\tau_2,\tau_2}\tau_2' + f_{\tau_2,\tau_3}\tau_3' \\
0 & = & f_{\tau_3,\tau_3}\tau_3'\\
0 & = & f_{o,\iota}\iota' + f_{o,\sigma}\sigma'
\end{array}
\end{equation}
The fifth equation implies generically that $\tau_3' = 0$ and the fourth equation implies generically that $\tau_2' = 0$. Again, the second and sixth equations can be rewritten in matrix form as
\[
\begin{pmatrix}
f_{\sigma,\iota} &  f_{\sigma,\sigma} \\  f_{o,\iota} & f_{o,\sigma}
\end{pmatrix}
\begin{pmatrix}
\iota'  \\ \sigma'
\end{pmatrix}
= -\begin{pmatrix}
f_{\sigma,\tau_2}\tau_2' \\ 0
\end{pmatrix} = 0
\]
Hence generically $\iota'$ and $\sigma'$ are nonzero. The third equation implies that $\tau_1$ is nonzero.  Hence the homeostatic nodes are $\tau_2,\tau_3,o$.

The homeostasis types of \eqref{e:admissible} with the corresponding homeostasis patterns are summarized in Table~\ref{T:example8}.

\begin{table}[htb]
\caption{Infinitesimal homeostasis patterns for admissible systems in \eqref{e:admissible}.}
\label{T:example8} 
\begin{tabular}{clc}
\toprule
Homeostasis Type & Homeostasis Pattern &  Figure~\ref{F:admissible} \\
\midrule
$f_{\tau_3,\tau_3}$ & $\{o\}$ & (a)  \\
$f_{\tau_1,\tau_1}$ & $\{\iota,\tau_2,\tau_3,\sigma,o\}$ & (b) \\
$f_{\tau_2,\tau_2}$ & $\{\tau_3,o\}$ & (c) \\
$f_{\sigma,\iota} f_{o,\sigma} - f_{\sigma,\sigma} f_{o,\iota}$ & $\{\tau_2,\tau_3,o\}$ & (d) \\
\bottomrule
\end{tabular}
\end{table}

In principle the homeostasis patterns can be computed in the manner shown above; but in practice this becomes very complicated.

\section{Classification}
\label{SEC:CLASS}

In this section we describe the theory of Wang \etal~\cite{wang2021} and Duncan \etal~\cite{duncan2024} about the classification of homeostasis types and homeostasis patterns in an input-output network, respectively. 

\subsection{Homeostasis Subnetworks}

In \cite{wang2021} the authors show that the determination of infinitesimal homeostasis in an input-output networks reduces to the study of core networks. 
We assume throughout that the input-output networks are core networks.  

The first main result of \cite{wang2021} is the observation that one can apply the Frobenius-K\"onig \cite{schneider1977,brualdi1991} theorem to the homeostasis matrix $H$ and obtain a decomposition of $\det(H)$ into its irreducible factors.

\begin{theorem}[{\cite[Theorem 1.11]{wang2021}}] \label{thm:irreducible}
Assume \eqref{eq:system} has a hyperbolic equilibrium at $(X^*,\II^*)$. 
Then there are permutation matrices $P$ and $Q$ such that $PHQ$ is block upper triangular with square diagonal blocks $B_1,\ldots,B_m$. 
The blocks $B_j$ are irreducible in the sense that each $B_j$ cannot be further block triangularized. It follows that 
\begin{align}\label{eq:irreducible}
\det(H) = \det(B_1)\cdots\det(B_m)
\end{align}
is an irreducible factorization of $\det(H)$. 
\end{theorem}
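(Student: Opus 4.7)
The plan is to apply the Frobenius-K\"onig / Dulmage-Mendelsohn machinery from combinatorial matrix theory to the homeostasis matrix $H$, treating each nonzero entry $f_{j,x_i}$ as an independent indeterminate whose position in $H$ is prescribed by the arrows of $\mathcal{G}$. Hyperbolicity at $(X^*,\II^*)$ enters only to guarantee $\det(J)\neq 0$ so that the input-output function is well-defined near $\II^*$; the decomposition itself is a purely combinatorial statement about the sparsity pattern of $H$.

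First I would check that $\det(H)$ is not identically zero as a polynomial in the $f_{j,x_i}$. This is where the core-network hypothesis is essential: since every internal node is downstream from $\iota$ and upstream from $o$, the bipartite graph associated to the zero/nonzero pattern of $H$ (rows indexed by nodes $\rho,o$ and columns by $\iota,\rho$) admits a perfect matching, i.e.\ a permutation contributing a nonzero monomial to $\det(H)$. This rules out the degenerate case of the Frobenius-K\"onig theorem. Second, invoke the Dulmage-Mendelsohn / Frobenius-K\"onig decomposition to obtain permutation matrices $P,Q$ so that $PHQ$ is block upper triangular with square diagonal blocks $B_1,\dots,B_m$, each \emph{fully indecomposable} (i.e.\ no further block triangularization by row/column permutations is possible). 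The multiplicativity of determinants on block triangular matrices then gives $\det(PHQ)=\det(B_1)\cdots\det(B_m)$, and since $\det(P)\det(Q)=\pm 1$, this yields the claimed factorization of $\det(H)$ up to sign.

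The main obstacle is establishing that each diagonal factor $\det(B_j)$ is \emph{irreducible} as a polynomial, not merely nontrivial. The required classical fact is that if $B$ is a square fully indecomposable matrix whose nonzero entries are algebraically independent indeterminates, then $\det(B)$ is irreducible in the polynomial ring generated by those entries. I would cite this from the combinatorial matrix theory literature (\textrm{e.g.} Schneider or Brualdi-Ryser, as referenced in the paper). To deploy it here I must verify the algebraic-independence hypothesis for our $H$: the entries $f_{j,x_i}$ are partial derivatives of the components of an admissible vector field evaluated at an equilibrium, and the admissibility condition only forces certain entries to vanish (when no arrow $i\to j$ exists) without imposing any relation among the nonzero ones. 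Hence the nonzero linearized couplings are free parameters, the hypothesis of the irreducibility lemma is met for each $B_j$, and the factorization $\det(H)=\det(B_1)\cdots\det(B_m)$ is into irreducibles as claimed. A uniqueness remark (up to order and scalar units) follows from unique factorization in the polynomial ring.
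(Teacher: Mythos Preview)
Your proposal is correct and follows exactly the approach the paper indicates: the paper does not give its own proof of this theorem but cites \cite{wang2021} and explicitly states in the preceding paragraph that the result is obtained by applying the Frobenius--K\"onig theorem \cite{schneider1977,brualdi1991} to the homeostasis matrix $H$. Your write-up supplies the standard details of that argument---the perfect-matching step to rule out the degenerate case (which indeed hinges on the core-network assumption), the Dulmage--Mendelsohn block triangularization, and the classical irreducibility of the determinant of a fully indecomposable matrix with algebraically independent entries---so it is entirely in line with the intended proof.
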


\begin{definition} \rm
Let $\mathcal{G}$ be an input-output network and $H$ its homeostasis matrix. Each irreducible square block $B_{\eta}$ in \eqref{eq:irreducible} is called a {\em homeostasis block}.  
Further, we say that infinitesimal homeostasis in $\mathcal{G}$ is of homeostasis type $B_{\eta}$ if for all $\xi \neq \eta$, $\det(B_{\eta}) = 0$ and $\det(B_{\xi}) \neq 0$. \END
\end{definition}

\begin{remark} \rm 
Let $B_\eta$ be a homeostasis type and let
\[
h_\eta(\II) \equiv \det B_\eta(X(\II),\II)
\]
A chair point of type 
$\eta$ occurs at $\II_0$ if $h_\eta(\II_0) = h_\eta '(\II_0) = 0$ and $h_\eta '' (\II_0) \neq 0$. \END
\end{remark}

In principle every homeostasis type can lead to infinitesimal homeostasis, that is, $h_\eta(\II_0) = 0$ for some input 
value $\II_0$. 
For simplicity, we say that node $x_o$ is \emph{homeostatic at $\II_0$}.

Let $B_\eta$ be a $k\times k$ diagonal block of $H$, hence $\det(B_\eta)$ has degree $k$.  
Since the entries of $B_\eta$ are entries of $H$, these entries have the form $f_{\rho,x_\tau}$; that is, the entries are either $0$ (if $\tau\to\rho$ is not an arrow in $\mathcal{G}$), {\em self-coupling} (if $\tau = \rho$), or {\em coupling} (if $\tau\to\rho$ is an arrow in $\mathcal{G}$).  

Since $P$ and $Q$ in Theorem \ref{thm:irreducible} are constant permutation matrices, all entries in each row (resp.~column) of $B_\eta$ must lie in a single row (resp.~column) of $H$.  
Hence, $B_\eta$ has the form
\begin{equation} \label{eq:K}
B_\eta = \begin{pmatrix}
f_{\rho_1,x_{\tau_1}} & \cdots & f_{\rho_1,x_{\tau_k}} \\ 
	\vdots & \ddots & \vdots \\
	f_{\rho_k,x_{\tau_1}} & \cdots & f_{\rho_k,x_{\tau_k}}
\end{pmatrix}
\end{equation}
It follows that the number of self-coupling entries of $B_\eta$ are the same no matter which permutation matrices $P$ and $Q$ are used to bring $H$ to the block-diagonal form.

\begin{theorem}[{\cite[Theorem 4.4]{wang2021}}]
Each $k\times k$ submatrix $B_\eta$ has either $k$ or $k-1$ self-coupling entries.
\end{theorem}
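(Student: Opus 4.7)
The plan is to exploit the specific combinatorial structure of $H$: its rows are indexed by $R = \{\rho_1, \ldots, \rho_n, o\}$ and its columns by $C = \{\iota, \rho_1, \ldots, \rho_n\}$, two labelings that agree on the $n$ regulatory nodes and differ only in that $o$ appears in $R$ where $\iota$ appears in $C$. A self-coupling entry is one of the form $f_{\rho, x_\rho}$ at position $(\rho, \rho)$ with $\rho$ regulatory, so the number of self-coupling entries of $B_\eta$ equals $|R_\eta \cap C_\eta|$, and the theorem is equivalent to the claim $|R_\eta \setminus C_\eta| \in \{0, 1\}$.

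First I would invoke the generic non-vanishing of the regulatory self-couplings $f_{\rho_i, x_{\rho_i}}$. For each regulatory $\rho_i$, let $a_i$ and $b_i$ denote respectively the block containing row $\rho_i$ and the block containing column $\rho_i$. After ordering the blocks $B_1, \ldots, B_m$ so that $PHQ$ is block upper triangular, the entry $f_{\rho_i, x_{\rho_i}}$ sits in the $(a_i, b_i)$ block position. Because this entry is generically nonzero while the strict below-block region is forced to vanish, I obtain the constraint $a_i \leq b_i$ for every $i$. Consequently a regulatory node $\rho_i$ belongs to $R_\eta \setminus C_\eta$ precisely when $a_i = \eta$ and $b_i > \eta$, and the only non-regulatory element that can contribute to $R_\eta \setminus C_\eta$ is $o$ itself.

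The heart of the proof is then to show that in any single block at most one such displaced element can occur. Suppose for contradiction that $R_\eta \setminus C_\eta$ contains two distinct elements $u, v$ (at least one of them regulatory, since $o$ is unique in $R \setminus C$). The perfect matching inside $B_\eta$ supplied by $\det B_\eta \not\equiv 0$ pairs $u$ and $v$ with two distinct elements of $C_\eta \setminus R_\eta$. Combining this matching data with the ordering constraints $a_\bullet \leq b_\bullet$, I would isolate a proper nontrivial subset of $R_\eta$ whose rows have nonzero entries only in a proper subset of $C_\eta$, producing a finer block decomposition of $B_\eta$ and contradicting its irreducibility from Theorem \ref{thm:irreducible}.

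Once $|R_\eta \setminus C_\eta| \leq 1$ is established the two cases of the theorem are immediate: either $R_\eta = C_\eta$, in which case all $k_\eta$ diagonal positions are self-coupling entries, or $R_\eta$ and $C_\eta$ differ in exactly one element on each side and $|R_\eta \cap C_\eta| = k_\eta - 1$. The main obstacle is the contradiction step in the previous paragraph: one must track the cascade of block memberships implied by $a_\bullet \leq b_\bullet$ and combine them with the internal matching combinatorics of $B_\eta$ to produce the finer decomposition explicitly. The generic nonvanishing of the regulatory self-couplings is precisely what turns what would otherwise be a harmless off-diagonal entry into a forbidden below-block entry, and is the key input driving the argument.
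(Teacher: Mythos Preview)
The paper you are working from does not actually contain a proof of this theorem: it is stated with a citation to \cite[Theorem~4.4]{wang2021} and nothing more. There is therefore no ``paper's own proof'' here to compare your proposal against.

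That said, a few remarks on your sketch. The reformulation in terms of the index sets $R_\eta$ and $C_\eta$ is correct and useful: since the row labels of $H$ are $\{\rho_1,\dots,\rho_n,o\}$ and the column labels are $\{\iota,\rho_1,\dots,\rho_n\}$, the self-coupling count of $B_\eta$ is exactly $|R_\eta\cap C_\eta|$, and the goal becomes $|R_\eta\setminus C_\eta|\le 1$. Your observation that the generic nonvanishing of $f_{\rho_i,x_{\rho_i}}$ forces $a_i\le b_i$ in the block ordering is also the right structural input.

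The gap is the step you yourself flag as ``the main obstacle.'' Saying that two displaced elements $u,v\in R_\eta\setminus C_\eta$ get matched to two elements of $C_\eta\setminus R_\eta$ by the perfect matching in $B_\eta$, and that this ``produces a finer block decomposition,'' is not yet an argument: a perfect matching alone does not create a nontrivial invariant row/column subset, and the constraints $a_\bullet\le b_\bullet$ live \emph{outside} the block $B_\eta$, so it is not clear how they feed back into the internal structure of $B_\eta$ to force reducibility. You would need to exhibit concretely a proper nonempty subset $R'\subsetneq R_\eta$ and $C'\subsetneq C_\eta$ with $|R'|=|C'|$ such that every generically nonzero entry with row in $R_\eta\setminus R'$ and column in $C'$ vanishes; as written, your sketch does not do this. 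Until that step is made explicit, the proposal is a plan rather than a proof.
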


\begin{definition} \label{D:classes}
The homeostasis class of an irreducible component $B_\eta$ of degree $k$ is \emph{appendage} if $B_\eta$ has $k$ self-couplings and \emph{structural} if $B_\eta$ has $k-1$ self-couplings.
\END
\end{definition}

Now we can associate a \emph{homeostasis subnetwork} $\mathcal{K}_\eta\subset\mathcal{G}$ with each homeostasis block $B_\eta$ and give a graph-theoretic description of each $\mathcal{K}_\eta$.

\begin{definition} \rm \label{D:K_eta}
The \emph{homeostasis subnetwork} $\mathcal{K}_\eta$ of $\mathcal{G}$ associated with the homeostasis block $B_\eta$ is defined as follows.  
The \emph{nodes} in $\mathcal{K}_\eta$ are the union of nodes $p$ and $q$ where $f_{p,x_q}$ is a nonzero entry in $B_\eta$ and the \emph{arrows} of $\mathcal{K}_\eta$ are the union of arrows $q\to p$ where $p\neq q$. 
\END
\end{definition}

In order to give a combinatorial characterization of the structural and appendage homeostasis subnetworks we need some simple concepts from graph theory.

\begin{definition} \label{D:simple_node}
Let $\mathcal{G}$ be a core input-output network. 
\begin{enumerate}[(a)]
\item A \emph{simple path} from node $\kappa_1$ to node $\kappa_2$ in $\mathcal{G}$ is a directed path that starts at $\kappa_1$, ends at $\kappa_2$, and visits each node on the path exactly once. 
We denote the existence of a simple path from $\kappa_1$ to $\kappa_2$ by $\kappa_1 \pathto \kappa_2$. 
A \emph{simple cycle} is a simple path whose first and last nodes are identical. \item An \emph{$\iota o$-simple path} is a simple path from the input node $\iota$ to the output node $o$.
\item A node $\sigma$ is \emph{simple} if it lies on an $\iota o$-simple path. 
A node $\tau$ is \emph{appendage} if it is not simple.
\item A simple node $\rho$ is \emph{super-simple} if it lies on every $\iota o$-simple path. \END
\end{enumerate}
\end{definition}   

We typically use $\sigma$ to denote a simple node, $\rho$ to denote a super-simple node, and $\tau$ to denote an appendage node when the type of the node is assumed \textit{a priori}. Otherwise, we use $\kappa$ to denote an arbitrary node. Note that $\iota$ and $o$ are super-simple nodes.  

Let $\rho_0, \rho_1, \ldots, \rho_q, \rho_{q+1}$ be the super-simple nodes, where $\rho_0 = \iota$ and $\rho_{q+1} = o$.
The super-simple nodes are totally ordered by the order of their appearance on any $\iota o$-simple path, and this ordering is independent of the $\iota o$-simple path.  We denote
an $\iota o$-simple path by 
\[
\iota \pathto \rho_1 \pathto \cdots \pathto \rho_q \pathto o  
\]
where $\rho_j\pathto\rho_{j+1}$ indicates a simple path from $\rho_j$ to $\rho_{j+1}$. 
The ordering of the super-simple nodes is denoted by  
\[
\rho_0 \prec \rho_1 \prec \cdots \prec \rho_q \prec \rho_{q+1},
\]
and $\prec$ is a total ordering. 
The ordering $\prec$ extends to a partial ordering of simple nodes, as follows. 
If there exists a super-simple node $\rho$ and an $\iota o$-simple path such that 
\[
\iota \pathto \sigma_1 \pathto \rho \pathto \sigma_2 \pathto o
\]
then the partial orderings 
\[
\sigma_1 \prec \rho \qquad \rho \prec \sigma_2 \qquad \sigma_1 \prec \sigma_2
\]
are valid. In this partial ordering every simple node is comparable to every super-simple node 
but two simple nodes that lie between the same adjacent super-simple nodes need not be 
comparable.

We recall the definition of transitive (or strong) components of a network. Two nodes are {\em equivalent} if there is a path from one to the other and back. A {\em transitive component} is an equivalence class for this equivalence relation.

\begin{definition}\label{defn:super_appendage}
\begin{enumerate}[(a)]
\item Let $S$ be an $\iota o$-simple path.  The {\em complementary subnetwork} of $S$ is the network $C_S$ whose nodes are nodes that are not in $S$ and whose arrows are those that connect nodes in $C_S$.

\item An appendage node $\tau$ is {\em super-appendage} if for each $C_S$  containing $\tau$, the transitive component of $\tau$ in $C_S$ consists only of appendage nodes. \END
\end{enumerate}
\end{definition} 

Note that this definition of super-appendage leads to a slightly different, but equivalent, definition of homeostasis subnetwork to the one given in \cite{wang2021}. 
However, this change enables us to define pattern networks in a more straightforward way (see Remark \ref{R:appendage}).

Now we start with the definition of structural subnetworks. 

\begin{definition} \label{D:structural_subnet}
Let $1\leq  j \leq q+1$ and $\rho_{j-1}\prec\rho_j$ be two consecutive super-simple nodes.
Then the $j^{th}$ \emph{simple subnetwork} $\mathcal{L}_j''$, 
the $j^{th}$ {\em augmented simple subnetwork} $\mathcal{L}_j'$, and
the $j^{th}$ \emph{structural subnetwork} $\mathcal{L}_j$ 
are defined in four steps as follows.
\begin{enumerate}[(a)]
\item The $j^{th}$ \emph{simple subnetwork} $\mathcal{L}_j''$ consists of simple nodes $\sigma$ where 
\[
\rho_{j-1}\prec \sigma \prec \rho_j
\] 
and all arrows connecting these nodes. Note that $\mathcal{L}_j''$ does not contain the super-simple nodes $\rho_{j-1}$ and $\rho_j$, and  $\mathcal{L}_j''$ can be the empty set.

\item An appendage but not super-appendage node $\tau$ is {\em linked} to $\mathcal{L}_j''$ if for some complementary subnetwork $C_S$ the transitive component of $\tau$ in $C_S$ is the union of $\tau$, nodes in $\mathcal{L}_j''$, and non-super-appendage nodes. The set of $j^{th}$-{\em linked  appendage nodes} $T_j$ is the set of non-super-appendage nodes that are linked to $\mathcal{L}_j''$.

\item The $j^{th}$ {\em augmented simple subnetwork} $\mathcal{L}_j'$ is    
\[
\mathcal{L}_j' =   \mathcal{L}_j'' \cup T_j
\]
and all arrows connecting these nodes. 
\item The $j^{th}$ \emph{structural subnetwork} $\mathcal{L}_j$ consists of the augmented simple subnetwork $\tilde{\mathcal{L}}_j$ and adjacent super-simple nodes. That is 
\[
\mathcal{L}_j = \{\rho_{j-1}\} \cup \mathcal{L}_j' \cup \{\rho_j\}
\]
and all arrows connecting these nodes. \END
\end{enumerate}
\end{definition}

Note that an structural subnetwork $\mathcal{L}_j$ is itself an input-output network, with input node $\rho_{j-1}$ and output node $\rho_{j}$.

\ignore{
\begin{definition}
Define $\sigma_1 \preceq \sigma_2$ if either $\sigma_1 \prec \sigma_2$, $\sigma_1 = \sigma_2$  is a super-simple node, or $\sigma_1$ and $\sigma_2$ are in the same simple subnetwork. \END
\end{definition}
}

Next we define the appendage subnetworks, which were defined in Section 1.7.2 of \cite{wang2021} as any transitive component of the subnetwork consisting only of appendage nodes and the arrows between them.  

\begin{definition} \label{D:appendage_subnet}
An \emph{appendage subnetwork} $\mathcal{A}_k$, $1 \leq k \leq r$ is a transitive component of the subnetwork of super-appendage nodes. \END
\end{definition}

\begin{remark} \rm \label{R:appendage}
In \cite{wang2021} the authors define an appendage subnetwork as a transitive component of appendage nodes $\mathcal{A}$ that satisfy the {\em no cycle condition}.
The no-cycle condition is formulated in terms of the non-existence of a cycle between appendage nodes in $\mathcal{A}$ and the simple nodes in $C_S$ for all simple $\iota o$-simple paths $S$.
Here, we define an appendage subnetwork as a transitive component of super-appendage nodes, which are defined in terms of transitive components with respect to $C_S$ for all simple $\iota o$-simple paths $S$. These two definitions are equivalent because two nodes belong to the same transitive component if and only if both nodes lie on a (simple) cycle.
\END
\end{remark}

\begin{theorem}[{\cite{wang2021}}] \label{thm:homeo_sub_net}
Let $\mathcal{K_\eta}$ be a homeostasis subnetwork of $\mathcal{G}$ with corresponding diagonal block $B_\eta$.
\begin{enumerate}[(1)]
\item If $B_\eta$ is structural then $\mathcal{K_\eta}=\mathcal{L}_j$ for some $1 \leq k \leq q$. 
Moreover, $\mathcal{L}_j$ is an input-output network, hence
$B_\eta$ is an homeostasis matrix and
\[
 \det(B_\eta) = \det(H(\mathcal{L}_j))
\]
where $H(\mathcal{L}_j)$ is the homeostasis matrix of
$\mathcal{L}_j$.
\item If $B_\eta$ is appendage then $\mathcal{K_\eta}=\mathcal{A}_k$ for some $1 \leq k \leq r$. 
Moreover, $\mathcal{A}_k$ is a network without any distinguished node, hence $B_\eta$ is a jacobian matrix and
\[
 \det(B_\eta) = \det(J(\mathcal{A}_k))
\]
where $J(\mathcal{A}_j)$ is the jacobian matrix of
$\mathcal{A}_k$.
\end{enumerate}
\end{theorem}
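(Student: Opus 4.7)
The plan is to start from the Frobenius--K\"onig block-triangular decomposition of the homeostasis matrix $H$ from Theorem \ref{thm:irreducible} and, for each irreducible diagonal block $B_\eta$, read off the row-index set $R_\eta$ and column-index set $C_\eta$ inside the index structure of $H$. Recall that rows of $H$ are indexed by nodes in $V\setminus\{\iota\}$ and columns by nodes in $V\setminus\{o\}$. So $R_\eta\setminus C_\eta$ can only contain $o$, and $C_\eta\setminus R_\eta$ can only contain $\iota$. Combined with Theorem~4.4 (which says $B_\eta$ has either $k$ or $k-1$ self-coupling diagonal entries), one gets a clean dichotomy: either $R_\eta=C_\eta\subseteq V\setminus\{\iota,o\}$ (the $k$-self-coupling case), or $R_\eta=N\cup\{o\}$ and $C_\eta=N\cup\{\iota\}$ for some common node set $N\subseteq V\setminus\{\iota,o\}$ (the $(k-1)$-self-coupling case). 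I would establish this index dichotomy as the first step.

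Second, I would handle the appendage case. Here $B_\eta$ is a principal submatrix $J_{N}$ of the full Jacobian on the node set $N=R_\eta=C_\eta$, and the standard equivalence between irreducibility of a square matrix and strong connectedness of the associated digraph shows that $N$ is a transitive (strongly connected) component of the subgraph induced on $N$. To finish I have to argue that $N$ consists of super-appendage nodes, so that $\mathcal{K}_\eta=\mathcal{A}_k$ for some $k$ by Definition~\ref{D:appendage_subnet}. The idea is: if some $\tau\in N$ were simple, then $\tau$ would lie on an $\iota o$-simple path, and the forward/backward paths between $\tau$ and $\iota,o$ would provide extra rows/columns outside $N$ that could be permuted into $B_\eta$, contradicting the irreducibility of the Frobenius--K\"onig decomposition. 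A similar argument, using the complementary subnetwork $C_S$ for an appropriate $\iota o$-simple path $S$ and Remark~\ref{R:appendage}, rules out transitive mixing with a simple node via cycles in any $C_S$, giving the super-appendage condition.

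Third, I would handle the structural case. Here $B_\eta$ is obtained from the rows $N\cup\{o\}$ and columns $N\cup\{\iota\}$ of $J$, which is exactly the shape of the homeostasis matrix of the induced subnetwork on $\{\iota\}\cup N\cup\{o\}$ viewed as an input-output network with input $\iota$ and output $o$. Irreducibility of $B_\eta$ forces every node in $N$ to lie on some path from $\iota$ to $o$ inside $N\cup\{\iota,o\}$ (otherwise one could block-triangularize further), so every node of $N$ is simple in $\mathcal{G}$. The next step is to locate the two consecutive super-simple nodes $\rho_{j-1}\prec\rho_j$ that bound $N$: these arise because splitting $N$ across any super-simple node $\rho$ strictly between $\iota$ and $o$ would again block-triangularize $B_\eta$, so in fact the input and output of this structural piece must be the super-simple endpoints, and all simple nodes strictly between them are forced to lie in $N$ by irreducibility. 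Finally, the linked-appendage nodes of $T_j$ in Definition~\ref{D:structural_subnet} must also appear in $N$ because the corresponding rows and columns carry self-couplings and coupling entries that cannot be separated from the structural block without violating irreducibility. Putting this together yields $\mathcal{K}_\eta=\mathcal{L}_j$ and $\det(B_\eta)=\det(H(\mathcal{L}_j))$.

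The main obstacle I expect is the bookkeeping for the structural case, specifically showing \emph{exactly} the linked-appendage set $T_j$ appears (no more, no less). Proving inclusion $T_j\subseteq N$ uses the fact that a linked appendage node lies in a transitive component of some $C_S$ together with simple nodes of $\mathcal{L}_j''$, so it is forced into the irreducible block by a matching/cycle argument on $H$. Proving the reverse inclusion, that no super-appendage node and no appendage node linked to a different simple stretch can sneak into $N$, requires a careful argument that any such node would produce a nontrivial block-triangularization of $B_\eta$, contradicting irreducibility. Once these bipartite-matching-style arguments are in place, the determinant identities follow immediately, since $B_\eta$ is literally the corresponding $H(\mathcal{L}_j)$ or $J(\mathcal{A}_k)$ by construction.
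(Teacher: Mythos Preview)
This survey does not prove Theorem~\ref{thm:homeo_sub_net}; the result is quoted from \cite{wang2021} with no in-paper argument, so there is no proof here to compare against directly. I will therefore assess your outline on its own merits.

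Your first step contains a real error. You assert that $R_\eta\setminus C_\eta$ can only contain $o$ and $C_\eta\setminus R_\eta$ can only contain $\iota$, reasoning from the global index structure of $H$ (rows in $V\setminus\{\iota\}$, columns in $V\setminus\{o\}$). That inference is invalid: a node $v\neq\iota,o$ is both a row index and a column index of $H$ \emph{globally}, but in the Frobenius--K\"onig decomposition its row and its column may be assigned to \emph{different} diagonal blocks. The generic nonvanishing of the self-coupling $f_{v,x_v}$ only forces the row-block index of $v$ to be at most its column-block index (so that the nonzero entry lies on or above the block diagonal); it does not force equality. Concretely, for the structural block associated with $\mathcal{L}_j$ one has $R_\eta\setminus C_\eta=\{\rho_j\}$ and $C_\eta\setminus R_\eta=\{\rho_{j-1}\}$, where $\rho_{j-1}\prec\rho_j$ are the bounding super-simple nodes; these coincide with the global $\iota$ or $o$ only for the first or last structural block. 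Your later remark that ``the input and output of this structural piece must be the super-simple endpoints'' is on the right track but flatly contradicts your stated dichotomy $R_\eta=N\cup\{o\}$, $C_\eta=N\cup\{\iota\}$, and the outline never repairs this. The argument in \cite{wang2021} first identifies the super-simple nodes combinatorially (as the nodes lying on every $\iota o$-simple path) and shows they are exactly the separators between structural blocks; only after that can one read off which node is the ``missing'' row/column of a given structural $B_\eta$.

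A secondary inconsistency: in the structural case you write that irreducibility forces ``every node of $N$ is simple in $\mathcal{G}$,'' but by Definition~\ref{D:structural_subnet} the structural subnetwork $\mathcal{L}_j$ contains the linked-appendage set $T_j$, whose members are appendage nodes by definition. You acknowledge $T_j$ two sentences later, but the earlier claim is simply false and would have to be replaced by a more careful statement about which appendage nodes are forced into the block.
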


Theorem \ref{thm:homeo_sub_net} says that the network $\mathcal{G}$ decomposes into subnetworks $\mathcal{K}_\eta$, each of which is of two types: structural or appendage.
Two of these subnetworks may have common nodes only when they are consecutive structural subnetworks.
Furthermore, the irreducible factors of $\det(H)$ can be obtained directly from the set of homeostasis subnetworks.

\subsection{Homeostasis Patterns}

\begin{definition}
A {\em homeostasis pattern} corresponding to the homeostasis block $B_\eta$ at $\II_0$ is the collection of 
all nodes, including the output node $o$, that are simultaneously forced to be homeostatic at $\mathcal{I}_0$. \END
\end{definition}

\begin{definition} \label{D:homeo_inducing}
Assume that the output node $o$ is homeostatic at $(X_0,\II_0)$, that is, $x_o'(\II_0) = 0$ for the input value $\II_0$.
\begin{enumerate}[(a)]
\item We call the homeostasis subnetwork 
$\mathcal{K}_\eta$  \emph{homeostasis inducing} if $h_{\mathcal{K}_\eta} \equiv \det(B_\eta) = 0$ at $(X_0,\II_0)$. 

\item Homeostasis of a node $\kappa \in \mathcal{G}$ is \emph{induced} by a homeostasis subnetwork $\mathcal{K}$, denoted $\mathcal{K} \Rightarrow \kappa$, if $\kappa$ is generically homeostatic whenever $\mathcal{K}$ is homeostasis inducing.

\item A homeostasis subnetwork $\mathcal{K}$ \emph{induces} a subset of nodes $\mathcal{N}$ ($\mathcal{K} \Rightarrow \mathcal{N}$), if $\mathcal{K} \Rightarrow \kappa$ for each node $\kappa \in \mathcal{N} \subset \mathcal{G}$. \END
\end{enumerate}
\end{definition}

By definition, every homeostasis subnetwork $\mathcal{K}$ induces homeostasis in the output node $o$, that is, $\mathcal{K} \Rightarrow o$.

In this subsection we construct the \emph{homeostasis pattern network} $\mathcal{P}$ associated with $\mathcal{G}$ (see
Definition~\ref{D:HPN}), which serves to organize the homeostasis subnetworks and to clarify how each homeostasis subnetwork connects to the others.

The homeostasis pattern network is defined in the following steps.
First, we define the structural pattern network $\mathcal{P}_\mathcal{S}$ in terms of the structural subnetworks of $\mathcal{G}$.
Second, we define the appendage pattern network $\mathcal{P}_\mathcal{A}$ in terms of the appendage subnetworks of $\mathcal{G}$.
Finally, we define how the nodes in $\mathcal{P}_\mathcal{S}$ connect to nodes in $\mathcal{P}_\mathcal{A}$ and conversely.

\begin{definition}
The {\em structural pattern network} $\mathcal{P}_\mathcal{S}$ is the feedforward network whose nodes are the super-simple nodes $\rho_j$ and the \emph{backbone} nodes $\widetilde{\mathcal{L}}_j$, where $\widetilde{\mathcal{L}}_j$ is  
the augmented structural subnetwork $\mathcal{L}_j'$ treated as a  single node.  
The nodes and arrows of 
$\mathcal{P}_\mathcal{S}$ are given as follows.
\begin{equation} \label{backbone_e}
\iota = \rho_0 \to \widetilde{\mathcal{L}}_1 \to \rho_1 \to \cdots  \to \widetilde{\mathcal{L}}_{q+1} \to  \rho_{q+1} = o 
\end{equation}
If a structural subnetwork ${\mathcal{L}}$ consists of an arrow between two adjacent super-simple nodes (Haldane type) then the corresponding augmented structural subnetwork ${\mathcal{L}'}$ is the empty network; nevertheless the corresponding backbone node $\widetilde{\mathcal{L}}$ must be included in the structural pattern network $\mathcal{P}_\mathcal{S}$.
\END
\end{definition}

\begin{definition}
The {\em appendage pattern network} $\mathcal{P}_\mathcal{A}$ is the network whose nodes are the components $\widetilde{\mathcal{A}}$ in the condensation of the subnetwork of super-appendage nodes. 
Such a node $\widetilde{\mathcal{A}}$ is called an {\em appendage component}.
An arrow connects nodes $\widetilde{\mathcal{A}}_1$ and $\widetilde{\mathcal{A}}_2$ if and only if there are super-appendage nodes $\tau_1 \in \widetilde{\mathcal{A}}_1$ and $\tau_2 \in \widetilde{\mathcal{A}}_2$ such that $\tau_1 \to \tau_2$ in $\mathcal{G}$. 
\END
\end{definition}

\begin{definition} \normalfont
Let $\mathcal{G}$ be a directed graph.
Consider the equivalence relation $\sim$ given by path equivalence on the set of nodes of $\mathcal{G}$.
The equivalence classes are the path components of $\mathcal{G}$.
Let $\mathcal{G}^c=\mathcal{G}/\!\!\sim$ be the quotient graph called the \emph{condensation of $\mathcal{G}$}. 
That is, the nodes of $\bar{\mathcal{G}}$ are the path components of $\mathcal{G}$ and the arrows of $\bar{\mathcal{G}}$ are defined as follows. There is a directed arrow form path component $\mathcal{A}_1$ to path component $\mathcal{A}_2$ if there are nodes $\tau_1\in\mathcal{A}_1$, $\tau_2\in\mathcal{A}_2$
and a direct arrow form $\tau_1$ to $\tau_2$.
\END
\end{definition}

To complete the homeostasis pattern network, we describe how the nodes in $\mathcal{P}_\mathcal{A}$  and the nodes in 
$\mathcal{P}_\mathcal{S}$ are 
connected. To do so, we take advantage of the feedforward ordering of the nodes in $\mathcal{P}_\mathcal{S}$ and the feedback ordering of the nodes in $\mathcal{P}_\mathcal{A}$. 

\begin{definition} 
A simple path from $\kappa_1$ to $\kappa_2$ is an \emph{appendage path} if some node on this path is an appendage node and every node on this path, except perhaps for $\kappa_1$ and $\kappa_2$, is an appendage node. \END
\end{definition}

\begin{definition}[{How $\mathcal{P}_\mathcal{A}$ connects to $\mathcal{P}_\mathcal{S}$}]
\label{D:V_max}
Given a node $\widetilde{\mathcal{A}} \in \mathcal{P}_\mathcal{A}$, we construct a unique arrow from $\widetilde{\mathcal{A}}$ to the structural pattern 
network $\mathcal{P}_\mathcal{S}$ in two steps: 
\begin{enumerate}[(a)]
\item Consider the collection of nodes $\mathcal{V}$ in $\mathcal{P}_\mathcal{S}$ for 
which there exists a simple node $\sigma \in \mathcal{V}$ and appendage node $\tau \in \widetilde{\mathcal{A}}$, such that there is an  appendage path from $\tau$ to $\sigma$.

\item Let $\mathcal{V}_{max}(\widetilde\mathcal{A})$ be a maximal node in this collection, that is, the most downstream in $\mathcal{P}_\mathcal{S}$. 
It follows from \eqref{backbone_e} that $\mathcal{V}_{max}$ is either a super-simple node $\rho_j$ or a backbone node $\widetilde{\mathcal{L}}_j$. 
Maximality implies that $\mathcal{V}_{max}$ is uniquely defined. We then say that there is an arrow from $\widetilde{\mathcal{A}}$ to $\mathcal{V}_{max} \in \mathcal{P}_\mathcal{S}$. \END
\end{enumerate}
\end{definition}
 
\begin{definition}[{How $\mathcal{P}_\mathcal{A}$ is connected from $\mathcal{P}_\mathcal{S}$}] \label{D:V_min}
Given a node $\widetilde{\mathcal{A}} \in \mathcal{P}_\mathcal{A}$ we choose uniquely an arrow from the structural pattern 
network $\mathcal{P}_\mathcal{S}$ to $\widetilde{\mathcal{A}}$ in two steps: 

\begin{enumerate}[(a)]
\item Consider the collection of nodes $\mathcal{V}$ in $\mathcal{P}_\mathcal{S}$ 
for which there exists a simple node $\sigma \in \mathcal{V}$ and appendage node $\tau \in \widetilde{\mathcal{A}}$, such that there is an  appendage path from $\sigma$ to $\tau$.

\item Let $\mathcal{V}_{min}(\widetilde{\mathcal{A}})$ be a minimal node in this collection, that is, the most upstream node in $\mathcal{P}_\mathcal{S}$. Then $\mathcal{V}_{min}$ is either a super-simple node $\rho_j$ or a backbone node $\widetilde{\mathcal{L}}_j$, and the minimality implies  uniqueness of $\mathcal{V}_{min}$. We then say that there is an arrow from $\mathcal{V}_{min} \in \mathcal{P}_\mathcal{S}$ to $\widetilde{\mathcal{A}}$. \END
\end{enumerate}
\end{definition}

Since we consider only core input-output networks, all appendage nodes are downstream from $\iota$ and upstream from $o$.
Hence, for any node $\widetilde{\mathcal{A}}\in\mathcal{P}_\mathcal{A}$, there always exist nodes $\mathcal{V}_{min}, \mathcal{V}_{max} \in \mathcal{P}_\mathcal{S}$ as mentioned above.

\begin{definition} \label{D:HPN}
The \emph{homeostasis pattern network} $\mathcal{P}$ is the network whose nodes are the union of the nodes of the structural pattern network $\mathcal{P}_\mathcal{S}$ and the appendage pattern network $\mathcal{P}_\mathcal{A}$. The arrows of $\mathcal{P}$ are the arrows of $\mathcal{P}_\mathcal{S}$, the arrows of $\mathcal{P}_\mathcal{A}$, and the arrows between $\mathcal{P}_\mathcal{S}$ and $\mathcal{P}_\mathcal{A}$ as described above. \END
\end{definition}

\begin{remark} \rm \label{R:ss_nocorrespond}
The super-simple nodes in $\mathcal{P}$  correspond to the super-simple nodes of $\mathcal{G}$. 
Each super-simple node $\rho_j \in \mathcal{G}$ (for $1\le j\le q$) belongs to exactly two structural subnetworks $\mathcal{L}_{j-1}$ and $\mathcal{L}_{j}$.  Thus they are not associated to a single homeostasis subnetwork of $\mathcal{G}$.  \END
\end{remark}

It follows from Remark~\ref{R:ss_nocorrespond} that there is a correspondence between the homeostasis subnetworks of $\mathcal{G}$ and the non-super-simple nodes of $\mathcal{P}$.

\begin{remark} \label{R:correspond} \rm $ $
\begin{enumerate}[(a)]
\item Each structural subnetwork $\mathcal{L} \subseteq \mathcal{G}$  corresponds to the backbone node $\widetilde{\mathcal{L}} \in \mathcal{P}_\mathcal{S}$. Note that the augmented structural subnetworks $\mathcal{L}' \subsetneq \mathcal{L}$ are not homeostasis subnetworks.

\item Each appendage subnetworks $\mathcal{A} \subset \mathcal{G}$ corresponds to a appendage component $\widetilde{\mathcal{A}} \in \mathcal{P}_\mathcal{A}$.

\item For simplicity in notation we let $\mathcal{V}_\mathcal{S}$ denote a node in $\mathcal{P}_\mathcal{S}$.
Further we let $\widetilde{\mathcal{V}}$ denote a non-super-simple node of $\mathcal{P}$ and $\mathcal{V}$ denote its corresponding homeostasis subnetwork.
\END
\end{enumerate}
\end{remark}

The main point of introducing the homeostasis pattern network $\mathcal{P}$ is to
relate homeostatic induction between the set of homeostasis subnetworks of $\mathcal{G}$ to induction between nodes in $\mathcal{P}$.
In Definition \ref{D:homeo_inducing_pattern} bellow  we formalize this notion.
Hence, every node in a homeostasis pattern (which can be backbone or appendage) is induced by either a backbone node or an appendage node in the homeostasis pattern network $\mathcal{P}$.

\begin{definition} \label{D:homeo_inducing_pattern}
Let $\widetilde{\mathcal{V}}_1, \widetilde{\mathcal{V}}_2 \in \mathcal{P}$ be non-super-simple nodes and $\rho\in\mathcal{P}$ be a super-simple node.
Let $\mathcal{V}_1, \mathcal{V}_2 \subset \mathcal{G}$ be the corresponding homeostasis subnetworks to $\widetilde{\mathcal{V}}_1, \widetilde{\mathcal{V}}_2 \in \mathcal{P}$.
We say that $\widetilde{\mathcal{V}}_1$ {\em induces} $\widetilde{\mathcal{V}}_2$, denoted by $\widetilde{\mathcal{V}}_1 \Rightarrow \widetilde{\mathcal{V}}_2$, if and only if $\mathcal{V}_1 \Rightarrow \mathcal{V}_2$. 
We say that $\widetilde{\mathcal{V}}_1$ {\em induces} $\rho$, denoted by $\widetilde{\mathcal{V}}_1 \Rightarrow \rho$, if and only if $\mathcal{V}_1 \Rightarrow \rho$. \END
\end{definition}

We exclude super-simple nodes of $\mathcal{P}$ from being `homeostasis inducing' because they are not associated to a homeostasis subnetwork of $\mathcal{G}$ (see Remark \ref{R:ss_nocorrespond}).
However, when a backbone node $\widetilde{\mathcal{L}}_j\in \mathcal{P}$ induces homeostasis on other nodes of $\mathcal{P}$, it is the corresponding structural subnetwork $\mathcal{L}_j$, with its two super-simple nodes $\rho_{j-1}, \rho_j$ that induce homeostasis.

As explained before, the homeostasis pattern network $\mathcal{P}$ allows us to characterize homeostasis patterns by reducing to four possibilities that are covered by Theorems \ref{thm:struct_to_struct} - \ref{thm:app_to_app}.

Structural homeostasis patterns are given by the following two theorems.

\begin{theorem}[{Structural $\Rightarrow$ Structural}] \label{thm:struct_to_struct}
A backbone node $\widetilde{\mathcal{L}}_j \in \mathcal{P}_\mathcal{S}$ induces
every node of the structural pattern network $\mathcal{P}_S$ strictly downstream from $\widetilde{\mathcal{L}}_j$, but no other nodes of $\mathcal{P}_S$.
\end{theorem}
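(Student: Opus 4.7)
The plan is to combine the Frobenius--K\"onig block decomposition of the homeostasis matrix (Theorem \ref{thm:irreducible}) with a Cramer's-rule computation for every node's derivative, exploiting the bottleneck property of super-simple nodes. First I translate the hypothesis using Theorem \ref{thm:homeo_sub_net}(1): the backbone node $\widetilde{\mathcal{L}}_j$ being homeostasis inducing means $\det(B_\eta) = \det(H(\mathcal{L}_j)) = 0$ at $(X_0,\II_0)$, where $\mathcal{L}_j$ is viewed as a core input-output subnetwork with input $\rho_{j-1}$ and output $\rho_j$.

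Next, implicit differentiation of $F(X(\II),\II)=0$ gives $J\,X'(\II) = -F_\II$, where $F_\II$ has only its $\iota$-component nonzero (equal to $f_{\iota,\II}$). Cramer's rule then yields, for each node $k$ of $\mathcal{G}$,
\[
 x_k'(\II) \;=\; \pm\, \frac{f_{\iota,\II}\,\det(H_k)}{\det(J)},
\]
where $H_k$ is obtained from $J$ by deleting the row for $\iota$ and the column for $k$ --- the direct generalization of Theorem \ref{lem:irreducible} from $k=o$ to arbitrary $k$. I would then order the nodes so that $\iota$ comes first, followed by the non-output nodes of $\mathcal{L}_1$, then $\rho_1$, then the non-output nodes of $\mathcal{L}_2$, and so on, with each appendage subnetwork $\mathcal{A}$ inserted adjacent to the structural subnetwork dictated by $\mathcal{V}_{\min}(\widetilde{\mathcal{A}})$ and $\mathcal{V}_{\max}(\widetilde{\mathcal{A}})$ from Definitions \ref{D:V_min}--\ref{D:V_max}. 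Theorem \ref{thm:irreducible} then puts $H$ into block upper-triangular form with diagonal blocks $B_1,\ldots,B_m$ arranged in the same upstream-to-downstream order.

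For a node $k$ strictly downstream of $\widetilde{\mathcal{L}}_j$ in $\mathcal{P}_S$ --- that is, either $k=\rho_m$ with $m\ge j$, or $k$ lies inside $\mathcal{L}_m'$ for some $m>j$ --- deletion of column $k$ leaves the diagonal block $B_\eta$ corresponding to $\mathcal{L}_j$ intact, because column $k$ sits strictly to the right of the columns of $B_\eta$. Hence $\det(B_\eta)$ is a factor of $\det(H_k)$, and the hypothesis $\det(B_\eta)=0$ forces $x_k'(\II_0)=0$. This shows homeostasis of every node that a downstream node of $\mathcal{P}_S$ represents: a single super-simple node, or all nodes of an augmented structural subnetwork $\mathcal{L}_m'$. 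Conversely, if $k$ is strictly upstream of $\widetilde{\mathcal{L}}_j$, then column $k$ lies within or to the left of the columns of $B_\eta$; deletion then destroys $B_\eta$, so $\det(H_k)$ does not carry $\det(B_\eta)$ as a polynomial factor and is generically nonzero, giving $x_k'(\II_0)\neq 0$ generically.

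The main obstacle is the combinatorial bookkeeping supporting the block structure: verifying that for every downstream node $k$ the column in question truly lies to the right of the columns of $B_\eta$, while the column of every upstream node does not. This rests on (i) the super-simple property of $\rho_j$ --- every $\iota o$-simple path traverses it, so no information may \emph{bypass} $\mathcal{L}_j$ --- and (ii) the assignment in Definition \ref{D:structural_subnet} of each non-super-appendage node to a unique structural subnetwork $\mathcal{L}_m$, together with the uniqueness of $\mathcal{V}_{\min}$ and $\mathcal{V}_{\max}$, which ensures each appendage component $\widetilde{\mathcal{A}}$ is placed consistently on exactly one side of $\widetilde{\mathcal{L}}_j$ in the linear ordering. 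Once these combinatorial facts are in hand, the proof reduces to determinant algebra on the resulting block-triangular form.
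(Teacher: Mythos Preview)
The paper itself does not supply a proof of this theorem; it is stated as one of the main results imported from \cite{duncan2024}, with only the illustrative Figure~\ref{fig:struct_to_struct} and the direct calculations of Example~\ref{E:example8_pattern} offered as supporting evidence. So there is no in-paper argument to compare against directly.

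Your approach via Cramer's rule and the Frobenius--K\"onig block structure is the natural line of attack and is almost certainly close in spirit to what \cite{duncan2024} does. The formula $x_k'(\II) = \pm f_{\iota,\II}\det(H_k)/\det(J)$ is correct, and the strategy of showing that $\det(B_\eta)$ persists as a factor of $\det(H_k)$ for downstream $k$ is sound. One point deserves more care than your sketch gives it: the block upper-triangular form in Theorem~\ref{thm:irreducible} is established for $H$ (column $o$ deleted from $J$ with row $\iota$ removed), not for $H_k$ (column $k$ deleted). You implicitly assume that the same node ordering makes $H_k$ block upper-triangular with $B_\eta$ still appearing as an intact diagonal block. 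Justifying this amounts to checking that the combinatorial characterization of the blocks (Theorem~\ref{thm:homeo_sub_net}) transfers when the output node is replaced by $k$ --- in effect, that $\rho_{j-1}$ and $\rho_j$ remain super-simple for the input-output network with input $\iota$ and output $k$ whenever $k$ is downstream of $\rho_j$, so that $\mathcal{L}_j$ is again a structural subnetwork. This is true, but it is a substantive network-theoretic step rather than pure determinant algebra. Likewise, for the non-induction direction, ``destroying $B_\eta$'' by deleting an upstream column does not by itself give $\det(H_k)\neq 0$ generically on $\{\det(B_\eta)=0\}$; you need irreducibility of $\det(B_\eta)$ together with the fact that it does not divide $\det(H_k)$ as a polynomial, and the latter again rests on the combinatorics. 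Your final paragraph correctly identifies this bookkeeping as the crux; just be aware that the closing phrase ``the proof reduces to determinant algebra on the resulting block-triangular form'' undersells how much of the work lies in establishing that form for each $H_k$.
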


See Figure \ref{fig:struct_to_struct} for an application of Theorem~\ref{thm:struct_to_struct}. 

\begin{figure}[!h]
\centering
\includegraphics[width = .45\textwidth]{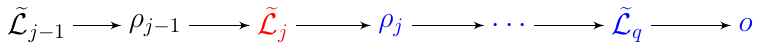}
\caption{An example of structural network induced by structural homeostasis. 
Suppose the backbone node $\widetilde{\mathcal{L}}_j$ in red is homeostasis inducing.  Then Theorem \ref{thm:struct_to_struct} 
implies that the blue nodes in the structural pattern network are all homeostatic.
}
\label{fig:struct_to_struct}
\end{figure}

\begin{theorem}[Structural $\Rightarrow$ Appendage] \label{thm:struct_to_app}
A backbone node $\widetilde{\mathcal{L}}_j \in \mathcal{P}_\mathcal{S}$ induces every
appendage component of $\mathcal{P}_\mathcal{A}$ whose $\mathcal{V}_{min}$ (see Definition~\ref{D:V_min}) is strictly downstream, but no other nodes of $\mathcal{P}_\mathcal{A}$.
\end{theorem}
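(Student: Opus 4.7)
The plan is to reduce to the linearized equilibrium system \eqref{e:imp_diff} at $\II_0$ and exploit the block structure underlying Theorem~\ref{lem:irreducible}. Assume $\widetilde{\mathcal{L}}_j$ is homeostasis inducing, so $\det H(\mathcal{L}_j)$ vanishes at $(X_0,\II_0)$ while every other irreducible block has nonvanishing determinant. By Theorem~\ref{thm:struct_to_struct}, every node lying in a structural subnetwork strictly downstream of $\widetilde{\mathcal{L}}_j$ in $\mathcal{P}_\mathcal{S}$ (super-simple and linked-appendage nodes alike) is already homeostatic at $\II_0$. These zero derivatives are the boundary data that will drive the rest of the argument.

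The first technical step is a \emph{monotonicity lemma} for $\mathcal{V}_{min}$ along arrows of $\mathcal{P}_\mathcal{A}$: if $\widetilde{\mathcal{A}}'\to\widetilde{\mathcal{A}}$ in $\mathcal{P}_\mathcal{A}$ and $\mathcal{V}_{min}(\widetilde{\mathcal{A}})$ is strictly downstream of $\widetilde{\mathcal{L}}_j$, then so is $\mathcal{V}_{min}(\widetilde{\mathcal{A}}')$. The proof is a concatenation: any appendage path from a simple node $\sigma$ into $\widetilde{\mathcal{A}}'$ extends through super-appendage nodes of $\widetilde{\mathcal{A}}'$ into $\widetilde{\mathcal{A}}$, producing an appendage path $\sigma\pathto\widetilde{\mathcal{A}}$; a violation of the hypothesis for $\widetilde{\mathcal{A}}'$ would therefore violate it for $\widetilde{\mathcal{A}}$. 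This lets me induct on the topological ordering of the DAG $\mathcal{P}_\mathcal{A}$, restricted to appendage components whose $\mathcal{V}_{min}$ is strictly downstream of $\widetilde{\mathcal{L}}_j$.

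For the inductive step, fix such an $\widetilde{\mathcal{A}}$ and assume every strictly-upstream appendage component in $\mathcal{P}_\mathcal{A}$ satisfying the hypothesis is already homeostatic. Restrict \eqref{e:imp_diff} to the rows indexed by nodes of $\mathcal{A}$. Entries in those rows whose column index lies outside $\mathcal{A}$ correspond to arrows $\kappa\to\tau$ with $\tau\in\mathcal{A}$; each such $\kappa$ is either a super-appendage node in an upstream appendage component (homeostatic by induction), or a simple/super-simple node. In the latter case the single arrow $\kappa\to\tau$ is itself an appendage path, so the definition of $\mathcal{V}_{min}(\widetilde{\mathcal{A}})$ forces $\kappa$ to lie in $\mathcal{V}_{min}(\widetilde{\mathcal{A}})$ or downstream in $\mathcal{P}_\mathcal{S}$, and is therefore homeostatic by the opening remark. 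The restricted system collapses to $J(\mathcal{A})\,X'_\mathcal{A}(\II_0)=0$; since $\mathcal{A}$ is not the homeostasis-inducing block, $\det J(\mathcal{A})\neq 0$ generically, so $X'_\mathcal{A}(\II_0)=0$, and every node of $\widetilde{\mathcal{A}}$ is induced.

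For the converse, if $\mathcal{V}_{min}(\widetilde{\mathcal{A}})$ is not strictly downstream of $\widetilde{\mathcal{L}}_j$, pick a simple node $\sigma\in\mathcal{V}_{min}(\widetilde{\mathcal{A}})$ and an appendage path $\sigma\pathto\tau\in\widetilde{\mathcal{A}}$; since $\sigma$ is not strictly downstream of $\widetilde{\mathcal{L}}_j$, Theorem~\ref{thm:struct_to_struct} does not force $\sigma'(\II_0)=0$ and generically $\sigma'(\II_0)\neq 0$, producing a nonzero inhomogeneity in the $\mathcal{A}$-subsystem and therefore $X'_\mathcal{A}(\II_0)\neq 0$ generically. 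The main obstacle I expect is the simple/super-simple branch of the inductive case analysis: making rigorous the claim that every simple neighbor of $\mathcal{A}$ lies in $\mathcal{V}_{min}(\widetilde{\mathcal{A}})$ or below requires carefully invoking the super-appendage condition of Definition~\ref{defn:super_appendage} so that the one-arrow path $\kappa\to\tau$ genuinely qualifies as an appendage path, and then leveraging the minimality clause of Definition~\ref{D:V_min} uniformly over all choices of $\iota o$-simple path used to witness that $\tau$ is super-appendage.
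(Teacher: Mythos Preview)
The paper does not actually supply a proof of this theorem; it states the result, illustrates it with a figure and Example~\ref{E:example8_pattern}, and refers all proofs to \cite{duncan2024}. So there is no in-paper argument to compare against, and I can only assess your proposal on its own terms.

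Your overall strategy---restrict the linearized system \eqref{e:imp_diff} to the rows indexed by a super-appendage component $\mathcal{A}$, argue that every external coupling comes from a node already known to be homeostatic, and conclude $J(\mathcal{A})X'_\mathcal{A}=0$---is the natural one. The monotonicity lemma for $\mathcal{V}_{min}$ along $\mathcal{P}_\mathcal{A}$-arrows is correct and sets up the induction cleanly. But your case analysis for the external neighbors $\kappa$ of $\mathcal{A}$ has a genuine gap: you split into ``super-appendage in an upstream component'' and ``simple/super-simple,'' and in the latter branch you invoke Definition~\ref{D:V_min}. That definition, however, speaks only of appendage paths starting at \emph{simple} nodes $\sigma$. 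A linked (non-super) appendage node $\kappa\in T_{j'}$ can perfectly well have an arrow $\kappa\to\tau$ with $\tau\in\mathcal{A}$; such a $\kappa$ is neither simple nor super-appendage, so neither branch covers it, and the arrow $\kappa\to\tau$ does not by itself constrain $\mathcal{V}_{min}(\widetilde{\mathcal{A}})$. You acknowledged linked-appendage nodes in your opening paragraph but then dropped them from the dichotomy. To close this you must argue that any such $\kappa$ lies in some $\widetilde{\mathcal{L}}_{j'}$ with $j'>j$, which requires tracing back through the cycle that links $\kappa$ to simple nodes of $\mathcal{L}_{j'}''$ and extracting from it an appendage path from a genuinely simple node into $\widetilde{\mathcal{A}}$; this is where the super-appendage condition on $\tau$ (Definition~\ref{defn:super_appendage}) has to do real work, and you have not carried it out.

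The converse direction is also too thin. A nonzero $\sigma'(\II_0)$ with an appendage path $\sigma\pathto\tau$ does not automatically produce a nonzero right-hand side in the $\mathcal{A}$-rows, because the path may pass through intermediate appendage components before reaching $\mathcal{A}$; you need to propagate non-homeostasis along that chain, and a single $J(\mathcal{A})^{-1}$ does not do this for you.
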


See Figure  \ref{fig:struct_to_app} for an application of Theorem~\ref{thm:struct_to_app}.  

\begin{figure}[!h]
\centering
\includegraphics[width = .2\textwidth]{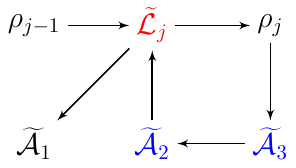}
\caption{An example of appendage subnetworks induced by structural homeostasis.  
Suppose the backbone node $\widetilde{\mathcal{L}}_j$ in red is homeostasis inducing. Then by Theorem \ref{thm:struct_to_app} and the fact that the super-simple node $\rho_j$ is strictly downstream from $\widetilde{\mathcal{L}}_j$ the blue appendage components downstream from $\rho_j$ are homeostatic.}
\label{fig:struct_to_app}
\end{figure}

The appendage homeostasis patterns are characterized by the following two theorems.

\begin{theorem}[Appendage $\Rightarrow$ Structural] \label{thm:app_to_struct}
An appendage component $\widetilde{\mathcal{A}} \in \mathcal{P}_\mathcal{A}$ induces every super-simple node of $\mathcal{P}_S$ downstream 
from $\mathcal{V}_{max}(\widetilde{\mathcal{A}})$ (see Definition~\ref{D:V_max}), but no other super-simple nodes. 
Further, an appendage component $\widetilde{\mathcal{A}} \in \mathcal{P}_\mathcal{A}$ induces a backbone node $\widetilde{\mathcal{L}}_j$ if and only if 
 $\widetilde{\mathcal{L}}_j$ is strictly downstream from $\mathcal{V}_{max}(\widetilde{\mathcal{A}})$.
\end{theorem}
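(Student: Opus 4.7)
The plan is to reduce the induction statements to divisibility relations between determinants of submatrices of the Jacobian $J$, and then to exploit the block triangular structure of $J$ induced by the homeostasis subnetwork decomposition of Theorem~\ref{thm:homeo_sub_net}. By Cramer's rule applied to the linearised equation $JX' = -F_\II$ obtained from implicit differentiation of $F(X(\II),\II) = 0$, every node satisfies
\[
 x_\kappa'(\II) = \pm f_{\iota,\II}\, \frac{\det(H_\kappa)}{\det(J)},
\]
where $H_\kappa$ is the submatrix of $J$ obtained by deleting the $\iota$-row and the $\kappa$-column (this is just the extension of Theorem~\ref{lem:irreducible} from $\kappa = o$ to arbitrary $\kappa$). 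Hence $\kappa$ is generically homeostatic iff $\det(H_\kappa) = 0$, and the induction $\widetilde{\mathcal{A}} \Rightarrow \kappa$ amounts to the algebraic statement that $\det(J(\mathcal{A}))$ divides $\det(H_\kappa)$ as a polynomial in the linearised couplings.

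The key structural step is to arrange the nodes in an order that mirrors the pattern network $\mathcal{P}$: list $\iota = \rho_0$, then the nodes of $\mathcal{L}_1$, then $\rho_1$, and so on through $\rho_{q+1} = o$, inserting each super-appendage component $\widetilde{\mathcal{A}} \in \mathcal{P}_\mathcal{A}$ immediately after its $\mathcal{V}_{max}(\widetilde{\mathcal{A}})$. The geometric input needed for this to yield a usable block triangular form is that arrows leaving $\mathcal{A}$ terminate only inside $\mathcal{A}$ itself, inside other super-appendage components feeding into $\mathcal{A}$, or at nodes of $\mathcal{P}_\mathcal{S}$ at or upstream of $\mathcal{V}_{max}(\widetilde{\mathcal{A}})$. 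With this ordering the Jacobian block $J(\mathcal{A})$ appears as a single diagonal block strictly between the $\iota$-row and everything strictly downstream of $\mathcal{V}_{max}$.

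For $\kappa$ strictly downstream of $\mathcal{V}_{max}(\widetilde{\mathcal{A}})$ (a super-simple node $\rho_j$ or any node of a backbone $\widetilde{\mathcal{L}}_j$ with $\widetilde{\mathcal{L}}_j$ strictly downstream), the $\kappa$-column sits strictly below the $J(\mathcal{A})$ block while the $\iota$-row sits strictly above it, so deletion preserves $J(\mathcal{A})$ as an intact diagonal block of $H_\kappa$; hence $\det(J(\mathcal{A})) \mid \det(H_\kappa)$ and $\widetilde{\mathcal{A}} \Rightarrow \kappa$. Conversely, if $\kappa$ is at or upstream of $\mathcal{V}_{max}$, its column lies at or above the $J(\mathcal{A})$ block, and cofactor expansion along the $\mathcal{A}$-rows of $H_\kappa$ produces monomials involving couplings external to $\mathcal{A}$ that cannot be absorbed into $\det(J(\mathcal{A}))$; exhibiting a generic admissible system for which $\det(J(\mathcal{A})) = 0$ while $\det(H_\kappa) \neq 0$ then rules out induction.

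The principal obstacle is the geometric claim used in the second paragraph — namely that no arrow from $\mathcal{A}$ reaches a node of $\mathcal{P}_\mathcal{S}$ strictly downstream of $\mathcal{V}_{max}(\widetilde{\mathcal{A}})$. Any putative arrow $\tau \to \sigma$ with $\tau \in \mathcal{A}$ and $\sigma$ strictly downstream of $\mathcal{V}_{max}$ would either be, or (using strong connectedness of $\mathcal{A}$) extend to, an appendage path from $\mathcal{A}$ to a node strictly below $\mathcal{V}_{max}$ in $\mathcal{P}_\mathcal{S}$, contradicting Definition~\ref{D:V_max}. The secondary difficulty is the non-divisibility half of the converse: one must produce a specific monomial in $\det(H_\kappa)$ that is not divisible by $\det(J(\mathcal{A}))$, which requires a careful analysis of how the $\kappa$-column deletion forces the $\mathcal{A}$-rows to interact with columns outside the appendage block.
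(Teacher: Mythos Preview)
The paper itself does not prove this theorem: it is a survey, and Theorems~\ref{thm:struct_to_struct}--\ref{thm:app_to_app} are stated without proof, the reader being referred to \cite{duncan2024}. So there is no in-paper argument to compare against directly; I can only assess your outline on its own terms and against what the cited source does.

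Your overall framework is the correct one and is essentially that of \cite{duncan2024}: extend Theorem~\ref{lem:irreducible} to an arbitrary node $\kappa$ via Cramer's rule, reduce $\widetilde{\mathcal{A}}\Rightarrow\kappa$ to the polynomial divisibility $\det(J(\mathcal{A}))\mid\det(H_\kappa)$, and then analyse the block structure of $H_\kappa$ induced by the subnetwork decomposition of Theorem~\ref{thm:homeo_sub_net}. Two points in your sketch need correction.

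First, your description of where arrows leaving $\mathcal{A}$ may land is misstated. You write that such arrows terminate ``inside other super-appendage components feeding into $\mathcal{A}$''; but an arrow $\tau\to\tau'$ with $\tau\in\mathcal{A}$, $\tau'\in\mathcal{A}'\neq\mathcal{A}$ means $\mathcal{A}$ feeds into $\mathcal{A}'$, not conversely (they are distinct transitive components, so there is no path back). What \emph{is} true---and what rescues your ordering---is that if $\mathcal{A}\to\mathcal{A}'$ then every appendage path from $\mathcal{A}'$ to a simple node extends to one from $\mathcal{A}$, whence $\mathcal{V}_{max}(\widetilde{\mathcal{A}}')\preceq\mathcal{V}_{max}(\widetilde{\mathcal{A}})$. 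With a consistent tie-breaking rule among appendage components sharing the same $\mathcal{V}_{max}$ (place $\mathcal{A}'$ before $\mathcal{A}$ when $\mathcal{A}\to\mathcal{A}'$), your ordering does produce a block form in which $J(\mathcal{A})$ survives intact in $H_\kappa$ for $\kappa$ strictly downstream of $\mathcal{V}_{max}(\widetilde{\mathcal{A}})$.

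Second, for the non-induction direction you correctly flag the difficulty but do not resolve it. Exhibiting a single monomial of $\det(H_\kappa)$ not divisible by $\det(J(\mathcal{A}))$ is insufficient by itself, since monomials can cancel; you need to combine the irreducibility of $\det(J(\mathcal{A}))$ (Theorem~\ref{thm:irreducible}) with a structural argument. The approach in \cite{duncan2024} is to regard $\kappa$ as a new output node and analyse the resulting input-output network $\mathcal{G}_\kappa$: when $\kappa$ lies at or upstream of $\mathcal{V}_{max}(\widetilde{\mathcal{A}})$, the nodes of $\mathcal{A}$ are no longer super-appendage in $\mathcal{G}_\kappa$ (there is an $\iota\kappa$-simple path through them, or they fail the no-cycle condition relative to the new output), so $\det(J(\mathcal{A}))$ is not among the irreducible factors of $\det(H_\kappa)$. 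Your cofactor-expansion idea can be made to work, but it is more delicate than you indicate.
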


See Figure \ref{fig:app_to_struct} for an application of Theorem~\ref{thm:struct_to_struct}. 

\begin{figure}[!h]
\centering
\includegraphics[width = .35\textwidth]{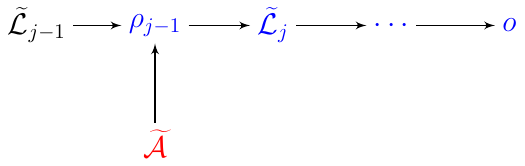}
\caption{An example of structural subnetworks induced by appendage homeostasis.
Suppose the appendage component $\widetilde{\mathcal{A}}$ in red is homeostasis inducing. 
Since $\widetilde{\mathcal{A}}$ connects to the super-simple node $\rho_{j-1}$, then by Theorem \ref{thm:app_to_struct} the blue nodes in the structural pattern network are homeostatic.}
\label{fig:app_to_struct}
\end{figure}

\begin{theorem}[Appendage $\Rightarrow$ Appendage] \label{thm:app_to_app}
An appendage component $\widetilde{\mathcal{A}}_i \in \mathcal{P}_\mathcal{A}$ induces an appendage component $\widetilde{\mathcal{A}}_j \in \mathcal{P}_\mathcal{A}$ if and only if 
$\widetilde{\mathcal{A}}_i$ is strictly upstream from $\widetilde{\mathcal{A}}_j$ and every path from $\widetilde{\mathcal{A}}_i$ to $\widetilde{\mathcal{A}}_j$ in $\mathcal{P}$ 
 contains a super-simple node $\rho$ satisfying $\mathcal{V}_{max} (\widetilde{\mathcal{A}}_i) \preceq \rho \preceq \mathcal{V}_{min} (\widetilde{\mathcal{A}}_j)$.   
\end{theorem}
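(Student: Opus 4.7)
The plan is to mimic the bare-hands calculation from the $6$-node example in Section~2: implicitly differentiate an admissible system at an equilibrium where the appendage block $B_i = J(\mathcal{A}_i)$ is singular (so $\widetilde{\mathcal{A}}_i$ is homeostasis inducing), and then track which $x_k'(\II_0)$ are forced to vanish generically, reading the answer from the combinatorics of $\mathcal{P}$. Both directions are handled in this linearized framework.

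For the $(\Leftarrow)$ direction, I would start by invoking Theorem~\ref{thm:app_to_struct}: since $\widetilde{\mathcal{A}}_i$ is homeostasis inducing, every super-simple node downstream of $\mathcal{V}_{max}(\widetilde{\mathcal{A}}_i)$ is homeostatic. By hypothesis, every path in $\mathcal{P}$ from $\widetilde{\mathcal{A}}_i$ to $\widetilde{\mathcal{A}}_j$ crosses such a super-simple node $\rho$ with $\mathcal{V}_{max}(\widetilde{\mathcal{A}}_i)\preceq\rho\preceq\mathcal{V}_{min}(\widetilde{\mathcal{A}}_j)$, so each such $\rho$ satisfies $x_\rho'(\II_0)=0$. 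Restricting the linearized system to nodes of $\mathcal{A}_j$ gives
\begin{equation*}
J(\mathcal{A}_j)\, x'|_{\mathcal{A}_j}(\II_0) \;=\; -\mathbf{b},
\end{equation*}
where $\mathbf{b}$ collects the couplings $f_{\tau,x_k}\,x_k'(\II_0)$ coming into $\mathcal{A}_j$ from each $k\notin\mathcal{A}_j$. The goal is to show $\mathbf{b}=0$. For any source $k$ that lies on a simple path into $\mathcal{A}_j$, the minimality of $\mathcal{V}_{min}(\widetilde{\mathcal{A}}_j)$ forces the last structural node on that path to be downstream of $\rho$, and Theorem~\ref{thm:struct_to_struct} combined with $x_\rho'(\II_0)=0$ makes $x_k'(\II_0)=0$. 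For any source $k$ lying in another appendage component $\widetilde{\mathcal{A}}_k$ on a path from $\widetilde{\mathcal{A}}_i$ to $\widetilde{\mathcal{A}}_j$, I would argue by induction on the feedforward depth of $\mathcal{P}_\mathcal{A}$ between $\widetilde{\mathcal{A}}_i$ and $\widetilde{\mathcal{A}}_j$ (the induction hypothesis being the theorem itself applied to a shorter path): the hypothesis passes to the sub-path, so $\widetilde{\mathcal{A}}_k$ is already homeostatic and contributes $0$ to $\mathbf{b}$. Since $\mathcal{A}_j$ is not the inducing block, $\det J(\mathcal{A}_j)\neq 0$ generically, and $\mathbf{b}=0$ yields $x_\kappa'(\II_0)=0$ for all $\kappa\in\mathcal{A}_j$.

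For the $(\Rightarrow)$ direction I would argue contrapositively by exhibiting admissible couplings that realize failure. If $\widetilde{\mathcal{A}}_i$ is not strictly upstream from $\widetilde{\mathcal{A}}_j$ in $\mathcal{P}$, then the subgraph upstream of $\widetilde{\mathcal{A}}_j$ is disjoint from $\widetilde{\mathcal{A}}_i$, and one can prescribe the entries of $J(\mathcal{A}_j)$ and the incoming couplings so that $\mathbf{b}\neq 0$ while $\det B_i=0$, producing a nonzero $x'|_{\mathcal{A}_j}(\II_0)$. If instead there exists a path $\pi$ in $\mathcal{P}$ from $\widetilde{\mathcal{A}}_i$ to $\widetilde{\mathcal{A}}_j$ that avoids every super-simple node in $[\mathcal{V}_{max}(\widetilde{\mathcal{A}}_i),\mathcal{V}_{min}(\widetilde{\mathcal{A}}_j)]$, I would choose admissible couplings that are generic on $\pi$ and zero on "competing" branches, so that the one-dimensional null vector of $B_i$ propagates along $\pi$ into $\mathcal{A}_j$ without being annihilated by any homeostatic structural node, producing a nonzero contribution to $\mathbf{b}$ and hence breaking homeostasis of $\mathcal{A}_j$.

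The hard part will be the genericity bookkeeping in the $(\Leftarrow)$ direction. One must ensure that the irreducible factor $\det B_i$ identified by Theorem~\ref{thm:irreducible} is truly independent of $\det J(\mathcal{A}_j)$, so that we may assume $\det J(\mathcal{A}_j)\neq 0$ while $\det B_i=0$; this uses the Frobenius-K\"onig block-diagonal separation together with the node-disjointness of distinct appendage subnetworks from Theorem~\ref{thm:homeo_sub_net}(2) and Definition~\ref{D:K_eta}. A related subtlety is the induction along $\mathcal{P}_\mathcal{A}$: intermediate appendage components between $\widetilde{\mathcal{A}}_i$ and $\widetilde{\mathcal{A}}_j$ need not themselves be homeostasis inducing, yet must be shown to carry a zero signal; this requires checking that the hypothesis of the theorem restricts correctly to sub-paths, which follows because $\mathcal{V}_{max}$ and $\mathcal{V}_{min}$ are monotone along $\mathcal{P}_\mathcal{S}$ under the total ordering $\prec$ of super-simple nodes.
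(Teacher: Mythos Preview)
The paper does not actually prove Theorem~\ref{thm:app_to_app}: this is a survey article, and Theorems~\ref{thm:struct_to_struct}--\ref{thm:app_to_app} are stated with illustrative figures only, the proofs being deferred to Duncan~\etal~\cite{duncan2024}. So there is no in-paper argument against which to compare your proposal.

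That said, your sketch is on solid ground and is in the spirit of the bare-hands linearized computations the paper showcases for the $6$-node example. The crucial monotonicity you invoke at the end is correct and is what makes the induction go through: if $\widetilde{\mathcal{A}}_k \to \widetilde{\mathcal{A}}_j$ in $\mathcal{P}_\mathcal{A}$, then every appendage path into $\mathcal{A}_k$ extends (after simplification) to an appendage path into $\mathcal{A}_j$, so the collection defining $\mathcal{V}_{min}(\widetilde{\mathcal{A}}_j)$ contains that for $\mathcal{V}_{min}(\widetilde{\mathcal{A}}_k)$ and hence $\mathcal{V}_{min}(\widetilde{\mathcal{A}}_j)\preceq\mathcal{V}_{min}(\widetilde{\mathcal{A}}_k)$. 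Consequently a super-simple $\rho$ with $\rho\preceq\mathcal{V}_{min}(\widetilde{\mathcal{A}}_j)$ automatically satisfies $\rho\preceq\mathcal{V}_{min}(\widetilde{\mathcal{A}}_k)$, and your inductive application of the theorem to the shorter pair $(\widetilde{\mathcal{A}}_i,\widetilde{\mathcal{A}}_k)$ is legitimate. The same inequality also shows that \emph{every} appendage component with an arrow into $\widetilde{\mathcal{A}}_j$ is downstream from $\widetilde{\mathcal{A}}_i$ in $\mathcal{P}$ (via $\widetilde{\mathcal{A}}_i\to\mathcal{V}_{max}(\widetilde{\mathcal{A}}_i)\pathto\mathcal{V}_{min}(\widetilde{\mathcal{A}}_k)\to\widetilde{\mathcal{A}}_k$), so your case split on appendage sources is exhaustive even though you only explicitly treated those ``on a path from $\widetilde{\mathcal{A}}_i$ to $\widetilde{\mathcal{A}}_j$''. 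One small correction: for the structural sources $k$ feeding $\mathcal{A}_j$ you should invoke Theorem~\ref{thm:app_to_struct} (appendage induces backbone nodes strictly downstream of $\mathcal{V}_{max}(\widetilde{\mathcal{A}}_i)$), not Theorem~\ref{thm:struct_to_struct}, since the inducing block here is appendage.
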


See Figure \ref{fig:app_to_app} for an application of Theorem~\ref{thm:app_to_app}. 

\begin{figure}[!h] 
\centering
\includegraphics[width=.35\textwidth]{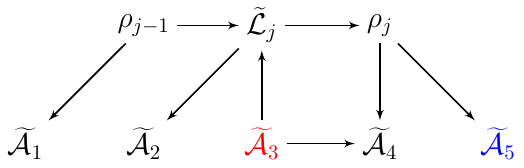}
\caption{
An example of appendage subnetworks induced by appendage homeostasis. 
Suppose the appendage component $\widetilde{\mathcal{A}}_3$ in red is homeostasis inducing. 
Since $\widetilde{\mathcal{A}}_3$ has only one path to the blue appendage component $\widetilde{\mathcal{A}}_5$ containing the super-simple node $\rho_j$, then by Theorem \ref{thm:app_to_app} $\widetilde{\mathcal{A}}_5$ is homeostatic,  but no other appendage subnetwork is homeostatic.}
\label{fig:app_to_app}
\end{figure}

\begin{example} \label{E:example8_pattern} \normalfont
We consider homeostasis patterns for the admissible systems in \eqref{e:admissible} obtained from Figure~\ref{F:example8}.  Specifically, we show how the theorems in this section lead to the determination of the homeostasis patterns that were derived by direct calculation from the equations \eqref{e:1st_order}. 
The corresponding homeostasis pattern network $\mathcal{P}$ is shown in Figure \ref{F:example8b}.
The answer is listed in Table~\ref{T:example8}.

\medskip

\noindent
\textbf{Case (a)}  $f_{\tau_3, \tau_3}$:
homeostasis is induced by the node $\widetilde{\mathcal{A}}_3$ of $\mathcal{P}_{\mathcal{A}}$.
Theorem \ref{thm:app_to_struct} shows $\widetilde{\mathcal{A}}_3$ induces $\{ o \}$, which is the only super-simple node of $\mathcal{P}_\mathcal{S}$ downstream from $\mathcal{V}_{max}(\widetilde{\mathcal{A}}_3)$.
And $\widetilde{\mathcal{A}}_3$ induces no backbone node.
Theorem \ref{thm:app_to_app} shows  $\widetilde{\mathcal{A}}_3$ induces no appendage component.
Therefore, in this case the homeostasis pattern is $\{o\}$.

\medskip

\noindent
\textbf{Case (b)}  $f_{\tau_1, \tau_1}$:
homeostasis is induced by the node $\widetilde{\mathcal{A}}_1$ of $\mathcal{P}_{\mathcal{A}}$.
Theorem \ref{thm:app_to_struct} shows $\widetilde{\mathcal{A}}_1$ induces $\{ \iota, o \}$, which are the super-simple nodes of $\mathcal{P}_\mathcal{S}$ downstream from $\mathcal{V}_{max}(\widetilde{\mathcal{A}}_1)$.
Also $\widetilde{\mathcal{A}}_1$ induces $\{ \widetilde{\mathcal{L}}_1 \}$, which is the backbone node of $\mathcal{P}_\mathcal{S}$ downstream from $\iota$.
Theorem \ref{thm:app_to_app} shows $\widetilde{\mathcal{A}}_1$ induces $\{ \widetilde{\mathcal{A}}_2, \widetilde{\mathcal{A}}_3 \}$, which are the nodes of $\mathcal{P}_\mathcal{A}$ downstream from $\widetilde{\mathcal{A}}_1$ and each path contains the super-simple node $o$ with $\mathcal{V}_{max}(\widetilde{\mathcal{A}}_1) \preceq o \preceq \mathcal{V}_{min}(\widetilde{\mathcal{A}}_2)$ or $\mathcal{V}_{min}(\widetilde{\mathcal{A}}_3)$. Therefore, in this case the homeostasis pattern is $\{ \iota, \ \sigma, \ o, \ \tau_2, \ \tau_3 \}$.

\medskip

\noindent
\textbf{Case (c)}  $f_{\tau_2, \tau_2}$:
homeostasis is induced by the node $\widetilde{\mathcal{A}}_2$ of $\mathcal{P}_{\mathcal{A}}$.
Theorem \ref{thm:app_to_struct} shows $\widetilde{\mathcal{A}}_2$ induces $\{ o \}$, which is the super-simple nodes of $\mathcal{P}_\mathcal{S}$ downstream from $\mathcal{V}_{max}(\widetilde{\mathcal{A}}_2)$.
And $\widetilde{\mathcal{A}}_2$ induces no backbone node.
Theorem \ref{thm:app_to_app} shows $\widetilde{\mathcal{A}}_2$ induces $\{ \widetilde{\mathcal{A}}_3 \}$, which is the node of $\mathcal{P}_\mathcal{A}$ downstream from $\widetilde{\mathcal{A}}_2$ and each path contains the super-simple node $o$ with $\mathcal{V}_{max}(\widetilde{\mathcal{A}}_2) \preceq o \preceq \mathcal{V}_{min}(\widetilde{\mathcal{A}}_3)$.
Therefore, in this case the homeostasis pattern is $\{ o, \ \tau_3 \}$.

\medskip

\noindent
\textbf{Case (d):}  
$\det \left(\begin{smallmatrix}
f_{\sigma, \iota} & f_{\sigma, \sigma} \\
f_{o, \iota} & f_{o, \sigma}
\end{smallmatrix}\right)$:
homeostasis is induced by node $\widetilde{\mathcal{L}}_1$ of $\mathcal{P}_{\mathcal{S}}$.
Theorem \ref{thm:struct_to_struct} shows $\widetilde{\mathcal{L}}_1$ induces $\{ o \}$, which is the node of $\mathcal{P}_\mathcal{S}$ downstream from $\widetilde{\mathcal{L}}_1$.
Theorem \ref{thm:struct_to_app} shows $\widetilde{\mathcal{L}}_1$ induces $\{ \widetilde{\mathcal{A}}_2, \widetilde{\mathcal{A}}_3 \}$, whose $\mathcal{V}_{min}$ are strictly downstream from $\widetilde{\mathcal{L}}_1$.
Therefore, in this case the homeostasis pattern is $\{ o, \ \tau_2, \ \tau_3 \}$.
\END
\end{example}

\begin{figure}[!htb]
\centering
\includegraphics[width =.35\textwidth]{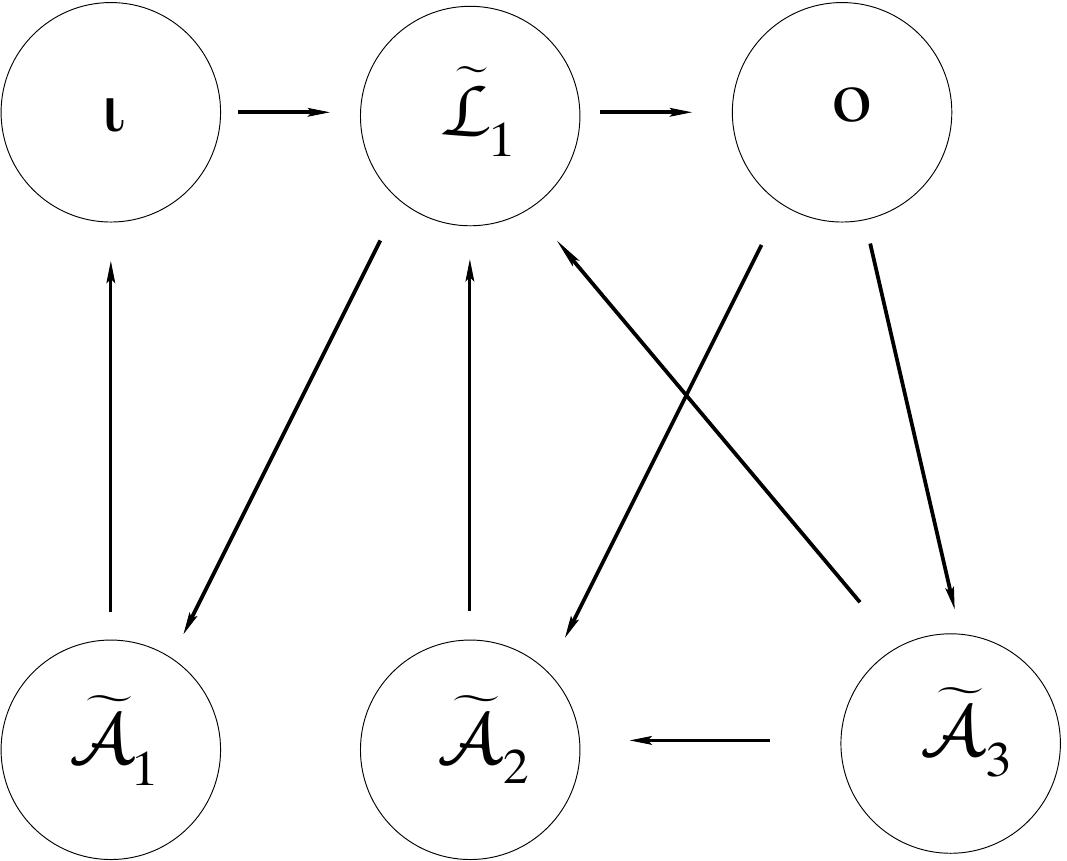}
\caption{Pattern network $\mathcal{P}$ obtained from input-output network $\mathcal{G}$ shown in Figure~\ref{F:example8}. \label{F:example8b}}
\end{figure}

Finally, we give three general results about the induction relation.
First, in Theorem~\ref{thm:pattern_from_subnetworks_intro} we state that the induction relation is characterized by its behavior on homeostasis subnetworks.
Next, in Theorems~\ref{thm:always_related_intro} we state that induction applies in at least one direction for distinct homeostasis subnetworks and that no subnetwork induces itself.
Finally, in  Theorem~\ref{thm:bijective_corresp_intro} we state that distinct subnetworks have distinct homeostasis patterns. 
That is, the set of nodes induced by a homeostasis subnetwork is unique among all homeostasis subnetworks.
For proofs see \cite{duncan2024}.

\begin{theorem} \label{thm:pattern_from_subnetworks_intro}
Suppose $\mathcal{K}_1$ and $\mathcal{K}_2$ are distinct homeostasis subnetworks of $\mathcal{G}$.  Let $\kappa$ be a node of 
$\mathcal{K}_2$ where $\kappa\neq o$.  If $\mathcal{K}_1\Rightarrow \kappa$, then $\mathcal{K}_1 \Rightarrow \mathcal{K}_2$.
\end{theorem}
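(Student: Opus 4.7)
The plan is to proceed by case analysis on the type (structural or appendage) of $\mathcal{K}_2$ and on the role of the node $\kappa$ within it, using Theorems \ref{thm:struct_to_struct}--\ref{thm:app_to_app} as the main tools. The guiding observation is that every node of $\mathcal{G}$ belongs to some homeostasis subnetwork (super-simple nodes to two consecutive structural subnetworks, all other nodes to exactly one), and that the four induction theorems describe patterns in $\mathcal{P}$ that are built out of complete units: an entire backbone $\widetilde{\mathcal{L}}_j$ (unpacking to all of $\mathcal{L}_j$ in $\mathcal{G}$), an entire appendage component $\widetilde{\mathcal{A}}_k$ (unpacking to all of $\mathcal{A}_k$), or an individual super-simple node $\rho$. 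The goal in each case is to show that the induction of a single non-output $\kappa \in \mathcal{K}_2$ forces induction of the pattern unit corresponding to $\mathcal{K}_2$, whence $\mathcal{K}_1 \Rightarrow \mathcal{K}_2$ follows from Definitions \ref{D:homeo_inducing}(c) and \ref{D:homeo_inducing_pattern}. The hypothesis $\kappa\neq o$ is essential, since $o$ is induced by \emph{every} homeostasis subnetwork and carries no information.

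First I would treat the case $\mathcal{K}_2 = \mathcal{A}_k$, an appendage subnetwork. Then $\kappa$ is a super-appendage node lying in the single component $\widetilde{\mathcal{A}}_k \in \mathcal{P}_\mathcal{A}$. Inspection of Theorems \ref{thm:struct_to_app} (when $\mathcal{K}_1$ is structural) and \ref{thm:app_to_app} (when $\mathcal{K}_1$ is appendage) shows that both describe induction only at the level of entire appendage components; there is no combinatorial mechanism by which a proper subset of $\mathcal{A}_k$ is made homeostatic while the rest is not. Hence $\widetilde{\mathcal{K}}_1 \Rightarrow \widetilde{\mathcal{A}}_k$, which translates to $\mathcal{K}_1 \Rightarrow \mathcal{A}_k = \mathcal{K}_2$.

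Next I would treat the case $\mathcal{K}_2 = \mathcal{L}_j$, splitting on whether $\kappa$ lies in the augmented simple subnetwork $\mathcal{L}_j' = \mathcal{L}_j'' \cup T_j$ or is a super-simple endpoint $\rho_{j-1}$ or $\rho_j$ of $\mathcal{L}_j$. If $\kappa \in \mathcal{L}_j'$ (a simple node, or a linked appendage node in $T_j$), then $\kappa$ is neither super-simple nor super-appendage, and the only induction mechanism in Theorems \ref{thm:struct_to_struct} and \ref{thm:app_to_struct} that can reach $\kappa$ is induction of the entire backbone $\widetilde{\mathcal{L}}_j$; this at once gives $\mathcal{K}_1 \Rightarrow \mathcal{L}_j$. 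If instead $\kappa$ is a super-simple endpoint (with $\kappa \neq o$), then those same two theorems force $\widetilde{\mathcal{K}}_1$ to sit strictly upstream of $\kappa$ in the feedforward order of $\mathcal{P}_\mathcal{S}$ (either by the ``strictly downstream of $\widetilde{\mathcal{K}}_1$'' condition of Theorem \ref{thm:struct_to_struct}, or by the $\mathcal{V}_{max}$ condition of Theorem \ref{thm:app_to_struct}). Combined with $\mathcal{K}_1 \neq \mathcal{K}_2$, this places the adjacent backbone $\widetilde{\mathcal{L}}_j$ strictly downstream of $\widetilde{\mathcal{K}}_1$, so the same theorems deliver $\widetilde{\mathcal{K}}_1 \Rightarrow \widetilde{\mathcal{L}}_j$, hence $\mathcal{K}_1 \Rightarrow \mathcal{L}_j$.

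The main obstacle is justifying the ``all or nothing'' claim on which both cases rest: that the induction relations described by Theorems \ref{thm:struct_to_struct}--\ref{thm:app_to_app} are exhaustive and really do act on $\widetilde{\mathcal{L}}_j$ and $\widetilde{\mathcal{A}}_k$ as indivisible units at the level of nodes of $\mathcal{G}$. Algebraically, this comes from applying Cramer's rule to the linearized system \eqref{e:1st_order} at $(X_0,\II_0)$ with $\det(B_{\mathcal{K}_1}) = 0$: the Frobenius--K\"onig block triangular factorization of $H$ from Theorem \ref{thm:irreducible} induces a matching block structure on the vector $X'(\II_0)$, in which each irreducible block's derivatives vanish simultaneously and precisely when the associated $\det(B_\eta)$ vanishes. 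The graph-theoretic translation of this algebraic fact, carried out in the proofs of Theorems \ref{thm:struct_to_struct}--\ref{thm:app_to_app} in \cite{duncan2024}, is where the bulk of the technical work lies; once those descriptions are accepted, Theorem \ref{thm:pattern_from_subnetworks_intro} becomes a case-bookkeeping consequence.
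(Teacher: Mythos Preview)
The paper does not actually prove Theorem~\ref{thm:pattern_from_subnetworks_intro}; it simply cites \cite{duncan2024} for the proof. So there is no in-paper argument to compare against directly.

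Your outline is reasonable and you correctly identify the crux: the ``all or nothing'' behaviour of induction at the level of homeostasis subnetworks. However, be careful about the logical direction. As stated in this paper, Theorems~\ref{thm:struct_to_struct}--\ref{thm:app_to_app} describe induction between \emph{pattern-network nodes}, i.e.\ relations of the form $\widetilde{\mathcal{K}}_1 \Rightarrow \widetilde{\mathcal{V}}$, which by Definition~\ref{D:homeo_inducing_pattern} unpack to $\mathcal{K}_1 \Rightarrow \mathcal{V}$ for the corresponding subnetwork $\mathcal{V}\subset\mathcal{G}$. The negative clauses (``but no other nodes of $\mathcal{P}_\mathcal{S}$'', etc.) say only that $\mathcal{K}_1 \not\Rightarrow \mathcal{V}$, i.e.\ not \emph{every} node of $\mathcal{V}$ is homeostatic. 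They do \emph{not}, as stated, preclude some proper subset of nodes of $\mathcal{V}$ from being homeostatic. Theorem~\ref{thm:pattern_from_subnetworks_intro} is precisely the statement that this cannot happen, so you cannot derive it from the \emph{statements} of Theorems~\ref{thm:struct_to_struct}--\ref{thm:app_to_app} alone without circularity.

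You recognize this in your final paragraph, and your diagnosis is correct: the indivisibility of the blocks comes from the Frobenius--K\"onig structure of $H$ combined with Cramer's rule applied to \eqref{e:1st_order}, which forces the node derivatives associated to a single irreducible block to vanish together. That algebraic argument is the actual content of the proof in \cite{duncan2024}; your case analysis is then a clean way to organize the conclusion. So your plan is sound, provided you understand that the work lies in the block-level algebra and not in bookkeeping from the four induction theorems.
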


\begin{theorem} \label{thm:always_related_intro}
Let $\mathcal{K}_1$ be a homeostasis subnetwork of $\mathcal{G}$. Then generically $\mathcal{K}_1 \not\Rightarrow \mathcal{K}_1$.
Moreover, let $\mathcal{K}_2$ be some other homeostasis subnetwork of $\mathcal{G}$.
Then one of the following relations holds:
\begin{enumerate}[(a)]
\item $\mathcal{K}_1 \Rightarrow \mathcal{K}_2$ and $\mathcal{K}_2 \not\Rightarrow \mathcal{K}_1$,

\item $\mathcal{K}_2 \Rightarrow \mathcal{K}_1$ and $\mathcal{K}_1 \not\Rightarrow \mathcal{K}_2$,

\item $\mathcal{K}_1 \Rightarrow \mathcal{K}_2$ and $\mathcal{K}_2 \Rightarrow \mathcal{K}_1$.
\end{enumerate}
\end{theorem}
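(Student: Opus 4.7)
The plan is to separately establish (I) the irreflexivity $\mathcal{K}_1 \not\Rightarrow \mathcal{K}_1$, and (II) the trichotomy for distinct $\mathcal{K}_1, \mathcal{K}_2$. In both parts I would translate the question into the pattern network $\mathcal{P}$ via the correspondence of Remark~\ref{R:correspond} and then apply Theorems \ref{thm:struct_to_struct}--\ref{thm:app_to_app}.

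For (I), I would split on whether $\mathcal{K}_1$ is structural or appendage. If $\mathcal{K}_1 = \mathcal{L}_j$, the super-simple boundary $\rho_{j-1} \in \mathcal{L}_j$ is strictly upstream of $\widetilde{\mathcal{L}}_j$ in $\mathcal{P}_\mathcal{S}$, so Theorem~\ref{thm:struct_to_struct} gives $\mathcal{L}_j \not\Rightarrow \rho_{j-1}$ and hence $\mathcal{L}_j \not\Rightarrow \mathcal{L}_j$. If $\mathcal{K}_1 = \mathcal{A}$, any super-appendage node $\tau \in \mathcal{A}$ lies in $\widetilde{\mathcal{A}}$ itself; since Theorems~\ref{thm:app_to_struct} and \ref{thm:app_to_app} induce only nodes strictly downstream of $\mathcal{V}_{max}(\widetilde{\mathcal{A}})$ or in appendage components strictly downstream of $\widetilde{\mathcal{A}}$, such a $\tau$ is not induced.

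The (Structural, Structural) case of (II) is immediate from the total ordering of \eqref{backbone_e} together with Theorem~\ref{thm:struct_to_struct}. The remaining cases hinge on the structural lemma
\begin{equation} \label{e:lemma_vmin_vmax}
 \mathcal{V}_{max}(\widetilde{\mathcal{A}}) \preceq \mathcal{V}_{min}(\widetilde{\mathcal{A}})
 \quad \text{for every appendage component } \widetilde{\mathcal{A}},
\end{equation}
which I would prove by contradiction: if $\mathcal{V}_{min}(\widetilde{\mathcal{A}}) \prec \mathcal{V}_{max}(\widetilde{\mathcal{A}})$, concatenating the defining appendage paths with backbone segments $\iota \pathto \mathcal{V}_{min}(\widetilde{\mathcal{A}})$ and $\mathcal{V}_{max}(\widetilde{\mathcal{A}}) \pathto o$ yields an $\iota o$-simple path containing a super-appendage node, contradicting Definition~\ref{D:simple_node}(c). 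For (Structural, Appendage), if neither direction of induction holds then Theorems~\ref{thm:struct_to_app} and \ref{thm:app_to_struct} combined with \eqref{e:lemma_vmin_vmax} collapse $\mathcal{V}_{min}(\widetilde{\mathcal{A}}) \preceq \widetilde{\mathcal{L}}_j \preceq \mathcal{V}_{max}(\widetilde{\mathcal{A}})$ to $\widetilde{\mathcal{L}}_j = \mathcal{V}_{min}(\widetilde{\mathcal{A}}) = \mathcal{V}_{max}(\widetilde{\mathcal{A}})$, a configuration I would rule out by arguing that the resulting appendage paths make the nodes of $\widetilde{\mathcal{A}}$ linked appendage nodes of $T_j$ (Definition~\ref{D:structural_subnet}(b)) rather than super-appendage. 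For (Appendage, Appendage), assuming neither direction holds and applying \eqref{e:lemma_vmin_vmax} to both components yields $\mathcal{V}_{max}(\widetilde{\mathcal{A}}_1) \succ \mathcal{V}_{max}(\widetilde{\mathcal{A}}_1)$, a contradiction; hence for (say) $\mathcal{V}_{max}(\widetilde{\mathcal{A}}_1) \preceq \mathcal{V}_{min}(\widetilde{\mathcal{A}}_2)$ the backbone path in $\mathcal{P}$ from $\widetilde{\mathcal{A}}_1$ to $\widetilde{\mathcal{A}}_2$ traverses a super-simple node inside the required bracket, activating Theorem~\ref{thm:app_to_app}.

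The main obstacle is verifying the bracketing condition of Theorem~\ref{thm:app_to_app} along \emph{every} path in $\mathcal{P}$ from $\widetilde{\mathcal{A}}_i$ to $\widetilde{\mathcal{A}}_j$, not just the canonical backbone route. Shortcut arrows in $\mathcal{P}_\mathcal{A}$ (from direct arrows between super-appendage nodes of distinct components in $\mathcal{G}$) and paths passing through intermediate appendage components must also be checked; a careful combinatorial argument is needed to show that every such alternative path still crosses a super-simple node inside $[\mathcal{V}_{max}(\widetilde{\mathcal{A}}_i), \mathcal{V}_{min}(\widetilde{\mathcal{A}}_j)]$, and this is where the bulk of the bookkeeping in the full proof lives.
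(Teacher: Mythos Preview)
The paper does not actually prove Theorem~\ref{thm:always_related_intro}: immediately after stating Theorems~\ref{thm:pattern_from_subnetworks_intro}--\ref{thm:bijective_corresp_intro} it writes ``For proofs see \cite{duncan2024}.'' So there is no in-paper argument to compare your proposal against; the paper treats this result as imported from the companion article.

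That said, your strategy---reducing everything to the pattern network $\mathcal{P}$ and reading off the trichotomy from Theorems~\ref{thm:struct_to_struct}--\ref{thm:app_to_app}---is exactly the natural route given the machinery the paper sets up, and your key lemma $\mathcal{V}_{max}(\widetilde{\mathcal{A}}) \preceq \mathcal{V}_{min}(\widetilde{\mathcal{A}})$ is correct (it is what prevents super-appendage nodes from lying on an $\iota o$-simple path). Your own diagnosis of the remaining gap is accurate: in the (Appendage, Appendage) case, Theorem~\ref{thm:app_to_app} requires that \emph{every} path in $\mathcal{P}$ from $\widetilde{\mathcal{A}}_i$ to $\widetilde{\mathcal{A}}_j$ pass through a super-simple node in the bracket, and direct arrows inside $\mathcal{P}_\mathcal{A}$ are not automatically handled by the backbone argument. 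The (Structural, Appendage) collapse to $\widetilde{\mathcal{L}}_j = \mathcal{V}_{min} = \mathcal{V}_{max}$ also needs a bit more than you sketch: one must argue that an appendage path entering and exiting the same $\mathcal{L}_j'$ forces the appendage nodes into a transitive component with simple nodes of $\mathcal{L}_j''$ in some $C_S$, which is precisely what contradicts super-appendage status (Definition~\ref{defn:super_appendage}(b)). These are the bookkeeping steps that presumably occupy the proof in \cite{duncan2024}.
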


Theorems~\ref{thm:pattern_from_subnetworks_intro} and \ref{thm:always_related_intro} imply
that each homeostasis subnetwork has a unique homeostasis pattern associated to it.  Specifically:

\begin{theorem} \label{thm:bijective_corresp_intro}
Let $\mathcal{K}_1$ and $\mathcal{K}_2$ be two distinct homeostasis subnetworks of $\mathcal{G}$. 
Then the set of subnetworks induced by $\mathcal{K}_1$ and the set of subnetworks induced by $\mathcal{K}_2$ are distinct.
\end{theorem}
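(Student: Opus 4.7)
The plan is to deduce the theorem directly from the two preparatory results, using the trichotomy of Theorem \ref{thm:always_related_intro} to split into cases and its generic non-self-induction clause to separate $\mathcal{S}_1$ from $\mathcal{S}_2$ in each case. Fix distinct homeostasis subnetworks $\mathcal{K}_1$ and $\mathcal{K}_2$ of $\mathcal{G}$, write $\mathcal{S}_i = \{\mathcal{K} : \mathcal{K}_i \Rightarrow \mathcal{K}\}$ for $i=1,2$, and note that the theorem asks for $\mathcal{S}_1 \neq \mathcal{S}_2$. My strategy is to exhibit, in each of the three cases of Theorem \ref{thm:always_related_intro}, an explicit homeostasis subnetwork that lies in exactly one of $\mathcal{S}_1, \mathcal{S}_2$.

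In case (a), $\mathcal{K}_1 \Rightarrow \mathcal{K}_2$ places $\mathcal{K}_2$ in $\mathcal{S}_1$, whereas generic non-self-induction $\mathcal{K}_2 \not\Rightarrow \mathcal{K}_2$ keeps $\mathcal{K}_2$ out of $\mathcal{S}_2$; case (b) is symmetric after swapping indices. In case (c), the mutual induction hypothesis puts $\mathcal{K}_1$ in $\mathcal{S}_2$ via $\mathcal{K}_2 \Rightarrow \mathcal{K}_1$, while once again $\mathcal{K}_1 \not\Rightarrow \mathcal{K}_1$ keeps $\mathcal{K}_1$ out of $\mathcal{S}_1$. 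Together the three cases give $\mathcal{S}_1 \neq \mathcal{S}_2$. To upgrade this to the sharper statement about node-level homeostasis patterns $P_i = \{\kappa : \mathcal{K}_i \Rightarrow \kappa\}$, I would invoke Theorem \ref{thm:pattern_from_subnetworks_intro}: its contrapositive says that whenever $\mathcal{K}_i \not\Rightarrow \mathcal{K}$, no non-output node of $\mathcal{K}$ sits in $P_i$, so applying it to the witness subnetwork produced above yields a concrete non-output node belonging to exactly one of $P_1, P_2$.

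The main obstacle is case (c), where $\mathcal{K}_1$ and $\mathcal{K}_2$ mutually induce each other and the labels $\mathcal{K}_1, \mathcal{K}_2$ do not by themselves separate $\mathcal{S}_1$ and $\mathcal{S}_2$. This is precisely where the generic non-reflexivity clause $\mathcal{K}_i \not\Rightarrow \mathcal{K}_i$ of Theorem \ref{thm:always_related_intro} is indispensable: without it, mutual induction in case (c) could in principle yield $\mathcal{S}_1 = \mathcal{S}_2$ and the theorem would fail. Once that clause is in hand, Theorem \ref{thm:bijective_corresp_intro} reduces to the short case inspection sketched above.
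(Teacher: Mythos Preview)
Your argument is correct and matches the paper's intended approach: the paper explicitly states that Theorems~\ref{thm:pattern_from_subnetworks_intro} and \ref{thm:always_related_intro} imply Theorem~\ref{thm:bijective_corresp_intro}, deferring the detailed verification to \cite{duncan2024}, and your case split on the trichotomy combined with generic non-self-induction is precisely how that implication runs. One small simplification: since every branch of the trichotomy guarantees at least one of $\mathcal{K}_1 \Rightarrow \mathcal{K}_2$ or $\mathcal{K}_2 \Rightarrow \mathcal{K}_1$, you can handle all three cases uniformly by assuming without loss of generality that $\mathcal{K}_1 \Rightarrow \mathcal{K}_2$, whence $\mathcal{K}_2 \in \mathcal{S}_1 \setminus \mathcal{S}_2$.
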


Theorem \ref{thm:bijective_corresp_intro} implies that there is a bijective correspondence between  homeostasis subnetworks of $\mathcal{G}$ and homeostasis patterns supported by $\mathcal{G}$.
Recall that theorem \ref{thm:homeo_sub_net} implies that there is a bijective correspondence between the homeostasis types of $G$, that is, the
irreducible factors of $\det(H)$ (or the diagonal blocks of $H$) and homeostasis subnetworks of $\mathcal{G}$.
Therefore we can summarize the main result of our theory as bijective correspondence between three sets defined for any input-output network $\mathcal{G}$:
\[
\begin{split}
\{\text{Homeostasis types}\} 
& \simeq \{\text{Homeostasis subnetworks}\} \\
& \simeq \{\text{Homeostasis patterns}\}
\end{split}
\]

\section{Applications}
\label{SEC:APPL}

In this section we consider four examples of applications of the theory of \cite{wang2021,duncan2024} presented in the previous section.
The first application is to gene regulatory networks, following the approach in \cite{antoneli2018, antoneli2023}.
The second application is to intracellular metal ion regulation, following the approach of \cite{antoneli2024}.
The third application is to bacterial chemotaxis. 
Here, the theory of \cite{wang2021} needs to be adapted to the case of multiple input nodes (with one input parameter) and this was accomplished in \cite{madeira2022}.
In the last application we consider a generalization of the notion of homeostasis to the periodic case and apply this to the modeling of period homeostasis in the circadian rhythm.

\subsection{Genetic Regulatory Networks}
\label{SS:GEGRN}

{\em Gene expression} is the process by which the information encoded in a gene is turned into a biological function, which ultimately manifests itself as a phenotype effect.
This is accomplished by a complex series of enzymatic chemical reactions within the cell leading to the synthesis of specific macro-molecules called the {\em gene product}.
The process of gene expression is used by all known life -- eukaryotes (including multicellular organisms), prokaryotes (bacteria and archaea), and even viruses -- to generate the molecular machinery for life.

There are, basically, two types of gene products: (i) for {\em protein-coding genes} the gene product is a {\em protein}; (ii) {\em non-coding genes}, such as transfer RNA (tRNA) and small nuclear RNA (snRNA), the gene product is a functional {\em non-coding} RNA (ncRNA).

{\em Regulation of gene expression}, or simply {\em gene regulation}, is the range of mechanisms that are used by cells to increase or decrease the amount of specific gene products. 
Sophisticated schemes of gene expression are widely observed in biology, going from triggering developmental pathways, to responding to environmental stimuli.

A {\em gene} (or {\em genetic}) {\em regulatory network} (GRN) is a collection of molecular regulators that interact with each other and with other substances in the cell to govern the gene expression levels of mRNA and proteins. 
The {\em molecular regulators} can be DNA, RNA, protein or any combination of two, or more of these three that form a complex, such as a specific sequence of DNA and a transcription factor to activate that sequence. 
The interaction can be direct or indirect (through transcribed RNA or translated protein).
When a protein acts as a regulator it is called a {\em transcription factor}, which is one of the main players in regulatory networks.  
By binding to the promoter region of a coding gene they turn them on, initiating the production of another protein, and so on. 
Transcription factors can be {\em excitatory} ({\em activators}) or {\em inhibitory} ({\em repressors}).

The development of advanced experimental techniques in molecular biology is producing increasingly large amounts of experimental data on gene regulation. 
This, in turn, demands the development of mathematical modeling methods for the study and analysis of gene regulation.
Mathematical models of GRNs describe both gene expression and regulation, and in some cases generate predictions that support experimental observations.
Formally, a GRN is represented by a directed graph.
Nodes represent the variables associated to genes (e.g., mRNA and/or protein concentration) and directed links represent couplings between genes (e.g., effect of one gene product on other genes).
In this review we will focus on the mathematical modeling of GRNs using coupled systems of ordinary differential equations.
As we will see in a moment, in the coupled ODE setting there is an important issue concerning the number of variables / equations associated to each node.

Even though the notion of homeostasis is often associated with regulating global physiological parameters, such as temperature, hormone levels, or concentrations of molecules in the bloodstream in complex multicellular organisms, it also can be applied to unicellular organisms.
However, the issue here is how some internal cell state of interest (such as the concentration of some gene product) responds to changes in the intra-cellular and/or extra-cellular environment \cite{ma2009,tang2016,shi2017}.
For instance, Antoneli \etal~\cite{antoneli2018} study the occurrence of homeostasis in a \emph{feedforward loop} motif from the GRN of \emph{S.~cerevisiae} (see Figure \ref{F:reg_net_1a}).

In \cite{antoneli2018} the authors use theorem \ref{lem:irreducible} and bare hands calculations to find infinitesimal homeostasis in a small $3$-node GRN called `feedforward loop motif' (see Figure \ref{F:reg_net_1a}).
Assuming that the regulation of the three genes is inhibitory (repression) and that only gene SPF1 is regulated by upstream transcription factors (the input parameter), they show that the protein concentration of gene GAP1 (the output node) robustly exhibits infinitesimal homeostasis with respect to variation on the regulation level of SPF1 over a wide range.
Moreover, SPF1 and GZF3 (i.e., their protein concentrations) are not homeostatic for any value of the input parameter.
Here, `robustly' means that the occurrence of infinitesimal homeostasis (or not) on the protein concentrations of the three genes described above is persistent under variation of kinetic parameters of the defining differential equations (the rates of synthesis and degradation of the mRNA and protein concentrations).
In order to obtain compatibility of the infinitesimal homeostasis formalism with the GRN structure they use the protein-mRNA network (PRN) representation.
Moreover, in this particular setting, they were able to explicitly compute the homeostasis point by assuming a special functional form for the equations.

\begin{figure}[!htb]
\centerline{\includegraphics[width=0.3\textwidth]{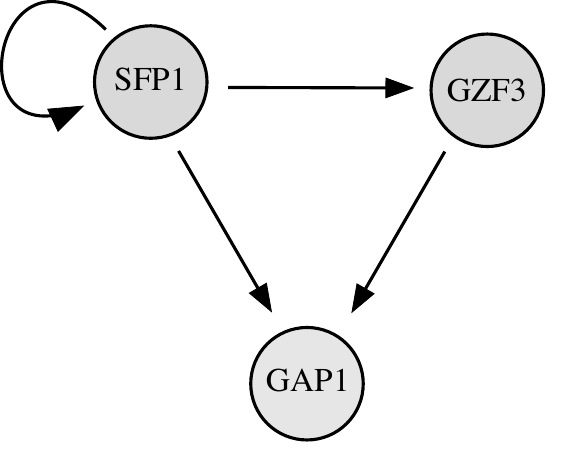}}
\caption{Feedforward loop motif from the GRN of \emph{S.~cerevisiae} (see Antoneli \etal \cite{antoneli2018}).  Note that autoregulation implies that the protein concentration associated with the gene SFP1 affects directly the mRNA concentration associated with that gene.} 
\label{F:reg_net_1a}
\end{figure}

Antoneli \etal~\cite{antoneli2023} employ the theory for input-output networks explained before to deal with infinitesimal homeostasis in general gene regulatory networks (GRN).
As we will explain in a moment a GRN is not exactly a biochemical network as in \cite{rbgsn2017}.
There is a `mismatch' between the number of nodes in the network and the number of state variables of the underlying system of ODEs.
In order to resolve this `mismatch',
we generalize the approach of \cite{antoneli2018}, used to analyze feedforward loops, to arbitrary GRNs.
The idea there was to replace each `gene' node of a GRN by a pair of `protein' and `mRNA' nodes to obtain a {\em protein-mRNA network} (PRN).
Now, PRNs have a mathematical structure similar to that of biochemical networks and hence the theory of \cite{wang2021,duncan2024} can be readily applied.
More importantly, since the classification results have a purely combinatorial side, we can consider the GRN and its associated PRN simultaneously, and work out a correspondence between the classifications of homeostasis types and homeostasis patterns in both of them.
Even though infinitesimal homeostasis only makes sense on the PRN (dynamical) level, its purely combinatorial aspects can be transferred to the GRN.

The main result of \cite{antoneli2023} is a complete characterization of homeostasis types and homeostasis patterns on the PRN that have correspondent on the GRN.
A byproduct of this characterization is the discovery of homeostasis types and homeostasis patterns on the PRN that do not have counterparts on the GRN.
The novelty of the approach of \cite{antoneli2023} is the simultaneous use of two networks, the GRN and the PRN, in the analysis of gene expression homeostasis, and the lack of assumptions about the functional form of the differential equations.

At the abstract level a GRN is a directed graph whose nodes are the genes and a directed link from a source gene to a target gene indicates that the gene product of the source gene acts as a molecular regulator of the target gene.
{\em Autoregulation} occurs when a link connects a gene to itself, that is, the gene product is a molecular regulator of the gene itself.

At the dynamical level, a gene (a node in the GRN) represents the collection of processes that ultimately lead to the making of the gene product.
A protein-coding gene should have at least two processes: (i) {\em transcription}, that is, the synthesis of a mRNA copy from a DNA strand and (ii) {\em translation}, that is, the synthesis of protein from a mRNA. 
The `output' of the corresponding node in the GRN is the protein concentration at a given time.

Lut us assume the simplest scenario, namely, a GRN containing only protein-coding genes.
Then, each gene (node) represents two concentrations (the concentration of mRNA and the concentration of protein). 
That is, there is a mismatch between the number of nodes $N$ and the number of state variables $2N$.
There are two ways to deal with this mismatch.
\begin{enumerate}[(1)]
\item {\it Protein-protein formalism.} This approach is based on the fact that, in some cases, the changes to mRNA concentrations occur much faster than the changes to the concentrations of the associated proteins \cite{phb2009}. 
More specifically, the mRNA concentration quickly reaches a steady-state value before any protein is translated from it.
Formally, this technique is called {\em quasi steady-state approximation} (QSSA) \cite{snowden2017}.
Then, we can solve the steady-state mRNA equations and plug the result in the protein equations.
This procedure effectively reduces the number of state variables by half, thus matching of the number of nodes in the GRN (see \cite{ks2008,rs2017a,rs2017b,shi2017}).

\item {\it Protein-mRNA formalism.}
In this approach we keep the mRNA and protein concentrations for each gene and double the number of nodes of the network, leading to the notion of {\em protein-mRNA network} (PRN) \cite{phb2009,mszt2016}.
Now the network has two `types' of nodes (mRNA and protein) and two `types' of arrows (mRNA $\to$ protein and protein $\to$ mRNA).
As we will see below there is a correspondence between the two networks that allows us to transfer some properties back and forth. 
For instance, this is the approach adopted in Antoneli \etal~\cite{antoneli2018} for the particular example of the feedforward loop motif (see also \cite{kebc2005}).
\end{enumerate}

There are mathematical and biological reasons to prefer the second possibility.
From the mathematical point of view it is more convenient to work with the PRN \cite{mrs2023}. 
More specifically, it allows us to use the general theory of network dynamics \cite{gs2023} to associate a natural class of ODEs to a PRN, which contains virtually all models discussed in the literature. 
Moreover, it makes it possible to apply the techniques developed in Wang \etal~\cite{wang2021} and Duncan \etal~\cite{duncan2024} to classify the homeostasis types in PRN and GRN.

Biologically, the use of protein-protein networks is more appropriate to model prokaryotic gene regulation, because: (i) transcription and translation occur, almost simultaneously, within the cytoplasm of a cell due to the lack of a defined nucleus, (ii) the coding regions typically take up $\sim 90\%$ of the genome, whereas the remaining $\sim 10\%$ does not encode proteins, but most of it still has some biological function (e.g., genes for transfer RNA and ribosomal RNA).
In this case, it is reasonable to assume that gene expression is regulated primarily at the transcriptional level.
On the other hand, gene expression in eukaryotes is a much more complicated process, with several intermediate steps: transcription occurs in the nucleus, where mRNA is processed, modified and transported, and translation can occur in a variety of regions of the cell.
In particular, several non-coding genes that are transcribed into functional non-coding RNA molecules, such as, microRNAs (miRNAs), short interfering RNAs (siRNAs) and long non-coding RNAs (lncRNAs), play an important role in eukaryotic gene regulation.
As we will explain later, it is very easy to incorporate these regulatory elements into the framework of PRNs (see Remark \ref{RMK:non_coding}).
For the moment we make the simplifying assumption that the GRN consists of protein-coding genes with transcription and translation.

Since we are interested in gene expression homeostasis, we consider {\em input-output} GRNs. 
They are supplied with an external parameter $\II$ -- e.g., environmental disturbance or transcription activity (a function of the concentration of transcription factors) -- that affects the mRNA transcription of one gene, called the {\em input gene} $\iota$ of the GRN.
The protein concentration of a second gene, called the {\em output gene} $o$ of the GRN, is the concentration where we expect to exhibit homeostasis.
These two distinguished nodes are fixed throughout the analysis.
We assume that only the input node is affected by the external parameter $\II$.

\begin{definition}
Let $\mathcal{G}$ be an input-output GRN.
We define the associated PRN $\mathcal{R}$ as follows.
\begin{enumerate}[(a)]
\item Every node $\rho$ in the GRN $\mathcal{G}$ corresponds to two nodes in the associated PRN $\mathcal{R}$: $\rho^R$ (the mRNA concentration of gene $\rho$) and $\rho^P$ (the protein concentration of gene $\rho$).
Since there is no intermediary process, the protein concentration $\rho^P$ is affected only by the mRNA concentration $\rho^R$.
\item There is a PRN arrow from $\rho^R\to\rho^P$ and no other PRN arrow has head node $\rho^P$. 
In addition, each GRN arrow $\sigma\to\rho$ leads to a PRN arrow from the protein concentration $\sigma^P$ to the mRNA concentration $\rho^R$ (that is, $\sigma$ is a transcription factor of $\rho$).
Note that each arrow in the GRN leads to a single arrow in the PRN. 
In particular, autoregulation in $\rho$ leads to an arrow of $\rho^P\to\rho^R$.
\item Finally, if the GRN has an input gene $\iota$ and output gene $o$, then the associated PRN $\mathcal{R}$ is an input-output network with input node $\iota^R$ and output node $o^P$.
\end{enumerate}
In particular, a PRN $\mathcal{R}$ is always a \emph{bipartite digraph (directed graph)} (see \cite{pkpbmb2018}), where the two distinguished subsets of nodes are the $\rho^R$ nodes and the $\rho^P$ nodes.
\END
\end{definition}

For example, the abstract GRN corresponding to the feedforward loop motif shown in Figure~\ref{F:reg_net_1a} is the $3$-node input-output network shown in Figure~\ref{F:reg_net_1b}(a). 
Its associated PRN is the $6$-node input-output network shown in Figure \ref{F:reg_net_1b}(b).

\begin{figure}[!htb]
\begin{subfigure}[b]{0.45\textwidth}
\centering
\includegraphics[width=\textwidth]{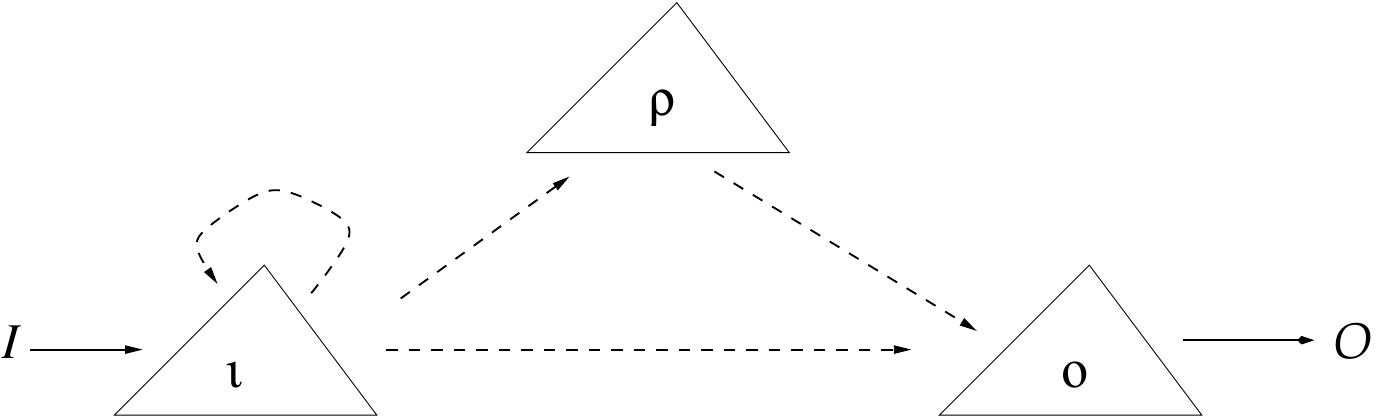}
\caption{GRN}
\label{F:3node_grn2}
\end{subfigure} \qquad
\centering
\begin{subfigure}[b]{0.45\textwidth}
\centering
\includegraphics[width=\textwidth]{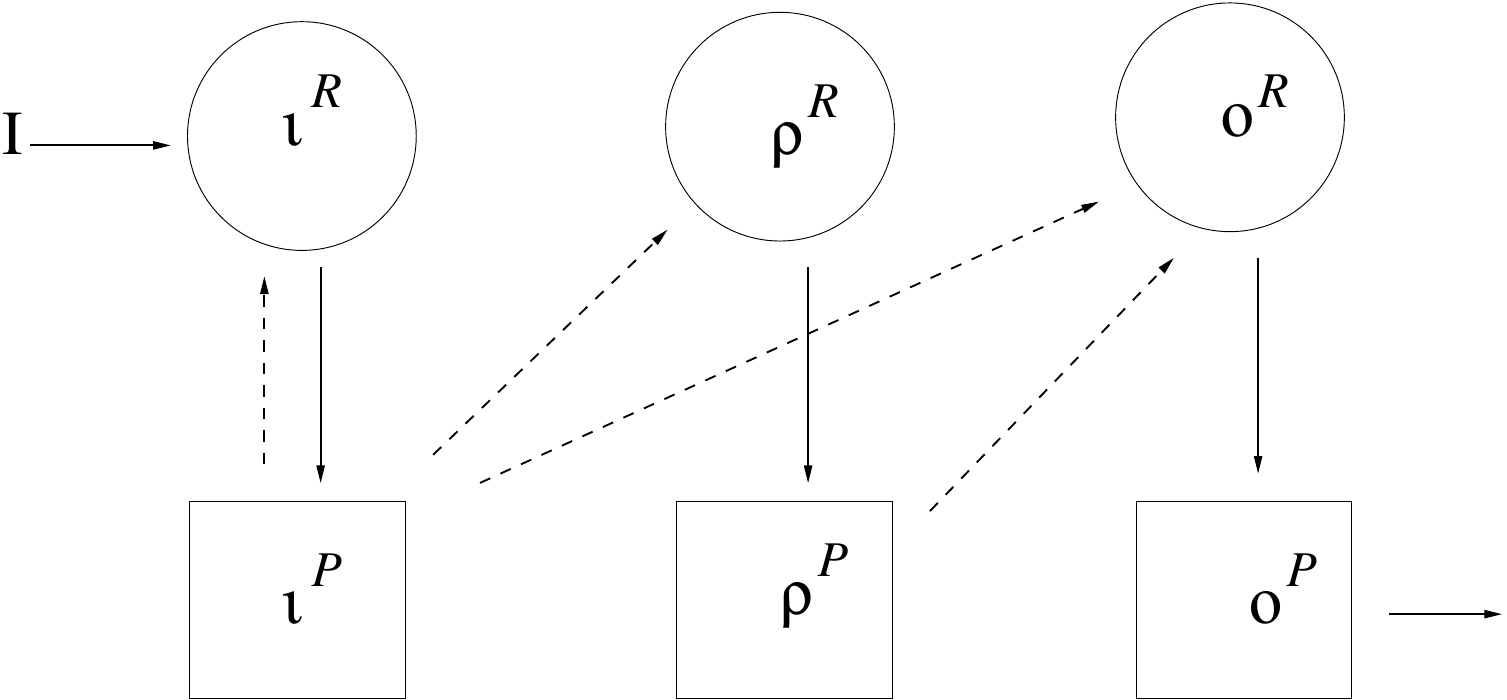}
\caption{PRN}
\label{F:3node_prn2}
\end{subfigure} \qquad
\caption{\label{F:reg_net_1b}
\textbf{Feedforward loop.}
(a) The $3$-node input-output GRN. Triangles designate genes and dashed arrows designate either gene coupling or auto-regulation. 
(b) The corresponding $6$-node PRN. Circles designate mRNA concentrations and squares designate protein concentrations. 
Solid lines stand for ${}^R\longrightarrow {}^P$ coupling inside a single gene and dashed lines for ${}^P\dashrightarrow {}^R$ coupling between genes, that is, the couplings (arrows) inherited from the GRN.}
\end{figure}

In general, the special bipartite structure of the PRN imposes restrictions on the functional form of the admissible vector fields.
For each gene $\rho$ there is a pair of 
of PRN nodes $\rho^R$, $\rho^P$ that yields a pair of $1$-dimensional state variables and corresponding differential equations have the form
\begin{equation} \label{EQ:GEN_FORM_ADM}
\begin{split}
\dot{\rho}^R & = f_{\rho^R}(\rho^R,\rho^P,\tau^P_1,\ldots,\tau^P_k) \\
\dot{\rho}^P & = f_{\rho^P}(\rho^R, \rho^P) \\
\end{split}    
\end{equation}
Here, $f_{\rho^R}$ and $f_{\rho^P}$ are smooth functions.
The variables $\tau^P_1,\ldots,\tau^P_k$ are the transcription factors (TFs), that is, the corresponding protein concentrations of the genes $\tau_1,\ldots,\tau_k$ that regulate gene $\rho$.
They are determined by the GRN arrows $\tau_i\to\rho$ and the corresponding PRN arrows $\tau^P_i\to\rho^R$.
The presence of the variable $\rho^P$ in the function $f_{\rho^R}$ occurs if and only if gene $\rho$ has a self-coupling in the GRN (see Remark \ref{RMK:self-coupling} below).
If $\rho$ is the input node then the function $f_{\rho^R}$ depends explicitly on the input parameter $\II$, as well.
From now on we will assume only the general form \eqref{EQ:GEN_FORM_ADM} for each protein-coding gene, since our classification results depend only on the GRN coupling structure, not on the particular form of the equations (see Remark \ref{RMK:act_rep}).

\begin{remark}[{\bf Activation and Repression}] \normalfont \label{RMK:act_rep}
Very often in the literature GRNs are drawn with two types of arrows:
(i) $\tau\rightarrow\rho$ to indicate that gene $\tau$ (more specifically, its protein) acts as an {\em activator}, or {\em excitatory} transcription factor, of gene $\rho$, and (ii) $\tau\flatrightarrow\rho$ to indicate that gene $\tau$ (more specifically, its protein) acts as a {\em repressor}, or {\em inhibitory} transcription factor, of gene $\rho$.
In terms of the associated differential equations this information is encoded in the $\tau$-dependence of the function $f_{\rho^R}$ in \eqref{EQ:GEN_FORM_ADM}.
Typically, the dependence of $f_{\rho^R}$ on $\tau$ is defined by the so called {\em gene input function}. Well-known examples of gene input functions are the classical {\em Michalis-Menten} and {\em Hill} functions \cite{santillan2008}, and their multi-variate versions \cite{kbzda2008}.
Since we do not specify the functional form of the differential equations, we will not use distinct arrow types in the GRN to indicate activation/repression of genes.
However, we do employ distinct arrow types (see \cite{ch2010}) in the PRN to distinguish mRNA to protein (${}^R\longrightarrow {}^P$)  and protein to mRNA (${}^P\dashrightarrow {}^R$) couplings.
\END
\end{remark}

\begin{remark}[{\bf Autoregulation}] \normalfont \label{RMK:self-coupling}
Self-coupling, or autoregulation, is a peculiar feature of GRNs.
It means that the gene product acts as a transcription factor of the gene itself.
The dynamical interpretation of autoregulation is revealed by the associated PRN.
It is a coupling from the protein node to the mRNA node of the {\em same} gene (see Figure \ref{F:reg_net_1b}).
Moreover, it should be clear that the self-coupling representing autoregulation is not the same as `self-interaction'.
In fact, all nodes of the PRN are {\em self-interacting}, in the sense that the right-hand side of each differential equation explicitly depends on the state variable on the left-hand side.
The clarification of the dynamical interpretation of autoregulation is another advantage of the PRN formalism.
\END
\end{remark}

Another simple network motif is shown in Figure \ref{F:3node_grn}.
Figure \ref{F:3node_prn} shows the associated PRN.
It is called \emph{feedback inhibition} and it plays an important role in the GRN of {\em E. coli}.
Feedback inhibition appears twice in the regulatory cascade of carbohydrate catabolism of {\em E. coli} \cite[Figs. 1 and 2(a)]{mjt2008}, with $(\iota,\tau,o)=(\textrm{IHF},\textrm{CRP},\textrm{FIS})$ and $(\iota,\tau,o)=(\textrm{ARC-A},\textrm{HNS},\textrm{GAD-X})$.

\begin{figure}[!htb]
\begin{subfigure}[b]{0.3\textwidth}
\centering
\includegraphics[width=\textwidth]{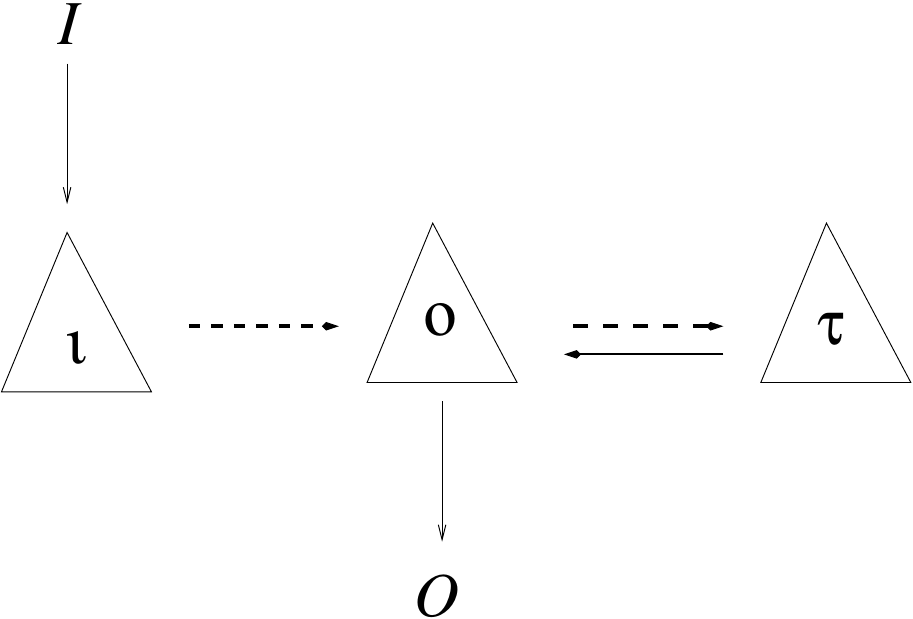}
\caption{GRN}
\label{F:3node_grn}
\end{subfigure} \qquad\qquad\qquad
\centering
\begin{subfigure}[b]{0.25\textwidth}
\centering
\includegraphics[width=\textwidth]{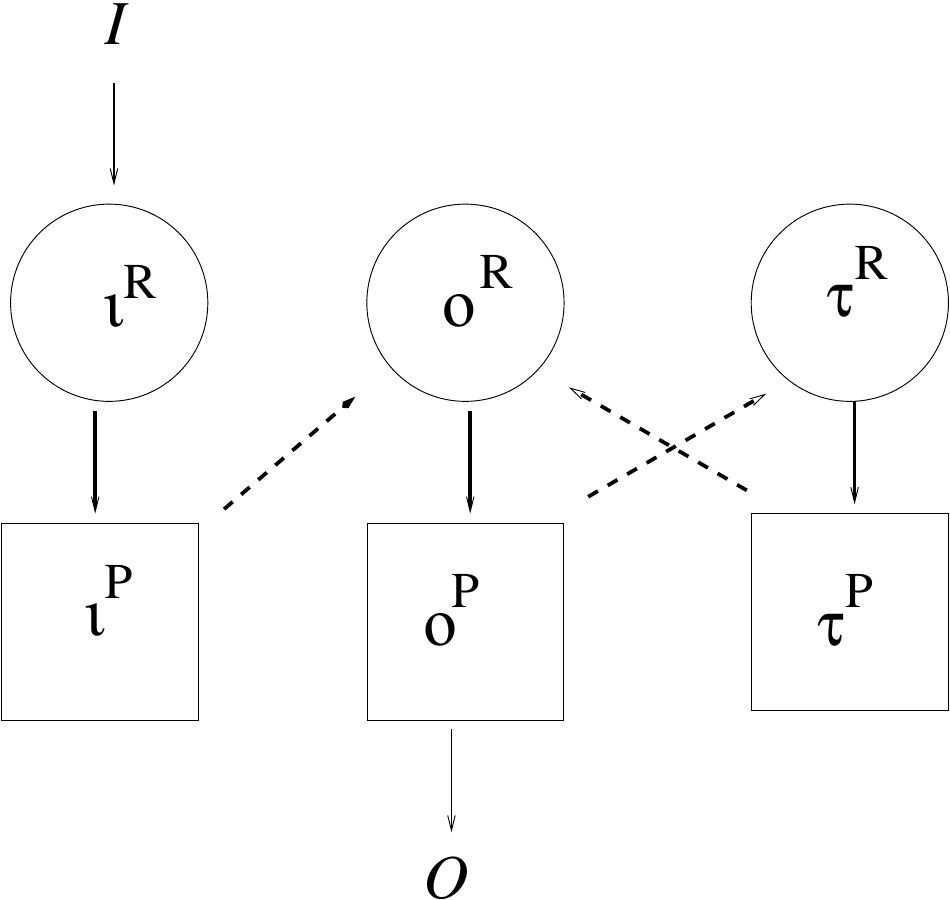}
\caption{PRN}
\label{F:3node_prn}
\end{subfigure} \qquad
\caption{\label{F:3node}
\textbf{Feedback inhibition.}
(a) The $3$-gene input-output GRN.
Triangles designate genes
and dashed arrows designate gene coupling.
(b) The corresponding $6$-node input-output PRN.
Circles designate mRNA concentrations and squares designate protein concentrations. 
Solid lines ${}^R\longrightarrow {}^P$ stand for coupling inside a single gene and dashed lines ${}^P\dashrightarrow {}^R$ for coupling between genes, that is, the couplings inherited from the GRN.}
\end{figure}

Antoneli \etal~\cite{antoneli2023} show that it is straightforward to use the PRN construction together with the theory of \cite{wang2021} to obtain all the possible homeostasis types and corresponding homeostasis patterns on the PRN.
Now it remains to explain how the results obtained for the PRN can be `lifted back' to the GRN.
In other words, we first need to define what it means for a node in a GRN to be homeostatic.
Then, we use the general theory to determine in a purely combinatorial fashion, the `formal homeostasis subnetworks' of the GRN.
Lastly, we show how the homeostasis subnetworks of the GRN and the associated PRN relate to each other.
Since only the homeostasis subnetworks of the PRN have a `dynamical' interpretation of infinitesimal homeostasis, we use the relation obtained before reinterpreting infinitesimal homeostasis on the GRN level.

\begin{definition} \normalfont
\label{def:special_component}
Let $\mathcal{G}$ be a GRN with associated PRN $\mathcal{R}$.
Let $\tau\in\mathcal{G}$ be a node with
$\tau^R\in\mathcal{R}$ be the mRNA node and $\tau^P\in\mathcal{R}$ be the protein node.
\begin{enumerate}[(a)]
\item An appendage node $\tau\in\mathcal{G}$ is a {\em single appendage node} if $\{\tau\}$ is an appendage subnetwork of $\mathcal{G}$ with no self-coupling.
\item If $\{\tau^R\}$ is an appendage subnetwork of $\mathcal{R}$ then it is called {\em $R$-null-degradation}.
\item If $\{\tau^P\}$ is an appendage subnetwork of $\mathcal{R}$ then it is called {\em $P$-null-degradation}.
\item If $\langle\tau^R,\tau^P\rangle$ is a structural subnetwork of $\mathcal{R}$ then it is called {\em $\mathcal{R}$-Haldane}.
\END
\end{enumerate}
\end{definition}

Now we state the first main result of \cite{antoneli2023}, regarding the relation between the homeostasis subnetworks of a GRN and its associated PRN.

\begin{theorem}[{\cite[Theorem 2.5]{antoneli2023}}] \label{thm:main1}
The homeostasis subnetworks of a GRN $\mathcal{G}$ and its associated PRN $\mathcal{R}$ correspond uniquely to each other, except in the following cases:
\begin{enumerate}[{\rm (a)}]
\item The $\mathcal{R}$-Haldane subnetworks of $\mathcal{R}$ correspond uniquely to the super-simple nodes of $\mathcal{G}$ (super-simple nodes are not homeostasis subnetworks of $\mathcal{G}$).
\item For every single appendage node $\tau\in\mathcal{G}$ the appendage subnetwork $\{\tau\}$ of $\mathcal{G}$ yields 2 appendage subnetworks $\{\tau^R\}$ ($R$-null-degradation) and $\{\tau^P\}$ ($P$-null-degradation) of $\mathcal{R}$.
\end{enumerate}
\end{theorem}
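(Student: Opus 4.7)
The plan is to build a correspondence between the combinatorial data on which the Wang--Duncan classification rests (simple, super-simple, appendage, super-appendage nodes) for $\mathcal{G}$ and $\mathcal{R}$, and then read off the homeostasis subnetworks on each side via Theorem~\ref{thm:homeo_sub_net}. The starting point is a bijection between $\iota o$-simple paths of $\mathcal{G}$ and $\iota^R o^P$-simple paths of $\mathcal{R}$. Because the unique outgoing arrow from any mRNA node $\rho^R$ is $\rho^R\to\rho^P$ and every arrow into $\rho^R$ originates at some protein node $\sigma^P$, any simple path in $\mathcal{R}$ has the alternating form $\iota^R\to\iota^P\to\sigma_1^R\to\sigma_1^P\to\cdots\to o^R\to o^P$, whose projection is simple in $\mathcal{G}$; conversely, each GRN simple path lifts uniquely in this way. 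This immediately gives the dictionary: $\rho$ is simple (resp.~super-simple) in $\mathcal{G}$ iff both $\rho^R$ and $\rho^P$ are simple (resp.~super-simple) in $\mathcal{R}$, and the total order of super-simple nodes lifts to $\iota^R\prec\iota^P\prec\rho_1^R\prec\rho_1^P\prec\cdots\prec o^R\prec o^P$.

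Next I would use Definition~\ref{D:structural_subnet} to classify the structural subnetworks of $\mathcal{R}$. Consecutive super-simple nodes of $\mathcal{R}$ come in two flavours: the intra-gene pair $\rho_j^R\prec\rho_j^P$ and the inter-gene pair $\rho_{j-1}^P\prec\rho_j^R$. Between the nodes of an intra-gene pair there is only the single arrow $\rho_j^R\to\rho_j^P$ and no intermediate simple node, which is exactly the Haldane case; its augmented simple subnetwork is empty, and by Definition~\ref{def:special_component} this produces exactly one $\mathcal{R}$-Haldane structural subnetwork per super-simple node of $\mathcal{G}$, yielding case~(a). For an inter-gene pair $\rho_{j-1}^P\prec\rho_j^R$, the simple nodes lying strictly between them are precisely the lifts $\sigma^R,\sigma^P$ of simple nodes $\sigma$ lying strictly between $\rho_{j-1}$ and $\rho_j$ in $\mathcal{G}$, and the linked appendage nodes correspond bijectively via the path bijection; hence $\mathcal{L}_j$ in $\mathcal{G}$ corresponds uniquely to the structural subnetwork of $\mathcal{R}$ sitting between $\rho_{j-1}^P$ and $\rho_j^R$.

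For the appendage classification I would first show that $\tau^R$ (equivalently $\tau^P$) is super-appendage in $\mathcal{R}$ exactly when $\tau$ is super-appendage in $\mathcal{G}$, by matching complementary subnetworks $C_S$ through the path bijection and exploiting the fact that any cycle in $\mathcal{R}$ meeting $\sigma^P$ must also meet $\sigma^R$ (since $\sigma^R\to\sigma^P$ is the unique arrow into $\sigma^P$). Transitive components of super-appendage nodes in $\mathcal{R}$ can then be recovered from those in $\mathcal{G}$: if $\mathcal{A}=\{\tau_1,\dots,\tau_m\}$ is an appendage subnetwork of $\mathcal{G}$ with $m\geq 2$, every cycle in $\mathcal{A}$ lifts to one in $\mathcal{R}$ that must contain every $\tau_i^R$ and $\tau_i^P$, so $\{\tau_i^R,\tau_i^P\}_{i=1}^m$ is a single transitive component, hence a single appendage subnetwork of $\mathcal{R}$. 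The same conclusion holds if $m=1$ and $\tau$ has a self-coupling in $\mathcal{G}$, since then $\tau^P\to\tau^R$ exists and creates the required $2$-cycle. The remaining case, a single appendage node $\tau$ without self-coupling, has no arrow $\tau^P\to\tau^R$, so $\tau^R$ and $\tau^P$ fail to share a cycle and split into two distinct singleton transitive components, producing the two subnetworks in case~(b).

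Combining the two classifications via Theorem~\ref{thm:homeo_sub_net} then identifies the homeostasis subnetworks of $\mathcal{G}$ and $\mathcal{R}$ as claimed. The main obstacle I expect is the careful verification that the super-appendage property is preserved by the lift—specifically, that linked appendage nodes and transitive components in complementary subnetworks $C_S$ behave identically on both sides; the bipartite structure of $\mathcal{R}$ together with the uniqueness of the arrow into each $\rho^P$ are the combinatorial facts that make this bookkeeping go through.
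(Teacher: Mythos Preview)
Your approach is correct and is essentially the route the paper points to: the paper does not give a proof here but cites \cite[Sec.~B]{antoneli2023} and states it ``relies on the results of \cite{wang2021}'', and your argument does exactly this by lifting $\iota o$-simple paths to the alternating $\iota^R o^P$-paths, matching the simple/super-simple/super-appendage dictionaries, and then invoking Theorem~\ref{thm:homeo_sub_net} to read off the structural and appendage subnetworks on both sides. The one place to tighten is the sentence ``every cycle in $\mathcal{A}$ lifts to one in $\mathcal{R}$ that must contain every $\tau_i^R$ and $\tau_i^P$'': a single cycle need not visit all of $\mathcal{A}$, so phrase this instead in terms of pairwise cycles showing that all $\tau_i^R,\tau_i^P$ lie in one transitive component.
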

\begin{proof}
See \cite[Sec. B]{antoneli2023}.
The proof relies on the results of \cite{wang2021}.
\qed
\end{proof}

Now we consider the following question: {\em Can we combinatorially determine the homeostasis patterns of a GRN and its associated PRN?}
In principle this can be done since the classification of homeostasis patterns is purely combinatorial.
We start by introducing the notion of a homeostasis pattern in a GRN that is derived from the associated PRN.

\begin{definition} \normalfont
\label{def: pattern in GRN}
Consider a GRN $\mathcal{G}$ and its associated PRN $\mathcal{R}$.
Suppose that infinitesimal homeostasis occurs in the PRN at $\II_0$.
A node $\rho\in\mathcal{G}$ is said to be {\em GRN-homeostatic} if both associated PRN-nodes $\rho^R$ and $\rho^P$ are simultaneously homeostatic at $\II_0$.
A \emph{GRN-generating homeostasis pattern} is a homeostasis pattern $\mathcal{P}$ on $\mathcal{R}$ such that, for every PRN-node in $\mathcal{P}$ corresponds to GRN-homeostatic node.
\END
\end{definition}

In a GRN-generating homeostasis pattern all PRN-nodes appear in mRNA-protein pairs.
That is, the set of GRN-homeostatic nodes match perfectly the PRN homeostasis pattern.
It is useful to introduce the following terminology.
Let $\mathcal{K}$ be a homeostasis subnetwork of the GRN $\mathcal{G}$.
Then we define a map $\mathcal{K}\to\mathcal{K}^R$ from the set of homeostasis subnetworks of $\mathcal{G}$ to the set of homeostasis subnetworks of the associated $\mathcal{R}$ as follows.
If $\mathcal{K}\neq\{\tau\}$, where $\tau$ is a single appendage node, the $\mathcal{K}^R$ is the unique subnetwork given by Theorem \ref{thm:main1}.
If $\mathcal{K}=\{\tau\}$, where $\tau$ is a single appendage node, then $\mathcal{K}^R=\{\tau^R\}$.
With this definition, the map $\mathcal{K}\to\mathcal{K}^R$ is injective and the complement of its image in the set of homeostasis subnetworks of the PRN is exactly the set of $\mathcal{R}$-Haldane and $P$-null-degradation subnetworks.

\begin{theorem}[{\cite[Theorem 2.7]{antoneli2023}}] \label{thm:main2}
Let $\mathcal{G}$ be a GRN and $\mathcal{R}$ its associated PRN.
Then the homeostasis patterns on $\mathcal{G}$ correspond exactly to the GRN-generating homeostasis patterns on $\mathcal{R}$.
The homeostasis patterns on $\mathcal{R}$ associated to the $\mathcal{R}$-Haldane and the $P$-null-degradation subnetworks do not correspond to homeostasis patterns on $\mathcal{G}$.
\end{theorem}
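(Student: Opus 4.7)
My plan is to combine two existing bijections with a direct linearization of the within-gene dynamics. Theorem \ref{thm:bijective_corresp_intro} applied to the PRN $\mathcal{R}$ gives a bijection between the homeostasis subnetworks of $\mathcal{R}$ and the homeostasis patterns on $\mathcal{R}$. Theorem \ref{thm:main1} partitions these PRN subnetworks into three classes: (I) those in the image of the injection $\mathcal{K}\mapsto\mathcal{K}^R$; (II) the $\mathcal{R}$-Haldane subnetworks $\langle\tau^R,\tau^P\rangle$; and (III) the $P$-null-degradation subnetworks $\{\tau^P\}$ arising from single appendage nodes $\tau\in\mathcal{G}$. Theorem \ref{thm:main2} then reduces to two claims: the patterns induced by class (I) subnetworks are exactly the GRN-generating patterns on $\mathcal{R}$, while those induced by classes (II) and (III) are never GRN-generating.

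The main technical tool is a balance lemma. For every gene $\rho\in\mathcal{G}$, the admissible equation $\dot{\rho^P}=f_{\rho^P}(\rho^R,\rho^P)$ linearizes to
\[
0 \;=\; f_{\rho^P,\rho^R}\,\rho^R{}' \;+\; f_{\rho^P,\rho^P}\,\rho^P{}'.
\]
When both coefficients are nonzero, this forces the equivalence $\rho^R{}'=0\Leftrightarrow\rho^P{}'=0$, so $\rho^R$ and $\rho^P$ enter any homeostasis pattern together. The irreducible factorization of Theorem \ref{thm:irreducible} ensures that a class (I) condition $\det(B_\eta)=0$ cannot coincide with $f_{\rho^P,\rho^R}=0$ (which would mark $B_\eta$ as $\mathcal{R}$-Haldane) nor with $f_{\rho^P,\rho^P}=0$ (which would mark it as $P$-null-degradation); both couplings remain generically nonzero at class (I) homeostasis points. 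Applying the balance lemma gene-by-gene then shows that the induced PRN pattern is GRN-generating, and Definition \ref{def: pattern in GRN} supplies the bijection with the homeostasis patterns on $\mathcal{G}$.

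For class (II), the homeostasis condition $f_{\tau^P,\tau^R}=0$ reduces the linearized $\tau^P$ equation to $f_{\tau^P,\tau^P}\tau^P{}'=0$, generically forcing $\tau^P$ into the pattern while imposing no constraint on $\tau^R{}'$. For class (III), the condition $f_{\tau^P,\tau^P}=0$ symmetrically forces $\tau^R{}'=0$ while leaving $\tau^P{}'$ free. Either way, exactly one of the pair $\{\tau^R,\tau^P\}$ lies in the induced pattern, so it cannot be GRN-generating. The main obstacle I anticipate is making the word \emph{unconstrained} precise: one must verify, via Theorems \ref{thm:struct_to_struct}--\ref{thm:app_to_app} applied to the pattern network $\mathcal{P}$ of $\mathcal{R}$, that the orphaned sibling is not pulled into the pattern by some indirect induction path. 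Concretely, this reduces to locating $\{\tau^R\}$ or $\{\tau^P\}$ in $\mathcal{P}$ and checking that it sits outside the downstream cone of any triggered backbone or appendage node.
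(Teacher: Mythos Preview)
Your proposal is correct and aligns with the paper's indicated approach: the proof here is deferred to \cite[Sec.~C]{antoneli2023} and explicitly relies on the pattern-network results of \cite{duncan2024}, which are precisely Theorems~\ref{thm:struct_to_struct}--\ref{thm:bijective_corresp_intro} in this survey. The within-gene balance relation $f_{\rho^P,\rho^R}\,\rho^R{}'+f_{\rho^P,\rho^P}\,\rho^P{}'=0$ that you isolate is the natural bridge between PRN patterns and GRN patterns, and your trichotomy (I)/(II)/(III) via Theorem~\ref{thm:main1} is the right organizing principle.

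Two small sharpenings. First, for class~(I) your genericity claim is correct but the parenthetical is slightly off: $\langle\rho^R,\rho^P\rangle$ is an $\mathcal{R}$-Haldane structural block only when $\rho$ is super-simple in $\mathcal{G}$, so for a general gene $\rho$ the linear form $f_{\rho^P,\rho^R}$ need not be an irreducible factor of $\det(H)$ at all. The clean statement is that $\det(B_\eta)$, being irreducible and not equal to any of the degree-one forms $f_{\rho^P,\rho^R}$ or $f_{\rho^P,\rho^P}$, cannot vanish simultaneously with any of them generically. Second, the obstacle you anticipate for classes~(II) and~(III) dissolves once you invoke the right theorem directly: for~(II), $\tau^R$ is the \emph{input} super-simple node of the Haldane block, and Theorem~\ref{thm:struct_to_struct} induces only nodes strictly downstream of the backbone node, so $\tau^R$ is automatically excluded; for~(III), Theorem~\ref{thm:always_related_intro} gives $\{\tau^P\}\not\Rightarrow\{\tau^P\}$, so $\tau^P$ is excluded from the pattern it triggers without any downstream-cone chase in $\mathcal{P}$.
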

\begin{proof}
See \cite[Sec. C]{antoneli2023}.
The proof relies on the results of \cite{duncan2024}.
\qed
\end{proof}

\begin{remark}[{\bf Non-coding genes}] \normalfont \label{RMK:non_coding}
As mentioned before, in eukaryotic cells there are several regulatory mechanisms modulating transcription and translation.
Almost all regulatory modulation is performed by non-coding genes, i.e., genes that are transcribed into RNA, but the RNA is not translated into protein.
The PRN formalism can be extended to include non-coding genes thanks to the following observation: the gene product of a non-coding gene is an mRNA whose regulatory activity is performed by direct interaction with other mRNAs
\cite{shimoni2007,lai2016}.
Thus, unlike a protein-coding gene, a non-coding gene $\nu$ yields only one scalar state variable and one differential equation 
\[
\dot{\nu} = f_{\nu}(\nu, \underbrace{\tau^P_1,\ldots,\tau^P_k}_{\text{TFs}}, \underbrace{\rho^R_1,\ldots,\rho^R_\ell}_{\text{mRNAs}})
\]
Here, $f_{\nu}$ is a smooth function.
The variables $\tau^P_1,\ldots,\tau^P_k$ are the protein concentrations associated to the transcription factors (TFs) that regulate gene $\nu$.
The variables $\rho^R_1,\ldots,\rho^R_\ell$ are the
mRNA concentrations associated to the protein-coding genes that interact with $\nu$.
Finally, for each $\rho^R_j$ above, the corresponding mRNA equation must now depend on $\nu$:
\[
\dot{\rho}^R_j = f_{\rho^R_j} (\rho^R_j,\ldots,\nu)
\]
The consequence for the PRN diagram is that a non-coding gene: (i) gives rise to a single PRN-node $\nu=\nu^R$, instead of two, (ii) receives arrows from protein nodes $\tau^P_i\to\nu^R$, (iii) has a bidirectional connection with the mRNA nodes that it interacts with $\rho^R_j \biarrow \nu$.
With these new requirements the PRN is no longer a bipartite digraph, now it is a {\em tripartite digraph}.
\END
\end{remark}

An interesting byproduct of the approach expounded above is the discovery of homeostasis types and homeostasis patterns on the PRN without GRN counterpart.
The `new' PRN homeostasis types are degree one homeostasis types, namely, they are related to one dimensional irreducible factors of the homeostasis determinant.
They are: (i) \emph{$\mathcal{R}$-Haldane}, that occurs when the linearized coupling between the mRNA and protein of the same gene changes from excitation to inhibition as the input parameter varies, and (ii) \emph{$P$-null-degradation}, that occurs when the linearized self-interaction of a protein changes from degradation to production as the input parameter varies.
Although the existence of $\mathcal{R}$-Haldane and $P$-null-degradation is mathematically established, their occurrence in biological models is unlikely.
$\mathcal{R}$-Haldane homeostasis is related to the \emph{synthesis rate} of the protein from the mRNA template and $P$-null-degradation is related to the \emph{degradation rate} of the protein.
Both these rates are {\em constant} (the first is positive and the second is negative) in specific model equations for GRN modeling \cite{phb2009,mszt2016}.

Generally speaking, all model equations for gene expression found in the literature have an explicit functional form \cite{antoneli2018,kebc2005,phb2009,mszt2016}. 
Here, we assume only the general form \ref{EQ:GEN_FORM_ADM}, forced by the admissibility of vector fields with respect to the PRN.
Hence, our classification results apply to virtually any model equation for gene expression.
Even more importantly, this leaves open the possibility to use `higher-order' terms to model more complicated interactions \cite{bghs2023}.
In the terminology of \cite{gs2023} our results are called \emph{model independent}.
This means that the classification results obtained here provide a complete list of possible behaviors, with respect to homeostasis, that is \emph{independent} of the model equations -- the list depends only on the topology of the network.
Which of those behaviors will be observed in a particular realization of the dynamics (e.g., a model equation) \emph{depends} on the specific form of the dynamics.

\subsection{Intracellular Metal Ion Regulation}
\label{SS:METAL}

Metal ions such as iron, copper, zinc, calcium, etc. are involved in many crucial biological processes and are necessary for the survival of all living organisms. 
They are ubiquitously found in all organisms, nearly exclusively as constituents of proteins, including enzymes, storage proteins and transcription factors \cite{hood2012}.
Due to the unique redox potential of some of these transition metals, many serve important roles as co-factors in enzymes and it is estimated that 30\%-45\% of known enzymes are metalloproteins whose functions require a metal co-factor \cite{klein2011}. 
However, transition metals are toxic at high intracellular concentrations, as they perturb the cellular redox potential and produce highly reactive hydroxyl radicals. 
Therefore, all organisms require mechanisms for sensing small fluctuations in metal levels to maintain a controlled balance of uptake, efflux, and sequestration and to ensure that metal availability is in accordance with physiological needs.

The key task of an intracellular metal homeostasis regulatory mechanism is to tightly control the concentration level of the metal, namely, the collection of weakly bound metal ion in the cell, which is available for a variety of interactions with other molecules.
Cells have available mechanisms to import metal from the circulation to increase the saturation level, sequester metal using a storage protein, or exporting it through an efflux pump in order to decrease the saturation level.

In the following, we describe two published mathematical models for intracellular metal ion homeostasis, calcium and zinc, that exhibit the interesting characteristic that in the corresponding network, the input node is the same as the output node.

\subsubsection{Copper}

Copper is an inorganic element essential to many physiological process, including neurotransmission, gastrointestinal uptake, lactation, transport to the developing brain  and growth. However, its concentration must be tightly regulated, as intracellular copper excess is associated to cellular damage and protein folding disorders \cite{lutsenko2007b, kaplan2016}.
In addition to cytosolic copper concentration, copper in intramitochondrial space must be also strictly regulated, as it is paramount for the function of copper dependent enzymes, but it may cause oxidative stress in excessive levels \cite{baker2017}.

Expanding briefly on the physiological implications of defective copper regulation, anomalies in the ATP7B gene generate a disorder known as Wilson disease (WD), in which dysfunctional ATP7B proteins implicate WD carriers to accumulate abnormal levels of copper in the liver and in the brain \cite{kaler2008}. 
Variations in the ATP7A gene result in dysfunctional ATP7A proteins that cause three separate illnesses: Menkes disease, a severe early-onset neuro-degenerative condition in which carriers usually die by 3 years of age \cite{kaler1994a}; occipital horn syndrome, a connective disorder with typical skeleton deformations which is also clinically resembling to Menkes disease, while less aggressive in its neurological manifestation \cite{kaler1994b}; and a recently found distal motor neuropathy, marked by frequent onset at adulthood and with no apparent signs of copper metabolic abnormalities, although still poorly studied \cite{kennerson2010, yi2012}.

A simplified version of the intracellular copper regulation mechanism was proposed in \cite{andrade2022}.
We will consider the minimal model obtained from the full model of \cite{andrade2022}.
The minimal model considers the concentration of cytosolic copper $[\text{Cu}_{\text{cyt}}]$ and
and the trans-Golgi copper $[\text{Cu}_{\text{TG}}]$ and its interaction with the metallochaperone $[\text{ATOX1}]$.
The ATOX1 protein takes the cytosolic copper to the Cu-ATPases ATP7A and ATP7B, which use ATP to pump copper ions to vesicles of the \emph{trans-Golgi} network, where copper will be incorporated in Cu-dependent enzymes and secreted. 
This is called the secretory pathway and it is responsible for decreasing the cytosolic copper concentration. 
However, when cytosolic copper levels are low, ATP7A and ATP7B take copper from the trans-Golgi network and give it to ATOX1, leading to an increase on the cytosolic copper concentration~\cite{yu2017}.

The model is given by a system of four nonlinear ordinary differential equations. For convenience we represent the
concentration of each component by:
$x_1=[\text{Cu}_{\text{cyt}}]$, $x_2=[\text{ATOX1}]$ and $x_3=[\text{Cu}_{\text{TG}}]$.
That is, $x_1$, $x_2$, $x_3$ are state variables and the equations are:
\begin{equation} \label{simplified_eqs_newvar}
\begin{aligned}
& \dot{x}_1 = \mathcal{I} - k_1 x_1 (1 + x_1) + w_1 G(x_3) \\
& \dot{x}_2 = - k_2 x_2 + k_4 x_1 - w_2 H(x_2) (x_3 - x_2) \\
& \dot{x}_3 = - k_3 x_3 + k_5 x_2 + w_2 H(x_2) (x_3 - x_2)  \\
\end{aligned}
\end{equation}
Here, the constants $w_1$, $w_2$, $k_1$, $k_2$, $k_3$, $k_4,k5$ are positive parameters and $G$ and $H$ are Hill-type functions (for $x\geq 0$):
\begin{equation*}
 G(x)=-\frac{x^2}{1+x^2} \qquad\text{and}\qquad
 H(x)=\frac{x}{1+x}
\end{equation*}
The quantity of interest to be controlled is the \emph{cytosolic copper concentration} ($x_1=[\text{Cu}_{\text{cyt}}]$)
There is one external input parameter $\II$
representing the \emph{(normalized) concentration of extracelullar copper}. 
The network associated to the model equations \eqref{simplified_eqs_newvar}
is shown in Figure \ref{fig:copper_network}.

\begin{figure}[!ht]
\centering
\includegraphics[scale=0.5]{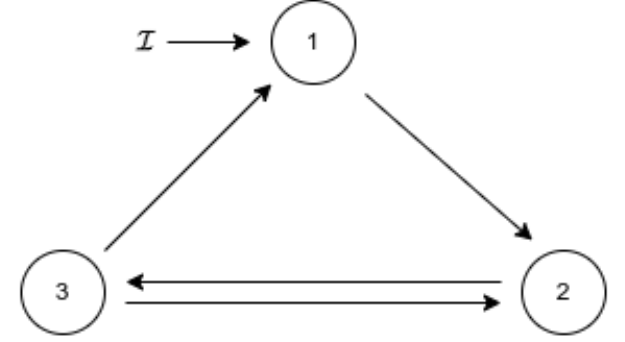}
\caption{\label{fig:copper_network}
Network for the copper homeostasis model.
}
\end{figure}

In \cite{andrade2022} the authors show that system \eqref{simplified_eqs_newvar} has a linearly stable equilibrium and the input-output function is well-defined for all $\II \geq 0$.

The general admissible system of differential equations for the network in Figure 1\ref{fig:copper_network}  is the following
\begin{equation} \label{simplified_eqs_general}
\begin{aligned}
& \dot{x}_1 = f_1(x_1,x_3,\II) \\
& \dot{x}_2 = f_2(x_1,x_2,x_3) \\
& \dot{x}_3 = f_3(x_2,x_3) \\
\end{aligned}
\end{equation}
The jacobian at an equilibrium is 
\begin{equation}
\begin{aligned}
J_{\mathrm{Cu}} & = \begin{pmatrix}
f_{1,x_1} & f_{1,x_2} & 0 \\
f_{2,x_1} & f_{2,x_2} & f_{2,x_3} \\
0 & f_{3,x_2} & f_{3,x_3} 
\end{pmatrix} 
\end{aligned}
\end{equation} 
The homeostasis matrix is obtained by removing the \emph{first row} and the \emph{first column}, i.e.
\begin{equation}
\begin{aligned}
H_{\mathrm{Cu}} & = \begin{pmatrix}
f_{2,x_2} & f_{2,x_3} \\
f_{3,x_2} & f_{3,x_3} 
\end{pmatrix} 
\end{aligned}
\end{equation} 
with
\begin{equation}
\det(H_{\mathrm{Cu}}) = f_{2,x_2} f_{3,x_3}
- f_{2,x_3} f_{3,x_2}
\end{equation} 
Hence, for a general admissible system  \eqref{simplified_eqs_general} it is possible to have infinitesimal homeostasis caused by the $2$-node appendage subnetwork $\{2 \rightleftharpoons 3\}$.

\subsubsection{Calcium}

In eukaryotic cells, calcium (an alkaline earth metal) functions as a ubiquitous intracellular messenger by which extracellular factors induce a variety of physiological responses.
For instance, the intracellular $\mathrm{Ca}^{2+}$ concentration in normal yeast cell (\emph{Saccharomyces cerevisiae}) is maintained in the range of 50–200 $n$M in the presence of environmental $\mathrm{Ca}^{2+}$ concentrations ranging from 1 $\mu$M to 100 $m$M \cite{miseta1999}. 

In \cite{cui2009} the authors propose a mathematical model of calcium homeostasis in normally growing yeast cells which is consistent with experimental observations available
at the time of its publication.

The model is given by a system of four nonlinear ordinary differential equations.
For convenience we represent the concentration of each
component by: $x_1=[\mathrm{Ca}^{2+}]$, $x_2=[\mathrm{CaM}]$, $x_3=[\mathrm{Crz1p}]$ and $x_4=[\mathrm{CaN}]$.
That is, $x_1,\ldots,x_4$ are state variables and the equations are
\begin{equation} \label{EQ:CALCIUM}
\begin{aligned}
\dot{x}_1 & = \mathcal{I} - x_3 \psi(x_4) 
\left(\frac{k_{11} x_1}{k_{12}+x_1} + 
\frac{k_{21} x_1}{k_{22}+x_1}\right) \\
& \qquad - \frac{1}{1+a_1 x_4}\frac{k_{31} x_1}{k_{32}+x_1} - 
c_1 x_1 \\
\dot{x}_2 & = a_2 x_1^3 (v_2 - x_2) - c_2 x_2 \\
\dot{x}_3 & = a_3 \phi(x_4)(1-x_3) - c_3 (1-\phi(x_4)) x_3 \\
\dot{x}_4 & = a_4 x_2 (v_4 - x_4) - c_4 x_4 
\end{aligned}
\end{equation}
Here, $a_i$, $c_j$, $v_l$, $k_{mn}$ are positive parameters and $\phi$, $\psi$ are sigmoid (rational) functions.
The quantity of interest to be controlled is the \emph{concentration of intracellular calcium} ($x_1=[\mathrm{Ca}^{2+}]$).
There is one external input parameter $\mathcal{I}$ representing the \emph{(normalized) concentration of extracelullar calcium}.
The network associated to the model equations \eqref{EQ:CALCIUM} is shown in Figure \ref{FIG:CALCIUM}.

\begin{figure}[!htb]
  \centering
  \includegraphics[scale=0.5]{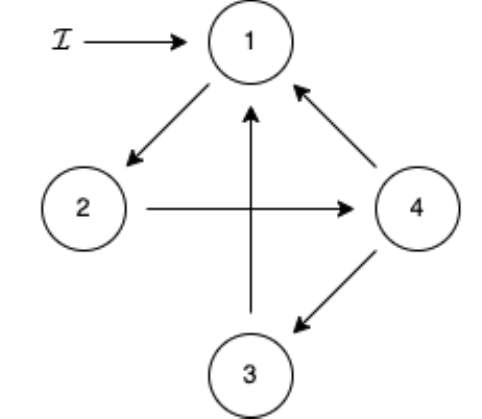}
  \caption{Network for the calcium homeostasis model.}
  \label{FIG:CALCIUM}
\end{figure}

In \cite{cui2009} the authors show, by numerical integration, that the model equation \eqref{EQ:CALCIUM} has an equilibrium point \cite[Fig. 1a]{cui2009}, for the parameters listed in \cite[Tab. 1]{cui2009}.
They numerically compute the input-output function of the model and plot \cite[Fig. 1b]{cui2009} the output node variable $x_1$ as a function of the input parameter $\mathcal{I}$. 
The simulated cytosolic calcium concentration of the model (output node variable $x_1$) rests within 73-159 $n$M (regardless of the initial conditions) when the simulated extracellular calcium concentration (the parameter $\mathcal{I}$) ranges from 1 $\mu$M to 100 $m$M.

The general admissible system of differential equations for the network in Figure \ref{FIG:CALCIUM} is the following
\begin{equation} \label{EQ:CALCIUM_GENERAL}
\begin{aligned}
\dot{x}_1 & = f_1(x_1,x_3,x_4,\mathcal{I}) \\
\dot{x}_2 & = f_2(x_1,x_2) \\
\dot{x}_3 & = f_3(x_3,x_4) \\
\dot{x}_4 & = f_4(x_2,x_4)
\end{aligned}
\end{equation}
The jacobian at an equilibrium is
\[
J_{\mathrm{Ca}}  = \begin{pmatrix}
f_{1,x_1}& 0 & f_{1,x_3} & f_{1,x_4}   \\
f_{2,x_1} & f_{2,x_2} & 0 & 0 \\
0 & 0 & f_{3,x_3} & f_{3,x_4} \\
0 & f_{4,x_2} & 0 & f_{4,x_4}  
\end{pmatrix}
\]
The homeostasis matrix is obtained by removing the \emph{first row} and the \emph{first column}, i.e.
\[
H_{\mathrm{Ca}} = \begin{pmatrix}
f_{2,x_2} & 0 & 0 \\
0 & f_{3,x_3} & f_{3,x_4} \\
f_{4,x_2} & 0 & f_{4,x_4}  
\end{pmatrix}
\]
with 
\[
\det(H_{\mathrm{Ca}}) = f_{2,x_2} f_{3,x_3} f_{4,x_4}
\]
Hence, for a general admissible system  \eqref{EQ:CALCIUM_GENERAL} it is possible to have infinitesimal homeostasis caused by null-degradation on nodes $2$, $3$ and $4$.

\subsubsection{Zinc}

Zinc is an essential micro-nutrient for plants, because it plays an important role in many enzymes catalyzing vital cellular reactions. 
In higher doses, however, zinc is toxic. 
Therefore, plants have to strictly control and adjust the uptake of zinc through their roots depending on its concentration in the surrounding soil.
This is achieved by a complicated control system consisting of sensors, transmitters and zinc transporter proteins.

In \cite{claus2015} the authors propose a mathematical model for regulation of zinc uptake in roots of \emph{Arabidopsis thaliana} based on the uptake of zinc, expression of a transporter protein and the interaction between an activator and inhibitor.
The equations in \cite{claus2015} are obtained by an \emph{ad hoc} dimensional reduction of a model proposed in \cite{claus2012}.

The model is given by a system of four nonlinear ordinary differential equations.
For convenience we represent the concentration of each component by: $x_1=[\mathrm{Zn^{2+}}]$, $x_2=[\mathrm{mRNA}]$, $x_3=[\mathrm{ZIP}]$ and $x_4=[\mathrm{Dimer}]$.
That is, $x_1,\ldots,x_4$ are state variables and the equations are
\begin{equation} \label{EQ:ZINC}
\begin{aligned}
\dot{x}_1 & = \mathcal{I} x_4  - c_1 x_1 \\
\dot{x}_2 & = 1 - a_2 x_2 x_3  - c_2 x_2\\
\dot{x}_3 & = a_3 x_1 (v_1 - x_3) -  a_5 x_2 x_3  
- c_3 x_3 \\
\dot{x}_4 & = a_4 x_2^3 (v_2 - x_4) - c_4 x_4 
\end{aligned}
\end{equation}
Here, $a_i$, $c_j$  and $v_l$ are positive parameters.
The quantity of interest to be controlled is the \emph{concentration of intracellular zinc} ($x_1=[\mathrm{Zn}^{2+}]$).
There is one external control parameter, $\mathcal{I}$ representing the \emph{(normalized) concentration of extracelullar zinc}.
The network associated to the model equations \eqref{EQ:ZINC} is shown in Figure \ref{FIG:ZINC}.

\begin{figure}[!htb]
  \centering
  \includegraphics[scale=0.5]{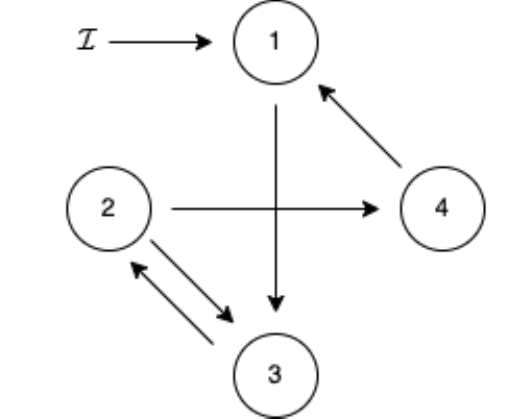}
  \caption{Network for the zinc homeostasis model.}
  \label{FIG:ZINC}
\end{figure}

In \cite{claus2015} the authors show that the model equation \eqref{EQ:ZINC} has a unique positive equilibrium point for any set of parameter values.
They show an illustration of input-output function of the model and plot \cite[Fig. 2]{claus2015} the output node variable $x_1$ as a function of the input parameter $\mathcal{I}$. 

The general admissible system of differential equations for the network in Figure \ref{FIG:ZINC} is the following
\begin{equation} \label{EQ:ZINC_GENERAL}
\begin{aligned}
\dot{x}_1 & = f_1(x_1,x_4,\mathcal{I}) \\
\dot{x}_2 & = f_2(x_2,x_3) \\
\dot{x}_3 & = f_3(x_1,x_2,x_3) \\
\dot{x}_4 & = f_4(x_2,x_4)
\end{aligned}
\end{equation}
The jacobian at an equilibrium is
\[
J_{\mathrm{Zn}}  = \begin{pmatrix}
f_{1,x_1} & 0 & 0 & f_{1,x_4}   \\
0 & f_{2,x_2} & f_{2,x_3} & 0 \\
f_{3,x_1} & f_{3,x_2} & f_{3,x_3} & 0 \\
0 & f_{4,x_2} & 0 & f_{4,x_4}
\end{pmatrix}
\]
The homeostasis matrix is obtained by removing the \emph{first row} and the \emph{first column}, i.e.
\[
H_{\mathrm{Zn}} = \begin{pmatrix}
f_{2,x_2} & f_{2,x_3} & 0 \\
f_{3,x_2} & f_{3,x_3} & 0 \\
f_{4,x_2} & 0 & f_{4,x_4}
\end{pmatrix}
\]
with 
\[
\det(H_{\mathrm{Zn}}) = f_{4,x_4} 
(f_{2,x_2} f_{3,x_3} - f_{2,x_3}f_{3,x_2})
\]
Hence, for a general admissible system  \eqref{EQ:ZINC_GENERAL} it is possible to have infinitesimal homeostasis caused by null-degradation on node $4$ and the $2$-node appendage subnetwork $\{2 \rightleftharpoons 3\}$.

\subsubsection{Iron Homeostasis}

Iron is an essential component for any aerobic organism and is required for oxygen transport, cellular respiration, and many other metabolic processes.
Dysregulation of iron homeostasis has been implicated in a wide variety of diseases, ranging from cancer to inflammatory and neuro-degenerative diseases.
But iron is also a redox active element that can facilitate the formation of dangerous molecules, such as the hydroxyl radical, a highly reactive radical that can damage DNA, lipids, and proteins. 
In \cite{chifman2012} the authors propose an explicit ODE model for the so called \emph{core control system} of intracellular iron homeostasis in the form found in breast epithelial cells (see also \cite{omholt1998}). 
The core control system of intracellular iron homeostasis is a regulatory network that contains several intertwined feedback loops and includes the key regulatory mechanisms of iron homeostasis. It was based on an intracellular network consisting of 151 chemical species and 107 reactions and transport steps, including cell type specific subnetworks found in \cite{hower2009}.

The model of \cite{chifman2012} is given by a system of five nonlinear ordinary differential equations describing changes in concentrations with respect to time for specific components of the model.
We represent the concentration of each component by: $x_1=[\mathrm{LIP}]$, $x_2=[\mathrm{TfR1}]$, $x_3=[\mathrm{Fpn}]$, $x_4=[\mathrm{Ft}]$, and $x_5=[\mathrm{IRP}]$. 
That is, $x_1,\ldots,x_5$ are state variables and the equations are
\begin{equation} \label{EQ:IRON}
\begin{aligned}
\dot{x}_1 & = a_1 x_2 \mathcal{I} - a_6 x_1 x_3 + c_4 x_4 - a_4 x_1 \frac{k_{54}}{k_{54}+x_5} \\
\dot{x}_2 & = a_2 \frac{x_5}{k_{52}+x_5} - c_2 x_2 \\
\dot{x}_3 & = a_3 \frac{k_{53}}{k_{53}+x_5} - c_3 x_3 \\
\dot{x}_4 & = a_4  x_1 \frac{k_{54}}{k_{54}+x_5} - c_4 x_4 \\
\dot{x}_5 & = a_5 \frac{k_{15}}{k_{15}+x_1} - c_5 x_5
\end{aligned}
\end{equation}
Here, $a_i$, $c_j$ and $k_{mn}$ are positive parameters.
The quantity of interest to be controlled is the \emph{liable iron pool} ($x_1=[\mathrm{LIP}]$). 
There is one external input parameter $\mathcal{I} =[\mathrm{Fe_{ex}}]$ \emph{the concentration level of extracelullar iron}.
The network associated to the model equations \eqref{EQ:IRON} is shown in Figure \ref{FIG:IRON}(a) (see \cite[Fig. 1]{chifman2012}), with the addition of arrows $5 \to 1$ and $1 \to 4$, corresponding couplings are in the equations.

We will consider a slightly different form of this model.
Since the input parameter is the amount of iron that can be transported into the cytosol, we can consider the whole term $a_1 x_2 \mathcal{I}$ as the input to the system and discard the equation for $\dot{x}_2$.
Moreover, since there is already a connection from node $5$ to node $1$, the removal of node $2$ will not make it necessary to add any new link.
In this case the equations become:
\begin{equation} \label{EQ:IRON_NEW}
\begin{aligned}
\dot{x}_1 & = \mathcal{I} - a_1 x_1 x_2 + c_3 x_3 - a_3 x_1 \frac{k_{43}}{k_{43}+x_4} \\
\dot{x}_2 & = a_2 \frac{k_{42}}{k_{42}+x_4} - c_2 x_2 \\
\dot{x}_3 & = a_3  x_1 \frac{k_{43}}{k_{43}+x_4} - c_3 x_3 \\
\dot{x}_4 & = a_4 \frac{k_{14}}{k_{14}+x_1} - c_4 x_4
\end{aligned}
\end{equation}
With $x_1=[\mathrm{LIP}]$, $x_2=[\mathrm{Fpn}]$, $x_3=[\mathrm{Ft}]$, and $x_4=[\mathrm{IRP}]$.
Again, the quantity of interest to be controlled is the \emph{cytosolic liable iron pool} ($x_1=[\mathrm{LIP}]$). 
There is one external input parameter $\mathcal{I} =[\mathrm{Fe_{ex}}]$ the \emph{concentration level of extracelullar iron}.
The network associated to the model equations \eqref{EQ:IRON} is shown in Figure \ref{FIG:IRON}(b). 

\begin{figure}[!htb]
\centering
\begin{subfigure}[b]{0.4\textwidth}
\centering
\includegraphics[width=0.7\textwidth]{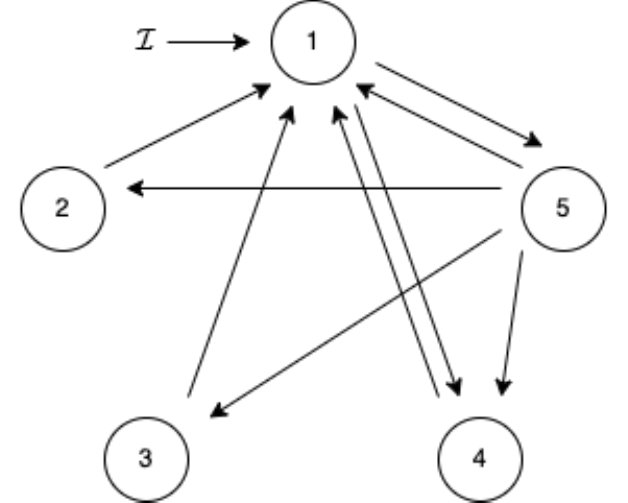}
\caption{Iron network from Chifman \etal~\cite[Fig. 1]{chifman2012}}
\label{F:IRON_NET}
\vspace{2mm}
\end{subfigure}
\centering
\begin{subfigure}[b]{0.4\textwidth}
\centering
\includegraphics[width=0.7\textwidth]{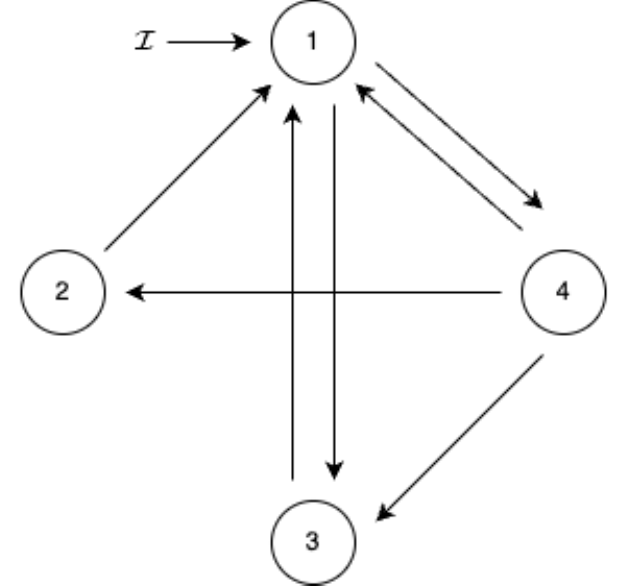}
\caption{Reduced iron network obtained from \cite[Fig. 1]{chifman2012} by discarding node $2$}
\label{F:IRON_NET_NEW}
\end{subfigure}
\caption{Networks for the iron model.}
\label{FIG:IRON}
\end{figure}

In \cite{chifman2012} the authors show that the model equations \eqref{EQ:IRON} have a unique positive equilibrium point for any set of parameter values.
They perform extensive simulations suggesting that the positive equilibrium point might be asymptotically stable in general and that \eqref{EQ:IRON} does not have limit cycle oscillations.
Furthermore, they have validated the model with experimental data that show that it correctly predicts the effect of perturbing one of the network nodes on the steady state levels of two of the other nodes. 

The general system of differential equations for the network in Figure \ref{FIG:IRON}(b) is the following
\begin{equation} \label{EQ:IRON_GENERAL}
\begin{aligned}
\dot{x}_1 & = f_1(x_1,x_2,x_3,x_4,\mathcal{I}) \\
\dot{x}_2 & = f_2(x_2,x_4) \\
\dot{x}_3 & = f_3(x_3,x_4) \\
\dot{x}_4 & = f_4(x_1,x_4) 
\end{aligned}
\end{equation}
The jacobian at an equilibrium is
\[
J_{\mathrm{Fe}}  = \begin{pmatrix}
f_{1,x_1} & f_{1,x_2} & f_{1,x_3} & f_{1,x_4} \\
0 & f_{2,x_2} & 0 & f_{2,x_4} \\
0 &  0 & f_{3,x_3} & f_{3,x_4} \\
f_{4,x_1} & 0 & 0 & f_{4,x_4} 
\end{pmatrix}
\]
The homeostasis matrix is obtained by removing the \emph{first row} and the \emph{first column}, i.e.
\[
H_{\mathrm{Fe}} = \begin{pmatrix}
f_{2,x_2} & 0 & f_{2,x_4} \\
0 & f_{3,x_3} & f_{3,x_4} \\
0 & 0 & f_{4,x_4}
\end{pmatrix}
\]
with 
\[
\det(H_{\mathrm{Fe}}) = f_{2,x_2} f_{3,x_3} f_{4,x_4} 
\]
Hence, for a general admissible system  \eqref{EQ:IRON_GENERAL} it is possible to have infinitesimal homeostasis by null-degradation on nodes $2$, $3$ and $4$.

\subsubsection{Input=Output Networks}

Antoneli \etal~\cite{antoneli2024} provides a careful adaptation of the infinitesimal homeostasis formalism to the case where the input and the output nodes are the same. 
 
When the network $\mathcal{G}$ has the same node as the input and output nodes the corresponding variables coincide $x_\iota=x_o$ and we call the network an \emph{input=output network} and $\iota$ the input $=$ output node.

In this case the vector of state variable is $X=(x_{\iota},x_{\rho})\in\mathbb{R}\times\mathbb{R}^N$ and the system of ODE's \eqref{eq:system} becomes
\begin{equation}  \label{admissible_systems_ODE_IO}
\begin{aligned}
\dot{x}_{\iota} & = f_{\iota}(x_{\iota}, x_{\rho}, \mathcal{I}) \\
\dot{x}_{\rho} & = f_{\rho}(x_{\iota}, x_{\rho})\\
\end{aligned}
\end{equation}
Finally, we assume the the \emph{genericity condition} \eqref{eq:genericity_condition}, which in this case reads:
\[
 f_{\iota,\II} \neq 0
\]
generically.

Let $J$ be the $(N+1)\times (N+1)$ Jacobian matrix of an admissible vector field $F=(f_{\iota},f_{\sigma})$, that is,
\begin{equation} \label{jacobian_general}
J = \begin{pmatrix}
  f_{\iota, x_{\iota}}   &  f_{\iota, x_\rho} \\
  f_{\rho, x_{\iota}}   &  f_{\rho, x_\rho} 
\end{pmatrix}
\end{equation}
As we have seen in the examples, the $N\times N$ \emph{homeostasis matrix} $H$ obtained from $J$ by removing the \emph{first row}:
\begin{equation}
\label{homeostasis_matrix_definition}
H = 
\begin{pmatrix}
f_{\rho, x_\rho}
\end{pmatrix}
\end{equation}
In both \eqref{jacobian_multiple} and \eqref{homeostasis_matrix_definition} partial derivatives $f_{\ell,x_j}$ are evaluated at the equilibrium $\big(X(\mathcal{I}),\mathcal{I}\big)$.

The main difference between the homeostasis matrix \eqref{homeostasis_matrix_definition} and the homeostasis matrix of network with distinct input and output nodes is that the former contains only the partial derivatives associated with the regulatory nodes, while the latter contains partial derivatives involving the input and the output nodes, as well.
In fact, the matrix $H$ in eq. \eqref{homeostasis_matrix_definition} is the Jacobian matrix of the subnetwork generated by the regulatory nodes.

Let us consider some small abstract networks.

\begin{example}[{$1$-node Networks}] \normalfont
Trivially, there is only one network with one node.
The system of ODEs associated to a one-node network is
\[
 \dot{x}_{\iota} = f_{\iota}(x_{\iota},\mathcal{I})
\]
The Jacobian is $f_{\iota,x_{\iota}}$ and so in order to have a stable equilibrium we must assume that $f_{\iota,x_{\iota}}<0$.
The homeostasis matrix is empty and the derivative of the input-output function is
\[
 x_{\iota}'=-\frac{f_{\iota,\mathcal{I}}}{f_{\iota,x_{\iota}}}
\]
Since we have assumed that $f_{\iota,\mathcal{I}}\neq 0$, generically, it follows that $x_{\iota}'\neq 0$, generically.
Hence, infinitesimal homeostasis is not  generically expected to occur in a one-node network.
\END
\end{example}

\begin{example}[{$2$-node Networks}] \normalfont
\label{EX:TWO_NODE}
There is only one possibility, up to isomorphism, in this case: $\{\iota \rightleftharpoons \rho\}$.
The system of ODEs associated to this network is
\[
\begin{aligned}
 \dot{x}_{\iota} & = f_{\iota}(x_{\iota},x_{\rho},\mathcal{I}) \\
 \dot{x}_{\rho} & = f_{\rho}(x_{\iota},x_{\rho})
\end{aligned}
\]
The Jacobian is 
\[
J=\begin{bmatrix}
  f_{\iota, x_{\iota}} & f_{\iota, x_\rho} \\
  f_{\rho, x_{\iota}} & f_{\rho, x_\rho} 
\end{bmatrix}
\]
and the homeostasis matrix is a scalar actually, $H=f_{\rho, x_\rho}$. 
Hence, infinitesimal homeostasis occurs, generically, by null-degradation with respect to regulatory node $\rho$.
\END
\end{example}

Wang \etal~\cite{wang2021} have shown that in order to analyze if an input-output network exhibits infinitesimal homeostasis, it is enough to study an associated ``core subnetwork'' (see Theorem \ref{T:coreA}). 
The same definition of core network and core-equivalence apply to an input $=$ output network and Theorem \ref{T:coreA} extends to the input $=$ output case
However, in the input $=$ output case, the condition that a node $\rho$ that is both upstream from the output node and downstream from the input node takes a special form.

\begin{lemma}
Let $\mathcal{G}$ be an input $=$ output network, with input $=$ output node $\iota$.
A regulatory node $\rho$ belongs to the core subnetwork $\mathcal{G}_c$ if and only if $\rho$ belongs to a cycle that contains the input $=$ output node $\iota$.
\end{lemma}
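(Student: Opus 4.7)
The plan is to unfold Definition \ref{D:core}(b) in the case input $=$ output, and then translate the resulting ``downstream and upstream'' conditions via Definition \ref{D:updown} into two directed paths that can be spliced together. Concretely, applying Definition \ref{D:core}(b) with input node and output node both equal to $\iota$, a regulatory node $\rho$ lies in $\mathcal{G}_c$ if and only if $\rho$ is simultaneously downstream from $\iota$ and upstream from $\iota$; by Definition \ref{D:updown} this is the same as the existence of directed paths $P_1:\iota\pathto\rho$ and $P_2:\rho\pathto\iota$ in $\mathcal{G}$.

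For the forward implication I would simply form the closed directed walk obtained by concatenating $P_1$ with $P_2$: this walk starts and ends at $\iota$ and visits $\rho$, hence is a cycle through $\iota$ containing $\rho$. For the reverse implication, given a cycle $C$ through $\iota$ that contains $\rho$, the portion of $C$ from $\iota$ to (the first occurrence of) $\rho$ is a path $\iota\pathto\rho$ and the portion from that occurrence of $\rho$ back to $\iota$ is a path $\rho\pathto\iota$, exhibiting $\rho$ as both downstream and upstream from $\iota$. Both implications are therefore one-step consequences of the definitions together with concatenation (or restriction) of directed paths.

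The one genuine subtlety --- and what I expect to be the main obstacle --- is the meaning of the word \emph{cycle}. If it is read as a \emph{simple} cycle in the sense of Definition \ref{D:simple_node}(a), the forward implication can fail: take the three-node network on $\{\iota,\sigma,\rho\}$ with arrows $\iota\to\sigma$, $\sigma\to\iota$, $\sigma\to\rho$, $\rho\to\sigma$; here $\rho$ is clearly in $\mathcal{G}_c$, but the only simple cycles are $\iota\to\sigma\to\iota$ and $\sigma\to\rho\to\sigma$, neither of which contains both $\iota$ and $\rho$. Accordingly, the natural reading that makes the lemma true is that ``cycle'' denotes a closed directed walk through $\iota$, equivalently that $\iota$ and $\rho$ lie in the same transitive (strongly connected) component of $\mathcal{G}$. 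Once this convention is fixed, the proof reduces to the definition-chase above.
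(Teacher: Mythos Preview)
Your approach is essentially the same as the paper's: the paper's entire proof is the one-line remark ``A directed path starting and ending at the same node is a cycle,'' which is exactly your concatenation argument in compressed form. Your write-up is more explicit and, notably, more careful: the subtlety you flag about the word \emph{cycle} is real and your three-node counterexample correctly shows the statement is false under the simple-cycle reading of Definition~\ref{D:simple_node}(a). The paper does not address this point, so your closed-walk (equivalently, same-transitive-component) interpretation is indeed the one needed to make the lemma true, and it is consistent with how the result is used downstream (e.g.\ in Lemma~\ref{THM:APPENDAGE} and Theorem~\ref{THM:APPENDAGE_NET}).
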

\begin{proof}
A directed path starting and ending at the same node is a cycle.
\end{proof}

\begin{corollary}
An input $=$ output network $\mathcal{G}$ is a core network if and only if every regulatory node of $\mathcal{G}$ belongs to a cycle that contains the input $=$ output node.
\end{corollary}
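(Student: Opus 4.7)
The plan is to derive this corollary as an essentially immediate consequence of the preceding lemma combined with the core-network definition specialized to the input $=$ output case. First I would recall Definition \ref{D:core}: the network $\mathcal{G}$ is a core network precisely when every node of $\mathcal{G}$ belongs to the core subnetwork $\mathcal{G}_c$, i.e.\ every node is simultaneously upstream from the output and downstream from the input. Since the input and output coincide as $\iota$, the node $\iota$ itself is trivially in $\mathcal{G}_c$ (it is its own input and its own output), so the statement ``$\mathcal{G} = \mathcal{G}_c$'' reduces to the statement that every \emph{regulatory} node $\rho$ belongs to $\mathcal{G}_c$.

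Next I would invoke the preceding lemma verbatim: a regulatory node $\rho$ belongs to $\mathcal{G}_c$ if and only if $\rho$ lies on a directed cycle containing $\iota$. Chaining the two equivalences gives: $\mathcal{G}$ is a core network $\iff$ every regulatory node lies in $\mathcal{G}_c$ $\iff$ every regulatory node lies on a cycle through $\iota$, which is exactly the claim.

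The only point that requires a moment of care is the status of $\iota$ itself in the definition of $\mathcal{G}_c$. In the input $\neq$ output setting, ``downstream from $\iota$'' and ``upstream from $o$'' are applied to distinct nodes, but here we must verify that the (possibly trivial) path conventions place $\iota$ in $\mathcal{G}_c$. If directed paths of length zero are admitted, $\iota$ is vacuously both upstream and downstream of itself and the argument goes through unchanged. If only paths of positive length are admitted, then $\iota \in \mathcal{G}_c$ requires $\iota$ to lie on a directed cycle; but this is automatic whenever there is at least one regulatory node satisfying the hypothesis (that cycle also passes through $\iota$), and when there are no regulatory nodes the corollary degenerates to the trivial one-node case already handled in the earlier example. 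The main ``obstacle,'' therefore, is purely bookkeeping about the degenerate role of $\iota$; no substantive graph-theoretic work beyond citing the preceding lemma is needed.
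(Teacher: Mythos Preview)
Your proposal is correct and matches the paper's approach: the corollary is stated without proof in the paper, precisely because it follows immediately from the preceding lemma together with Definition~\ref{D:core}, exactly as you outline. Your additional remarks about the degenerate status of $\iota$ are more careful than anything the paper makes explicit, but they do not diverge from the intended argument.
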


In a core network $\mathcal{G}$ where the input node is distinct from the output node, a simple node is always downstream from the input node and upstream from the output node, but not `the other way around'. 
That is, if a node is \emph{downstream} the \emph{output node} and/or \emph{upstream} the \emph{input node} then it must be an \emph{appendage node}. 
Indeed, if $\mathcal{G}$ is a core network the every node is downstream from the input node and upstream from the output node. 
Then a node that satisfies `the other way around' condition above must be on an $\iota o$-path that cycles around the input and/or the output node. 
We note the not every appendage node is of this type, that is, it may be on an $\iota o$-path that cycles around other regulatory nodes.
Hence, we have the following.

\begin{lemma} \label{THM:APPENDAGE}
Let $\mathcal{G}$ be a core input $=$ output network.
Then every node of $\mathcal{G}$ is appendage.
\end{lemma}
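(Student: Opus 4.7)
The plan is to reduce the statement to a direct application of Definition~\ref{D:simple_node}(c). First I would observe that when $\iota=o$, the very notion of an ``$\iota o$-simple path'' degenerates. Indeed, Definition~\ref{D:simple_node}(a) requires a simple path from $\kappa_1$ to $\kappa_2$ to visit every node on the path \emph{exactly once}. If one sets $\kappa_1=\kappa_2=\iota$, then any nontrivial directed path from $\iota$ to $\iota$ visits $\iota$ twice (at the start and at the end), so it is not a simple path but rather a \emph{simple cycle} through $\iota$. Under the natural convention that the two distinguished endpoints $\iota$ and $o$ of an $\iota o$-simple path must correspond to distinct occurrences on the path, no such path can exist when $\iota=o$. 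Consequently no node lies on an $\iota o$-simple path, and Definition~\ref{D:simple_node}(c) then forces every node of $\mathcal{G}$ to be appendage.

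A second, more intuitive step I would include is to tie this conclusion to the characterization recalled in the paragraph immediately preceding the lemma. In a core network with $\iota\neq o$, a node is forced to be appendage whenever it is both downstream from the output node \emph{and} upstream from the input node. Under the identification $\iota=o$, the corollary stated just above the lemma tells us that every regulatory node of $\mathcal{G}$ sits on a cycle through $\iota$, and is therefore simultaneously downstream from $\iota$ (viewed as the output) and upstream from $\iota$ (viewed as the input). By the criterion just recalled, each such regulatory node must be appendage; together with the input $=$ output node itself --- which does not lie on any $\iota o$-simple path with distinct endpoints --- this exhausts the node set of $\mathcal{G}$.

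The main obstacle is definitional rather than combinatorial: one has to be explicit about how Definitions~\ref{defn:io_network} and~\ref{D:simple_node} are to be read in the input $=$ output setting, since several of their clauses implicitly assume $\iota\neq o$. Once the convention is fixed --- namely, that no nontrivial $\iota o$-simple path is allowed to close up when $\iota=o$ --- the lemma follows at once. This convention is also the one compatible with the algebraic picture in \eqref{homeostasis_matrix_definition}, where the homeostasis matrix coincides with the Jacobian of the subnetwork of regulatory nodes and hence decomposes only into appendage-type irreducible diagonal blocks, leaving no room for a structural (super-simple or backbone) contribution.
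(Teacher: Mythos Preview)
Your proposal is correct, and your second step is essentially the paper's own proof: the paper argues that since every node in a core input $=$ output network lies on a cycle through $\iota$, each node is both downstream from the output node and upstream from the input node, and is therefore appendage. Your first step, which makes explicit the definitional convention that there are no nontrivial $\iota o$-simple paths when $\iota=o$, is a useful clarification that the paper leaves implicit; it supplies the rigorous grounding for why the ``downstream-from-output / upstream-from-input'' criterion (stated in the paper only for $\iota\neq o$) transfers to the input $=$ output setting.
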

\begin{proof}
Since every node in a core input $=$ output network forms a cycle with the input $=$ output node it follows that every node is \emph{downstream} the \emph{output node} and \emph{upstream} the \emph{input node}.
Thus, every node is appendage.
\qed
\end{proof}

\begin{corollary}
Structural homeostasis does not exist in input $=$ output networks.
\end{corollary}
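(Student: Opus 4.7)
The strategy is to reduce the corollary to the classification of homeostasis subnetworks in Theorem \ref{thm:homeo_sub_net} and to the defining data of a structural subnetwork in Definition \ref{D:structural_subnet}, then to exploit Lemma \ref{THM:APPENDAGE}, which has just been proved. The only substantive claim that needs justification is that no structural subnetwork can be assembled inside a core input $=$ output network.

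First, I would recall from Definition \ref{D:structural_subnet} that every structural subnetwork $\mathcal{L}_j$ is anchored between two \emph{consecutive} super-simple nodes $\rho_{j-1}\prec\rho_j$, appearing in the total order $\iota=\rho_0\prec\rho_1\prec\cdots\prec\rho_{q+1}=o$, and that its interior $\mathcal{L}_j''$ consists of simple regulatory nodes lying strictly between $\rho_{j-1}$ and $\rho_j$ on an $\iota o$-simple path. When the input and output nodes coincide, this totally ordered chain collapses to the single distinguished element $\iota=o$, so there is no pair of distinct super-simple endpoints available to bound any $\mathcal{L}_j$.

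Next, to rule out also the degenerate possibility that a structural block somehow arises from a regulatory node, I would invoke Lemma \ref{THM:APPENDAGE}: every regulatory node of a core input $=$ output network is appendage. By Definition \ref{D:simple_node} appendage nodes are precisely the non-simple nodes, hence no regulatory node lies on an $\iota o$-simple path, so in particular no regulatory node can be super-simple. Combined with the previous paragraph, this shows that the list of structural subnetworks produced by Definition \ref{D:structural_subnet} is empty.

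Finally, I would apply Theorem \ref{thm:homeo_sub_net}, which asserts that every homeostasis subnetwork $\mathcal{K}_\eta\subseteq\mathcal{G}$ is either structural (of the form $\mathcal{L}_j$) or appendage (of the form $\mathcal{A}_k$). The first alternative has been excluded, so every homeostasis block of $H$ is of appendage type, giving the corollary. The only step that requires any real care is the bookkeeping in the second paragraph: one must be sure that the input $=$ output case does not secretly admit a ``Haldane'' one-arrow structural block attached to $\iota=o$ itself, but Lemma \ref{THM:APPENDAGE} precludes any such block because both endpoints of such an arrow would have to be super-simple and distinct. This is the only place where one might worry about a gap, and it is easily handled.
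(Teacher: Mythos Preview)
Your proof is correct and follows the same approach as the paper, which treats the corollary as an immediate consequence of Lemma~\ref{THM:APPENDAGE}: since every regulatory node is appendage, there are no simple (hence no super-simple) regulatory nodes, and so no structural subnetwork can be formed. Your write-up is more detailed than the paper's (which gives no explicit proof), and your invocation of Theorem~\ref{thm:homeo_sub_net} is slightly forward-referencing in spirit---that theorem is stated for the $\iota\neq o$ case, and the analogous structure result for $\iota=o$ is only established afterward in Theorem~\ref{THM:APPENDAGE_NET}---but the combinatorial content you need (no simple nodes $\Rightarrow$ no $\mathcal{L}_j$) follows directly from Definition~\ref{D:structural_subnet} and does not actually require the full classification.
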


The next theorem summarizes the classification of homeostasis subnetworks of an input $=$ output network

\begin{theorem} \label{THM:APPENDAGE_NET}
If $\mathcal{G}$ is an input $=$ output network the the appendage subnetwork of $\mathcal{G}$ is exactly the subnetwork of $\mathcal{G}$ generated by all regulatory nodes.
Moreover, the irreducible factors of $\det(H)$ correspond to the appendage path components $\mathcal{A}_k$ and are given by $\det(J({\mathcal{A}_k}))$. 
Here $J({\mathcal{A}_k})$ is the jacobian matrix associated with the network $\mathcal{A}_k$.
\end{theorem}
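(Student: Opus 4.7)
The plan is to derive both claims as direct consequences of Lemma \ref{THM:APPENDAGE}, together with Theorem \ref{thm:irreducible} (Frobenius--K\"onig) and part (2) of Theorem \ref{thm:homeo_sub_net}. The input $=$ output hypothesis collapses the structural/appendage dichotomy, so essentially all the work has already been done in the input $\neq$ output theory; the remaining task is to check that the definitions transfer cleanly.

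For the first assertion, I would start from Lemma \ref{THM:APPENDAGE}: every regulatory node of $\mathcal{G}$ is appendage. I then need to upgrade ``appendage'' to ``super-appendage'' in the sense of Definition \ref{defn:super_appendage}(b). In the input $=$ output setting there are no $\iota o$-simple paths through any regulatory node (every directed path from $\iota$ to itself either is trivial or revisits $\iota$ and thus fails simplicity in the usual sense), so each complementary subnetwork $C_S$ contains \emph{all} regulatory nodes, and the transitive component of any regulatory node $\tau$ in $C_S$ is contained in the set of regulatory nodes, all of which are appendage. Hence every regulatory node is super-appendage, and the appendage subnetwork is exactly the full subnetwork generated by the regulatory nodes. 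Its decomposition into transitive components (Definition \ref{D:appendage_subnet}) is precisely the condensation $\mathcal{A}_1, \ldots, \mathcal{A}_r$ of that subnetwork.

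For the second assertion, observe that in view of \eqref{homeostasis_matrix_definition} the homeostasis matrix
\[
H = (f_{\rho_i, x_{\rho_j}})_{\rho_i, \rho_j \in \mathrm{Reg}(\mathcal{G})}
\]
coincides, row by row and column by column, with the Jacobian of the regulatory subnetwork $\mathcal{G}_{\mathrm{reg}}$ viewed as a stand-alone network. Applying Theorem \ref{thm:irreducible} yields permutation matrices $P, Q$ bringing $PHQ$ to block upper triangular form with irreducible diagonal blocks $B_1, \ldots, B_m$ whose product is $\det(H)$. The graph-theoretic content of Frobenius--K\"onig (used already in the proof of Theorem \ref{thm:homeo_sub_net}) identifies these irreducible blocks, up to simultaneous row/column permutation, with the strongly connected components of the digraph encoded by the nonzero pattern of $H$. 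For a Jacobian matrix of a network, these strongly connected components are exactly the path components $\mathcal{A}_k$, and the corresponding block is $J(\mathcal{A}_k)$.

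The main obstacle, such as it is, lies in bookkeeping rather than in any new technical step: one must verify that the intermediate construction in Definition \ref{D:structural_subnet} produces nothing in the input $=$ output case. Since there are no non-trivial super-simple nodes besides $\iota$ itself and every regulatory node is already super-appendage, each simple subnetwork $\mathcal{L}_j''$ is empty and there are no ``linked'' appendage nodes $T_j$; consequently no augmented structural block $\mathcal{L}_j$ of positive size appears. This confirms the preceding corollary that structural homeostasis is absent, and ensures that the full factorization \eqref{eq:irreducible} is supplied entirely by the appendage components $\mathcal{A}_k$, giving the claimed identity $\det(B_\eta) = \det(J(\mathcal{A}_k))$. \qed
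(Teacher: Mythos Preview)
Your proposal is correct and follows essentially the same route as the paper. Both arguments invoke Lemma \ref{THM:APPENDAGE} for the first assertion, and both reduce the factorization claim to the general classification machinery: the paper cites \cite[Thm.~7.1 and Thm.~5.4]{wang2021} directly, while you invoke their internal incarnations (Theorem \ref{thm:irreducible} and Theorem \ref{thm:homeo_sub_net}(2)). Your explicit verification that every regulatory node is super-appendage and that no structural block $\mathcal{L}_j$ appears is exactly what the paper compresses into the single remark that ``since there are no simple nodes the no cycle condition is trivially satisfied'' (these formulations are equivalent by Remark \ref{R:appendage}).
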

\begin{proof}
By lemma \ref{THM:APPENDAGE} it follows that the set of nodes of $\mathcal{A}_\mathcal{G}$ is exactly the set of regulatory nodes. Hence they generate the same network.
The, it follows from \cite[Thm. 7.1]{wang2021},
that all irreducible blocks of the homeostasis matrix correspond to the appendage path components. Moreover, since there are no simple nodes the ``no cycle condition'' is trivially satisfied by all appendage path components.
Finally, it follows from \cite[Thm. 5.4]{wang2021} that the corresponding irreducible factor of $\det(H)$ is of the form $\det(J({\mathcal{A}_k}))$.
\qed
\end{proof}

\begin{theorem} \label{THM:ACYCLIC}
Let $\mathcal{G}$ be a directed graph.
Then its condensation graph $\bar{\mathcal{G}}$ is a \emph{directed acyclic graph}, i.e., $\bar{\mathcal{G}}$ has no directed cycle.
Moreover, the nodes of $\bar{\mathcal{G}}$ have an \emph{acyclic ordering}. That is, the path-components of $\mathcal{G}$ can be labeled $\mathcal{A}_1,\ldots,\mathcal{A}_m$ such that there is no arrow from $\mathcal{A}_j$ to $\mathcal{A}_i$ unless $j < i$.
\end{theorem}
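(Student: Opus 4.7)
The proof has two parts: first establishing that $\bar{\mathcal{G}}$ is acyclic, and then extracting an acyclic ordering.

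For the first part, my plan is to argue by contradiction. Suppose $\bar{\mathcal{G}}$ contains a directed cycle through distinct path-components $\mathcal{A}_{i_1} \to \mathcal{A}_{i_2} \to \cdots \to \mathcal{A}_{i_k} \to \mathcal{A}_{i_1}$ with $k \geq 2$. By the definition of arrows in the condensation, for each consecutive pair $\mathcal{A}_{i_j} \to \mathcal{A}_{i_{j+1}}$ there exist nodes $u_j \in \mathcal{A}_{i_j}$ and $v_{j+1} \in \mathcal{A}_{i_{j+1}}$ with a direct arrow $u_j \to v_{j+1}$ in $\mathcal{G}$ (indices mod $k$). Since $u_j$ and $v_j$ both lie in the path-component $\mathcal{A}_{i_j}$, there exists a directed path from $v_j$ to $u_j$ within $\mathcal{A}_{i_j}$. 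Concatenating these alternating in-component paths with the bridging arrows $u_j \to v_{j+1}$ yields a closed directed walk visiting each $\mathcal{A}_{i_j}$. This walk witnesses that any two nodes $u_{j}$, $u_{j'}$ lie on a common directed cycle and hence are path-equivalent. Therefore $\mathcal{A}_{i_1},\ldots,\mathcal{A}_{i_k}$ all collapse into a single equivalence class, contradicting the assumption that they are distinct path-components.

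For the second part, I would obtain the acyclic ordering by finite induction on the number $m$ of nodes of $\bar{\mathcal{G}}$, using the standard fact that every nonempty finite DAG possesses a \emph{sink}, i.e., a node with no outgoing arrows. To see this auxiliary claim, start at any node of $\bar{\mathcal{G}}$ and follow outgoing arrows as long as possible; since $\bar{\mathcal{G}}$ is finite and acyclic, no node is revisited, so the walk must terminate at a node without outgoing arrows. Having identified such a sink, label it $\mathcal{A}_1$, delete it from $\bar{\mathcal{G}}$, and apply the inductive hypothesis to the remaining graph on $m-1$ nodes to obtain a labeling $\mathcal{A}_2,\ldots,\mathcal{A}_m$ with no arrow from $\mathcal{A}_j$ to $\mathcal{A}_i$ when $i < j$ and $i,j \geq 2$. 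Since $\mathcal{A}_1$ has no outgoing arrows in $\bar{\mathcal{G}}$, no arrow of the form $\mathcal{A}_1 \to \mathcal{A}_i$ with $i > 1$ exists, so the labeling $\mathcal{A}_1,\mathcal{A}_2,\ldots,\mathcal{A}_m$ has the required property that arrows $\mathcal{A}_j \to \mathcal{A}_i$ occur only when $j < i$.

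The theorem is classical, so I do not anticipate a genuine obstacle. The only place requiring care is part one: one must be explicit about why the closed walk in $\mathcal{G}$ forces \emph{every} pair of intermediate nodes on the walk to be path-equivalent, i.e., not merely $u_j$ reaches $u_{j+1}$, but also $u_{j+1}$ reaches $u_j$, which follows because the walk is a cycle in $\mathcal{G}$. Everything else amounts to invoking finiteness and the definition of condensation.
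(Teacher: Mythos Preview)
Your argument is the standard textbook proof; the paper itself does not give a self-contained argument but simply refers the reader to a graph-theory text. So your proposal is in fact more detailed than what appears in the paper, and the first part is correct as written.

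There is, however, a slip in the second part. You identify a \emph{sink} and label it $\mathcal{A}_1$, but the statement requires that an arrow $\mathcal{A}_j \to \mathcal{A}_i$ occur only when $j < i$, i.e., arrows point from lower index to higher index. A sink has no outgoing arrows but may well have incoming arrows $\mathcal{A}_j \to \mathcal{A}_1$ with $j \geq 2$, and these violate $j < i = 1$. The fix is immediate: either label the sink as $\mathcal{A}_m$ and apply the inductive hypothesis to the remaining graph to obtain $\mathcal{A}_1,\ldots,\mathcal{A}_{m-1}$, or replace ``sink'' by ``source'' throughout (following incoming arrows backward to locate it) and keep your labeling. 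With this correction the induction goes through.
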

\begin{proof}
See \cite[pgs. 18-19 and prop. 3.1.2]{bang-jensen2018}.
\end{proof}

Theorem \ref{THM:ACYCLIC} essentially says that each component of the condensation graph represents a feedforward network.
The acyclic order is a partial order on the set of nodes (the path-components of $\mathcal{G}$) of the condensation graph $\mathcal{G}^c$.

The condensation graph $\mathcal{A}_{\mathcal{G}}^c$ of the appendage network $\mathcal{A}_{\mathcal{G}}$ of an input $=$ output network $\mathcal{G}$ is called the \emph{condensation appendage network} of $\mathcal{G}$.
We say that a path-component $\mathcal{A}_j$ is \emph{downstream} from a path-component $\mathcal{A}_i$ if there is a directed path from $\mathcal{A}_j$ to $\mathcal{A}_i$ in the condensation appendage network $\bar{\mathcal{A}}_{\mathcal{G}}$.
A path-component $\mathcal{A}_j$ is \emph{upstream} from a path-component $\mathcal{A}_i$ if $\mathcal{A}_i$ is downstream from $\mathcal{A}_j$.
We say that $\mathcal{A}_i$ and $\mathcal{A}_j$ not comparable if there is no directed path between them.

\begin{theorem} \label{THM:HOMEO_INDUC}
Let $\mathcal{G}$ be an input $=$ output network and let $\mathcal{A}_{\mathcal{G}}$ be its appendage subnetwork.
Let $\mathcal{A}_1,\ldots,\mathcal{A}_m$ be the path-components of $\mathcal{A}_{\mathcal{G}}$ labeled according to the acyclic ordering induced by its condensation appendage network.
Suppose that homeostasis is induced by $\mathcal{A}_i$. Then $\mathcal{A}_j$ is induced by $\mathcal{A}_i$ ($\mathcal{A}_i \Rightarrow \mathcal{A}_j$) if and only if $\mathcal{A}_j$ is not downstream from $\mathcal{A}_i$ in the condensation appendage network $\bar{\mathcal{A}}_{\mathcal{G}}$.
In other words, $\mathcal{A}_i \Rightarrow \mathcal{A}_j$ if and only if
\begin{enumerate}[(i)]
\item $\mathcal{A}_j$ is not in the same $\bar{\mathcal{A}}_{\mathcal{G}}$-path-component as $\mathcal{A}_i$.
\item $\mathcal{A}_j$ is not comparable to $\mathcal{A}_i$ with respect to the acyclic order.
\item $\mathcal{A}_j$ is upstream from $\mathcal{A}_i$.
\end{enumerate}
\end{theorem}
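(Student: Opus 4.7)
The plan is to exploit the block lower-triangular structure of the homeostasis matrix $H$ induced by the acyclic ordering of the appendage path-components, and then propagate the singularity block-by-block through the linearized equilibrium equations.

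By Theorem~\ref{THM:APPENDAGE_NET}, $H$ is precisely the Jacobian $J(\mathcal{A}_{\mathcal{G}})$ of the appendage subnetwork, and by Theorem~\ref{THM:ACYCLIC} the path-components $\mathcal{A}_1,\ldots,\mathcal{A}_m$ can be labeled so that any direct arrow in $\bar{\mathcal{A}}_{\mathcal{G}}$ goes from $\mathcal{A}_l$ to $\mathcal{A}_k$ only if $l<k$. Reordering rows and columns of $H$ accordingly puts $H$ in block lower-triangular form with diagonal blocks $J(\mathcal{A}_1),\ldots,J(\mathcal{A}_m)$, so that $\det(H)=\prod_{k=1}^m\det(J(\mathcal{A}_k))$. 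The assumption that $\mathcal{A}_i$ induces homeostasis is $\det(J(\mathcal{A}_i))=0$, hence $\det(H)=0$ and consequently $x_\iota'(\II_0)=0$.

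Next I apply implicit differentiation to $F(X(\II),\II)\equiv 0$ in the spirit of Example~\ref{E:example8_pattern}. The regulatory rows of $J\,X'(\II)=-F_\II$ give $H\,X_\rho'=-f_{\cdot,x_\iota}\,x_\iota'$, which collapses to $H\,X_\rho'=0$ since $x_\iota'=0$; the $\iota$-row yields the scalar constraint $f_{\iota,x_\rho}X_\rho'=-f_{\iota,\II}\neq 0$. Writing $X_\rho'=(Y_1,\ldots,Y_m)$ with $Y_k$ supported on $\mathcal{A}_k$, the block form of $H\,X_\rho'=0$ becomes
\[
J(\mathcal{A}_k)\,Y_k=-\sum_{l<k}H_{k,l}\,Y_l,\qquad k=1,\ldots,m,
\]
where $H_{k,l}$ is the off-diagonal block of linearized couplings from nodes in $\mathcal{A}_l$ to nodes in $\mathcal{A}_k$; generically $H_{k,l}\neq 0$ iff $\mathcal{A}_l\to\mathcal{A}_k$ in $\bar{\mathcal{A}}_{\mathcal{G}}$. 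I then solve this system recursively. For $k<i$, $J(\mathcal{A}_k)$ is generically invertible, so induction on $k$ gives $Y_k=0$. At $k=i$ the equation reduces to $J(\mathcal{A}_i)Y_i=0$, so $Y_i\in\ker J(\mathcal{A}_i)$, and the scalar $\iota$-constraint forces $Y_i\neq 0$. For $k>i$, $Y_k$ is determined by the already-computed $Y_l$ with $l<k$, and is generically nonzero iff some such $Y_l\neq 0$ satisfies $\mathcal{A}_l\to\mathcal{A}_k$. A second induction on $k$ then identifies this condition with the existence of a directed path in $\bar{\mathcal{A}}_{\mathcal{G}}$ from $\mathcal{A}_i$ to $\mathcal{A}_k$. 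Thus $Y_j=0$ (all nodes of $\mathcal{A}_j$ are homeostatic) iff $\mathcal{A}_j\neq\mathcal{A}_i$ and $\mathcal{A}_j$ is not strictly downstream from $\mathcal{A}_i$, which is precisely the trichotomy (i)--(iii).

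The main obstacle is the genericity argument at $k=i$: one must verify that the linear functional $f_{\iota,x_\rho}$ does not vanish identically on $\ker J(\mathcal{A}_i)$, so that the scalar $\iota$-constraint genuinely selects a nonzero element of the kernel (rather than collapsing the whole analysis). This rests on the core-network hypothesis that every node of $\mathcal{A}_i$ lies on a cycle through $\iota$, which guarantees nontrivial couplings both into and out of $\mathcal{A}_i$ at $\iota$; one then checks that for generic parameter values these couplings cannot conspire to annihilate the kernel vector. A secondary technical point is justifying the inductive genericity claims $H_{k,l}\neq 0 \Leftrightarrow \mathcal{A}_l\to\mathcal{A}_k$ and $J(\mathcal{A}_k)$ invertible for $k\neq i$, both of which follow from the standard openness/density arguments for parametrized families as used throughout~\cite{wang2021,duncan2024}.
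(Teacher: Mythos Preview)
Your argument is correct and considerably more self-contained than the paper's. The paper simply invokes the general Appendage $\Rightarrow$ Appendage result (Theorem~\ref{thm:app_to_app}) of~\cite{duncan2024}, remarking that a direct proof for the input $=$ output case appears in~\cite{antoneli2024}. Your approach is in effect that direct proof: it exploits the block lower-triangular form of $H=J(\mathcal{A}_\mathcal{G})$ under the acyclic ordering and propagates the singularity through the linearized system, requiring none of the pattern-network machinery ($\mathcal{P}_\mathcal{S}$, $\mathcal{P}_\mathcal{A}$, $\mathcal{V}_{\min}$, $\mathcal{V}_{\max}$) developed for the $\iota\neq o$ setting. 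The trade-off is that the paper's route is a one-line citation resting on a general theory whose specialization to $\iota=o$ is itself not entirely immediate, whereas your route is elementary and makes the mechanism transparent.

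One correction: the ``main obstacle'' you flag is not actually an obstacle. If $v\in\ker H$ is nonzero and $f_{\iota,x_\rho}v=0$, then the vector $(0,v)^t$ lies in $\ker J$, contradicting $\det J\neq 0$ (linear stability of the equilibrium). Hence the $\iota$-row constraint $f_{\iota,x_\rho}X_\rho'=-f_{\iota,\II}\neq 0$ automatically selects a nonzero $X_\rho'\in\ker H$, and your contradiction argument ($Y_i=0\Rightarrow X_\rho'=0$) already forces $Y_i\neq 0$ with no further genericity check. Note also that the relevant kernel is $\ker H$ rather than $\ker J(\mathcal{A}_i)$: the kernel vector is supported on $\mathcal{A}_i$ together with all $\mathcal{A}_k$ downstream from it, and $f_{\iota,x_\rho}$ acts on all regulatory coordinates, so the core-network hypothesis about cycles through $\iota$ is not needed at this step.
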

\begin{proof}
It follows from Theorem \ref{thm:app_to_app}.
In \cite{antoneli2024} we provide another proof for the input $=$ output case.
\qed
\end{proof}

As we have shown in Theorem  \ref{THM:HOMEO_INDUC}, given an input $=$ output network $\mathcal{G}$ it is possible to classify all patterns of homeostasis supported by $\mathcal{G}$.
Moreover, each homeostasis pattern corresponds to exactly one homeostasis subnetwork.

\begin{theorem} \label{thm:bijective_corresp_i=o}
Let $\mathcal{A}_1$ and $\mathcal{A}_2$ be two distinct homeostasis subnetworks of $\mathcal{G}$. 
Then the set of subnetworks induced by $\mathcal{A}_1$ and the set of subnetworks induced by $\mathcal{A}_2$ are distinct.
\end{theorem}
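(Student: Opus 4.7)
The plan is to reduce the statement to a short case analysis powered by Theorem \ref{THM:HOMEO_INDUC} together with the acyclicity of the condensation appendage network $\bar{\mathcal{A}}_{\mathcal{G}}$ (Theorem \ref{THM:ACYCLIC}). In the input $=$ output setting, Theorem \ref{THM:APPENDAGE_NET} tells us that every homeostasis subnetwork is a path-component of the appendage network $\mathcal{A}_{\mathcal{G}}$, so it suffices to work entirely in terms of the nodes of $\bar{\mathcal{A}}_{\mathcal{G}}$. Theorem \ref{THM:HOMEO_INDUC} then characterizes the set of subnetworks induced by $\mathcal{A}_i$ as precisely those path-components $\mathcal{A}_j$ with $j \neq i$ that are not downstream from $\mathcal{A}_i$ in $\bar{\mathcal{A}}_{\mathcal{G}}$.

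The key idea is that one of $\mathcal{A}_1$ or $\mathcal{A}_2$ itself serves as a witness distinguishing the two induced sets. Specifically, since $\bar{\mathcal{A}}_{\mathcal{G}}$ is a DAG and $\mathcal{A}_1 \neq \mathcal{A}_2$, the two path-components cannot each be downstream from the other; hence at least one of them is not downstream from the other. Without loss of generality I may assume $\mathcal{A}_1$ is not downstream from $\mathcal{A}_2$. By Theorem \ref{THM:HOMEO_INDUC} this yields $\mathcal{A}_2 \Rightarrow \mathcal{A}_1$, so $\mathcal{A}_1$ lies in the collection of subnetworks induced by $\mathcal{A}_2$. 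On the other hand, $\mathcal{A}_1$ is (trivially) in the same node of $\bar{\mathcal{A}}_{\mathcal{G}}$ as itself, so condition (i) of Theorem \ref{THM:HOMEO_INDUC} fails when $\mathcal{A}_i = \mathcal{A}_j = \mathcal{A}_1$, giving $\mathcal{A}_1 \not\Rightarrow \mathcal{A}_1$ (this is also consistent with Theorem \ref{thm:always_related_intro}). Therefore $\mathcal{A}_1$ belongs to the set induced by $\mathcal{A}_2$ but not to the set induced by $\mathcal{A}_1$, proving that the two sets are distinct.

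The main step is really just the invocation of Theorem \ref{THM:HOMEO_INDUC}; once that characterization is available the argument collapses to the DAG observation that between any two distinct vertices of an acyclic graph at least one fails to be downstream from the other. The only potential subtlety is the self-induction exclusion $\mathcal{A}_1 \not\Rightarrow \mathcal{A}_1$, but this is built into the criterion of Theorem \ref{THM:HOMEO_INDUC} via condition (i). I do not expect any significant obstacle; this is essentially the input $=$ output analogue of Theorem \ref{thm:bijective_corresp_intro}, streamlined by the fact that all homeostasis subnetworks here are appendage and by the clean acyclic structure of $\bar{\mathcal{A}}_{\mathcal{G}}$.
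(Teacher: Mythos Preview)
Your argument is correct and is essentially the spelled-out version of what the paper points to: the paper's proof merely cites Theorem~\ref{thm:app_to_app} (and \cite{antoneli2024}), while you use its input~$=$~output specialization, Theorem~\ref{THM:HOMEO_INDUC}, together with the DAG property of $\bar{\mathcal{A}}_{\mathcal{G}}$ and the self-non-induction to exhibit $\mathcal{A}_1$ as a witness distinguishing the two induced sets. This is the intended approach.
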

\begin{proof}
It follows from Theorem \ref{thm:app_to_app}.
In \cite{antoneli2024} we provide another proof for the input $=$ output case.
\qed
\end{proof}

\subsubsection{$3$-node Input $=$ Output Networks}
\label{SSS:classification}

Now we consider the classification, up to isomorphism of the $3$-node input $=$ output core networks.

\begin{theorem}
There are $10$ isomorphism classes of $3$-node input $=$ output core networks, shown in Figure \ref{FIG:TEMPLATE}.
\end{theorem}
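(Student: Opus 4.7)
The plan is a direct combinatorial enumeration. A $3$-node input $=$ output network consists of the distinguished input $=$ output node $\iota$ together with two regulatory nodes, which we label $\rho_1, \rho_2$; a network isomorphism must fix the distinguished node, so the only nontrivial symmetry is the transposition $\rho_1 \leftrightarrow \rho_2$. Since self-interactions are implicit in the formalism, the combinatorial data reduce to the six possible directed arrows among the three nodes, giving $2^6 = 64$ candidate digraphs.

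The first step is to rewrite the core condition combinatorially. By the corollary stating that an input $=$ output network is core if and only if every regulatory node lies on a directed cycle through $\iota$, the network $\mathcal{G}$ is core precisely when all four reachability conditions $\iota \pathto \rho_j$ and $\rho_j \pathto \iota$ (for $j = 1, 2$) hold. With only three nodes, each such path is either a direct arrow or a length-$2$ path through the other regulator, so each reachability condition is a short Boolean formula in the six arrow indicators.

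Next, I would stratify the enumeration by the direct $\iota$--$\rho_j$ connection type, which for each $j$ is one of the four types $\{\emptyset,\ \iota \to \rho_j \text{ only},\ \rho_j \to \iota \text{ only},\ \iota \leftrightarrow \rho_j\}$. For each of the $16$ ordered type pairs $(T_1, T_2)$ I would determine which of the $4$ possible arrow patterns between $\rho_1$ and $\rho_2$ complete the graph to a core network. Several type pairs admit no completion (for example $(T_1, T_2) = (\emptyset, \emptyset)$ or $(\to, \to)$), while others admit one, two, or four, yielding $18$ admissible ordered triples in total. Quotienting by the involution $\rho_1 \leftrightarrow \rho_2$, which acts by swapping $T_1 \leftrightarrow T_2$ together with the two regulator-to-regulator arrows, one locates exactly $2$ fixed triples (both at $T_1 = T_2 = \leftrightarrow$ with a symmetric regulator-regulator pattern) and $16$ triples that pair into $8$ free orbits, for a total of $2 + 8 = 10$ isomorphism classes.

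The main obstacle is not any single step but the bookkeeping: the input-regulator and the regulator-regulator arrow layers are coupled through the reachability requirements, so the case analysis does not factor cleanly. I would therefore organize the enumeration into a table listing, for each of the $10$ orbits, a canonical representative, and close the proof by matching these representatives one-for-one against the templates displayed in Figure \ref{FIG:TEMPLATE}.
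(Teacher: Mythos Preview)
Your proposal is correct, and the enumeration checks out: the $18$ labeled core networks and the $2$ swap-fixed ones at $T_1=T_2=\leftrightarrow$ give exactly $10$ orbits via Burnside.

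The paper's proof is also a direct combinatorial enumeration, but organized differently. Rather than keeping all $2^6$ digraphs and quotienting by the involution $\rho_1\leftrightarrow\rho_2$ at the end, the paper observes that in any core input $=$ output network the node $\iota$ must have at least one outgoing arrow to a regulator, fixes that arrow $\iota\to\rho$ once and for all, and then enumerates the $2^5=32$ remaining arrow subsets, stratified by the number $k$ of additional arrows. Your stratification by the pair of $\iota$--regulator connection types, together with the explicit orbit count, makes the symmetry bookkeeping completely transparent; the paper's fix-an-arrow device cuts the search space in half up front but then has to resolve residual isomorphisms (when both $\iota\to\rho$ and $\iota\to\sigma$ are present) by hand in the ``direct inspection'' of the cases $k=2,3,4$. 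Both routes are equally elementary and equally valid.
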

\begin{proof}
Consider the network of Figure \ref{FIG:SUB5} as a template for all other potential examples.
We can see that there are $2^5=32$ possibilities for choosing the remaining $5$ arrows (A, B, C, D, E) between the three nodes $\iota$, $\rho$, $\sigma$ (the arrow from $\iota$ to $\rho$ is fixed throughout the proof).
The following conditions exclude a potential example: (i) the network is \emph{disconnected}, (ii) the network is \emph{non-core}, that is , one (or more) of the nodes does not belong to cycle with the $\iota$ node, and so can be removed reducing to a network with $<3$ nodes.
Now, the potential examples can be partitioned according to the number of additional arrows. 
The number of potential examples with $k=0,\ldots,5$ additional arrows is $\binom{5}{k}$.
It is easy to see that $k=0$ and $k=1$ only give non-connected or non-core networks. The case $k=5$ corresponds to all $5$ arrows present and is a actual example, shown in  Figure \ref{FIG:SUB5}.
The three remaining cases $k=2,3,4$ can be analyzed by direct inspection.
See \cite{antoneli2024} for details.
\qed
\end{proof}

\begin{figure*}[!htb]
\begin{center}
\begin{subfigure}{.4\textwidth}
  \centering
  \includegraphics[scale=0.35]{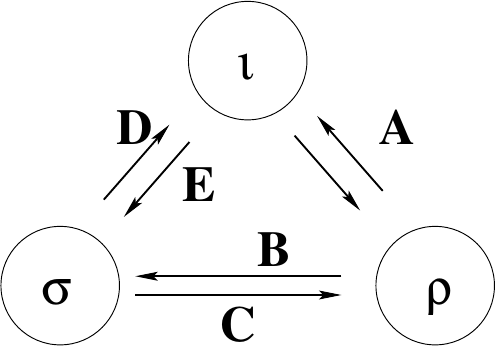}
  \caption{Networks with $5+1$ arrows}
  \label{FIG:SUB5}
\end{subfigure}
\end{center}
\begin{center}
\begin{subfigure}{\textwidth}
  \centering
  \includegraphics[scale=0.35]{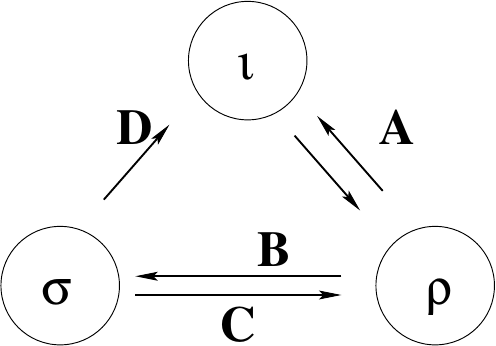} \quad
  \includegraphics[scale=0.35]{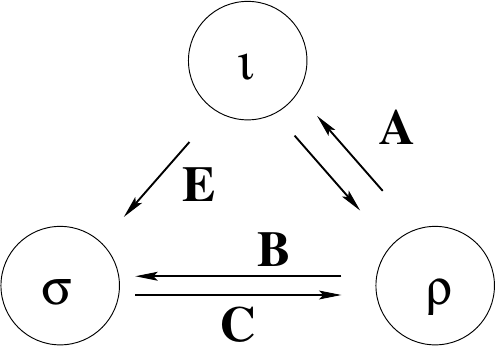} \quad
  \includegraphics[scale=0.35]{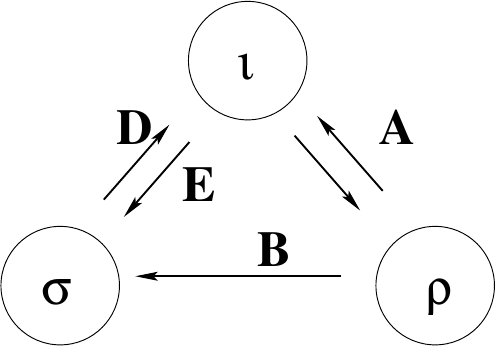}
  \caption{Networks with $4+1$ arrows.}
  \label{FIG:SUB4}
\end{subfigure}
\end{center}
\begin{center}
\begin{subfigure}{\textwidth}
  \centering
  \includegraphics[scale=0.35]{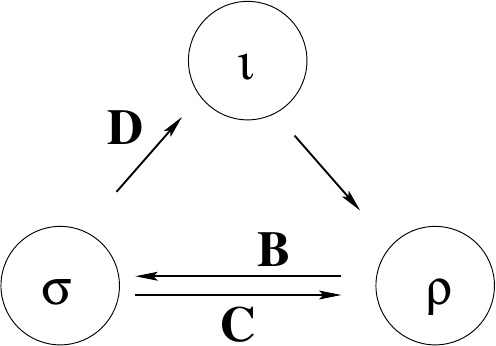} \quad
  \includegraphics[scale=0.35]{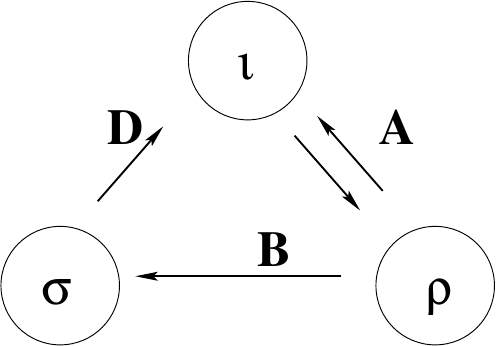} \quad
  \includegraphics[scale=0.35]{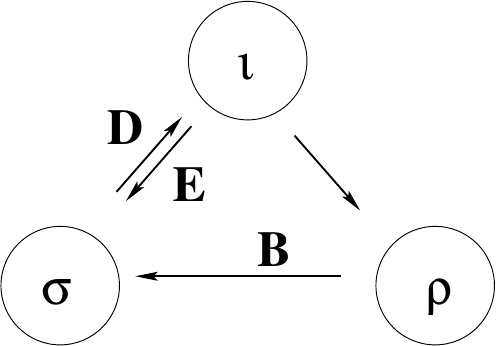} \quad
  \includegraphics[scale=0.35]{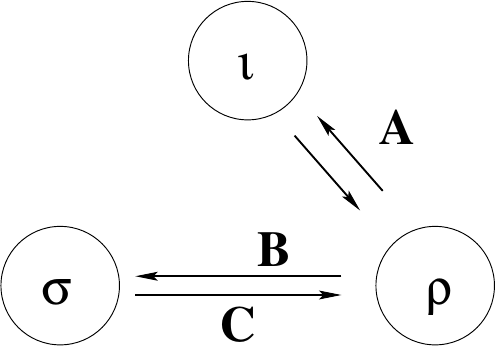} \quad
  \includegraphics[scale=0.35]{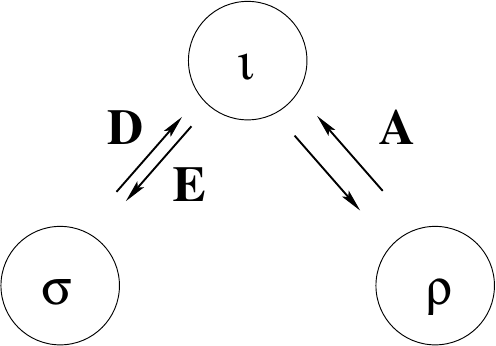}
  \caption{Networks with $3+1$ arrows.}
  \label{FIG:SUB3}
\end{subfigure}
\end{center}
\begin{center}
\begin{subfigure}{.4\textwidth}
  \centering
  \includegraphics[scale=0.35]{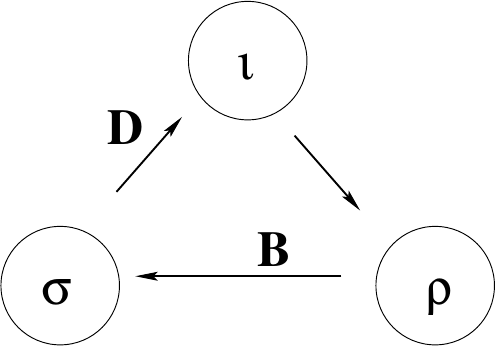}
  \caption{Networks with $2+1$ arrows.}
  \label{FIG:SUB2}
\end{subfigure}
\end{center}
\caption{\label{FIG:TEMPLATE} Three node input $=$ output networks, up to isomorphism.}
\end{figure*}

Theorem \ref{THM:APPENDAGE_NET} implies that the kinds of homeostasis that can occur in an input $=$ output network correspond to the appendage path components.
Thus it is natural to collect the isomorphism classes of networks into families according to the number of path components of the corresponding condensation appendage network.

\begin{definition} \normalfont
Define the following two families of $3$-node input $=$ output core networks according to the number of path components of the corresponding condensation appendage network:
\begin{description}
\item[Family 1:] the condensation appendage network has $1$ path component. 
In this case the appendage network is of the form $\{\sigma \rightleftharpoons \rho\}$ and the condensation appendage network is a singleton.
The $3$-node networks of Figure \ref{FIG:TEMPLATE} that belong to family 1 are: (a), (1b), (2b), (1c), (4c).
\item[Family 2:] the condensation appendage network has $2$ path components. 
In this case the the appendage network is of the form $\{\sigma \leftarrow \rho\}$ (subfamily 2A) or $\{\sigma \; \rho\}$ (subfamily 2B).
The $3$-node networks of Figure \ref{FIG:TEMPLATE} that belong to family 2 are: (3b), (2c), (3c), (d) (subfamily 2A) and (5c) (subfamily 2B).
In both cases, the condensation appendage network is equal to the appendage network and the path components are given by two singletons $\{\sigma\}$ and $\{\rho\}$. 
\END
\end{description}
\end{definition}

Now we can classify the homeostasis subnetworks of the $3$-node networks by direct application of theorem \ref{THM:APPENDAGE_NET}.

\begin{theorem} \label{thm:det_i=o}
Consider a $3$-node input $=$ output core network $\mathcal{G}$.
Then the determinant of the homeostasis matrix $H$ is as follows:
\begin{description}
\item[Family 1:] $\det(H)=f_{\sigma,x_\sigma}\, f_{\rho,x_{\rho}}-f_{\sigma,x_{\rho}} \, f_{\rho,x_{\sigma}}$.
\item[Family 2:] $\det(H)=f_{\sigma,x_\sigma}\, f_{\rho,x_{\rho}}$.
\end{description}
That is, homeostasis in a $3$-node input $=$ output core network is caused (i) by null-degradation on one of the two regulatory nodes or (ii) by the $2$-node appendage subnetwork consisting of the two regulatory nodes.
\end{theorem}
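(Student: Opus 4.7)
The plan is to reduce Theorem \ref{thm:det_i=o} to a direct application of Theorem \ref{THM:APPENDAGE_NET}, which already expresses $\det(H)$ as a product of Jacobian determinants of the path components of the appendage subnetwork. The real content is just to match the two families defined in the paper with the two possible condensation structures on a two-node appendage network, and then to compute each resulting determinant.

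First I would invoke Lemma \ref{THM:APPENDAGE} to conclude that in any 3-node core input $=$ output network $\mathcal{G}$, both regulatory nodes $\sigma, \rho$ are appendage. Hence the appendage subnetwork $\mathcal{A}_\mathcal{G}$ consists of $\{\sigma,\rho\}$ together with whichever of the arrows $\sigma\to\rho$, $\rho\to\sigma$ appear in $\mathcal{G}$. By the admissibility convention that $f_{j,x_k}\equiv 0$ precisely when there is no arrow $k\to j$, the homeostasis matrix $H$ defined in \eqref{homeostasis_matrix_definition} coincides with the Jacobian
\[
H \;=\; J(\mathcal{A}_\mathcal{G}) \;=\; \begin{pmatrix} f_{\sigma,x_\sigma} & f_{\sigma,x_\rho} \\ f_{\rho,x_\sigma} & f_{\rho,x_\rho} \end{pmatrix},
\]
where one or both off-diagonal entries vanish identically when the corresponding inter-regulatory arrow is absent in $\mathcal{G}$.

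Next I would split into the two families. In Family 1 both arrows $\sigma\to\rho$ and $\rho\to\sigma$ are present, so $\mathcal{A}_\mathcal{G}$ is strongly connected and its condensation has a single node. Theorem \ref{THM:APPENDAGE_NET} gives $\det(H)=\det J(\mathcal{A}_\mathcal{G})$, which by $2\times 2$ expansion equals $f_{\sigma,x_\sigma}f_{\rho,x_\rho}-f_{\sigma,x_\rho}f_{\rho,x_\sigma}$. In Family 2 at most one of the two inter-regulatory arrows is present; admissibility forces the corresponding entry of $H$ to vanish, so $H$ is either triangular (subfamily 2A) or diagonal (subfamily 2B). In both cases $\det(H)=f_{\sigma,x_\sigma}f_{\rho,x_\rho}$, which matches the factorization over the two singleton path components $\{\sigma\}$ and $\{\rho\}$ predicted by Theorem \ref{THM:APPENDAGE_NET}.

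There is essentially no hard step: the only thing to check is that the division into families by number of condensation path components agrees with the inspection-based grouping of the ten networks in Figure \ref{FIG:TEMPLATE}, and this is immediate because a two-node directed graph is strongly connected iff both directed arrows are present. The biological interpretation in the last sentence of the theorem then follows: either one of the factors $f_{\sigma,x_\sigma}$ or $f_{\rho,x_\rho}$ vanishes (null-degradation on a single regulatory node), or, in Family 1 only, the full $2\times 2$ determinant vanishes while both self-couplings remain nonzero (homeostasis generated by the 2-node appendage cycle $\sigma\rightleftharpoons\rho$).
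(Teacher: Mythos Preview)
Your proposal is correct and follows essentially the same approach as the paper: both reduce the computation to Theorem \ref{THM:APPENDAGE_NET}. The paper's proof is the single line ``Apply theorem \ref{THM:APPENDAGE_NET} to each network in Figure \ref{FIG:TEMPLATE},'' whereas you give a uniform argument based directly on the family definitions (one versus two path components in the condensation), which avoids inspecting the ten networks one by one; this is a mild streamlining but not a different idea.
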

\begin{proof}
Apply theorem \ref{THM:APPENDAGE_NET} to each network in Figure \ref{FIG:TEMPLATE}.
\qed
\end{proof}

\begin{corollary} \label{thm:core_equiv_3_i=o}
Two $3$-node input $=$ output core networks are core equivalent if and only their condensation appendage networks have the same number of path components.
\end{corollary}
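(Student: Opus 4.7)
The plan is to deduce the corollary directly from Theorem \ref{thm:det_i=o}, which already partitions the homeostasis determinants of $3$-node input $=$ output core networks into exactly two polynomial expressions according to the family. Since core equivalence is defined in Definition \ref{D:backward} as equality of $\det(H)$ as multivariate polynomials in the linearized couplings $f_{j,x_k}$, the result reduces to checking that (i) networks in the same family yield the same polynomial and (ii) the two polynomials attached to Families 1 and 2 are genuinely distinct as polynomials.

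For the forward direction, suppose $\mathcal{G}_1$ and $\mathcal{G}_2$ are two $3$-node input $=$ output core networks whose condensation appendage networks have the same number of path components. After labeling the regulatory nodes uniformly as $\sigma$ and $\rho$ (which is allowed up to isomorphism), Theorem \ref{thm:det_i=o} assigns them the same polynomial: either
\[
\det(H_i)=f_{\sigma,x_\sigma}\,f_{\rho,x_\rho}-f_{\sigma,x_\rho}\,f_{\rho,x_\sigma}\qquad (i=1,2)
\]
if both lie in Family 1, or
\[
\det(H_i)=f_{\sigma,x_\sigma}\,f_{\rho,x_\rho}\qquad (i=1,2)
\]
if both lie in Family 2. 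Hence $\det(H_1)=\det(H_2)$ as polynomials, which by Definition \ref{D:backward} is core equivalence.

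For the reverse direction, suppose $\mathcal{G}_1$ is in Family 1 and $\mathcal{G}_2$ is in Family 2 (or vice versa). Their homeostasis determinants differ by the monomial $f_{\sigma,x_\rho}\,f_{\rho,x_\sigma}$, and since $f_{\sigma,x_\rho}$ and $f_{\rho,x_\sigma}$ are independent indeterminates in the polynomial ring generated by the linearized couplings, this monomial does not vanish identically. Therefore the two polynomials are not equal, and the networks cannot be core equivalent.

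The only step that requires any care is ensuring that the naming of the regulatory nodes is compatible across the two networks being compared; but this is purely a relabeling issue, and because the Family 2 determinant $f_{\sigma,x_\sigma}\,f_{\rho,x_\rho}$ is symmetric under swapping $\sigma\leftrightarrow\rho$ and the Family 1 determinant is likewise symmetric under this swap, the comparison of polynomials is unambiguous. There is no genuine obstacle: the entire content of the corollary is already packaged in Theorem \ref{thm:det_i=o}, so the proof should reduce to a one-line appeal to that theorem plus the polynomial-distinctness observation above.
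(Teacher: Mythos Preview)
Your proposal is correct and is essentially the intended argument: the paper states the corollary immediately after Theorem~\ref{thm:det_i=o} without proof precisely because it follows by comparing the two explicit polynomials given there. Your observation that both polynomials are invariant under the swap $\sigma\leftrightarrow\rho$ neatly disposes of the only potential ambiguity.
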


Combining corollary \ref{thm:core_equiv_3_i=o}
with theorem \ref{thm:class_3_node_core_i_o}
gives a complete classification of all $3$-node core networks, up to core equivalence.

\begin{corollary}[{All $3$-node core networks}]
\label{thm:class_3_node_core_i=o}
There are $5$ core equivalence clas\-ses of $3$-node input-output core networks.
Representatives are given by: (i) the Feedforward Excitation motif, (ii) the Substrate Inhibition motif, (iii) the Negative Feedback motif, (iv) the $3$-node Loop motif, Figure \ref{FIG:TEMPLATE}(d) and (v) the Antithetic Feedback motif, Figure \ref{FIG:TEMPLATE}(1c).
Any other $3$-node input-output network can be obtained from these $5$ representatives by adding and /or removing backward arrows.
\end{corollary}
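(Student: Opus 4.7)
The plan is to assemble the final count by combining the two classification results already established: Theorem \ref{thm:class_3_node_core_i_o} for the case $\iota\neq o$ and Corollary \ref{thm:core_equiv_3_i=o} for the case $\iota=o$. Since the input and output labels are part of the datum of an input-output network, no network with $\iota\neq o$ can be core equivalent to one with $\iota=o$, so the two cases contribute independently to the overall list of core equivalence classes.

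First I would invoke Theorem \ref{thm:class_3_node_core_i_o} to get exactly three core equivalence classes in the case $\iota\neq o$, represented by the Feedforward Excitation motif, the Substrate Inhibition motif, and the Negative Feedback motif, with backward arrows accounting for all remaining networks in each class. Next, in the case $\iota=o$, I would apply Corollary \ref{thm:core_equiv_3_i=o}, which reduces core equivalence to the number of path components of the condensation appendage network. By Lemma \ref{THM:APPENDAGE} the appendage subnetwork consists of the two regulatory nodes $\sigma,\rho$, so the condensation appendage network has either $1$ path component (when $\sigma$ and $\rho$ sit on a common cycle inside the appendage subnetwork, i.e.\ Family 1) or $2$ path components (Family 2). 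This gives exactly two core equivalence classes in the $\iota=o$ case, hence $3+2=5$ in total.

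Then I would pick explicit representatives: for Family 1 the Antithetic Feedback motif of Figure \ref{FIG:TEMPLATE}(1c), whose appendage subnetwork is the $2$-cycle $\{\sigma\rightleftharpoons\rho\}$; for Family 2 the $3$-node Loop of Figure \ref{FIG:TEMPLATE}(d), whose appendage subnetwork splits as $\{\sigma\leftarrow\rho\}$ with two singleton path components. To finish, I would check that passing from any $\iota=o$ network in Figure \ref{FIG:TEMPLATE} to the chosen representative of its family is achieved by addition or removal of backward arrows, i.e.\ arrows whose head is $\iota=o$ or whose tail is $\iota=o$; by the input $=$ output analogue of Corollary \ref{P:core_equivalent} such arrows do not appear in the homeostasis matrix $H$ (which involves only derivatives $f_{\rho,x_\rho}$ with $\rho$ a regulatory node), so they leave the core equivalence class unchanged.

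The main obstacle will be the bookkeeping in this last step: one has to verify, network by network in Figure \ref{FIG:TEMPLATE}, that the arrows distinguishing it from the chosen representative of its family are genuinely backward in the input $=$ output sense, and that the families are assigned correctly. This is a finite check over the ten isomorphism classes in Figure \ref{FIG:TEMPLATE}, and is straightforward given Theorem \ref{thm:det_i=o}, which shows that $\det(H)$ within each family depends only on the partial derivatives indexed by the regulatory nodes. With these verifications in hand, the classification combines the three classes of Theorem \ref{thm:class_3_node_core_i_o} with the two new ones to yield exactly the five stated representatives, completing the proof.
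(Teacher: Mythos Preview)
Your proposal is correct and follows essentially the same approach as the paper: the paper states immediately before this corollary that the result follows by combining Corollary~\ref{thm:core_equiv_3_i=o} with Theorem~\ref{thm:class_3_node_core_i_o}, and gives no further argument. Your write-up is in fact more detailed than the paper's, correctly noting that the $\iota\neq o$ and $\iota=o$ cases contribute independently and that the backward-arrow verification in the input $=$ output case requires some care (a point the paper addresses separately in the Remark following the corollary).
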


\begin{remark}[Backward arrows] \normalfont
The notion of backward arrow in the input $=$ output case is more subtle.
Let $\mathcal{G}$ be an input $=$ output core network.
A \emph{backward arrow} in $\mathcal{G}$ is any arrow coming into or out of the input node such that after its removal  $\mathcal{G}$ remains a core network.
The problem here arises when we add and / or remove several backward arrows.
The order in which the arrows are added and / or removed matters.
For example, networks (1c) and (4c) in Figure \ref{FIG:TEMPLATE} are core equivalent and so each one can be converted into the other by adding and /or removing backward arrows.
To convert (1c) and (4c) one first adds the arrow $\rho\to \iota$ and then removes the arrow $\sigma\to\iota$.
To convert (4c) to (1c) one first adds the arrow $\sigma\to\iota$ and then removes the arrow $\rho\to \iota$.
In this last case, if one reverses the procedure and first removes the arrow $\rho\to \iota$ the resulting network becomes non-core.
The same happens with the first case, if one first removes the arrow $\sigma\to\iota$ the resulting network becomes non-core.
\END
\end{remark}

\begin{example}[{$3$-Node Feedback Loop}] \normalfont
\label{EX:THREE_NODE_LOOP}
Consider the $3$-Node Feedback Loop motif in Figure \ref{FIG:TEMPLATE}(d).
It is the only three node input $=$ output with three arrows.
The system of ODEs associated to this network is
\[
\begin{aligned}
 \dot{x}_{\iota} & = f_{\iota}(x_{\iota},x_{\sigma},\mathcal{I}) \\
 \dot{x}_{\rho} & = f_{\rho}(x_{\rho},x_{\iota}) \\
 \dot{x}_{\sigma} & = f_{\sigma}(x_{\sigma},x_{\rho})
\end{aligned}
\]
The Jacobian is 
\[
J=\begin{pmatrix}
  f_{\iota, x_{\iota}} & 0 & f_{\iota, x_\sigma} \\
  f_{\rho, x_{\iota}} & f_{\rho, x_\rho} & 0 \\
  0 & f_{\sigma, x_{\rho}} & f_{\sigma,x_{\sigma}}
\end{pmatrix}
\]
The homeostasis matrix is 
\[
H=\begin{pmatrix}
  f_{\rho, x_\rho} & 0 \\
  f_{\sigma, x_{\rho}} & f_{\sigma,x_{\sigma}}
\end{pmatrix}
\]
Thus
\[
\det(H)=f_{\rho, x_\rho}\,f_{\sigma,x_{\sigma}}
\]
That is, infinitesimal homeostasis can occur by null-de\-gra\-da\-tion with respect to regulatory nodes $\rho$ or $\sigma$.
\END
\end{example}

\begin{example}[{Antithetic Feedback}] \normalfont
\label{EX:ANTITHETIC_FB}
Consider the $3$-node Loop motif in Figure \ref{FIG:TEMPLATE}(1c).
This is the network for the copper homeostasis model.
This network have been considered by several authors \cite{briat2016,ferrell2016,aoki2019,tyson2023} because of its capability to exhibit perfect homeostasis.
The system of ODEs associated to this network is
\[
\begin{aligned}
 \dot{x}_{\iota} & = f_{\iota}(x_{\iota},x_{\sigma},\mathcal{I}) \\
 \dot{x}_{\rho} & = f_{\rho}(x_{\rho},x_{\iota},x_{\sigma}) \\
 \dot{x}_{\sigma} & = f_{\sigma}(x_{\sigma},x_{\rho})
\end{aligned}
\]
The Jacobian is 
\[
J=\begin{pmatrix}
  f_{\iota, x_{\iota}} & 0 & f_{\iota, x_\sigma} \\
  f_{\rho, x_{\iota}} & f_{\rho, x_\rho} & f_{\rho,x_{\sigma}} \\
  0 & f_{\sigma, x_{\rho}} & f_{\sigma,x_{\sigma}}
\end{pmatrix}
\]
The homeostasis matrix is 
\[
H=\begin{pmatrix}
  f_{\rho, x_\rho} & f_{\rho,x_{\sigma}} \\
  f_{\sigma, x_{\rho}} & f_{\sigma,x_{\sigma}}
\end{pmatrix}
\]
Thus
\[
\det(H)=
f_{\rho,x_\rho}\, f_{\sigma,x_{\sigma}}
- f_{\rho,x_{\sigma}}\, f_{\sigma,x_{\rho}}
\]
That is, infinitesimal homeostasis can occur by the $2$-node appendage subnetwork $\{\rho \rightleftharpoons \sigma\}$.
The model considered in \cite{briat2016,ferrell2016,aoki2019,tyson2023} has an additional feature that makes it capable of exhibiting perfect homeostasis.
It satisfies the \emph{antithetic identities}
\begin{equation} \label{e:antithetic_condition}
\begin{split}
f_{\rho,x_{\rho}} & \equiv f_{\sigma,x_{\rho}} \\
f_{\sigma,x_{\sigma}} & \equiv f_{\rho,x_{\sigma}}
\end{split}
\end{equation}
By plugging these identities in the expression for $\det(H)$ one easily obtains
\[
\det(H) \equiv 0
\]
Interestingly, the model equations for copper homeostasis do not satisfy the antithetic identities \eqref{e:antithetic_condition}.
Nevertheless, it has been shown in \cite{andrade2022} that this model can exhibit infinitesimal homeostasis.
\END
\end{example}

In Example \ref{ex:NF_loop} we analyzed the Negative Feedback Loop and showed that null-degradation together with linear stability implies negative feedback between the regulatory node that causes homeostasis and the output node.
It would be interesting to know if it makes sense to ask the same for the core equivalence class of the antithetic feedback, namely, is it true that appendage homeostasis together with stability implies `negative feedback'?
The first step to answer this question would be to make it more precise what it means `negative feedback' in the context of $3$-node input $=$ output networks.

\subsection{Bacterial Chemotaxis}
\label{SS:BAC_CHEMO}

The mathematical modeling of chemotaxis can be roughly divided into two types: single cell models and bacterial population models~\cite{tindall2008a,tindall2008b}. 
Single cell models consider the activation of the flagellar motor by detection of attractants and repellents in the extracellular medium. 
The flagellar motor activity of bacteria is regulated by a signal transduction pathway, which integrates changes of environmental chemical concentrations into a behavioral response. 
Assuming mass-action kinetics, the reactions in the signal transduction pathway can be modeled mathematically by ODEs. 
The bacterial population models describe evolution of bacterial density by parabolic PDEs involving an anti-diffusion `chemotaxis term' proportional to the gradient of the chemoattractant, thus allowing movement up-the-gradient, the most prominent feature of chemotaxis. 
The most extensively studied bacterial population models are the Patlak-Keller-Segel (PKS) type models.

Understanding the response of \textit{E. coli} cells to external attractants has been the subject of experimental work and mathematical models for nearly 40 years. 
In fact, many models of the chemotaxis have been formulated and developed to provide a comprehensive description of the cellular processes and include details of receptor methylation, ligand-receptor binding and its subsequent effect on the biochemical signaling cascade, along with a description of motor driving CheY/CheY-P levels, the main output of the chemotaxis system  (see~\cite{tindall2008a} for a survey).

However, including such detail has often led to very complex mathematical models consisting of tens of governing differential equations, making mathematical analysis of the underlying cellular response difficult, if not in many cases, impossible. 
A model proposed by Clausznitzer \etal~\cite{clausznitzer2010} has sought to provide a comprehensive description of the \textit{E. coli} response, by coupling a simplified statistical mechanical description of receptor methylation and ligand binding, with the signaling cascade dynamics. 
The model consists of five nonlinear ordinary differential equations (ODEs) and is parametrized using data from the literature. The authors were able to show that the model is in good agreement with experimental findings. 
However, being a fifth-order nonlinear ODE model, it is difficult to treat analytically. 
More recently, Edgington and Tindall \cite{edgington2018} undertook a comprehensive mathematical analysis of a number of simplified forms of the model of~\cite{clausznitzer2010} and proposed a fourth-order reduction of this model that has been used previously in the theoretical literature~\cite{edgington2015}.

In the following we shall consider the model proposed by \cite{edgington2015,edgington2018}. 
It has four variables for the concentrations of CheA/CheA-P ($a_p$), CheY/CheY-P ($y_p$), CheB/CheB-P ($b_p$) and the receptor methylation ($m$) and is given by the following system of ODEs (in non-dimensional form):
\begin{equation} \label{original_e_coli}
\begin{aligned}
& \frac{d m}{d t} = \gamma_{R}\, (1 - \phi(m,L)) - \gamma_{B} \, \phi(m,L) \, b_{p}^{2} \\
& \frac{d a_{p}}{d t} = \phi(m,L) \, k_{1} \, (1 - a_{p}) \\
& \qquad\quad - k_{2} \, (1 - y_{p})\, a_{p} - k_{3} \, (1 - b_{p}) \, a_{p} \\
& \frac{d y_{p}}{d t} = \alpha_{1} \, k_{2} \, (1 - y_{p})a_{p} - k_{4} \, y_{p} \\
& \frac{d b_{p}}{d t} = \alpha_{2} \, k_{3} \, (1 - b_{p}) \, a_{p} - k_{5} \, b_{p}
\end{aligned}
\end{equation}
where $\gamma_B$, $\gamma_R$, $k_1,\ldots,k_5$ are non-dimensional parameters, the extracellular ligand concentration $L$ is the \emph{external parameter} and the function $\phi$ is determined by a Monod\--Wyman\--Changeux (MWC) description of receptor clustering
\begin{equation} \label{definition_phi}
    \phi (m,L) = \frac{1}{1 + e^{F(m,L)}} 
\end{equation}
with
\begin{equation} \label{definition_F}
    F(m,L) = N\left[ 1 - \frac{m}{2} + \log \left( \frac{1 + \frac{L}{K_{a}^{\textrm{off}}}}{1 + \frac{L}{K_{a}^{\textrm{on}}}}\right) \right]
\end{equation}

The abstract input-output network obtained from system \ref{original_e_coli} is shown in Figure \ref{f:e_coli_network}.
Here, the variables in the original system correspond to the nodes as follows: $m \leftrightarrow x_{\iota_1}$, $a_p \leftrightarrow x_{\iota_2}$, $b_p \leftrightarrow x_{\sigma}$, $y_p \leftrightarrow x_{o}$, $L \leftrightarrow \mathcal{I}$.

\begin{figure}[!ht]
\centering
\includegraphics[width=0.6\linewidth]{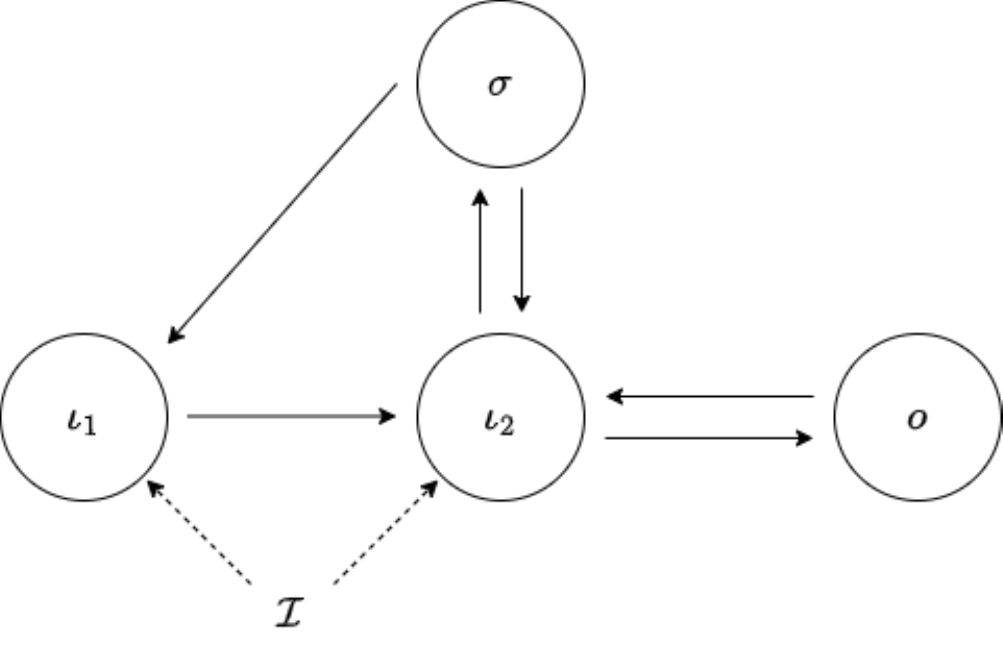}
\caption{\label{f:e_coli_network} Network corresponding to the model equations \eqref{original_e_coli} for \emph{E. coli chemotaxis.}}
\end{figure}

The problem here is that we cannot apply the theory of \cite{wang2021} to this network since it has two input nodes.
The extension of the theory to this new situation has been obtained by Madeira and Antoneli \cite{madeira2022}.

The key observation of~\cite{edgington2018} is that system \eqref{original_e_coli} has a unique asymptotically stable equilibrium 
\[
X^*=(m^*,a_p^*,y_p^*,b_p^*)
\]
with $a_p^*$, $y_p^*$ and $b_p^*$ positive and $m^{*}$ is a real number, for the non-dimensional parameters obtained from the parameter values originally used in~\cite{clausznitzer2010}. 
Furthermore, \cite{edgington2018} were able to show that some pairs of parameters might yield oscillatory behavior, but in regions of parameter space outside that observed experimentally. 
This was done by carrying out the stability analysis for pair-wise parameter variations, whereby for each case the occurrence of at least two non-zero imaginary parts was recorded as indicating possible oscillatory dynamics. 
The steady-state $X^*$ can be easily found by numerical integration for parameter values that are experimentally valid, although some combinations of parameters produce a large stiffness coefficient~\cite{clausznitzer2010}.

More importantly, the stability of $X^*$ persists as $L$ is varied in the range $D=(0,+\infty)$. 
By standard arguments this implies that there is a well-defined smooth mapping $L\mapsto X^*(L)=\big(m^*(L),a_p^*(L),y_p^*(L),b_p^*(L)\big)$. 
Since the values of $a_p^*$, $y_p^*$ and $b_p^*$ are independent of $L$ (see~\cite{edgington2018}), it follows that the individual component functions $a_p^*(L)$, $y_p^*(L)$ and $b_p^*(L)$ are actually constant functions with respect to $L$.

\begin{figure}[!ht]
\centering
\includegraphics[width=\linewidth,trim=0cm 1cm 0cm 2.5cm,clip=true]{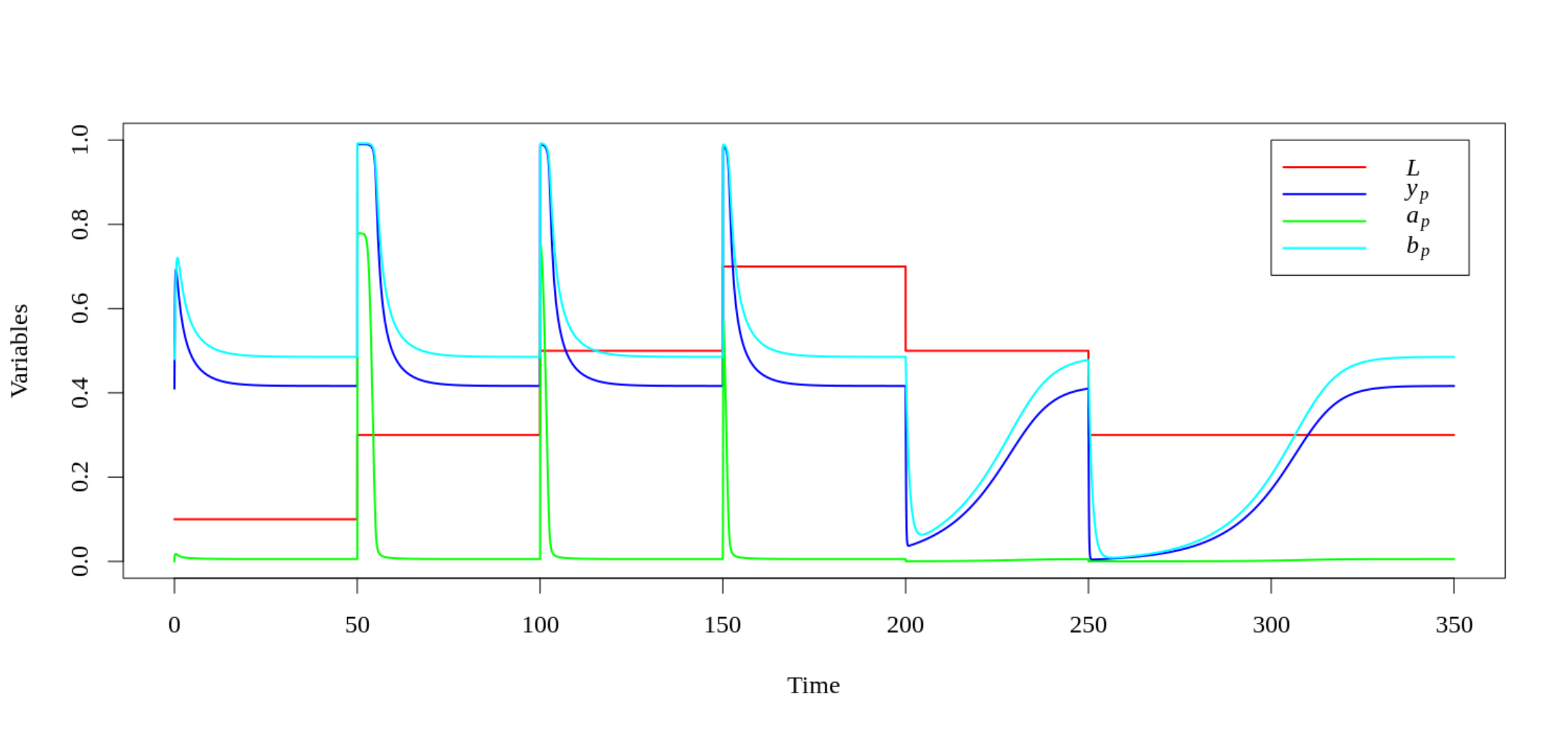}
\caption{\label{e_coli_plot1} Time series of the model \eqref{original_e_coli}, showing perfect homeostasis of the three variables $a_p$ (green), $b_p$ (cyan), $y_p$ (blue) at the non-dimensional equilibrium given by $a_p^* = 5.5 \times 10^{-3}$, $b_p^* = 0.48$, $y_p^* = 0.41$. Input parameter $L$ is given by a step function (red curve). The parameters were set to non-dimensional values of \cite[Table~2]{edgington2018}.  Time series were computed using the software \textsc{XPPAut} \cite{bard2002}.}
\end{figure}

In Figure~\ref{e_coli_plot1} we show the time series of the three variables $a_p$, $y_p$ and $b_p$ and how they are perturbed when the parameter $L$ is varied by sudden jumps. After a transient, which depends on the contraction rates at the equilibrium point, each variable returns to its corresponding steady-state value. Moreover, when the parameters $\gamma_B$, $\gamma_R$, $k_1,\ldots,k_5$ are changed the equilibrium $X^*$ also changes, but the new functions $a_p^*(L)$, $y_p^*(L)$ and $b_p^*(L)$ remain constant with respect to $L$, possibly with different values.  This is exactly the robust perfect homeostasis that was shown to occur in other models of chemotaxis \cite{barkai1997,alon1999,mello2003,yi2000,kollmann2005,hansen2008}.

\subsubsection{Multiple input-node networks}

A \emph{multiple input-node input-output network} is a network $\mathcal{G}$ with $n$ distinguished \emph{input nodes} $\iota=\{\iota_{1}, \iota_{2}, \ldots, \iota_{m}\}$, all of them associated to the same input parameter $\mathcal{I}$, one distinguished \emph{output node} $o$, and $M$ \emph{regulatory nodes} $\rho=\{\rho_1,\ldots,\rho_M\}$.
Here, we explicitly assume that the output node is distinct from all the input nodes.
The associated network systems of differential equations have the form
\begin{equation} \label{admissible_systems_ODE_multiple_input_nodes}
\begin{aligned}
\dot{x}_{\iota} & = f_{\iota}(x_{\iota}, x_{\rho}, x_{o}, \mathcal{I}) \\
\dot{x}_{\rho} & = f_{\rho}(x_{\iota}, x_{\rho}, x_{o})\\
\dot{x}_{o} & = f_{o}(x_{\iota}, x_{\rho}, x_{o})
\end{aligned}
\end{equation}
where $\mathcal{I}\in\mathbb{R}$ is an \emph{external input parameter} and $X=(x_{\iota},x_{\rho},x_o)\in\mathbb{R}^m\times\mathbb{R}^M\times\mathbb{R}$ is the vector of state variables associated to the network nodes.

An admissible vector field associated with the system \eqref{admissible_systems_ODE_multiple_input_nodes} can be witten as
\[
F(X,\mathcal{I})=(f_{\iota}(X,\mathcal{I}),f_\rho(X),f_o(X))
\]
Finally, we assume the \emph{genericity condition}
\eqref{eq:genericity_condition}, which in this case reads
\begin{equation*} \label{e:f_iota_I}
 f_{\iota_k,\mathcal{I}} \neq 0
\end{equation*}
generically, for all $1 \leq k \leq m$.

Let $J$ be the $(m+M+1)\times (m+M+1)$ Jacobian matrix of an admissible vector field $F=(f_{\iota},f_{\sigma},f_{o})$, that is,
\begin{equation} \label{jacobian_multiple}
J = \begin{pmatrix}
  f_{\iota, x_\iota}   &  f_{\iota, x_\rho} & f_{\iota, x_o} \\
  f_{\rho, x_\iota}   &  f_{\rho, x_\rho} & f_{\rho, x_o} \\
  f_{o, x_\iota} &  f_{o, x_\rho} & f_{o, x_o} 
\end{pmatrix}
\end{equation}
The $(m+M+1)\times (m+M+1)$ matrix $\langle H \rangle$ obtained from $J$ by replacing the last column by $(-f_{\iota,\mathcal{I}},0,0)^t$, is called \emph{generalized homeostasis matrix}:
\begin{equation} \label{weighted_homeostasis_matrix_definition}
\langle H \rangle = 
\begin{pmatrix}
f_{\iota, x_\iota} &  f_{\iota, x_\rho} & -f_{\iota, \mathcal{I}} \\
f_{\rho, x_\iota}&  f_{\rho, x_\rho} & 0 \\
f_{o, x_\iota} &  f_{o, x_\rho} & 0
\end{pmatrix}
\end{equation}
Here all partial derivatives $f_{\ell,x_j}$ are evaluated at $\big(\tilde{X}(\mathcal{I}),\mathcal{I}\big)$.

\begin{lemma} \label{cramer_rule}
The input-output function $x_o(\mathcal{I})$ satisfies
\begin{equation} \label{xo'}
x_o'(\mathcal{I}) = \frac{\det\!\big(\langle H \rangle\big) }{\det(J)}
\end{equation}
Here, $\det(J)$ and $\det\!\big(\langle H \rangle\big)$ are evaluated at $\big(X(\mathcal{I}),\mathcal{I}\big)$. 
Thus, $\mathcal{I}_0$ is a point of infinitesimal homeostasis if and only if
\begin{equation} \label{xo'_reduced}
\det\!\big(\langle H \rangle\big) = 0
\end{equation}
at the equilibrium $\big(X(\mathcal{I}_0),\mathcal{I}_0\big)$.
\end{lemma}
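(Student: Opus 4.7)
The plan is to prove the formula by applying Cramer's rule to the linear system obtained from implicit differentiation of the equilibrium equation, exactly as was done in the proof of Lemma~\ref{t:z_nabla} but with a single scalar parameter and with the output coordinate projection as the observable. The main conceptual point is recognizing that the matrix $\langle H\rangle$ defined in \eqref{weighted_homeostasis_matrix_definition} is precisely the matrix obtained from $J$ by Cramer's substitution that computes the last coordinate of $X'(\mathcal{I})$.

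First, since $X(\mathcal{I})$ is assumed to be a linearly stable equilibrium, $\det(J) \ne 0$ on a neighborhood of $\mathcal{I}_0$, so $X(\mathcal{I})$ is smooth in $\mathcal{I}$ there. Differentiating the identity $F(X(\mathcal{I}),\mathcal{I}) \equiv 0$ with respect to $\mathcal{I}$ gives the linear system
\begin{equation*}
J\, X'(\mathcal{I}) \,=\, -F_{\mathcal{I}},
\end{equation*}
where, by the structure of \eqref{admissible_systems_ODE_multiple_input_nodes}, only the input-node components of $F$ depend on $\mathcal{I}$, hence
\begin{equation*}
F_{\mathcal{I}} \,=\, (f_{\iota,\mathcal{I}},\, 0,\, 0)^{t}
\end{equation*}
in the block decomposition corresponding to $(x_\iota, x_\rho, x_o)$.

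Next I would apply Cramer's rule to extract the last coordinate $x_o'(\mathcal{I})$ of $X'(\mathcal{I})$: it equals $\det(J_o)/\det(J)$, where $J_o$ is the matrix obtained from $J$ by replacing its last column (which is the partial-derivative column $(f_{\iota,x_o},\, f_{\rho,x_o},\, f_{o,x_o})^{t}$) by the right-hand side $-F_{\mathcal{I}} = (-f_{\iota,\mathcal{I}},\, 0,\, 0)^{t}$. Comparing with \eqref{weighted_homeostasis_matrix_definition}, this is exactly $\langle H\rangle$, yielding \eqref{xo'}. The equivalence of infinitesimal homeostasis at $\mathcal{I}_0$ with $\det(\langle H\rangle)=0$ then follows because $\det(J)\ne 0$ there.

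This proof is almost entirely a book-keeping exercise, and I do not anticipate a genuine obstacle. The only subtle point worth flagging is the sign convention: in the scalar-input case the minus sign in $-F_{\mathcal{I}}$ can be absorbed into the determinant (by placing $-f_{\iota,\mathcal{I}}$ directly into the column, as in \eqref{weighted_homeostasis_matrix_definition}), so the formula \eqref{xo'} appears without an external minus sign, in contrast to \eqref{e:z_prime}. Verifying that this sign is consistent with the single-input reduction of Theorem~\ref{lem:irreducible} (where $x_o'(\mathcal{I}) = \pm f_{\iota,\mathcal{I}}\det(H)/\det(J)$) simply amounts to cofactor-expanding $\det(\langle H\rangle)$ along the last column, which produces $-f_{\iota,\mathcal{I}}$ times the $(1,m{+}M{+}1)$-cofactor of $\langle H\rangle$; up to sign this cofactor is the standard homeostasis-matrix determinant of Wang~\etal.
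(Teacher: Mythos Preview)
Your proposal is correct and follows essentially the same route as the paper: implicit differentiation of $F(X(\mathcal{I}),\mathcal{I})=0$ gives the linear system $J\,X'(\mathcal{I})=-F_{\mathcal{I}}$, and Cramer's rule applied to the last coordinate yields \eqref{xo'} once one recognizes that the Cramer-substituted matrix is exactly $\langle H\rangle$. Your additional remarks on the sign convention and the consistency with Theorem~\ref{lem:irreducible} go slightly beyond what the paper records, but they are correct and do no harm.
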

\begin{proof}
Implicit differentiation of the equation $f(\tilde{X}(\mathcal{I}),\mathcal{I})=0$ with respect to $\mathcal{I}$ yields the linear system
\begin{equation} \label{imp_diff}
J \begin{pmatrix} x_i' \\ x_\rho' \\ x_o'\end{pmatrix} =
-\begin{pmatrix}f_{\iota, \mathcal{I}} \\ 0 \\ 0 \end{pmatrix}
\end{equation}
Since $\tilde{X}(\mathcal{I})$ is assumed to be a linearly stable equilibrium, it follows that $\det(J)\neq 0$. On applying Cramer's rule to \eqref{imp_diff} we can solve for $x_o'(\mathcal{I})$ obtaining \eqref{xo'}.
\qed
\end{proof}

The notion of core network can be extended to the multiple input node context and a similar result to theorem \ref{T:coreA} holds \cite[Thm. 3.2]{madeira2022}.

By expanding $\det(\langle H \rangle)$ with respect to the last column and each $\iota_k$ (input) row one obtains 
\begin{equation} \label{xo'_reduced_expand}
\det\!\big(\langle H \rangle\big) = \sum_{k=1}^m \pm f_{\iota_k,\mathcal{I}} \det(H_{\iota_k})
\end{equation}
Note that when there is a single input node, i.e. $m=1$, Lemma \ref{cramer_rule} gives the corresponding result obtained in~\cite{wang2021}. 
In this case, there is only one matrix $H_{\iota_m}=H$, the \emph{homeostasis matrix}, that played a fundamental role in the theory developed in \cite{wang2021}. Hence, it is expected that the matrices $H_{\iota_m}$ should play a similar role in the generalization of \cite{wang2021} to the multiple input node case.

\begin{definition} \label{definition_text_parcels_homeostasis_matrix}
Let $\mathcal{G}$ be an input-output network with $n$ input nodes and
$F$ be an admissible vector field.
The \emph{partial homeostasis matrix} $H_{\iota_k}$ of $F$ is obtained from the Jacobian matrix $J$ of $F$ by dropping the last column and the $\iota_k$ row
\begin{equation} \label{definition_parcels_homeostasis_matrix}
H_{\iota_{k}} = \begin{pmatrix} f_{\iota_{1}, x_{\iota_{1}}} & \cdots & f_{\iota_{1}, x_{\iota_{m}}} & f_{\iota_{1}, x_{\sigma}} \\
\vdots & \ddots & \vdots & \vdots \\ f_{\iota_{k-1}, x_{\iota_{1}}} & \cdots & f_{\iota_{k-1}, x_{\iota_{m}}} & f_{\iota_{k-1}, x_{\sigma}} \\ f_{\iota_{k+1}, x_{\iota_{1}}} & \cdots & f_{\iota_{k+1}, x_{\iota_{m}}} & f_{\iota_{k+1}, x_{\sigma}} \\  \vdots & \ddots & \vdots & \vdots \\
f_{\iota_{n}, x_{\iota_{1}}} & \cdots & f_{\iota_{n}, x_{\iota_{m}}} & f_{\iota_{n}, x_{\sigma}} \\
f_{\sigma, x_{\iota_{1}}} & \cdots & f_{\sigma, x_{\iota_{m}}} & f_{\sigma, x_{\sigma}} \\
f_{o, x_{\iota_{1}}} & \cdots & f_{o, x_{\iota_{m}}} & f_{o, x_{\sigma}} \end{pmatrix}
\end{equation}
When the network $\mathcal{G}$ has only one input node, eq. \eqref{definition_parcels_homeostasis_matrix} gives exactly the homeostasis matrix defined in \cite{wang2021}.
\END
\end{definition}

The classification of homeostasis types proceeds as in \cite{wang2021}. 
The first step is to apply Frobenius-K\"onig theory \cite{schneider1977,brualdi1991} to the generalized homeostasis matrix $\langle H \rangle$. 
More precisely, Frobenius-K\"onig theory implies that there exist (constant) permutation matrices $P$ and $Q$ such that
\begin{equation} \label{weighted_homeostasis_normal_form}
P \langle H \rangle Q = \begin{pmatrix} B_{1} & * & \cdots & * & * \\
0 & B_{2} & \cdots & * & * \\
\vdots & \vdots & \ddots & \vdots & \vdots \\
0 & 0 & \cdots & B_{s} & * \\
0 & 0 & \cdots & 0 & C
\end{pmatrix}
\end{equation}
where each diagonal block $B_{1}$, \ldots, $B_{s}$ and $C$ is irreducible (in the sense of \cite{brualdi1991}). 
Hence, $\det(B_{1})$, \ldots, $\det(B_{s})$ and $\det(C)$ are irreducible polynomials. 
As $P$ and $Q$ are constant permutation matrices, we have that
\begin{equation} \label{FK_factorization}
\det\!\big(\langle H \rangle\big) = \pm \det(B_{1}) \cdots \det(B_{s}) \cdot \det(C)
\end{equation}
In order to simplify nomenclature, we will call $B_{1}$, \ldots, $B_{s}$ and $C$ \emph{irreducible homeostasis blocks}, although in the literature the term irreducible matrix may have a different meaning (see \cite{schneider1977}).

A direct comparison of factorization \eqref{FK_factorization} with expansion 
\eqref{xo'_reduced_expand} suggests that the irreducible factors $\det(B_j)$ are the common factors of $\det(H_{\iota_m})$ and $\det(C)$ is a weighted alternating sum of $f_{\iota_{1},\mathcal{I}}, \ldots, f_{\iota_{n},\mathcal{I}}$. 
Indeed, as we show in \cite{madeira2022} the matrix $C$ in \eqref{weighted_homeostasis_normal_form} contains all the functions $f_{\iota_{\ell}, \mathcal{I}}$ as entries, that is, it is a homogeneous polynomial of degree $1$ on $f_{\iota_{1}, \mathcal{I}}, \ldots, f_{\iota_{n}, \mathcal{I}}$, whereas the matrices $B_{1}, \ldots, B_{s}$ do not contain any of them.

The next step is to classify the irreducible homeostasis blocks of $\langle H \rangle$ according to their number of self-couplings. 
Indeed, we show that each block $B_{j}$ of order $k_{j}$ has exactly $k_{j}$ or $k_{j}-1$ self-couplings \cite[Sec. 3.4]{madeira2022}. 
But, unlike \cite{wang2021}, in the multiple input nodes case we find \emph{three} classes (instead of two classes) of irreducible homeostasis blocks that may occur in core networks with multiple input nodes.

\begin{definition} \label{definition_appendage_and_structural} \rm
Let $B_{j}$ be an irreducible homeostasis block of order $k_{j}$ that does not contain any partial derivatives $f_{\iota_{m}, \mathcal{I}}$ with respect to $\mathcal{I}$.
The homeostasis class of $B_{j}$ is \emph{appendage} if $B_{j}$ has $k_{j}$ self-couplings and \emph{structural} if $B_{j}$ has $k_{j} - 1$ self-couplings.
\end{definition}
 
\begin{definition} \label{types_homeostasis} \rm
The network $\mathcal{G}$ exhibits \emph{appendage homeostasis} if there is an appendage irreducible homeostasis block $B_{j}$ such that $\det(B_{j}) = 0$. 
In an analogous way, $\mathcal{G}$ exhibits \emph{structural homeostasis} if there is a structural irreducible homeostasis block $B_{j}$ such that $\det(B_{j}) = 0$.
\end{definition}

As shown in \cite{wang2021}, appendage and structural homeostasis occur in core networks with one input node. 
Nevertheless, networks with multiple input nodes also exhibit a new class of homeostasis that is not found in networks with only one input node.

\begin{definition} \rm
\label{definition_input_counterweight_homeostasis_block}
Let $C$ be an irreducible homeostasis block whose determinant $\det(C)$ is a homogeneous polynomial of degree $1$ on the variables $f_{\iota_{1}, \mathcal{I}}, \ldots, f_{\iota_{n}, \mathcal{I}}$. We say that the homeostasis class of $C$ is \emph{input counterweight}. Moreover, we say that $\mathcal{G}$ exhibits \emph{input counterweight homeostasis} when $\det(C) = 0$.
\end{definition}

Although input counterweight homeostasis does not occur in networks with only one input node, it is interesting to note that in these networks there is a corresponding matrix $C$. In fact, from equation \eqref{weighted_homeostasis_matrix_definition} and considering networks with only one input node $\iota$, we have
\begin{equation}
    C = [- f_{\iota, \mathcal{I}}] \quad\Rightarrow\quad 
    \det(C) = - f_{\iota, \mathcal{I}} \neq 0
\end{equation}
As, by hypothesis, $f_{\iota, \mathcal{I}} \neq 0$, the input counterweight homeostasis is never present in networks with only one input node.
On the other hand, the block $C$ can always vanish in any network with more than one input node \cite[Cor. 3.9]{madeira2022}.

The final step in this theory is to identify a homeostasis subnetwork of $\mathcal{G}$ corresponding to each homeostasis block of $\langle H \rangle$ in such a way that each class of homeostasis corresponds to a distinguished class of subnetworks. 
Because of the appearance of a third homeostasis class, the extension of the results of \cite{wang2021} to the multiple input nodes is not a trivial extension.
See \cite{madeira2022} for more details.

\subsubsection{The \emph{E. coli} Model}

Since the network corresponding to the model equations \ref{original_e_coli} has two input nodes it cannot be analyzed using the methods of \cite{wang2021}. 
In fact this can be seen directly from the equations \ref{original_e_coli}, since both the equation for $m$ and $a_p$ depend on the external parameter $L$.

In order to apply the theory developed in \cite{madeira2022} to the model for the \textit{Escherichia coli} chemotaxis of \cite{edgington2018}, we first rewrite the system \eqref{original_e_coli} in the standard form \eqref{admissible_systems_ODE_multiple_input_nodes}.
Using the correspondence between variables defined before, we obtain the following system of ODEs
\begin{equation} \label{e_coli_network_notation}
\begin{aligned}
\dot{x}_{\iota_1} & = \gamma_{R}  \, (1 - \phi(x_{\iota_1}, \mathcal{I})) - \gamma_{B} \, x_{\sigma}^{2} \, \phi(x_{\iota_1}, \mathcal{I}) \\
\dot{x}_{\iota_{2}} & = \phi(x_{\iota_1}, \mathcal{I}) \, k_{1} \, (1 - x_{\iota_2}) \\
& \quad - k_{2} \, (1 - x_{o}) \, x_{\iota_2} - k_{3} \, (1 - x_{\sigma}) \, x_{\iota_2} \\
\dot{x}_{\sigma} & = \alpha_{2} \, k_{3} \, (1 - x_{\sigma}) \, x_{\iota_2} - k_{5} \, x_{\sigma} \\
\dot{x}_{o} & = \alpha_{1} \, k_{2} \, (1 - x_{o}) \, x_{\iota_2} - k_{4} \, x_{o}
\end{aligned}
\end{equation}
with the function $\phi$ given by \eqref{definition_phi} and input parameter $\mathcal{I}$.
The ODE system \eqref{e_coli_network_notation} is an admissible system for the abstract input-output network shown in Figure \ref{f:e_coli_network}. 
In fact, the general form of an admissible system for this network is
\begin{equation} \label{e_coli_general_admissible}
\begin{aligned}
\dot{x}_{\iota_1} & = f_{\iota_1}(x_{\iota_1},x_{\sigma},\mathcal{I}) \\
\dot{x}_{\iota_2} & = f_{\iota_2}(x_{\iota_1}, x_{\iota_2},x_{\sigma},x_{o},\mathcal{I}) \\
\dot{x}_{\sigma} & = f_{\sigma}(x_{\iota_2},x_{\sigma}) \\
\dot{x}_{o} & = f_{o}(x_{\iota_2},x_{o})
\end{aligned}
\end{equation}

From the theory developed in this paper, it is clear that the network $\mathcal{G}$ is a core network with input nodes $\iota_{1}$ and $\iota_{2}$ and output node $o$.
The Jacobian matrix of network $\mathcal{G}$ is:
\begin{equation} \label{jacobian_matrix_e_coli}
J = 
\begin{pmatrix} 
f_{\iota_1, x_{\iota_1}} & 0 & f_{\iota_1, x_{\sigma}} & 0 \\
f_{\iota_2, x_{\iota_1}} & f_{\iota_2, x_{\iota_2}} & f_{\iota_2, x_{\sigma}} & f_{\iota_2, x_o} \\
0 & f_{\sigma, x_{\iota_2}} & f_{\sigma, x_{\sigma}} & 0 \\
0 & f_{o, x_{\iota_2}} & 0 & f_{o,x_o}
\end{pmatrix}
\end{equation}
The generalised homeostasis matrix of network $\mathcal{G}$ is:
\begin{equation} \label{homeostasis_matrix_e_coli}
\langle H \rangle = 
\begin{pmatrix} 
f_{\iota_1, x_{\iota_1}} & 0 & f_{\iota_1, x_{\sigma}} & -f_{\iota_1, \mathcal{I}} \\
f_{\iota_2, x_{\iota_1}} & f_{\iota_2, x_{\iota_2}} & f_{\iota_2, x_{\sigma}} & -f_{\iota_2, \mathcal{I}} \\
0 & f_{\sigma, x_{\iota_2}} & f_{\sigma, x_{\sigma}} & 0 \\
0 & f_{o, x_{\iota_2}} & 0 & 0
\end{pmatrix}
\end{equation}
It can be decomposed as
\begin{equation} \label{analysis_e_coli_part_i}
\det\big(\langle H \rangle\big) = -f_{\iota_1, \mathcal{I}} \, \det(H_{\iota_1}) + f_{\iota_2, \mathcal{I}} \, \det(H_{\iota_2})
\end{equation}
where
\[
H_{\iota_1} = \begin{pmatrix} 
f_{\iota_2, x_{\iota_1}} & f_{\iota_2, x_{\iota_2}} & f_{\iota_2, x_{\sigma}} \\
0 & f_{\sigma, x_{\iota_{2}}} & f_{\sigma, x_{\sigma}} \\
0 & f_{o, x_{\iota_2}} & 0 
\end{pmatrix} 
\]
and
\[
H_{\iota_2} = \begin{pmatrix} 
f_{\iota_1, x_{\iota_1}} & 0 & f_{\iota_1, x_{\sigma}} \\
0 & f_{\sigma, x_{\iota_2}} & f_{\sigma, x_{\sigma}} \\
0 & f_{o, x_{\iota_2}} & 0
\end{pmatrix}
\]
Hence, the complete factorization of $\det\!\big(\langle H \rangle \big)$ is
\begin{equation*} \label{analysis_e_coli_part_iii}
\begin{split}
& = f_{\sigma, x_{\sigma}} \, f_{o, x_{\iota_{2}}} \, (-f_{\iota_{1}, \mathcal{I}} \, f_{\iota_{2}, x_{\iota_{1}}} + f_{\iota_{2}, \mathcal{I}} \, f_{\iota_{1}, x_{\iota_{1}}})
\end{split}
\end{equation*}
Summarizing, network $\mathcal{G}$ (Figure \ref{f:e_coli_network}) generically supports three types of homeostasis: (1) appendage (null-degradation) homeostasis associated with the subnetwork $\{\sigma\}$, (2) structural (Haldane) homeostasis associated with the subnetwork $\{\iota_{2} \rightleftharpoons o\}$ and (3) input counterweight homeostasis associated with the subnetwork $\{\iota_{1} \rightarrow \iota_{2}\}$.
See \cite{madeira2022} for a more conceptual derivation of the determinant $\det\!\big(\langle H \rangle \big)$ using only the topology of the network.

Now, specializing to the model equations \eqref{e_coli_network_notation}, we 
observe that, although the abstract network supports appendage and structural homeostasis, the model equations \eqref{e_coli_network_notation} cannot exhibit these types of homeostasis.
In fact, at equilibrium we have that
\begin{equation}
f_{\sigma, x_{\sigma}} = -\alpha_{2}k_{3}x_{\iota_{2}} - k_{5} < 0
\end{equation}
and
\begin{equation}
f_{o, x_{\iota_{2}}} = \alpha_{1}k_{2}(1 - x_{o}) \neq 0
\end{equation}
The first inequality follows because all parameters are positive and $x_{\iota_{2}}$ is positive at equilibrium.
The last inequality follows because $\alpha_{1}k_{2}(1 - x_{o}) = 0 \Rightarrow x_{o} = 1$, at equilibrium, so one would have $\dot{x}_{o} = - k_{4} \neq 0$.

This leaves the only remaining possibility: input counterweight homeostasis. To verify that the model equations \eqref{e_coli_network_notation} indeed exhibit input counterweight homeostasis we compute, assuming that both $\phi_{x_{\iota_1}}$ and $\phi_{\mathcal{I}}$ are non-zero:
\begin{equation} \label{analysis_e_coli_part_iiii}
\begin{aligned}
f_{\iota_{1}, \mathcal{I}} & = - \phi_{\mathcal{I}} (x_{\iota_1},\mathcal{I})  \, (\gamma_{R} + \gamma_{B}x_{\sigma}^{2}) \\
f_{\iota_{2}, x_{\iota_1}} & = \phi_{x_{\iota_1}} (x_{\iota_1},\mathcal{I}) \, k_{1} \, (1 - x_{\iota_2}) \\
f_{\iota_{2}, \mathcal{I}} & = \phi_{\mathcal{I}} (x_{\iota_1},\mathcal{I})  \, k_{1} \, (1 - x_{\iota_2}) \\
f_{\iota_{1}, x_{\iota_1}} & = -\phi_{x_{\iota_1}} (x_{\iota_1},\mathcal{I}) \, (\gamma_{R} + \gamma_{B}x_{\sigma}^{2})
\end{aligned}
\end{equation}
Thus, for any $C^1$ function $\phi$, we have
\begin{equation} \label{analysis_e_coli_part_v}
f_{\iota_{1}, \mathcal{I}} \, f_{\iota_{2}, x_{\iota_{1}}} - f_{\iota_{2}, \mathcal{I}} \, f_{\iota_{1}, x_{\iota_{1}}} \equiv 0
\end{equation}
In particular, $x_o'(\mathcal{I})\equiv 0$ and so the model equations \eqref{e_coli_network_notation} exhibits, not only infinitesimal homeostasis (of the input counterweight type), but perfect homeostasis.

Moreover, it is easy to check that in the model equations \eqref{e_coli_network_notation}, both $\phi_{x_{\iota_1}}$ and $\phi_{\mathcal{I}}$ are non-zero.
Hence, all the partial derivatives of $f$ in \eqref{analysis_e_coli_part_iiii} are non-zero and so the occurrence of perfect homeostasis in this model is non-trivial.
Figure \ref{e_coli_plot2} (bottom-left) shows the graph of the output variable $x_o = y_p$ (blue) as a function of the parameter $L$.
As expected, it is a constant line over the whole range of $L$.

\begin{figure*}[!htp]
\centering
\includegraphics[width=\linewidth,trim=0cm 1cm 0cm 0.5cm,clip=true]{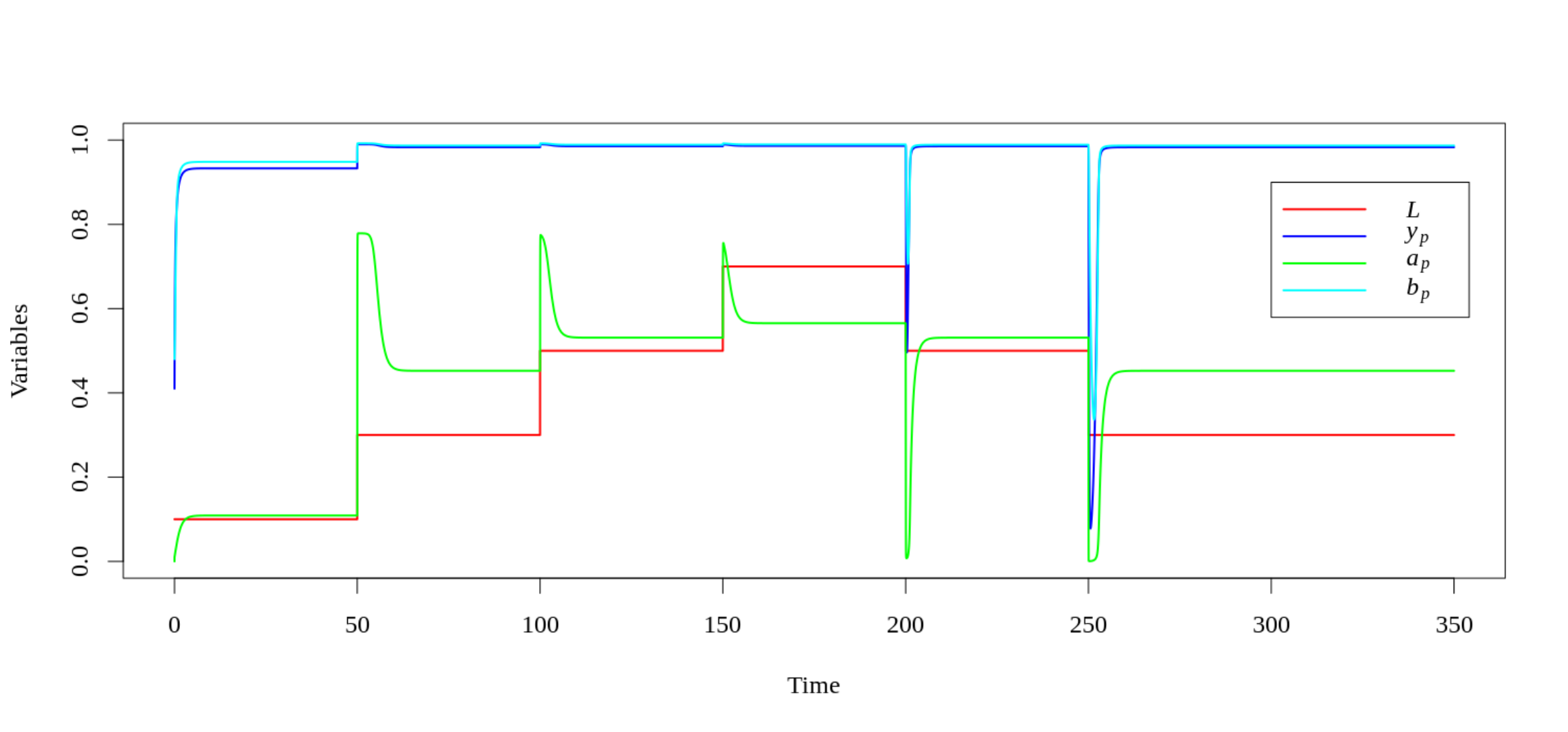} \\ [1ex]
\includegraphics[width=\linewidth,trim=0cm 0.5cm 0cm 1cm,clip=true]{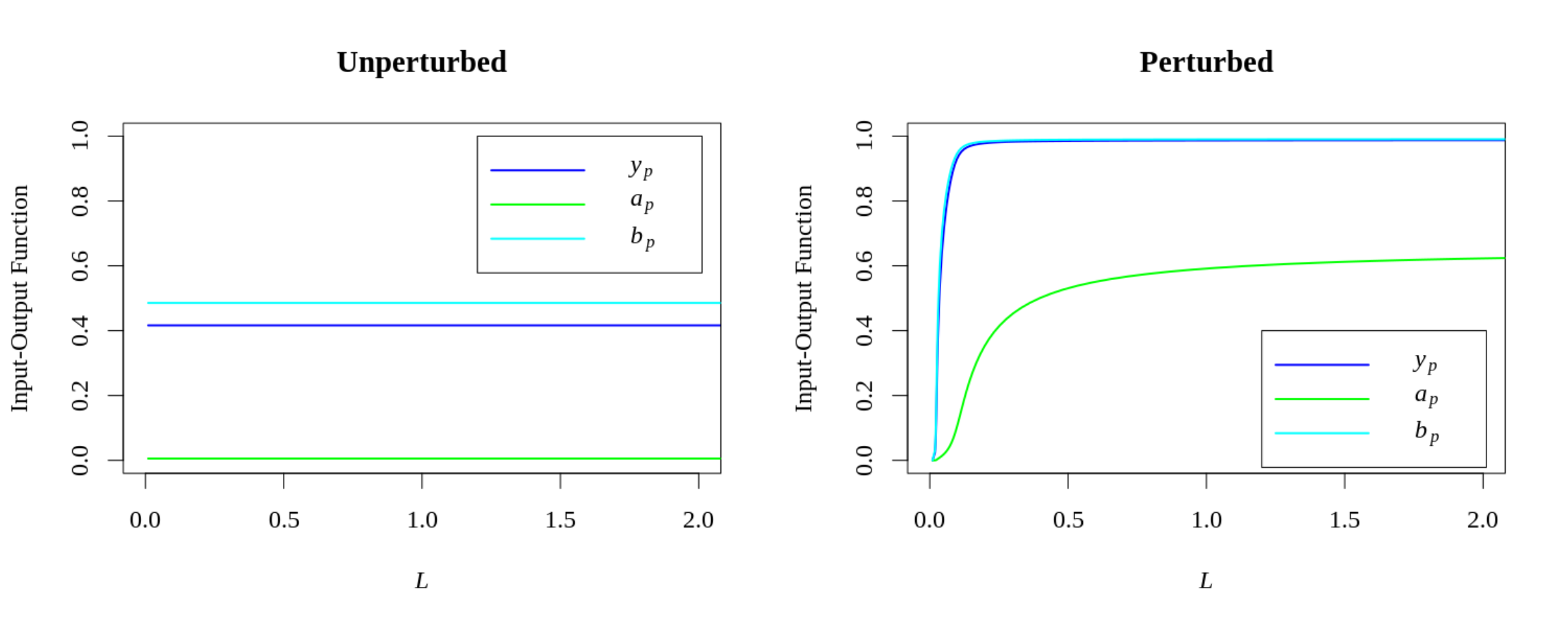}
\caption{\label{e_coli_plot2} (Top) Time series of the model \eqref{e_coli_general_perturbation}, with $\psi(z)=z$ and $\epsilon=0.05$, showing near perfect homeostasis of the original variables $a_p \leftrightarrow x_{\iota_2}$ (green), $b_p \leftrightarrow x_{\sigma}$ (cyan), $y_p \leftrightarrow x_{o}$ (blue) at the corresponding non-dimensional equilibrium. Input parameter $L \leftrightarrow \mathcal{I}$ is given by a step function (red curve). Other parameters were set to non-dimensional values of \cite[Table~2]{edgington2018}.
(Bottom) Input-output functions of the original variables $a_p \leftrightarrow x_{\iota_2}$ (green), $b_p\leftrightarrow x_{\sigma}$ (cyan), $y_p \leftrightarrow x_{o}$ (blue), as functions of input parameter $L \leftrightarrow \mathcal{I}$, for the model \eqref{e_coli_general_perturbation}.
(Left) For $\epsilon=0$, which reduces to the original model \eqref{original_e_coli}, we have perfect homeostasis (constant input-output functions) and (Right) for $\epsilon=0.05$ we have near perfect homeostasis. Time series were computed using the software \textsc{XPPAut} and input-output functions were computed by numerical continuation of an equilibrium point using \textsc{Auto} from \textsc{XPPAut} \cite{bard2002}.
}
\end{figure*}

A consequence of the above calculation is that the property of perfect homeostasis in the model \eqref{e_coli_network_notation} for \textit{E. coli} chemotaxis is \emph{persistent}, in the sense that it does not depend on the values of the parameters of the model. 
Even more, its persistence holds for a much larger set of perturbations than just parameter variation, since there is a functional dependence in the input-output function (see \cite{madeira2022} for the details).

As can be seen in Figure \ref{e_coli_plot1} it looks like the time series of all three variables $a_p$, $b_p$ and $y_p$ exhibit perfect homeostatic behavior.
This is the homeostasis pattern associated with homeostasis type the caused homeostasis.
A theory of homeostasis patterns for networks with multiple input nodes has not been worked yet and thus we cannot predict the homeostasis patterns from the knowledge of the homeostasis types alone.
The only way to determine the homeostasis patterns in this case is by numerical computation.

As we have seen, there is a symmetry between the two first equations of \eqref{e_coli_network_notation} that cause the identical vanishing of the determinant 
$\det\!\big(\langle H \rangle \big)$.
Therefore, it is reasonable to think that in order for a perturbation to disrupt the perfect homeostasis it is necessary that the perturbation breaks the symmetry above.
Consider the $1$-parameter family of perturbations of \eqref{e_coli_network_notation}
\begin{equation} \label{e_coli_general_perturbation}
\begin{aligned}
\dot{x}_{\iota_1} & = \gamma_{R}  \, (1 - \phi(x_{\iota_1}, \mathcal{I})) - \gamma_{B} \, x_{\sigma}^{2} \, \phi(x_{\iota_1}, \mathcal{I}) -\epsilon \, \psi(x_{\iota_{1}}) \\
\dot{x}_{\iota_{2}} & = \phi(x_{\iota_1}, \mathcal{I}) \, k_{1} \, (1 - x_{\iota_2}) \\
& \quad - k_{2} \, (1 - x_{o}) \, x_{\iota_2} - k_{3} \, (1 - x_{\sigma}) \, x_{\iota_2} \\
\dot{x}_{\sigma} & = \alpha_{2} \, k_{3} \, (1 - x_{\sigma}) \, x_{\iota_2} - k_{5} \, x_{\sigma} \\
\dot{x}_{o} & = \alpha_{1} \, k_{2} \, (1 - x_{o}) \, x_{\iota_2} - k_{4} \, x_{o}
\end{aligned}
\end{equation}
where $\epsilon \geqslant 0$ and $\psi$ is a fixed smooth function. 
It is clear that \eqref{e_coli_general_perturbation} is admissible for the network $\mathcal{G}$ for all $\epsilon \geqslant 0$ and any smooth function $\phi$.
Moreover, \eqref{e_coli_general_perturbation} reduces to \eqref{e_coli_network_notation} when $\epsilon=0$.
The expressions for $f_{\iota_{1}, \mathcal{I}}$, $f_{\iota_{2}, \mathcal{I}}$ and $f_{\iota_{2}, x_{\iota{1}}}$ are independent of $\epsilon$ and thus are the same as in the original system \eqref{analysis_e_coli_part_iiii}. 
The expression for $f_{\iota_{1}, x_{\iota{1}}}$ now is
\begin{equation} \label{new_formula_derivative}
f_{\iota_{1}, x_{\iota_1}} = -\phi_{x_{\iota_1}} (x_{\iota_1},\mathcal{I}) \, (\gamma_{R} + \gamma_{B}x_{\sigma}^{2}) - \epsilon \, \psi_{x_{\iota_{1}}}(x_{\iota_{1}})
\end{equation}

Since the equations of $f_{\sigma, x_{\sigma}}$ and of $f_{o, x_{\iota_{2}}}$ are independent of $\epsilon$, the same argument as before shows that system \eqref{e_coli_general_perturbation} does not exhibit appendage or structural homeostasis, for all $\epsilon \geqslant 0$. 
Now to show that the input counterweight homeostasis factor cannot be identically zero we compute
\begin{equation*} \label{final_analysis_e_coli_not_generic_part_i}
 f_{\iota_{1}, \mathcal{I}} \, f_{\iota_{2}, x_{\iota_{1}}} - f_{\iota_{2}, \mathcal{I}} \, f_{\iota_{1}, x_{\iota_{1}}} = \epsilon \, k_{1} \, \phi_{\mathcal{I}} (x_{\iota_1},\mathcal{I}) \, (1 - x_{\iota_2}) 
 \, \psi_{x_{\iota_{1}}}(x_{\iota_{1}})
\end{equation*}
Since $\phi_{\mathcal{I}} (x_{\iota_1},\mathcal{I}) \neq 0$ and $(1 - x_{\iota_2}) \neq 0$, generically, left-handed side vanishes if $\psi_z(z)$ vanishes, generically

Therefore, if the function $\psi_z(z)$ does not vanish on an interval (say, $\psi(z)=\text{constant}$), system \eqref{e_coli_general_perturbation} does not exhibit perfect homeostasis, generically, for $\epsilon > 0$.

Nevertheless, it is still possible that \eqref{e_coli_general_perturbation} exhibits infinitesimal homeostasis.
In fact, the function $\psi_z(z)$ determines the critical points of the input counterweight factor.
That is, $x_o'(\II_0)=0$ at some $\II_0$ if 
\[
 \frac{d}{d\II} \psi_{x_{\iota_1}}\big(x_{\iota_1}(\II)\big) \bigg|_{\II=\II_0} = 0
\]
In particular, if $\psi_z(z)$ has a critical point at $x_{\iota_1}(\II_0)$,
or equivalently, if $\psi_{zz}(z)$ vanishes on $x_{\iota_1}(\II_0)$, then $\II_0$ is a point of infinitesimal homeostasis.

More generally, numerical simulations suggests that, when $\epsilon>0$ is small, system \eqref{e_coli_general_perturbation} displays near perfect homeostasis, see Figure \ref{e_coli_plot2} (bottom-right). 

Finally, we should mention that another extension of the theory of \cite{wang2021} to the case of multiple input nodes and multiple input parameters has been developed \cite{madeira2024}.
Thus, one can apply essentially the same combinatorial tools to study homeostasis in any MISO system given in terms on an input-output network.

\subsection{Period Homeostasis and Circadian Rhythms}
\label{SS:PHCR}

Circadian clocks are endogenous oscillators that are prevalent in organisms from bacteria to humans and control $24$-hour physiological and behavioral processes in these organisms. 
Such oscillations are thought to work as pacemakers, and the maintenance of their characteristic period to a constant value against changes in external conditions is important. 
To date, it is known that some of these oscillators display `period homeostasis' against external changes \cite{bell2005}.
The most prominent example of period homeostasis is called \emph{temperature compensation}, i.e., persistence of the period against temperature changes \cite{hk2012}.
In addition to temperature compensation, circadian clocks display persistence against changes in trophic conditions, which is called \emph{nutrient compensation} \cite{phong2013}. 
Such homeostatic features (i.e., temperature and nutrient compensation) are known to be universally
conserved across a wide range of organisms \cite{hk2014}.

The mammalian circadian system is composed of a hierarchical multi-oscillator structure, with the central clock located in the suprachiasmatic nucleus (SCN) of the hypothalamus regulating the peripheral clocks found throughout the body. 
At the cellular level, each cell is governed by its own independent clock; and yet, these cellular circadian clocks in the SCN form regional oscillators that are further coupled to one another to generate a single rhythm for the tissue. 
The oscillatory coupling within and between the regional oscillators appears to be critical for the extraordinary stability and the wide range of adaptability of the circadian clock \cite{dibner2010}.

In the past two decades, key clock genes have been discovered in mammals and shown to be interlocked in transcriptional and translational feedback loops.
Interestingly, the mammalian circadian clock also seems to be persistent to global changes in transcription rates and gene dosage \cite{dibner2009}.

To investigate the dynamics and mechanisms of
the intracellular feedback loops in circadian clocks, a number of mathematical models based on systems of ordinary differential equations have been developed. 
The majority of models use Hill functions to describe transcriptional repression in a way similar to the Goodwin model \cite{kim2016}.

In order to understand what period homeostasis means in these models let us consider a parametrized system of ODEs as in \ref{general_dynamics} with $\II\in\mathcal{R}$ for simplicity.
But now we suppose that $(X(t)^*,\II^*)$ is a stable periodic solution of \eqref{general_dynamics} with minimal period $\tau^*$.

By the \emph{continuation of hyperbolic periodic solutions} theorem \cite[Thm. 1.1]{hale2013}, there is a smooth family of periodic solutions $X(t,\II)$ defined in a neighborhood of $\II^*$, with smooth minimal period $\tau(\II)$, such that $X(t,\II^*) = X^*(t)$, $\tau(\II^*)=\tau^*$ and $\dot{X}(t,\II) = F(X(t,\II),\II)$ for all $t\in\mathbb{R}$. 

There are several proofs of this theorem, mostly often using the method of Poincar\'e sections. 
Hale and Raugel \cite{hale2013} considers a simple method using the Fredholm alternative and the Lyapunov-Schmidt procedure.

Let us call the mapping $\II\to\tau(\II)$ the \emph{period input-output function}.
Then we can adapt Definition \ref{D:inf_homeo} to give a notion of infinitesimal period homeostasis.

\begin{definition} \label{D:period_homeostasis}
Let $\tau(\II)$ be the period input-output function. 
We say that $\tau(\II)$ exhibits \emph{infinitesimal period homeostasis} at the point $\II_0 \in \operatorname{dom}(z)$ if
\begin{equation} \label{period_homeostasis_condition}
  \tau'(\II_0) = 0
\end{equation}
That is, $\II_0$ is a \emph{critical point} of the the period input-output function $\tau$. 
\END
\end{definition}

Of course one can easily formulate analog definitions for \emph{perfect period homeostasis} and \emph{near-perfect period homeostasis}.
Accordingly, we will refrain from writing down these definitions.

Furthermore, it is possible to adapt the proof the continuation theorem given in \cite{hale2013} to obtain an explicit formula for the derivative of $\tau(\II)$.
Let $\II_0$ be fixed and set $X(t,\II_0)=X_0(t)$ and $\tau(\II_0)=\tau_0$
\begin{equation} \label{e:formula_tau_prime}
 \tau'(\II_0) = \frac{1}{\tau_0}
 \int_{0}^{\tau_0} \!\!\!\!
 \langle F_{\II_0}(X_0(t)) , w^*(t) \rangle \,dt
\end{equation}
where $w^*$ is the unique solution of the \emph{adjoint Floquet operator} of the solution $X_0(t)$, with periodic boundary condition $w^*(t+\tau_0)=w^*(t)$ and the normalization condition $\langle X(t,\II_0),w^*(t)\rangle = 1$.
Here, $\langle \cdot , \cdot\rangle$ is the standard scalar product on $\mathcal{R}^n$.

The adjoint Floquet operator of the solution $X_0(t)$ is defined as follows.
Let $A(t)=D_Xf_{(X_{0}(t),0)}$ be the linearization of $f$ about the periodic solution $X_0(t)$, then $A(t)$ is a $\tau_0$-periodic function.
Then $\mathcal{L}=\frac{d}{dt}-A(t)$ and $\mathcal{L}^*=\frac{d}{dt}+A^t(t)$ are the \emph{Floquet operator} and the \emph{adjoint Floquet operator} of the solution $X_{0}(t)$, respectively.
See also \cite[]{yu2022}, where the function $w^*(t)$ is called \emph{infinitesimal phase response curve (iPRC)} of the periodic solution $X_0(t)$.

Now that we have a precise notion of period homeostasis and a means to compute it we could try to apply it to some example.
Although, formula \eqref{e:formula_tau_prime} is quite nice conceptually, it seems very hard to compute in concrete examples.
An alternative is to use numerical methods for continuation periodic solutions, such as \textsc{Auto} from \textsc{XPPAut} \cite{bard2002}.
We will illustrate this idea of numerical continuation periodic solutions with a simple model for the mammalian circadian system.

\begin{example}[Mammalian Clock Model] \normalfont
Kim and Forger \cite{kim2012} propose to use a mechanism of protein sequestration-based (PS) transcriptional repression rather than Hill-type (HT) repression.
Since Goodwin’s and Goldbeter’s pioneering studies, Hill functions have been widely used to model the negative feedback loops (NFL) in circadian clocks of diverse organisms, including \emph{Neurospora}, \emph{Drosophila} and mammals \cite{kim2016}.
In \cite{kim2012} the authors formulate three simple mathematical models for the mammalian clock (based on a more detailed model), that reproduces a surprising amount of experimental data on mammalian circadian rhythms.
See also Yao \etal~\cite{tyson2022} for a careful mathematical analysis.

The Simple Negative Feedback (SNF) loop model is generated by modifying the well-studied Goodwin model to include an activator, which can be inactivated when bound in complex with the repressor. 
The state variables are: the mRNA concentration ($M$), the cytoplasmic protein concentration ($P_c$) and the nucleous protein concentration $P$.
The mRNA transcription is proportional to the \% of unbound free activator $f(P, A, \kappa)$,
which indicates the activity of the promoter.
Here, the parameter $A$ is the upstream transcriptional factor that interacts with $P$ through binding with dissociation constant $\kappa$.
That is, the transcription of the mRNA is repressed by $P$ by inhibition of the activator $A$ 
See \cite{kim2012,tyson2022} for details.

The other two models can be obtained from the SNF through the addition of negative or positive feedback loops controlling the production of activator $A$, which then becomes another state variable.
The Positive-Negative Feedback (PNF) loop model and the Negative-Negative Feedback (NNF) loop mode are obtained from the (SNF) adding two state variables:
$A$ is the protein that activates the transcription of $M$ and $R$ is the mRNA the is associated to protein $A$.
See \cite{kim2012,tyson2022} for details.

It is possible to write down the non-dimensional equations for the three models at same time
\begin{equation}
\begin{split}
 \dot{M} & = \alpha_1 f(P,A,\kappa) - \beta_1 M \\
 \dot{P_{\mathrm{c}}} 
 & = \alpha_2 M - \beta_2 P_{\mathrm{c}} \\
 \dot{P} & = \alpha_3 P_{\mathrm{c}} - \beta_3 P \\
 \dot{R} & = \gamma_1 f(P,A,\kappa) -\delta_1 R \\
 \dot{A} & = \gamma_2 R^{\pm 1} - \delta_2 A
\end{split}
\end{equation}
where
\begin{equation*}
f(A,P,\kappa)  = \frac{1}{2A}
 \bigg(A-P-\kappa+\sqrt{(A-P-\kappa)^2-4A\kappa}\bigg)
\end{equation*}
and $\alpha_1$, $\alpha_2$, $\alpha_3$, $\beta_1$,
$\beta_2$, $\beta_3$, $\gamma_1$, $\gamma_2$, $\delta_1$, $\delta_2$, $\kappa$ are parameters.

The SNF model is obtained by setting $\gamma_i=\delta_i=0$. 
Then the equations for $\dot{R}$ and $\dot{A}$ disappear and $A$ becomes another parameter.
The PNF model is given by picking the positive ($+1$) exponent in $R^{\pm 1}$ on the right-handed side of the equation for $\dot{A}$ and the NNF model is given by picking the negative ($-1$) exponent in $R^{\pm 1}$ on the right-handed side of the equation for $\dot{A}$.

It is shown in \cite{kim2012} that the three models have stable periodic solutions for a large range of the parameters. See also \cite{tyson2022} for an analysis of the Hopf bifurcation that generates these solutions.

In Figure \ref{F:kim-forger} we show numerical computation of the period input-output function for the three models.
In the SNF model we use $A$ as the input parameter and the PNF and NNF models we use $\gamma_2$ as the input parameter.
The output is the period $\tau$ of the periodic solution in all three cases.

In all three cases we see that the period input-output function $\tau$ has an infinitesimal homeostasis point at the left end of each curve and exhibits near-perfect homeostasis at the right end of the graph.
Moreover, in the PNF case the periodic orbit undergoes a saddle-node bifurcation.
\END
\end{example}

\begin{figure}[!htb]
\begin{subfigure}[b]{0.45\textwidth}
\centering
\stackinset{r}{8pt}{t}{9pt}{\includegraphics[width=0.4\textwidth,trim=1.25cm 2cm 1.5cm 3cm, clip=true]{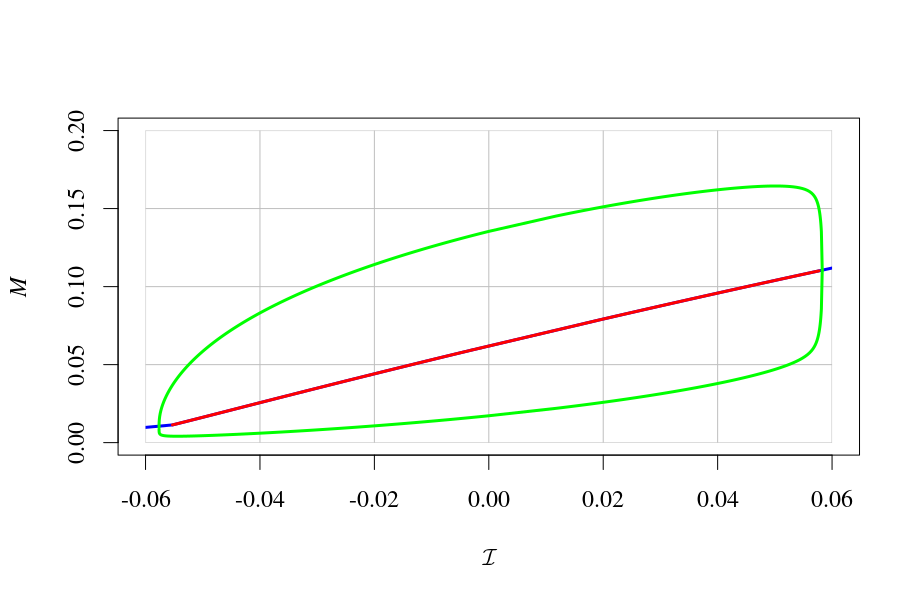}}{%
\includegraphics[width=\textwidth,trim=0.25cm 1cm 1.5cm 2.5cm, clip=true]{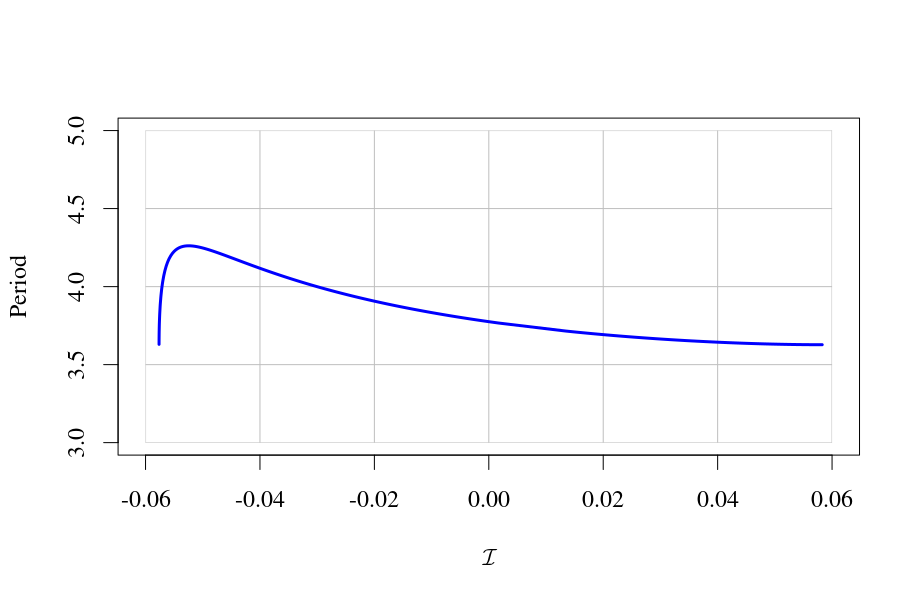}}
\caption{SNF model (Input = $A$)}
\label{F:SNF}
\end{subfigure} \quad
\centering
\begin{subfigure}[b]{0.45\textwidth}
\centering
\stackinset{r}{8pt}{t}{9pt}{\includegraphics[width=0.45\textwidth,trim=1.25cm 2cm 1.5cm 3cm, clip=true]{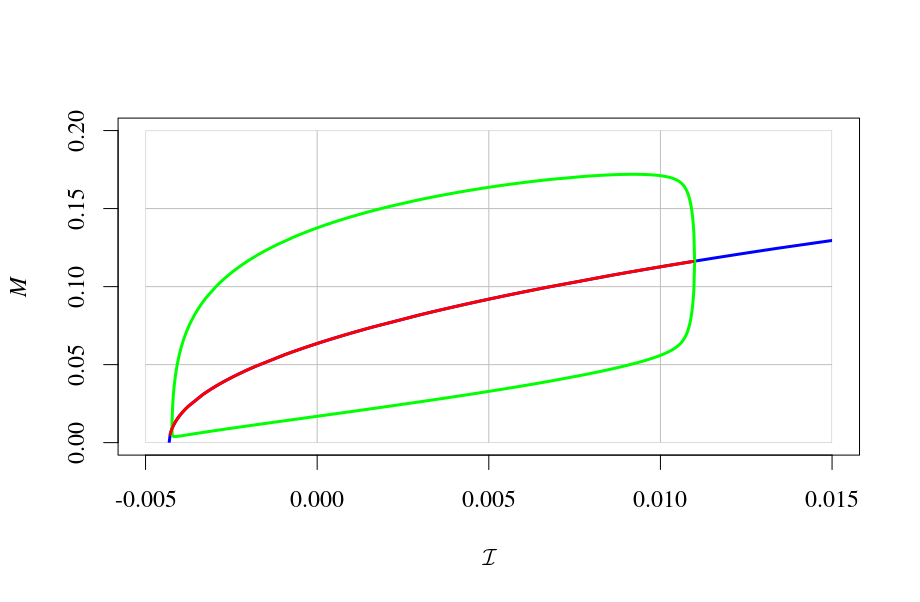}}{%
\includegraphics[width=\textwidth,trim=0.25cm 1cm 1.5cm 2.5cm, clip=true]{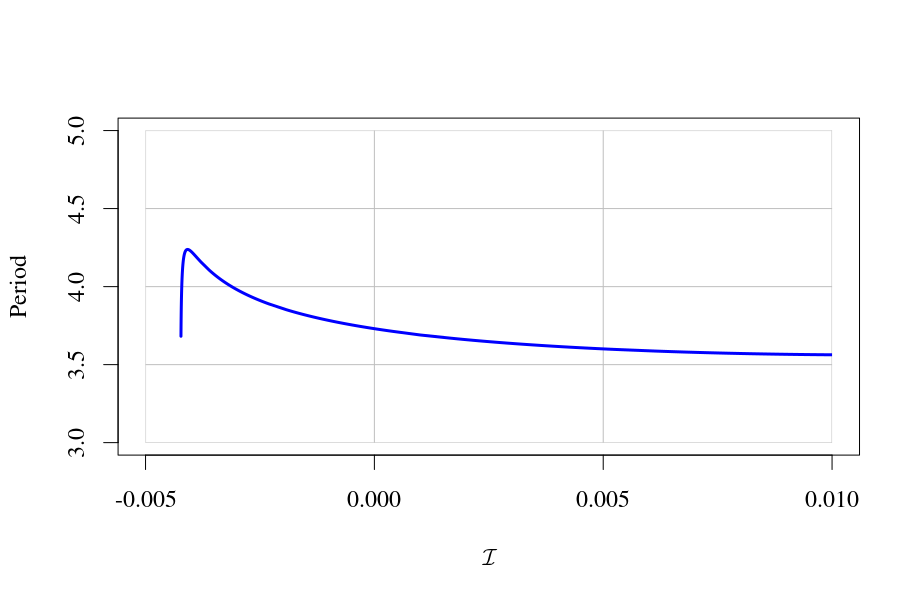}}
\caption{NNF model (Input = $\gamma_2$)}
\label{F:NNF}
\end{subfigure} \quad
\centering
\begin{subfigure}[b]{0.45\textwidth}
\centering
\stackinset{r}{8pt}{t}{9pt}{\includegraphics[width=0.35\textwidth,trim=1.25cm 2cm 1.5cm 3cm, clip=true]{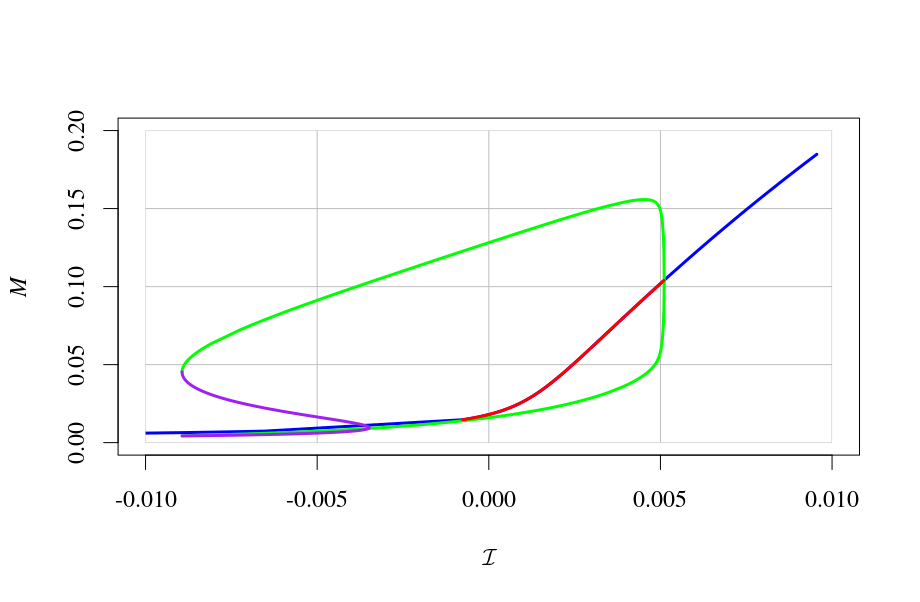}}{%
\includegraphics[width=\textwidth,trim=0.25cm 1cm 1.5cm 2.5cm, clip=true]{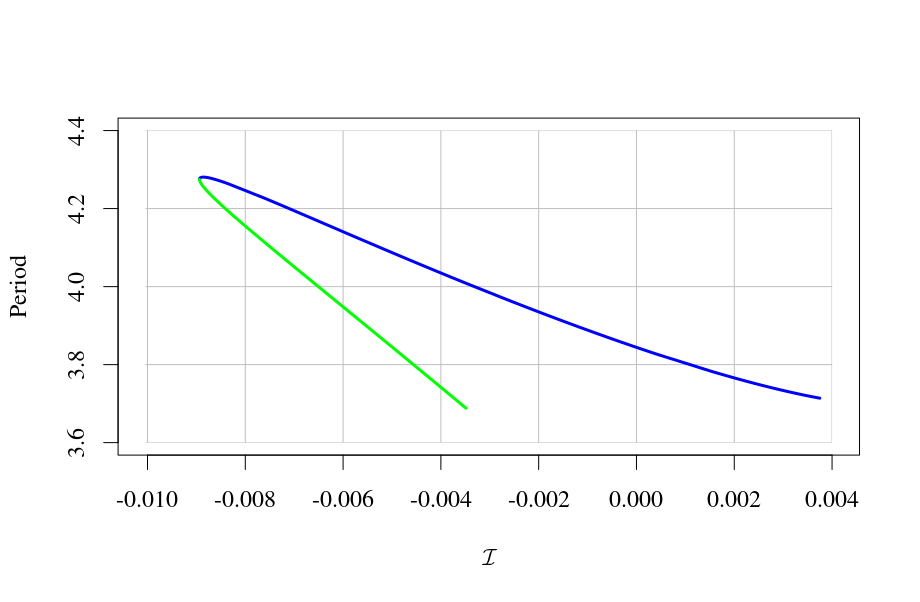}}
\caption{PNF model (Input = $\gamma_2$)}
\label{F:PNF}
\end{subfigure}
\caption{\label{F:kim-forger}
Kim-Forger Models \cite{kim2012,kim2016,tyson2022}.
(Outer panels) Graph of the period input-output function $\tau$ as function of the input parameter;
(blue) stable periodic solution, (green) unstable periodic solution.
(Inner panels) Projection of the bifurcation diagram onto the $M$ coordinate showing the range were the periodic solution exists; (blue) stable equilibrium, (red) unstable equilibrium, (green) stable periodic solution, (violet) unstable periodic orbit.
Period input-output functions and bifurcation diagrams were computed by numerical continuation of a periodic orbit using \textsc{Auto} from \textsc{XPPAut} \cite{bard2002}.}
\end{figure}

Notwithstanding its biological relevance, period homeostasis seems to be quite different from the steady-state homeostasis studied before.
For instance, it does not need a network structure or an observable on the state space to be defined.

However, it is possible to define a notion of homeostasis associated with periodic orbits that is analogous to the steady-state theory.
Motivated by problems in neuroscience Yu and Thomas \cite{yu2022} provided the first step towards such a theory by introducing a new class of input-output functions.

Yu and Thomas \cite{yu2022} start, as before, with a parametrized system of ODEs as in \ref{general_dynamics} possessing a
stable periodic solution $(X(t)^*,\II^*)$ with minimal period $\tau^*$ and consider its continuation $X(t,\II)$ defined in a neighborhood of $\II^*$, with smooth minimal period $\tau(\II)$, such that $X(t,\II^*) = X^*(t)$, $\tau(\II^*)=\tau^*$ and $\dot{X}(t,\II) = F(X(t,\II),\II)$ for all $t\in\mathbb{R}$.
Now, given a smooth observable $\phi$ defined on the state space consider the input-output function given by \emph{averaging $\phi$ along $X(t,\II)$}
\begin{equation} \label{e:average_IO}
 \overline{\!\phi}(\II) = \frac{1}{\tau(\II)}
 \int_{0}^{\tau(\II)} \!\!\!\!\!
 \phi(X(t,\II)) \,dt
\end{equation}
Note that the minimal period function $\tau$ now appears in the definition of the averaging operator.

For example, if the system of ODEs has an underlying input-output network $\mathcal{G}$ with output node $x_o$ the we can define the input-output function
\begin{equation} \label{e:average_x_o}
\overline{\!x}_o(\II) = \frac{1}{\tau(\II)}
 \int_{0}^{\tau(\II)} \!\!\!\!\!
 x_o(t,\II) \,dt
\end{equation}
where $x_o(t,\II)$ is the $o$-coordinate of $X(t,\II)$.
It is clear that this new class of input-output functions is provides the right counterpart in the periodic case to the steady-state input-output functions.

The main result of \cite{yu2022}  is a formula for the derivative of the input-output function $\,\overline{\!\phi}(\II)$ with respect to $\II$.
To accomplish this they propose ageneralization of the \emph{infinitesimal shape response curve (iSRC)} recently introduced in \cite{wgct2021}.
The iSRC complements the well-known infinitesimal phase response curve (iPRC). 
The latter quantifies, to linear order, the effect of an instantaneous perturbation on the timing of the periodic solution, and also captures the
cumulative effect of a sustained perturbation its period. 
In contrast, the iSRC captures the effect of a sustained or parametric perturbation on the shape of the periodic solution, to linear order.

The formula for the derivative of $\,\overline{\!\phi}(\II)$ obtained by \cite{yu2022} is very similar to formula \eqref{e:formula_tau_prime}
Let $\II_0$ be fixed and set $X(t,\II_0)=X_0(t)$ and $\tau(\II_0)=\tau_0$
\begin{equation} \label{e:formula_barphi_prime}
 \overline{\!\phi}^{\,\prime}(\II_0) = \frac{1}{\tau_0}
 \int_{0}^{\tau_0} \!\!\!\!
 \langle \nabla\phi(X_0(t)), \omega^*(t) \rangle \,dt
\end{equation}
where $\omega^*(t)$ is the generalized iSRC of the periodic solution $X_0(t)$ \cite{yu2022}.
That, is $\omega^*(t)$ is a $\tau_0$-periodic solution of an inhomogeneous linear equation similar to the adjoint Floquet operator.

The similarity with formula \eqref{e:z_prime} is noteworthy.
In both cases, the derivative is given by the scalar product (the average operation in formula \eqref{e:formula_barphi_prime} can be written as scalar product in the space of $\tau_0$-periodic functions) between the gradient $\nabla\phi$ of the observable and a solution of a linear equation obtained by linearization of the parametrized family of nonlinear equations at the appropriate solutions.
As we noted before, the period input-output function $\tau$ is an exceptional observable that do not have a counterpart in the steady-state context.

The work of \cite{yu2022} is an important advance in the extension of homeostasis to the periodic case.
However, the theory is much less developed in relation to the steady-state case.
For instance, an extension of the combinatorial structure theory for homeostasis in input-output networks with input-output functions of the from \eqref{e:average_x_o} remains elusive. 
Extending the classification theory to this context would be an interesting future direction.

\section{Conclusion and Outlook}

Undoubtedly there are many other directions for research on infinitesimal homeostasis that we have not explored in this review.

For example, we have mentioned that Duncan and Golubitsky~\cite{duncan2019} have investigated forms of combining infinitesimal homeostasis and bifurcations.
In fact, there are two generalizations of homeostasis theory that are motivated by codimension arguments in bifurcation theory.  
The first is related to chair homeostasis, where the plateau in the input-output function is flatter than expected.
See Nijhout~\etal~\cite{nbr2014}, Golubitsky and Stewart~\cite{gs2018}, and Reed~\etal~\cite{rbgsn2017}. 
The second is related to the existence of `homeostasis mode interaction', where two infinitesimal homeostasis types occur at the same equilibrium. 
See Duncan~\etal~\cite{duncan2025}. 
Interestingly, the simultaneous appearance of different homeostasis types leads to bifurcation in the family of equilibria that generates the homeostasis.  
An early example of this phenomenon is discussed in Duncan and Golubitsky~\cite{duncan2019}.  
A related biochemical example of multiple types of infinitesimal homeostasis occurring on variation of just one parameter is found in Reed~\etal~\cite{rbgsn2017}.

Another line of research is the role of homeostasis in \emph{chemical reaction network theory} \cite{feinberg2019,craciun2018}. 
Craciun and Deshpande~\cite{craciun2022} initiated the investigation of infinitesimal homeostasis in chemical reaction networks by relating it to the notion of network injectivity.
More precisely, the authors describe a procedure for checking whether a reaction network may admit infinitesimal homeostasis by constructing a modified network and checking if it is injective \cite[Thm. 3.4]{craciun2022}.
They also obtain a sufficient condition for perfect homeostasis \cite[Thm. 3.5]{craciun2022}. 
See also \cite{araujo2023} for the occurrence of perfect homeostasis in chemical reaction networks from a control theoretic point of view.

Finally, we should mention the possibility to extend the theory the stochastic realm.
In \cite{briat2016,aoki2019} the authors study the occurrence of perfect homeostasis in a stochastic version of the antithetic model (see Example \ref{EX:ANTITHETIC_FB} of subsection \ref{SS:METAL}).
More generally, the typical stochastic processes that are considered in \cite{briat2016,aoki2019} and many other modeling approaches to biological systems are called \emph{stochastic chemical reaction networks} \cite{anderson2015}.

In the stochastic setup a model is given by a continuous-time ergodic Markov process $X_t$ on a state space $\mathcal{S}$ with stationary (equilibrium) measure $\mu$.
It is assume that the model depends on an input parameter $\II$ and that the stationary measure is a function of the input parameter: $\mu_{\II}$.
Given an observable,i.e. an integrable function $\phi:\mathcal{S} \to \mathbb{R}$, its expectation with respect to the stationary measure $\mu_{\II}$ is
\[
 \mathcal{Z}(\II) = \mathbf{E}_{\mathcal{I}}[\phi] = \int \phi \; d\mu_\mathcal{I}
\]
is a function of the parameter $\mathcal{I}$. 
We say that the system exhibits \emph{linear response} for $\phi$ at $\mathcal{I}_0$ if the derivative
\[
 \mathcal{Z}'(\II_0) = \frac{d}{d\mathcal{I}} \mathbf{E}_{\mathcal{I}}[\phi] \bigg|_{\mathcal{I}=\mathcal{I}_0}
\]
exists.
Linear response can be rigorously justified for a large class of stochastic processes \cite{hairer2010,mackay2011}.
Taking $\mathcal{Z}$ as the input-output function, with $\phi$ fixed, the definitions of perfect, near-perfect and infinitesimal homeostasis readily go through this new context.
Now the basic question is: Which parts of the `machinery' developed for the deterministic setting (e.g. singularity structure, input-output networks and the combinatorial classification of homeostasis types, homeostasis patterns, etc.) can be extended to the stochastic version?
Even more so, in the stochastic setting there are many problems and questions that have no counterpart in the deterministic setting.






\section*{Acknowledgments}
We thank Janet Best, William Duncan, Jo\~ao Lu\-iz Ma\-dei\-ra, Fred Nijhout, Michael Reed, John Tyson and Yangyang Wang for helpful discussions.
We thank the organizers of the Workshop ``Dynamical Systems in the Life Sciences'', Harsh Jain,
Wenrui Hao, Grzegorz Rempala and Yangyang Wang.
The research of FA was supported by Funda\c{c}\~ao de Amparo \`a Pes\-qui\-sa do Estado de S\~ao Paulo (FAPESP) grants 2019/21181-0 and 2023/04839-7.




\bibliographystyle{cas-model2-names}

\bibliography{refs}



\end{document}